\documentclass[11pt]{article} 



\usepackage[T1]{fontenc}        
\usepackage{textcomp, amssymb}  
\usepackage[english]{babel}

\usepackage{anysize}            
\usepackage{amsfonts}
\usepackage{amsmath}
\usepackage{dsfont}
\usepackage{MnSymbol}
\usepackage{mathtools}
\usepackage{epsfig}
\usepackage{epstopdf}
\usepackage{lmerge}
\usepackage{tikz}
\usepackage{amsthm}
\usepackage{ifpdf}


\pagestyle{plain}               


\setlength{\parindent}{0pt}
\setlength{\parskip}{5pt plus 2pt minus 1pt}



\usetikzlibrary{calc,decorations.pathmorphing,shapes,graphs}

\newcounter{sarrow}

\newcounter{sarrow1}
\newcommand\xnrsquigarrow[1]{%
\stepcounter{sarrow1}%
\mathrel{\begin{tikzpicture}[baseline= {( $ (current bounding box.south) + (0,-0.5ex) $ )}]
\node[inner sep=.5ex] (\thesarrow) {$\scriptstyle #1$};
\path[draw,<-,decorate,
  decoration={zigzag,amplitude=0.7pt,segment length=1.2mm,pre=lineto,pre length=4pt}]
    (\thesarrow1.south east) -- (\thesarrow1.south west);
    $\slashedarrowfill@\relbar\relbar/$
    \end{tikzpicture}}%
}

\makeatletter
\def\slashedarrowfill@#1#2#3#4#5{%
  $\m@th\thickmuskip0mu\medmuskip\thickmuskip\thinmuskip\thickmuskip
   \relax#5#1\mkern-7mu%
   \cleaders\hbox{$#5\mkern-2mu#2\mkern-2mu$}\hfill
   \mathclap{#3}\mathclap{#2}%
   \cleaders\hbox{$#5\mkern-2mu#2\mkern-2mu$}\hfill
   \mkern-7mu#4$%
}
\def\rightslashedarrowfillb@{%
  \slashedarrowfill@\relbar\relbar/\rightarrow}
\newcommand\xnrightarrow[2][]{%
  \ext@arrow 0055{\rightslashedarrowfillb@}{#1}{#2}}

\def\rightslashedarrowfille@{%
  \slashedarrowfill@\relbar\relbar/\twoheadrightarrow}
\newcommand\xntworightarrow[2][]{%
  \ext@arrow 0055{\rightslashedarrowfille@}{#1}{#2}}

\def\rightslashedarrowfillg@{%
  \slashedarrowfill@\relbar\relbar{\raisebox{.12em}{}}\twoheadrightarrow}
\newcommand\xtworightarrow[2][]{%
  \ext@arrow 0055{\rightslashedarrowfillg@}{#1}{#2}}

\def\rightslashedarrowfillx@{%
  \slashedarrowfill@\Relbar\Relbar/\rightrightarrows}
\newcommand\xnTworightarrow[2][]{%
  \ext@arrow 0055{\rightslashedarrowfillx@}{#1}{#2}}

\def\rightslashedarrowfilly@{%
  \slashedarrowfill@\Relbar\Relbar{\raisebox{.12em}{}}\rightrightarrows}
\newcommand\xTworightarrow[2][]{%
  \ext@arrow 0055{\rightslashedarrowfilly@}{#1}{#2}}

\pgfdeclareshape{slash underlined}
{
  \inheritsavedanchors[from=rectangle] 
  \inheritanchorborder[from=rectangle]
  \inheritanchor[from=rectangle]{north}
  \inheritanchor[from=rectangle]{north west}
  \inheritanchor[from=rectangle]{north east}
  \inheritanchor[from=rectangle]{center}
  \inheritanchor[from=rectangle]{west}
  \inheritanchor[from=rectangle]{east}
  \inheritanchor[from=rectangle]{mid}
  \inheritanchor[from=rectangle]{mid west}
  \inheritanchor[from=rectangle]{mid east}
  \inheritanchor[from=rectangle]{base}
  \inheritanchor[from=rectangle]{base west}
  \inheritanchor[from=rectangle]{base east}
  \inheritanchor[from=rectangle]{south}
  \inheritanchor[from=rectangle]{south west}
  \inheritanchor[from=rectangle]{south east}
  \inheritanchorborder[from=rectangle]
  \foregroundpath{
    \southwest \pgf@xa=\pgf@x \pgf@ya=\pgf@y
    \northeast \pgf@xb=\pgf@x \pgf@yb=\pgf@y
    \pgf@xc=\pgf@xa
    \advance\pgf@xc by .5\pgf@xb
    \pgf@yc=\pgf@ya
    \advance\pgf@xc by -1.3pt
    \advance\pgf@yc by -1.8pt
    \pgfpathmoveto{\pgfqpoint{\pgf@xc}{\pgf@yc}}
    \advance\pgf@xc by  2.6pt
    \advance\pgf@yc by  3.6pt
    \pgfpathlineto{\pgfqpoint{\pgf@xc}{\pgf@yc}}
    \pgfpathmoveto{\pgfqpoint{\pgf@xa}{\pgf@ya}}
    \pgfpathlineto{\pgfqpoint{\pgf@xb}{\pgf@ya}}
 }
}
\tikzset{nomorepostaction/.code=\let\tikz@postactions\pgfutil@empty}

\newtheorem{theorem}{Theorem}[section]
\newtheorem{definition}[theorem]{Definition}

\begin{document}

\begin{titlepage}
\thispagestyle{empty}

\hrule
\begin{center}
{\bf\LARGE Actors}
\vspace{0.1cm}

{\bf --- A Process Algebra Based Approach\\}
\vspace{0.7cm}
--- Yong Wang ---

\vspace{2cm}
\begin{figure}[!htbp]
 \centering
 \includegraphics[width=1.0\textwidth]{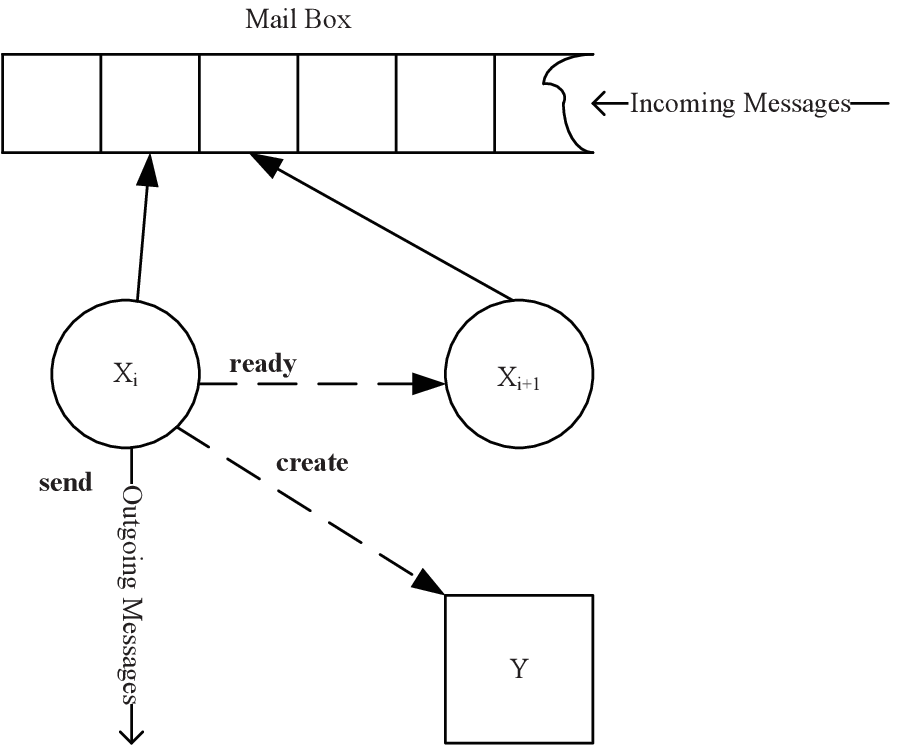}
\end{figure}

\end{center}
\end{titlepage}

\newpage 

\setcounter{page}{1}\pagenumbering{roman}

\tableofcontents

\newpage

\setcounter{page}{1}\pagenumbering{arabic}

        \section{Introduction}\label{intro}

There are many work on the formalization for concurrency, such as process algebra \cite{CC} \cite{CCS} \cite{ACP} and actors \cite{Actor1} \cite{Actor2} \cite{Actor3} \cite{Actor4}.
Traditionally, process algebras model the interleaving concurrency and actors capture the true concurrency.

We have done some work on truly concurrent process algebra \cite{ATC} \cite{CTC} \cite{PITC}, which is proven to be a generalization of traditional process algebra for true concurrency.
Now, for actors and truly concurrent process algebra are all models for true concurrency, can we model actors based on truly concurrent process algebra? That is a natural problem, in this
book, we try to do this work. 

We capture the actor model in the following characteristics:

\begin{enumerate}
  \item Concurrency: all actors execute concurrently;
  \item Asynchrony: an actor receives and sends messages asynchronously;
  \item Uniqueness: an actor has a unique name and the associate unique mail box name;
  \item Concentration: an actor focuses on the processing messages, including some local computations, creations of some new actors, and sending some messages to other actors;
  \item Communication Dependency: the only way of affecting an actor is sending a message to it;
  \item Abstraction: except for the receiving and sending message, and creating new actors, the local computations are abstracted;
  \item Persistence: an actor does not disappear after processing a message.
\end{enumerate}

Truly concurrent process algebra has rich expressive ability to model the above characteristics of actors, and more importantly, they are models for true concurrency, Comparing with 
other models of actors, the truly concurrent process algebra based model has the following advantages.

\begin{enumerate}
  \item The truly concurrent process algebra has rich expressive abilities to describe almost all characteristics of actors, especially for asynchronous communication, actor creation,
  recursion, abstraction, etc;
  \item The truly concurrent process algebra and actors are all models for true concurrency, and have inborn intimacy;
  \item The truly concurrent process algebra has a firm semantics foundation and a powerful proof theory, the correctness of an actor system can be proven easily.
\end{enumerate} 
This book is organized as follows. In chapter \ref{tcpa}, for the self-satisfaction, we introduce truly concurrent process algebra. We give the model of actors based on truly concurrent
process algebra in chapter \ref{pabam}. We use the truly concurrent process algebra based actor model to model some applications and systems, in chapters \ref{ammr}, \ref{amgfs}, 
\ref{amcrm}, \ref{amwsc}, and \ref{amqwsoe}, we model Map-Reduce, Google File System, cloud resource management, Web Service composition, and QoS-aware Web Service orchestration engine
respectively.
\newpage\section{Truly Concurrent Process Algebra}\label{tcpa}

In this chapter, to make this book be self-satisfied, we introduce the preliminaries on truly concurrent process algebra \cite{ATC} \cite{CTC} \cite{PITC}, which is based on truly
concurrent operational semantics.

APTC eliminates the differences of structures of transition system, event structure, etc, and discusses their behavioral equivalences. It considers that there are two kinds of causality
relations: the chronological order modeled by the sequential composition and the causal order between different parallel branches modeled by the communication merge. It also considers
that there exist two kinds of confliction relations: the structural confliction modeled by the alternative composition and the conflictions in different parallel branches which should
be eliminated. Based on conservative extension, there are four modules in APTC: BATC (Basic Algebra for True Concurrency), APTC (Algebra for Parallelism in True Concurrency), recursion
and abstraction.

\subsection{Basic Algebra for True Concurrency}

BATC has sequential composition $\cdot$ and alternative composition $+$ to capture the chronological ordered causality and the structural confliction. The constants are ranged over $A$,
the set of atomic actions. The algebraic laws on $\cdot$ and $+$ are sound and complete modulo truly concurrent bisimulation equivalences (including pomset bisimulation, step
bisimulation, hp-bisimulation and hhp-bisimulation).

\begin{definition}[Prime event structure with silent event]\label{PES}
Let $\Lambda$ be a fixed set of labels, ranged over $a,b,c,\cdots$ and $\tau$. A ($\Lambda$-labelled) prime event structure with silent event $\tau$ is a tuple
$\mathcal{E}=\langle \mathbb{E}, \leq, \sharp, \lambda\rangle$, where $\mathbb{E}$ is a denumerable set of events, including the silent event $\tau$. Let
$\hat{\mathbb{E}}=\mathbb{E}\backslash\{\tau\}$, exactly excluding $\tau$, it is obvious that $\hat{\tau^*}=\epsilon$, where $\epsilon$ is the empty event.
Let $\lambda:\mathbb{E}\rightarrow\Lambda$ be a labelling function and let $\lambda(\tau)=\tau$. And $\leq$, $\sharp$ are binary relations on $\mathbb{E}$,
called causality and conflict respectively, such that:

\begin{enumerate}
  \item $\leq$ is a partial order and $\lceil e \rceil = \{e'\in \mathbb{E}|e'\leq e\}$ is finite for all $e\in \mathbb{E}$. It is easy to see that
  $e\leq\tau^*\leq e'=e\leq\tau\leq\cdots\leq\tau\leq e'$, then $e\leq e'$.
  \item $\sharp$ is irreflexive, symmetric and hereditary with respect to $\leq$, that is, for all $e,e',e''\in \mathbb{E}$, if $e\sharp e'\leq e''$, then $e\sharp e''$.
\end{enumerate}

Then, the concepts of consistency and concurrency can be drawn from the above definition:

\begin{enumerate}
  \item $e,e'\in \mathbb{E}$ are consistent, denoted as $e\frown e'$, if $\neg(e\sharp e')$. A subset $X\subseteq \mathbb{E}$ is called consistent, if $e\frown e'$ for all
  $e,e'\in X$.
  \item $e,e'\in \mathbb{E}$ are concurrent, denoted as $e\parallel e'$, if $\neg(e\leq e')$, $\neg(e'\leq e)$, and $\neg(e\sharp e')$.
\end{enumerate}
\end{definition}

\begin{definition}[Configuration]
Let $\mathcal{E}$ be a PES. A (finite) configuration in $\mathcal{E}$ is a (finite) consistent subset of events $C\subseteq \mathcal{E}$, closed with respect to causality
(i.e. $\lceil C\rceil=C$). The set of finite configurations of $\mathcal{E}$ is denoted by $\mathcal{C}(\mathcal{E})$. We let $\hat{C}=C\backslash\{\tau\}$.
\end{definition}

A consistent subset of $X\subseteq \mathbb{E}$ of events can be seen as a pomset. Given $X, Y\subseteq \mathbb{E}$, $\hat{X}\sim \hat{Y}$ if $\hat{X}$ and $\hat{Y}$ are
isomorphic as pomsets. In the following of the paper, we say $C_1\sim C_2$, we mean $\hat{C_1}\sim\hat{C_2}$.

\begin{definition}[Pomset transitions and step]
Let $\mathcal{E}$ be a PES and let $C\in\mathcal{C}(\mathcal{E})$, and $\emptyset\neq X\subseteq \mathbb{E}$, if $C\cap X=\emptyset$ and $C'=C\cup X\in\mathcal{C}(\mathcal{E})$,
then $C\xrightarrow{X} C'$ is called a pomset transition from $C$ to $C'$. When the events in $X$ are pairwise concurrent, we say that $C\xrightarrow{X}C'$ is a step.
\end{definition}

\begin{definition}[Pomset, step bisimulation]\label{PSB}
Let $\mathcal{E}_1$, $\mathcal{E}_2$ be PESs. A pomset bisimulation is a relation $R\subseteq\mathcal{C}(\mathcal{E}_1)\times\mathcal{C}(\mathcal{E}_2)$, such that if
$(C_1,C_2)\in R$, and $C_1\xrightarrow{X_1}C_1'$ then $C_2\xrightarrow{X_2}C_2'$, with $X_1\subseteq \mathbb{E}_1$, $X_2\subseteq \mathbb{E}_2$, $X_1\sim X_2$ and $(C_1',C_2')\in R$,
and vice-versa. We say that $\mathcal{E}_1$, $\mathcal{E}_2$ are pomset bisimilar, written $\mathcal{E}_1\sim_p\mathcal{E}_2$, if there exists a pomset bisimulation $R$, such that
$(\emptyset,\emptyset)\in R$. By replacing pomset transitions with steps, we can get the definition of step bisimulation. When PESs $\mathcal{E}_1$ and $\mathcal{E}_2$ are step
bisimilar, we write $\mathcal{E}_1\sim_s\mathcal{E}_2$.
\end{definition}

\begin{definition}[Posetal product]
Given two PESs $\mathcal{E}_1$, $\mathcal{E}_2$, the posetal product of their configurations, denoted $\mathcal{C}(\mathcal{E}_1)\overline{\times}\mathcal{C}(\mathcal{E}_2)$,
is defined as

$$\{(C_1,f,C_2)|C_1\in\mathcal{C}(\mathcal{E}_1),C_2\in\mathcal{C}(\mathcal{E}_2),f:C_1\rightarrow C_2 \textrm{ isomorphism}\}.$$

A subset $R\subseteq\mathcal{C}(\mathcal{E}_1)\overline{\times}\mathcal{C}(\mathcal{E}_2)$ is called a posetal relation. We say that $R$ is downward closed when for any
$(C_1,f,C_2),(C_1',f',C_2')\in \mathcal{C}(\mathcal{E}_1)\overline{\times}\mathcal{C}(\mathcal{E}_2)$, if $(C_1,f,C_2)\subseteq (C_1',f',C_2')$ pointwise and $(C_1',f',C_2')\in R$,
then $(C_1,f,C_2)\in R$.

For $f:X_1\rightarrow X_2$, we define $f[x_1\mapsto x_2]:X_1\cup\{x_1\}\rightarrow X_2\cup\{x_2\}$, $z\in X_1\cup\{x_1\}$,(1)$f[x_1\mapsto x_2](z)=
x_2$,if $z=x_1$;(2)$f[x_1\mapsto x_2](z)=f(z)$, otherwise. Where $X_1\subseteq \mathbb{E}_1$, $X_2\subseteq \mathbb{E}_2$, $x_1\in \mathbb{E}_1$, $x_2\in \mathbb{E}_2$.
\end{definition}

\begin{definition}[(Hereditary) history-preserving bisimulation]\label{HHPB}
A history-preserving (hp-) bisimulation is a posetal relation $R\subseteq\mathcal{C}(\mathcal{E}_1)\overline{\times}\mathcal{C}(\mathcal{E}_2)$ such that if $(C_1,f,C_2)\in R$,
and $C_1\xrightarrow{e_1} C_1'$, then $C_2\xrightarrow{e_2} C_2'$, with $(C_1',f[e_1\mapsto e_2],C_2')\in R$, and vice-versa. $\mathcal{E}_1,\mathcal{E}_2$ are history-preserving
(hp-)bisimilar and are written $\mathcal{E}_1\sim_{hp}\mathcal{E}_2$ if there exists a hp-bisimulation $R$ such that $(\emptyset,\emptyset,\emptyset)\in R$.

A hereditary history-preserving (hhp-)bisimulation is a downward closed hp-bisimulation. $\mathcal{E}_1,\mathcal{E}_2$ are hereditary history-preserving (hhp-)bisimilar and are
written $\mathcal{E}_1\sim_{hhp}\mathcal{E}_2$.
\end{definition}

In the following, let $e_1, e_2, e_1', e_2'\in \mathbb{E}$, and let variables $x,y,z$ range over the set of terms for true concurrency, $p,q,s$ range over the set of closed terms.
The set of axioms of BATC consists of the laws given in Table \ref{AxiomsForBATC}.

\begin{center}
    \begin{table}
        \begin{tabular}{@{}ll@{}}
            \hline No. &Axiom\\
            $A1$ & $x+ y = y+ x$\\
            $A2$ & $(x+ y)+ z = x+ (y+ z)$\\
            $A3$ & $x+ x = x$\\
            $A4$ & $(x+ y)\cdot z = x\cdot z + y\cdot z$\\
            $A5$ & $(x\cdot y)\cdot z = x\cdot(y\cdot z)$\\
        \end{tabular}
        \caption{Axioms of BATC}
        \label{AxiomsForBATC}
    \end{table}
\end{center}

We give the operational transition rules of operators $\cdot$ and $+$ as Table \ref{TRForBATC} shows. And the predicate $\xrightarrow{e}\surd$ represents successful termination after
execution of the event $e$.

\begin{center}
    \begin{table}
        $$\frac{}{e\xrightarrow{e}\surd}$$
        $$\frac{x\xrightarrow{e}\surd}{x+ y\xrightarrow{e}\surd} \quad\frac{x\xrightarrow{e}x'}{x+ y\xrightarrow{e}x'} \quad\frac{y\xrightarrow{e}\surd}{x+ y\xrightarrow{e}\surd}
        \quad\frac{y\xrightarrow{e}y'}{x+ y\xrightarrow{e}y'}$$
        $$\frac{x\xrightarrow{e}\surd}{x\cdot y\xrightarrow{e} y} \quad\frac{x\xrightarrow{e}x'}{x\cdot y\xrightarrow{e}x'\cdot y}$$
        \caption{Transition rules of BATC}
        \label{TRForBATC}
    \end{table}
\end{center}

\begin{theorem}[Soundness of BATC modulo truly concurrent bisimulation equivalences]\label{SBATC}
The axiomatization of BATC is sound modulo truly concurrent bisimulation equivalences $\sim_{p}$, $\sim_{s}$, $\sim_{hp}$ and $\sim_{hhp}$. That is,

\begin{enumerate}
  \item let $x$ and $y$ be BATC terms. If BATC $\vdash x=y$, then $x\sim_{p} y$;
  \item let $x$ and $y$ be BATC terms. If BATC $\vdash x=y$, then $x\sim_{s} y$;
  \item let $x$ and $y$ be BATC terms. If BATC $\vdash x=y$, then $x\sim_{hp} y$;
  \item let $x$ and $y$ be BATC terms. If BATC $\vdash x=y$, then $x\sim_{hhp} y$.
\end{enumerate}

\end{theorem}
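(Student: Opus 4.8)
The plan is to prove soundness by the standard algebraic route: show that each of the four truly concurrent bisimulation equivalences is a congruence with respect to the operators $\cdot$ and $+$, and then verify that every axiom in Table \ref{AxiomsForBATC} is sound for each equivalence. Since provable equality $\text{BATC}\vdash x=y$ is the smallest congruence generated by the axioms, these two ingredients together yield $x\sim y$ for each $\sim\in\{\sim_p,\sim_s,\sim_{hp},\sim_{hhp}\}$ by induction on the length of the derivation of $x=y$.

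First I would establish the congruence property. For each equivalence I would exhibit, for every operator, a way to combine bisimulations on the arguments into a bisimulation on the composite terms. Concretely, if $R_1$ relates $\mathcal{E}_{x_1}$ to $\mathcal{E}_{y_1}$ and $R_2$ relates $\mathcal{E}_{x_2}$ to $\mathcal{E}_{y_2}$, I would build witnessing relations for $x_1+x_2$ versus $y_1+y_2$ and for $x_1\cdot x_2$ versus $y_1\cdot y_2$, reading the transition rules of Table \ref{TRForBATC} to check that every move on the left is matched on the right with a pomset (respectively step, or single-event together with the extended isomorphism $f[e_1\mapsto e_2]$) of the required shape, and that the resulting configurations stay related. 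For the hp- and hhp-cases the relation carries the order-isomorphism component, so the bookkeeping is to confirm the composed isomorphisms remain isomorphisms and, for hhp, that downward closure is preserved under the operators.

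Next I would treat the axioms themselves. For each of $A1$ through $A5$ I would construct an explicit bisimulation relating the left-hand and right-hand closed terms and check the transfer conditions using the operational rules. The commutativity and associativity laws $A1,A2,A5$ and the idempotence law $A3$ follow because both sides generate the same (up to isomorphism) reachable configurations and transitions; the left-distributivity law $A4$ is verified by matching the initial choice on each side. In each case I would take the candidate relation to be (the identity-style) relation pairing syntactically corresponding states, equipped with the identity order-isomorphism for the hp/hhp variants, and note it is trivially downward closed, which settles the hhp-case simultaneously.

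The main obstacle will be the hhp-bisimulation case, because downward closure is a genuinely stronger requirement than the plain hp condition and is not automatically inherited when one composes or restricts relations across the operators. I expect the congruence argument for $\sim_{hhp}$ under sequential composition to need the most care: when a transition crosses from the first factor into the second in $x\cdot y$, I must ensure the chosen order-isomorphisms on the two segments glue into a single isomorphism whose every pointwise-smaller triple still lies in the candidate relation. Once downward closure is verified to be stable under this gluing, the remaining verifications are routine case analyses over the finitely many transition rules, and the induction on derivation length closes the proof.
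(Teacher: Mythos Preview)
The paper does not actually supply a proof of this theorem: it appears in the preliminaries chapter as a result imported from the cited work \cite{ATC}, stated without any accompanying \texttt{proof} environment. So there is no in-paper argument to compare against.

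Your proposal follows the standard two-step soundness template (congruence of each $\sim$ with respect to $+$ and $\cdot$, then verification of each axiom $A1$--$A5$, closed by induction on derivations), and that is exactly the route taken in the underlying reference. Your identification of the hhp case under sequential composition as the delicate point is accurate: downward closure is the only place where the otherwise routine bookkeeping requires genuine care, and your plan to check that the glued isomorphisms remain downward closed is the right thing to do. Nothing is missing from the outline; it would constitute a complete proof once the case analyses are written out.
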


\begin{theorem}[Completeness of BATC modulo truly concurrent bisimulation equivalences]\label{CBATC}
The axiomatization of BATC is complete modulo truly concurrent bisimulation equivalences $\sim_{p}$, $\sim_{s}$, $\sim_{hp}$ and $\sim_{hhp}$. That is,

\begin{enumerate}
  \item let $p$ and $q$ be closed BATC terms, if $p\sim_{p} q$ then $p=q$;
  \item let $p$ and $q$ be closed BATC terms, if $p\sim_{s} q$ then $p=q$;
  \item let $p$ and $q$ be closed BATC terms, if $p\sim_{hp} q$ then $p=q$;
  \item let $p$ and $q$ be closed BATC terms, if $p\sim_{hhp} q$ then $p=q$.
\end{enumerate}

\end{theorem}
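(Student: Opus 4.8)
The plan is to follow the standard route for proving completeness of an equational axiomatization with respect to a behavioural equivalence: introduce normal forms, prove that every closed term reduces to one, and then show that bisimilar normal forms are provably equal. First I would fix a notion of \emph{basic term} (normal form): the smallest set containing every atomic action $a\in A$, closed under $a\cdot t$ for $a\in A$ and $t$ basic, and under $t+s$ for $t,s$ basic; intuitively a basic term is a finite sum $\sum_i a_i\cdot t_i+\sum_j b_j$. I would then prove an \emph{elimination lemma}: for every closed BATC term $p$ there is a basic term $n$ with BATC $\vdash p=n$. This is a structural induction on $p$; the only nontrivial case is a product $p=p_1\cdot p_2$ where $p_1$ is already a sum, and it is discharged by the right-distributivity axiom $A4$ and the associativity axiom $A5$, pushing the trailing factor inside and flattening nested products.

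Second, and this is the conceptual heart, I would observe that BATC contains no parallel operator, so the transition system generated by a closed term via Table \ref{TRForBATC} has only single-event transitions and its underlying event structure carries no genuine concurrency: all events are either causally ordered or in conflict. Consequently every pomset transition is a sequence of single steps, every step is a singleton, and the isomorphism $f$ and the downward-closure requirement in Definitions \ref{PSB} and \ref{HHPB} are met automatically. Hence on closed BATC terms the four equivalences $\sim_p$, $\sim_s$, $\sim_{hp}$ and $\sim_{hhp}$ all coincide, and each reduces to the ordinary single-action bisimulation condition read off the rules. This lets me treat all four items uniformly.

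Third I would prove the key lemma: if $n_1,n_2$ are basic terms and $n_1\sim n_2$ in the common sense above, then BATC $\vdash n_1=n_2$. The argument is an induction on the combined size of $n_1$ and $n_2$. For each summand of $n_1$ the corresponding transition ($n_1\xrightarrow{a}\surd$ or $n_1\xrightarrow{a}n_1'$) must be matched by a transition of $n_2$, yielding a matching summand with bisimilar, hence by the induction hypothesis provably equal, tails; symmetry gives the reverse inclusion of summands. Closing both inclusions under axioms $A1$, $A2$ for commutativity and associativity of $+$ and $A3$ for idempotence then yields BATC $\vdash n_1=n_2$. Combining the elimination lemma, soundness (Theorem \ref{SBATC}) to transport $\sim$ across provable equalities, and this lemma gives, for closed $p,q$, the chain $p\sim q\Rightarrow n_p\sim n_q\Rightarrow n_p=n_q\Rightarrow p=q$, establishing all four statements at once.

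The main obstacle I anticipate is the second step, namely rigorously justifying that the richer truly-concurrent observations collapse on BATC terms. One must verify carefully that no closed BATC term can ever exhibit two concurrent events, so that the pomset, step, posetal, and hereditary refinements genuinely add nothing beyond single-action matching; once this collapse is secured, the remaining combinatorics of matching summands and the appeals to $A1$, $A2$, $A3$ are routine.
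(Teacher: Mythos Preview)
The paper states this theorem without proof; it is listed among preliminaries imported from the author's earlier work (the cited \cite{ATC}, \cite{CTC}, \cite{PITC}), so there is no in-paper argument to compare against. Evaluated on its own, your plan is the standard and correct route: elimination to basic terms via $A4$, $A5$, then an induction on size showing that bisimilar basic terms have mutually absorbable summands using $A1$--$A3$.

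Your central simplification---that BATC has no parallel construct, so the four truly concurrent equivalences collapse to ordinary single-action bisimulation---is correct, but one sentence of your justification should be sharpened. It is not the case that every pomset transition of a BATC term is a singleton: in $a\cdot b$ one has the pomset transition $\emptyset\xrightarrow{\{a,b\}}\{a,b\}$ with $a<b$. The right statement is that every such $X$ is a \emph{chain} (since any two events in a BATC PES are either causally ordered or in conflict, never concurrent), and a chain-shaped pomset transition is matched by chaining the single-event matches supplied by ordinary bisimulation; for $\sim_{hp}$ and $\sim_{hhp}$ the configurations are linearly ordered, so the isomorphism $f$ is forced and downward closure is automatic. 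With that refinement in place, the rest of your argument goes through exactly as written.
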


\subsection{Algebra for Parallelism in True Concurrency}

APTC uses the whole parallel operator $\between$, the auxiliary binary parallel $\parallel$ to model parallelism, and the communication merge $\mid$ to model communications among
different parallel branches, and also the unary conflict elimination operator $\Theta$ and the binary unless operator $\triangleleft$ to eliminate conflictions among different parallel
branches. Since a communication may be blocked, a new constant called deadlock $\delta$ is extended to $A$, and also a new unary encapsulation operator $\partial_H$ is introduced to
eliminate $\delta$, which may exist in the processes. The algebraic laws on these operators are also sound and complete modulo truly concurrent bisimulation equivalences (including
pomset bisimulation, step bisimulation, hp-bisimulation, but not hhp-bisimulation). Note that, the parallel operator $\parallel$ in a process cannot be eliminated by deductions on
the process using axioms of APTC, but other operators can eventually be steadied by $\cdot$, $+$ and $\parallel$, this is also why truly concurrent bisimulations are called an
\emph{truly concurrent} semantics.

We design the axioms of APTC in Table \ref{AxiomsForAPTC}, including algebraic laws of parallel operator $\parallel$, communication operator $\mid$, conflict elimination operator
$\Theta$ and unless operator $\triangleleft$, encapsulation operator $\partial_H$, the deadlock constant $\delta$, and also the whole parallel operator $\between$.

\begin{center}
    \begin{table}
        \begin{tabular}{@{}ll@{}}
            \hline No. &Axiom\\
            $A6$ & $x+ \delta = x$\\
            $A7$ & $\delta\cdot x =\delta$\\
            $P1$ & $x\between y = x\parallel y + x\mid y$\\
            $P2$ & $x\parallel y = y \parallel x$\\
            $P3$ & $(x\parallel y)\parallel z = x\parallel (y\parallel z)$\\
            $P4$ & $e_1\parallel (e_2\cdot y) = (e_1\parallel e_2)\cdot y$\\
            $P5$ & $(e_1\cdot x)\parallel e_2 = (e_1\parallel e_2)\cdot x$\\
            $P6$ & $(e_1\cdot x)\parallel (e_2\cdot y) = (e_1\parallel e_2)\cdot (x\between y)$\\
            $P7$ & $(x+ y)\parallel z = (x\parallel z)+ (y\parallel z)$\\
            $P8$ & $x\parallel (y+ z) = (x\parallel y)+ (x\parallel z)$\\
            $P9$ & $\delta\parallel x = \delta$\\
            $P10$ & $x\parallel \delta = \delta$\\
            $C11$ & $e_1\mid e_2 = \gamma(e_1,e_2)$\\
            $C12$ & $e_1\mid (e_2\cdot y) = \gamma(e_1,e_2)\cdot y$\\
            $C13$ & $(e_1\cdot x)\mid e_2 = \gamma(e_1,e_2)\cdot x$\\
            $C14$ & $(e_1\cdot x)\mid (e_2\cdot y) = \gamma(e_1,e_2)\cdot (x\between y)$\\
            $C15$ & $(x+ y)\mid z = (x\mid z) + (y\mid z)$\\
            $C16$ & $x\mid (y+ z) = (x\mid y)+ (x\mid z)$\\
            $C17$ & $\delta\mid x = \delta$\\
            $C18$ & $x\mid\delta = \delta$\\
            $CE19$ & $\Theta(e) = e$\\
            $CE20$ & $\Theta(\delta) = \delta$\\
            $CE21$ & $\Theta(x+ y) = \Theta(x)\triangleleft y + \Theta(y)\triangleleft x$\\
            $CE22$ & $\Theta(x\cdot y)=\Theta(x)\cdot\Theta(y)$\\
            $CE23$ & $\Theta(x\parallel y) = ((\Theta(x)\triangleleft y)\parallel y)+ ((\Theta(y)\triangleleft x)\parallel x)$\\
            $CE24$ & $\Theta(x\mid y) = ((\Theta(x)\triangleleft y)\mid y)+ ((\Theta(y)\triangleleft x)\mid x)$\\
            $U25$ & $(\sharp(e_1,e_2))\quad e_1\triangleleft e_2 = \tau$\\
            $U26$ & $(\sharp(e_1,e_2),e_2\leq e_3)\quad e_1\triangleleft e_3 = e_1$\\
            $U27$ & $(\sharp(e_1,e_2),e_2\leq e_3)\quad e3\triangleleft e_1 = \tau$\\
            $U28$ & $e\triangleleft \delta = e$\\
            $U29$ & $\delta \triangleleft e = \delta$\\
            $U30$ & $(x+ y)\triangleleft z = (x\triangleleft z)+ (y\triangleleft z)$\\
            $U31$ & $(x\cdot y)\triangleleft z = (x\triangleleft z)\cdot (y\triangleleft z)$\\
            $U32$ & $(x\parallel y)\triangleleft z = (x\triangleleft z)\parallel (y\triangleleft z)$\\
            $U33$ & $(x\mid y)\triangleleft z = (x\triangleleft z)\mid (y\triangleleft z)$\\
            $U34$ & $x\triangleleft (y+ z) = (x\triangleleft y)\triangleleft z$\\
            $U35$ & $x\triangleleft (y\cdot z)=(x\triangleleft y)\triangleleft z$\\
            $U36$ & $x\triangleleft (y\parallel z) = (x\triangleleft y)\triangleleft z$\\
            $U37$ & $x\triangleleft (y\mid z) = (x\triangleleft y)\triangleleft z$\\
            $D1$ & $e\notin H\quad\partial_H(e) = e$\\
            $D2$ & $e\in H\quad \partial_H(e) = \delta$\\
            $D3$ & $\partial_H(\delta) = \delta$\\
            $D4$ & $\partial_H(x+ y) = \partial_H(x)+\partial_H(y)$\\
            $D5$ & $\partial_H(x\cdot y) = \partial_H(x)\cdot\partial_H(y)$\\
            $D6$ & $\partial_H(x\parallel y) = \partial_H(x)\parallel\partial_H(y)$\\
        \end{tabular}
        \caption{Axioms of APTC}
        \label{AxiomsForAPTC}
    \end{table}
\end{center}

We give the transition rules of APTC in Table \ref{TRForAPTC}, it is suitable for all truly concurrent behavioral equivalence, including pomset bisimulation, step bisimulation,
hp-bisimulation and hhp-bisimulation.

\begin{center}
    \begin{table}
        $$\frac{x\xrightarrow{e_1}\surd\quad y\xrightarrow{e_2}\surd}{x\parallel y\xrightarrow{\{e_1,e_2\}}\surd} \quad\frac{x\xrightarrow{e_1}x'\quad y\xrightarrow{e_2}\surd}{x\parallel y\xrightarrow{\{e_1,e_2\}}x'}$$
        $$\frac{x\xrightarrow{e_1}\surd\quad y\xrightarrow{e_2}y'}{x\parallel y\xrightarrow{\{e_1,e_2\}}y'} \quad\frac{x\xrightarrow{e_1}x'\quad y\xrightarrow{e_2}y'}{x\parallel y\xrightarrow{\{e_1,e_2\}}x'\between y'}$$
        $$\frac{x\xrightarrow{e_1}\surd\quad y\xrightarrow{e_2}\surd}{x\mid y\xrightarrow{\gamma(e_1,e_2)}\surd} \quad\frac{x\xrightarrow{e_1}x'\quad y\xrightarrow{e_2}\surd}{x\mid y\xrightarrow{\gamma(e_1,e_2)}x'}$$
        $$\frac{x\xrightarrow{e_1}\surd\quad y\xrightarrow{e_2}y'}{x\mid y\xrightarrow{\gamma(e_1,e_2)}y'} \quad\frac{x\xrightarrow{e_1}x'\quad y\xrightarrow{e_2}y'}{x\mid y\xrightarrow{\gamma(e_1,e_2)}x'\between y'}$$
        $$\frac{x\xrightarrow{e_1}\surd\quad (\sharp(e_1,e_2))}{\Theta(x)\xrightarrow{e_1}\surd} \quad\frac{x\xrightarrow{e_2}\surd\quad (\sharp(e_1,e_2))}{\Theta(x)\xrightarrow{e_2}\surd}$$
        $$\frac{x\xrightarrow{e_1}x'\quad (\sharp(e_1,e_2))}{\Theta(x)\xrightarrow{e_1}\Theta(x')} \quad\frac{x\xrightarrow{e_2}x'\quad (\sharp(e_1,e_2))}{\Theta(x)\xrightarrow{e_2}\Theta(x')}$$
        $$\frac{x\xrightarrow{e_1}\surd \quad y\nrightarrow^{e_2}\quad (\sharp(e_1,e_2))}{x\triangleleft y\xrightarrow{\tau}\surd}
        \quad\frac{x\xrightarrow{e_1}x' \quad y\nrightarrow^{e_2}\quad (\sharp(e_1,e_2))}{x\triangleleft y\xrightarrow{\tau}x'}$$
        $$\frac{x\xrightarrow{e_1}\surd \quad y\nrightarrow^{e_3}\quad (\sharp(e_1,e_2),e_2\leq e_3)}{x\triangleleft y\xrightarrow{e_1}\surd}
        \quad\frac{x\xrightarrow{e_1}x' \quad y\nrightarrow^{e_3}\quad (\sharp(e_1,e_2),e_2\leq e_3)}{x\triangleleft y\xrightarrow{e_1}x'}$$
        $$\frac{x\xrightarrow{e_3}\surd \quad y\nrightarrow^{e_2}\quad (\sharp(e_1,e_2),e_1\leq e_3)}{x\triangleleft y\xrightarrow{\tau}\surd}
        \quad\frac{x\xrightarrow{e_3}x' \quad y\nrightarrow^{e_2}\quad (\sharp(e_1,e_2),e_1\leq e_3)}{x\triangleleft y\xrightarrow{\tau}x'}$$
        $$\frac{x\xrightarrow{e}\surd}{\partial_H(x)\xrightarrow{e}\surd}\quad (e\notin H)\quad\quad\frac{x\xrightarrow{e}x'}{\partial_H(x)\xrightarrow{e}\partial_H(x')}\quad(e\notin H)$$
        \caption{Transition rules of APTC}
        \label{TRForAPTC}
    \end{table}
\end{center}

\begin{theorem}[Soundness of APTC modulo truly concurrent bisimulation equivalences]\label{SAPTC}
The axiomatization of APTC is sound modulo truly concurrent bisimulation equivalences $\sim_{p}$, $\sim_{s}$, and $\sim_{hp}$. That is,

\begin{enumerate}
  \item let $x$ and $y$ be APTC terms. If APTC $\vdash x=y$, then $x\sim_{p} y$;
  \item let $x$ and $y$ be APTC terms. If APTC $\vdash x=y$, then $x\sim_{s} y$;
  \item let $x$ and $y$ be APTC terms. If APTC $\vdash x=y$, then $x\sim_{hp} y$.
\end{enumerate}

\end{theorem}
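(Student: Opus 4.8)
The plan is to reduce the soundness claim to two separate facts and then combine them by induction on the structure of derivations. Since APTC-provable equality is by definition the smallest congruence on terms containing all instances of the axioms in Table~\ref{AxiomsForAPTC} (together with the BATC axioms $A1$--$A5$), it suffices to establish (i) that each of $\sim_p$, $\sim_s$, $\sim_{hp}$ is a congruence with respect to every operator of APTC --- namely $+$, $\cdot$, $\between$, $\parallel$, $\mid$, $\Theta$, $\triangleleft$, and $\partial_H$ --- and (ii) that for each axiom $s=t$ one has $s\sim_p t$, $s\sim_s t$, and $s\sim_{hp} t$. Given (i) and (ii), a straightforward induction on the length of a derivation of $x=y$ yields $x\sim y$ for each of the three equivalences: reflexivity, symmetry and transitivity of $\sim$ handle the equivalence-closure steps, (i) handles the congruence steps, and (ii) supplies the base case.

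For step (i) I would argue congruence operator by operator. The rules in Table~\ref{TRForAPTC} are in a well-behaved inductive format (each premise tests a subterm and the conclusion rebuilds the same operator), so from bisimulations $R_1$ witnessing $p_1\sim q_1$ and $R_2$ witnessing $p_2\sim q_2$ one constructs the expected bisimulation for a compound term: for a binary operator $f$ take the relation containing all pairs $(f(p_1',p_2'),f(q_1',q_2'))$ reached along matched transitions, and verify the transfer conditions directly against the rules. For $\sim_{hp}$ one additionally carries the order isomorphism, extending it by $f[e_1\mapsto e_2]$ exactly as in Definition~\ref{HHPB} whenever a matched pair of events fires. The congruence proofs for $\cdot$ and $+$ are inherited from BATC, so the genuinely new content concerns $\parallel$, $\mid$, $\Theta$, $\triangleleft$, and $\partial_H$.

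For step (ii) I would exhibit, for each axiom, an explicit relation (posetal, in the hp case) and check it is a bisimulation using Table~\ref{TRForBATC} and Table~\ref{TRForAPTC}. The axioms fall into natural groups: the deadlock laws $A6$, $A7$, $P9$, $P10$, $C17$, $C18$, $U28$, $U29$, $D3$, which hold because $\delta$ has no outgoing transition; the commutativity, associativity and distributivity laws for $\parallel$, $\mid$, $\triangleleft$, $\partial_H$ (namely $P2$, $P3$, $P7$, $P8$, $C15$, $C16$, $U30$--$U37$, $D4$--$D6$), which are routine structural matchings; the expansion-style laws $P4$--$P6$ and $C11$--$C14$, relating a parallel or communication step to a single combined transition labelled $\{e_1,e_2\}$ respectively $\gamma(e_1,e_2)$; and the conflict-elimination laws $CE19$--$CE24$, $U25$--$U27$, governed by the side conditions $\sharp$ and $\leq$.

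The main obstacle will be the hp-bisimulation case of the mixed parallel/conflict laws, in particular $P6$, $C14$, $CE23$, $CE24$, and $U25$--$U27$. Here one must not merely match label multisets (as suffices for $\sim_p$ and $\sim_s$ per Definition~\ref{PSB}) but produce and maintain an order isomorphism between configurations; the proof must check that the causal and concurrency structure created by a combined step $\{e_1,e_2\}$ on one side agrees with that produced on the other, and that firing $e_1$ then $e_2$ (or the reverse) yields posetally related configurations. The $\triangleleft$ rules moreover introduce $\tau$-labelled transitions under the conflict conditions of $U25$ and $U27$, which must be reconciled with $\hat{C}=C\backslash\{\tau\}$ so that the isomorphism is built on the visible events only. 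It is precisely this order-sensitive bookkeeping where care is needed, and --- as the surrounding discussion anticipates --- it is also where the stronger downward-closure demand of hhp-bisimulation would break down, which is why $\sim_{hhp}$ is deliberately omitted from the statement.
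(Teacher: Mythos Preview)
Your proposal is methodologically sound and follows the standard two-part decomposition for process-algebraic soundness proofs: establish congruence of each equivalence with respect to every operator, verify each axiom instance semantically, and then combine by induction on derivations. This is the correct approach.

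However, you should note that the paper itself does not supply a proof of this theorem. Chapter~\ref{tcpa} is explicitly preliminary material (``to make this book be self-satisfied, we introduce the preliminaries on truly concurrent process algebra \cite{ATC} \cite{CTC} \cite{PITC}''), and Theorem~\ref{SAPTC} is stated without proof as a known result imported from the cited work~\cite{ATC}. So there is no in-paper proof to compare against; your proposal simply reconstructs what the referenced source would contain. Your outline is entirely appropriate for that purpose, and your identification of the order-isomorphism bookkeeping in the hp case (particularly for $P6$, $C14$, and the $\triangleleft$ rules) as the delicate point is accurate.
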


\begin{theorem}[Completeness of APTC modulo truly concurrent bisimulation equivalences]\label{CAPTC}
The axiomatization of APTC is complete modulo truly concurrent bisimulation equivalences $\sim_{p}$, $\sim_{s}$, and $\sim_{hp}$. That is,

\begin{enumerate}
  \item let $p$ and $q$ be closed APTC terms, if $p\sim_{p} q$ then $p=q$;
  \item let $p$ and $q$ be closed APTC terms, if $p\sim_{s} q$ then $p=q$;
  \item let $p$ and $q$ be closed APTC terms, if $p\sim_{hp} q$ then $p=q$.
\end{enumerate}

\end{theorem}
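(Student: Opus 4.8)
The plan is to follow the same two-stage strategy that underlies the completeness proof for BATC (Theorem \ref{CBATC}), adapted to the richer signature of APTC. The first stage is an elimination theorem: every closed APTC term is provably equal to a \emph{basic term} built only from atomic actions (and $\delta$) using $+$, $\cdot$, and the parallel operator $\parallel$. Concretely, I would orient the axioms $P1$, $C11$--$C18$, $CE19$--$CE24$, $U25$--$U37$ and $D1$--$D6$ from left to right so that they push the auxiliary operators $\between$, $\mid$, $\Theta$, $\triangleleft$ and $\partial_H$ toward the leaves, where $C11$, $CE19$--$CE20$, $U25$--$U29$ and $D1$--$D3$ finally discharge them into applications of $\gamma$, into $\tau$, into $e$, or into $\delta$. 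Using $P4$--$P8$ the parallel operator is driven inward until it applies only to leading events, yielding heads of the form $e_1\parallel\cdots\parallel e_k$; note that $P6$ re-introduces $\between$, which $P1$ re-eliminates, so one checks that this mutual rewriting terminates via a suitable weight (multiset) measure on operator occurrences that strictly decreases at each step. Because the soundness theorem (Theorem \ref{SAPTC}) tells us each rewrite preserves $\sim_p$, $\sim_s$ and $\sim_{hp}$, reducing to basic terms loses no generality.

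The second stage is to show that for basic terms $p,q$, if $p\sim_\star q$ then APTC $\vdash p=q$, for each $\star\in\{p,s,hp\}$. I would fix a normal form in which every summand has the shape $(\alpha_1)\cdot(\alpha_2)\cdots$, where each $\alpha_i$ is a parallel product $e_{i,1}\parallel\cdots\parallel e_{i,k_i}$ of pairwise concurrent events, and use the transition rules of Table \ref{TRForAPTC} to read off the initial behaviour of such a term: a basic term can fire exactly the multisets $\{e_{i,1},\dots,e_{i,k_i}\}$ as steps, passing to the corresponding residual. The argument then proceeds by induction on the combined size of $p$ and $q$. For the step and pomset cases, a step (resp. pomset-labelled) transition of $p$ must be matched by an equivalent transition of $q$, which identifies a summand of $q$ whose leading parallel product is isomorphic as a pomset to that of $p$; since $\parallel$ is commutative and associative ($P2$, $P3$), isomorphic parallel products of events are already provably equal, and the induction hypothesis applied to the residuals, together with the $+$- and $\cdot$-reasoning inherited from the BATC completeness proof, closes the case.

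The hp-case requires carrying the order-isomorphism $f$ of the posetal relation through the induction: here a transition must be matched by a \emph{single} event $e_2$ with $f$ extended by $e_1\mapsto e_2$, so I would strengthen the induction to a statement about pairs of configurations together with a compatible isomorphism, matching one event of a leading parallel product at a time while recording its causal dependencies. The main obstacle I anticipate is precisely this interaction between the parallel operator and the equivalences: unlike the interleaving setting, $e_1\parallel e_2$ is \emph{not} provably equal to $e_1\cdot e_2+e_2\cdot e_1$, so the normal form must faithfully retain the concurrency structure, and the matching step must argue that bisimilar basic terms carry identical \emph{pomset} information on their leading products, not merely the same interleavings. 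Reconciling this with hp-bisimulation, where one must simultaneously preserve the event-level isomorphism and the concurrency of each parallel product, is the delicate part; indeed, the absence of hhp-completeness (already excluded in the statement) is a symptom of exactly how sensitive this bookkeeping becomes once downward closure of the posetal relation is demanded.
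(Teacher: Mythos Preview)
The paper does not actually prove Theorem~\ref{CAPTC}; it is stated in the preliminaries chapter as a result imported from the cited work \cite{ATC}, with no accompanying proof or sketch. Consequently there is no in-paper argument to compare your proposal against. Your two-stage plan (elimination to basic terms, then structural induction on a normal form) is the standard route and is almost certainly what the cited source does.

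That said, two points in your sketch deserve tightening. First, the termination of the elimination rewriting is more fragile than you indicate: $P6$ reintroduces $\between$, $P1$ splits it into $\parallel$ and $\mid$, and $P4$--$P6$ can then apply again to the new $\parallel$; a naive multiset measure on operator symbols does not obviously decrease, and you would need to argue carefully via a lexicographic measure involving the depth at which $\between$ occurs (or the size of the arguments it governs). Second, your proposed normal form ``summands of shape $(\alpha_1)\cdot(\alpha_2)\cdots$ with each $\alpha_i$ a parallel product of events'' is too restrictive: the basic-term grammar the paper uses (see the analogous Definition~2.22 for the left-merge variant) closes $\parallel$ over \emph{arbitrary} basic terms, not just atomic events, so basic terms can have the shape $(e_1\cdot t_1)\parallel(e_2\cdot t_2)$ with $t_1,t_2$ themselves containing $+$ and $\parallel$. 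Your induction therefore needs to match not just leading parallel products of events but the full pomset of initial steps determined by such nested structures, and the residual after a step is governed by the $\between$ that $P6$ leaves behind. None of this is fatal, but the hp-case bookkeeping you flag as ``delicate'' genuinely is, and your outline does not yet show how the isomorphism $f$ is threaded through a step that fires several concurrent events of a nested $\parallel$-term simultaneously.
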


\subsection{Recursion}

To model infinite computation, recursion is introduced into APTC. In order to obtain a sound and complete theory, guarded recursion and linear recursion are needed. The corresponding
axioms are RSP (Recursive Specification Principle) and RDP (Recursive Definition Principle), RDP says the solutions of a recursive specification can represent the behaviors of the
specification, while RSP says that a guarded recursive specification has only one solution, they are sound with respect to APTC with guarded recursion modulo several truly concurrent
bisimulation equivalences (including pomset bisimulation, step bisimulation and hp-bisimulation), and they are complete with respect to APTC with linear recursion modulo several truly
concurrent bisimulation equivalences (including pomset bisimulation, step bisimulation and hp-bisimulation). In the following, $E,F,G$ are recursion specifications, $X,Y,Z$ are
recursive variables.

For a guarded recursive specifications $E$ with the form

$$X_1=t_1(X_1,\cdots,X_n)$$
$$\cdots$$
$$X_n=t_n(X_1,\cdots,X_n)$$

The behavior of the solution $\langle X_i|E\rangle$ for the recursion variable $X_i$ in $E$, where $i\in\{1,\cdots,n\}$, is exactly the behavior of their right-hand sides
$t_i(X_1,\cdots,X_n)$, which is captured by the two transition rules in Table \ref{TRForGR}.

\begin{center}
    \begin{table}
        $$\frac{t_i(\langle X_1|E\rangle,\cdots,\langle X_n|E\rangle)\xrightarrow{\{e_1,\cdots,e_k\}}\surd}{\langle X_i|E\rangle\xrightarrow{\{e_1,\cdots,e_k\}}\surd}$$
        $$\frac{t_i(\langle X_1|E\rangle,\cdots,\langle X_n|E\rangle)\xrightarrow{\{e_1,\cdots,e_k\}} y}{\langle X_i|E\rangle\xrightarrow{\{e_1,\cdots,e_k\}} y}$$
        \caption{Transition rules of guarded recursion}
        \label{TRForGR}
    \end{table}
\end{center}

The $RDP$ (Recursive Definition Principle) and the $RSP$ (Recursive Specification Principle) are shown in Table \ref{RDPRSP}.

\begin{center}
\begin{table}
  \begin{tabular}{@{}ll@{}}
\hline No. &Axiom\\
  $RDP$ & $\langle X_i|E\rangle = t_i(\langle X_1|E,\cdots,X_n|E\rangle)\quad (i\in\{1,\cdots,n\})$\\
  $RSP$ & if $y_i=t_i(y_1,\cdots,y_n)$ for $i\in\{1,\cdots,n\}$, then $y_i=\langle X_i|E\rangle \quad(i\in\{1,\cdots,n\})$\\
\end{tabular}
\caption{Recursive definition and specification principle}
\label{RDPRSP}
\end{table}
\end{center}

\begin{theorem}[Soundness of $APTC$ with guarded recursion]\label{SAPTCR}
Let $x$ and $y$ be $APTC$ with guarded recursion terms. If $APTC\textrm{ with guarded recursion}\vdash x=y$, then
\begin{enumerate}
  \item $x\sim_{s} y$;
  \item $x\sim_{p} y$;
  \item $x\sim_{hp} y$.
\end{enumerate}
\end{theorem}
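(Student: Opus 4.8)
The plan is to exploit the soundness result already in hand for the non-recursive fragment, Theorem \ref{SAPTC}, and to reduce the present statement to the new content contributed by recursion. Since the provability relation $APTC\textrm{ with guarded recursion}\vdash x=y$ is by definition the smallest congruence on terms containing every instance of the axioms (those of Table \ref{AxiomsForAPTC} together with $RDP$ and $RSP$ from Table \ref{RDPRSP}), it suffices to check two things for each of $\sim_s$, $\sim_p$ and $\sim_{hp}$: first, that the equivalence is a congruence with respect to all operators, now including the recursion construct $\langle X_i|E\rangle$; and second, that each axiom is valid, i.e.\ its two sides are related by the equivalence. The validity of $A1$--$D6$ is already furnished by Theorem \ref{SAPTC}, so only the congruence clause for recursion and the validity of $RDP$ and $RSP$ remain.

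For congruence I would observe that all transition rules, both those of Table \ref{TRForAPTC} and the two rules for guarded recursion in Table \ref{TRForGR}, are in a well-behaved path (panth) format: the sources are single variables or recursion constants, and the (possibly negative) premises refer only to those sources. For such rules the induced step, pomset and history-preserving bisimilarities are guaranteed to be congruences, which discharges the first clause uniformly. The validity of $RDP$ is then immediate: by the rules of Table \ref{TRForGR}, $\langle X_i|E\rangle$ and $t_i(\langle X_1|E\rangle,\dots,\langle X_n|E\rangle)$ have literally the same outgoing transitions (same labels, same targets), so the relation pairing these two terms together with the identity on all other configurations is a step and pomset bisimulation; for $\sim_{hp}$ one equips each pair with the identity order-isomorphism on configurations, which is preserved precisely because the transitions coincide.

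The crux is the validity of $RSP$, i.e.\ showing that a guarded recursive specification has a unique solution up to each equivalence. Concretely, assuming $y_i\sim t_i(y_1,\dots,y_n)$ for all $i$ with $E$ guarded, I must prove $y_i\sim\langle X_i|E\rangle$. Since $\langle \vec X|E\rangle$ is itself a solution by the validity of $RDP$, this is exactly uniqueness of solutions. The technique is to build a bisimulation relating the states reachable from $y_i$ to the corresponding states reachable from $\langle X_i|E\rangle$, the relation being generated by the common context structure surrounding the recursion variables. Guardedness is the indispensable ingredient: it forces every recursion variable in $t_i$ to be prefixed by at least one atomic action, so that the first transition out of any derived term is determined by a finite, variable-free head and does not depend on which solution is substituted. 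This lets me match transitions on both sides and argue by induction on the guard depth that the relation is closed under transitions, yielding a step and pomset bisimulation.

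The main obstacle, beyond installing guardedness as the engine of uniqueness, will be the history-preserving case, where the bisimulation must additionally carry a posetal isomorphism $f:C_1\to C_2$ between configurations that is extended consistently at every matched event. I would handle this by checking that the order-isomorphism built up along the matching transitions is well-defined and preserved, using that guardedness forces the causality and conflict structure exposed by each step to agree on the two sides before any variable is unfolded; the three equivalences $\sim_s$, $\sim_p$ and $\sim_{hp}$ can then be treated by a single argument, with the hp-case carrying the extra bookkeeping of $f$.
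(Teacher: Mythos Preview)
The paper does not actually supply a proof of this theorem: the whole of Chapter~\ref{tcpa} is presented as preliminaries imported from the author's prior work \cite{ATC,CTC,PITC}, and Theorem~\ref{SAPTCR} is stated without an accompanying \texttt{proof} environment. So there is no in-paper argument to compare your proposal against.

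That said, your outline is the standard route one would expect any proof of this result to take, and it is essentially correct. You rightly decompose soundness into (i) congruence of each $\sim$ for the extended signature and (ii) validity of the individual axioms, with the non-recursive axioms already handled by Theorem~\ref{SAPTC} and the new work concentrated in $RDP$ and $RSP$. Your treatment of $RDP$ (identical transition systems by the rules of Table~\ref{TRForGR}) and of $RSP$ (uniqueness of solutions via a bisimulation-up-to-context style argument driven by guardedness) is the textbook approach, and your remark that the hp-case requires carrying the order-isomorphism $f$ through the matching is the right refinement. One small caution: invoking a generic ``panth format'' congruence theorem is slightly delicate in the truly concurrent setting, since the rules in Table~\ref{TRForAPTC} use set-valued labels $\{e_1,e_2\}$ and negative premises; you would need to check that the format result you cite actually covers pomset and hp-bisimilarity, or else fall back on a direct congruence proof operator by operator.
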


\begin{theorem}[Completeness of $APTC$ with linear recursion]\label{CAPTCR}
Let $p$ and $q$ be closed $APTC$ with linear recursion terms, then,
\begin{enumerate}
  \item if $p\sim_{s} q$ then $p=q$;
  \item if $p\sim_{p} q$ then $p=q$;
  \item if $p\sim_{hp} q$ then $p=q$.
\end{enumerate}
\end{theorem}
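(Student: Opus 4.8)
The plan is to follow the standard two–phase strategy for completeness of a process algebra with recursion: first reduce every closed term to a canonical \emph{linear recursive specification}, and then exploit the uniqueness of solutions guaranteed by $RSP$ to collapse two bisimilar terms into a single one. Throughout I may lean on the completeness of the finite fragment (Theorem \ref{CAPTC}) to settle the non-recursive heads, but the recursion is handled by the $RDP$/$RSP$ machinery of Table \ref{RDPRSP}.

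First I would fix a normal form. Call a recursive specification $E$ \emph{linear} if every equation has the shape
$$X_i = \sum_{j}(e_{j1}\parallel\cdots\parallel e_{jn_j})\cdot X_{i_j} + \sum_{k}(e'_{k1}\parallel\cdots\parallel e'_{km_k}),$$
that is, a finite alternative composition of summands, each a single parallel step of atomic actions either guarding a recursion variable or standing alone as a terminating summand (with $\delta$ allowed). The first task is a \emph{normalization lemma}: every closed $APTC$-with-linear-recursion term $p$ satisfies $APTC\textrm{ with linear recursion}\vdash p=\langle X_1|E\rangle$ for some linear $E$. This is where the axioms of Table \ref{AxiomsForAPTC} do their work — laws $P1$–$P10$ for $\parallel$ and $\between$, $C11$–$C18$ for $\mid$, $CE19$–$CE24$ and $U25$–$U37$ for $\Theta$ and $\triangleleft$, and $D1$–$D6$ for $\partial_H$ let one drive every auxiliary operator through the sequential and alternative structure until only $\cdot$, $+$ and the irreducible parallel guards $\parallel$ survive; the steps $\{e_1,e_2\}$ labelling the arrows of Table \ref{TRForAPTC} are exactly these parallel guards. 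I would argue by induction on term structure, using $RDP$ to unfold each recursion variable into head normal form.

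Next, given closed $p,q$ with (say) $p\sim_s q$, normalize them to $p=\langle X_1|E_1\rangle$ and $q=\langle Y_1|E_2\rangle$ with $E_1,E_2$ linear, and pick a step bisimulation $R$ witnessing $p\sim_s q$. I would then build a \emph{product specification} $E$ whose recursion variables $Z_{(s,t)}$ are indexed by the reachable related pairs $(s,t)\in R$, and whose equation for $Z_{(s,t)}$ records, for each matched pair of transitions $s\xrightarrow{X}s'$ and $t\xrightarrow{Y}t'$ with $X\sim Y$ and $(s',t')\in R$, a summand $X\cdot Z_{(s',t')}$ together with the terminating summands on which $s$ and $t$ agree. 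Because $E_1$ and $E_2$ are linear and $R$ respects transitions, $E$ is again linear, hence guarded. The crux is to verify that the assignment $Z_{(s,t)}\mapsto s$ solves $E$ and that $Z_{(s,t)}\mapsto t$ also solves $E$: each verification reduces, via $RDP$ applied to $E_1$ (resp. $E_2$), to checking that the head normal form of $s$ (resp. $t$) matches the right-hand side of the $Z_{(s,t)}$-equation, which holds precisely because the summands of $E$ were read off from the matched transitions. Applying $RSP$ then forces $s=\langle Z_{(s,t)}|E\rangle=t$ for every related pair, and in particular $p=\langle Z_{(p,q)}|E\rangle=q$.

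The main obstacle I anticipate is the well-definedness of the product specification and its uniform adaptation across the three equivalences. For $\sim_s$ and $\sim_p$ the only subtlety is matching the parallel step labels $X\sim Y$ as pomsets, which the normal form already exposes. For $\sim_{hp}$ the bisimulation carries a configuration isomorphism, so the product variables must be indexed by triples $(C_1,f,C_2)$ rather than bare pairs, and I must check that extending $f$ along matched single-event transitions by $f[e_1\mapsto e_2]$ keeps $E$ linear and guarded; the bookkeeping of these isomorphisms, and the verification that both the $s$- and $t$-assignments remain solutions once the isomorphism is tracked, is the delicate step. Finiteness of each $\lceil e\rceil$ and of the reachable state space under linearity guarantees that $E$ has finitely many variables, so $RSP$ applies without further hypotheses; note that $\sim_{hhp}$ is deliberately absent from the statement, matching the soundness Theorem \ref{SAPTC}.
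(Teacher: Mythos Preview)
The paper does not actually contain a proof of this theorem: Chapter~\ref{tcpa} is a preliminaries chapter that states the results of \cite{ATC,CTC,PITC} without proof, and Theorem~\ref{CAPTCR} is simply asserted with no accompanying \texttt{proof} environment. So there is nothing in the paper to compare your proposal against directly.

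That said, your two-phase strategy---normalize each closed term to $\langle X_1|E\rangle$ with $E$ linear, then build a product specification over the bisimulation pairs and invoke $RSP$---is exactly the standard completeness argument for ACP-style process algebras with linear recursion (cf.\ Fokkink~\cite{ACP}), and it is the approach one would expect the cited source \cite{ATC} to take. Your treatment of the three equivalences is appropriately differentiated; the caution you flag about tracking the isomorphism $f$ for $\sim_{hp}$ is the right place to be careful, and your observation that $\sim_{hhp}$ is absent because soundness (Theorem~\ref{SAPTC}) already fails for it is correct. In short, your proposal is sound and aligns with the canonical method, even though the present paper defers the details to its references.
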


\subsection{Abstraction}

To abstract away internal implementations from the external behaviors, a new constant $\tau$ called silent step is added to $A$, and also a new unary abstraction operator
$\tau_I$ is used to rename actions in $I$ into $\tau$ (the resulted APTC with silent step and abstraction operator is called $\textrm{APTC}_{\tau}$). The recursive specification
is adapted to guarded linear recursion to prevent infinite $\tau$-loops specifically. The axioms of $\tau$ and $\tau_I$ are sound modulo rooted branching truly concurrent bisimulation
 equivalences (several kinds of weakly truly concurrent bisimulation equivalences, including rooted branching pomset bisimulation, rooted branching step bisimulation and rooted branching hp-bisimulation). To eliminate infinite $\tau$-loops caused by $\tau_I$ and obtain the completeness, CFAR (Cluster Fair Abstraction Rule) is used to prevent infinite $\tau$-loops in a constructible way.

\begin{definition}[Weak pomset transitions and weak step]
Let $\mathcal{E}$ be a PES and let $C\in\mathcal{C}(\mathcal{E})$, and $\emptyset\neq X\subseteq \hat{\mathbb{E}}$, if $C\cap X=\emptyset$ and
$\hat{C'}=\hat{C}\cup X\in\mathcal{C}(\mathcal{E})$, then $C\xRightarrow{X} C'$ is called a weak pomset transition from $C$ to $C'$, where we define
$\xRightarrow{e}\triangleq\xrightarrow{\tau^*}\xrightarrow{e}\xrightarrow{\tau^*}$. And $\xRightarrow{X}\triangleq\xrightarrow{\tau^*}\xrightarrow{e}\xrightarrow{\tau^*}$,
for every $e\in X$. When the events in $X$ are pairwise concurrent, we say that $C\xRightarrow{X}C'$ is a weak step.
\end{definition}

\begin{definition}[Branching pomset, step bisimulation]\label{BPSB}
Assume a special termination predicate $\downarrow$, and let $\surd$ represent a state with $\surd\downarrow$. Let $\mathcal{E}_1$, $\mathcal{E}_2$ be PESs. A branching pomset
bisimulation is a relation $R\subseteq\mathcal{C}(\mathcal{E}_1)\times\mathcal{C}(\mathcal{E}_2)$, such that:
 \begin{enumerate}
   \item if $(C_1,C_2)\in R$, and $C_1\xrightarrow{X}C_1'$ then
   \begin{itemize}
     \item either $X\equiv \tau^*$, and $(C_1',C_2)\in R$;
     \item or there is a sequence of (zero or more) $\tau$-transitions $C_2\xrightarrow{\tau^*} C_2^0$, such that $(C_1,C_2^0)\in R$ and $C_2^0\xRightarrow{X}C_2'$ with
     $(C_1',C_2')\in R$;
   \end{itemize}
   \item if $(C_1,C_2)\in R$, and $C_2\xrightarrow{X}C_2'$ then
   \begin{itemize}
     \item either $X\equiv \tau^*$, and $(C_1,C_2')\in R$;
     \item or there is a sequence of (zero or more) $\tau$-transitions $C_1\xrightarrow{\tau^*} C_1^0$, such that $(C_1^0,C_2)\in R$ and $C_1^0\xRightarrow{X}C_1'$ with
     $(C_1',C_2')\in R$;
   \end{itemize}
   \item if $(C_1,C_2)\in R$ and $C_1\downarrow$, then there is a sequence of (zero or more) $\tau$-transitions $C_2\xrightarrow{\tau^*}C_2^0$ such that $(C_1,C_2^0)\in R$
   and $C_2^0\downarrow$;
   \item if $(C_1,C_2)\in R$ and $C_2\downarrow$, then there is a sequence of (zero or more) $\tau$-transitions $C_1\xrightarrow{\tau^*}C_1^0$ such that $(C_1^0,C_2)\in R$
   and $C_1^0\downarrow$.
 \end{enumerate}

We say that $\mathcal{E}_1$, $\mathcal{E}_2$ are branching pomset bisimilar, written $\mathcal{E}_1\approx_{bp}\mathcal{E}_2$, if there exists a branching pomset bisimulation $R$,
such that $(\emptyset,\emptyset)\in R$.

By replacing pomset transitions with steps, we can get the definition of branching step bisimulation. When PESs $\mathcal{E}_1$ and $\mathcal{E}_2$ are branching step bisimilar,
we write $\mathcal{E}_1\approx_{bs}\mathcal{E}_2$.
\end{definition}

\begin{definition}[Rooted branching pomset, step bisimulation]\label{RBPSB}
Assume a special termination predicate $\downarrow$, and let $\surd$ represent a state with $\surd\downarrow$. Let $\mathcal{E}_1$, $\mathcal{E}_2$ be PESs. A branching pomset
bisimulation is a relation $R\subseteq\mathcal{C}(\mathcal{E}_1)\times\mathcal{C}(\mathcal{E}_2)$, such that:
 \begin{enumerate}
   \item if $(C_1,C_2)\in R$, and $C_1\xrightarrow{X}C_1'$ then $C_2\xrightarrow{X}C_2'$ with $C_1'\approx_{bp}C_2'$;
   \item if $(C_1,C_2)\in R$, and $C_2\xrightarrow{X}C_2'$ then $C_1\xrightarrow{X}C_1'$ with $C_1'\approx_{bp}C_2'$;
   \item if $(C_1,C_2)\in R$ and $C_1\downarrow$, then $C_2\downarrow$;
   \item if $(C_1,C_2)\in R$ and $C_2\downarrow$, then $C_1\downarrow$.
 \end{enumerate}

We say that $\mathcal{E}_1$, $\mathcal{E}_2$ are rooted branching pomset bisimilar, written $\mathcal{E}_1\approx_{rbp}\mathcal{E}_2$, if there exists a rooted branching pomset
bisimulation $R$, such that $(\emptyset,\emptyset)\in R$.

By replacing pomset transitions with steps, we can get the definition of rooted branching step bisimulation. When PESs $\mathcal{E}_1$ and $\mathcal{E}_2$ are rooted branching step
bisimilar, we write $\mathcal{E}_1\approx_{rbs}\mathcal{E}_2$.
\end{definition}

\begin{definition}[Branching (hereditary) history-preserving bisimulation]\label{BHHPB}
Assume a special termination predicate $\downarrow$, and let $\surd$ represent a state with $\surd\downarrow$. A branching history-preserving (hp-) bisimulation is a weakly posetal
relation $R\subseteq\mathcal{C}(\mathcal{E}_1)\overline{\times}\mathcal{C}(\mathcal{E}_2)$ such that:

 \begin{enumerate}
   \item if $(C_1,f,C_2)\in R$, and $C_1\xrightarrow{e_1}C_1'$ then
   \begin{itemize}
     \item either $e_1\equiv \tau$, and $(C_1',f[e_1\mapsto \tau],C_2)\in R$;
     \item or there is a sequence of (zero or more) $\tau$-transitions $C_2\xrightarrow{\tau^*} C_2^0$, such that $(C_1,f,C_2^0)\in R$ and $C_2^0\xrightarrow{e_2}C_2'$ with
     $(C_1',f[e_1\mapsto e_2],C_2')\in R$;
   \end{itemize}
   \item if $(C_1,f,C_2)\in R$, and $C_2\xrightarrow{e_2}C_2'$ then
   \begin{itemize}
     \item either $X\equiv \tau$, and $(C_1,f[e_2\mapsto \tau],C_2')\in R$;
     \item or there is a sequence of (zero or more) $\tau$-transitions $C_1\xrightarrow{\tau^*} C_1^0$, such that $(C_1^0,f,C_2)\in R$ and $C_1^0\xrightarrow{e_1}C_1'$ with
     $(C_1',f[e_2\mapsto e_1],C_2')\in R$;
   \end{itemize}
   \item if $(C_1,f,C_2)\in R$ and $C_1\downarrow$, then there is a sequence of (zero or more) $\tau$-transitions $C_2\xrightarrow{\tau^*}C_2^0$ such that $(C_1,f,C_2^0)\in R$
   and $C_2^0\downarrow$;
   \item if $(C_1,f,C_2)\in R$ and $C_2\downarrow$, then there is a sequence of (zero or more) $\tau$-transitions $C_1\xrightarrow{\tau^*}C_1^0$ such that $(C_1^0,f,C_2)\in R$
   and $C_1^0\downarrow$.
 \end{enumerate}

$\mathcal{E}_1,\mathcal{E}_2$ are branching history-preserving (hp-)bisimilar and are written $\mathcal{E}_1\approx_{bhp}\mathcal{E}_2$ if there exists a branching hp-bisimulation
$R$ such that $(\emptyset,\emptyset,\emptyset)\in R$.

A branching hereditary history-preserving (hhp-)bisimulation is a downward closed branching hhp-bisimulation. $\mathcal{E}_1,\mathcal{E}_2$ are branching hereditary history-preserving
(hhp-)bisimilar and are written $\mathcal{E}_1\approx_{bhhp}\mathcal{E}_2$.
\end{definition}

\begin{definition}[Rooted branching (hereditary) history-preserving bisimulation]\label{RBHHPB}
Assume a special termination predicate $\downarrow$, and let $\surd$ represent a state with $\surd\downarrow$. A rooted branching history-preserving (hp-) bisimulation is a weakly
posetal relation $R\subseteq\mathcal{C}(\mathcal{E}_1)\overline{\times}\mathcal{C}(\mathcal{E}_2)$ such that:

 \begin{enumerate}
   \item if $(C_1,f,C_2)\in R$, and $C_1\xrightarrow{e_1}C_1'$, then $C_2\xrightarrow{e_2}C_2'$ with $C_1'\approx_{bhp}C_2'$;
   \item if $(C_1,f,C_2)\in R$, and $C_2\xrightarrow{e_2}C_1'$, then $C_1\xrightarrow{e_1}C_2'$ with $C_1'\approx_{bhp}C_2'$;
   \item if $(C_1,f,C_2)\in R$ and $C_1\downarrow$, then $C_2\downarrow$;
   \item if $(C_1,f,C_2)\in R$ and $C_2\downarrow$, then $C_1\downarrow$.
 \end{enumerate}

$\mathcal{E}_1,\mathcal{E}_2$ are rooted branching history-preserving (hp-)bisimilar and are written $\mathcal{E}_1\approx_{rbhp}\mathcal{E}_2$ if there exists rooted a branching
hp-bisimulation $R$ such that $(\emptyset,\emptyset,\emptyset)\in R$.

A rooted branching hereditary history-preserving (hhp-)bisimulation is a downward closed rooted branching hhp-bisimulation. $\mathcal{E}_1,\mathcal{E}_2$ are rooted branching
hereditary history-preserving (hhp-)bisimilar and are written $\mathcal{E}_1\approx_{rbhhp}\mathcal{E}_2$.
\end{definition}

The axioms and transition rules of $\textrm{APTC}_{\tau}$ are shown in Table \ref{AxiomsForTau} and Table \ref{TRForTau}.

\begin{center}
\begin{table}
  \begin{tabular}{@{}ll@{}}
\hline No. &Axiom\\
  $B1$ & $e\cdot\tau=e$\\
  $B2$ & $e\cdot(\tau\cdot(x+y)+x)=e\cdot(x+y)$\\
  $B3$ & $x\parallel\tau=x$\\
  $TI1$ & $e\notin I\quad \tau_I(e)=e$\\
  $TI2$ & $e\in I\quad \tau_I(e)=\tau$\\
  $TI3$ & $\tau_I(\delta)=\delta$\\
  $TI4$ & $\tau_I(x+y)=\tau_I(x)+\tau_I(y)$\\
  $TI5$ & $\tau_I(x\cdot y)=\tau_I(x)\cdot\tau_I(y)$\\
  $TI6$ & $\tau_I(x\parallel y)=\tau_I(x)\parallel\tau_I(y)$\\
  $CFAR$ & If $X$ is in a cluster for $I$ with exits \\
           & $\{(a_{11}\parallel\cdots\parallel a_{1i})Y_1,\cdots,(a_{m1}\parallel\cdots\parallel a_{mi})Y_m, b_{11}\parallel\cdots\parallel b_{1j},\cdots,b_{n1}\parallel\cdots\parallel b_{nj}\}$, \\
           & then $\tau\cdot\tau_I(\langle X|E\rangle)=$\\
           & $\tau\cdot\tau_I((a_{11}\parallel\cdots\parallel a_{1i})\langle Y_1|E\rangle+\cdots+(a_{m1}\parallel\cdots\parallel a_{mi})\langle Y_m|E\rangle+b_{11}\parallel\cdots\parallel b_{1j}+\cdots+b_{n1}\parallel\cdots\parallel b_{nj})$\\
\end{tabular}
\caption{Axioms of $\textrm{APTC}_{\tau}$}
\label{AxiomsForTau}
\end{table}
\end{center}

\begin{center}
    \begin{table}
        $$\frac{}{\tau\xrightarrow{\tau}\surd}$$
        $$\frac{x\xrightarrow{e}\surd}{\tau_I(x)\xrightarrow{e}\surd}\quad e\notin I
        \quad\quad\frac{x\xrightarrow{e}x'}{\tau_I(x)\xrightarrow{e}\tau_I(x')}\quad e\notin I$$

        $$\frac{x\xrightarrow{e}\surd}{\tau_I(x)\xrightarrow{\tau}\surd}\quad e\in I
        \quad\quad\frac{x\xrightarrow{e}x'}{\tau_I(x)\xrightarrow{\tau}\tau_I(x')}\quad e\in I$$
        \caption{Transition rule of $\textrm{APTC}_{\tau}$}
        \label{TRForTau}
    \end{table}
\end{center}

\begin{theorem}[Soundness of $APTC_{\tau}$ with guarded linear recursion]\label{SAPTCABS}
Let $x$ and $y$ be $APTC_{\tau}$ with guarded linear recursion terms. If $APTC_{\tau}$ with guarded linear recursion $\vdash x=y$, then
\begin{enumerate}
  \item $x\approx_{rbs} y$;
  \item $x\approx_{rbp} y$;
  \item $x\approx_{rbhp} y$.
\end{enumerate}
\end{theorem}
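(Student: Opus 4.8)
The plan is to prove soundness in the standard way for an equational theory modulo a behavioral equivalence: first show that each of the three rooted branching equivalences is a congruence for all operators of $APTC_\tau$ and is respected by the recursion construct, then verify that every axiom of Table \ref{AxiomsForTau} is sound modulo each equivalence, and finally conclude by induction on the length of an equational derivation. Since the transition system for $APTC_\tau$ is obtained by adjoining the rules of Table \ref{TRForTau} (and the rule $\tau\xrightarrow{\tau}\surd$) to the rules already present, and since strong step/pomset/hp-bisimilarity implies the corresponding rooted branching equivalence, every axiom already shown sound for $APTC$ with guarded recursion in Theorem \ref{SAPTCR} remains sound here. The real work therefore concerns only the genuinely new laws $B1$--$B3$, $TI1$--$TI6$ and $CFAR$.

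First I would establish that $\approx_{rbs}$, $\approx_{rbp}$ and $\approx_{rbhp}$ are congruences with respect to $+$, $\cdot$, $\parallel$, $\mid$, $\Theta$, $\triangleleft$, $\partial_H$ and the new operator $\tau_I$, and that they are preserved by $\langle X_i\mid E\rangle$. This is where the rootedness condition of Definitions \ref{RBPSB} and \ref{RBHHPB} earns its keep: plain branching bisimilarity fails to be a congruence for $+$, but the first-step matching built into the rooted variants repairs this, so each congruence proof proceeds by exhibiting the relation obtained by closing the given bisimulation under the operator in question and checking the clauses. For $\tau_I$ the witnessing relation simply pushes the abstraction through matched transitions, reading off the two cases from Table \ref{TRForTau} (the action is kept when $e\notin I$ and renamed to $\tau$ when $e\in I$).

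Next I would verify the new axioms. The laws $TI1$--$TI6$ together with $B1$ and $B3$ are routine: in each case the witnessing relation pairs a term with its reduct and the branching clauses of Definitions \ref{BPSB} and \ref{BHHPB} are checked directly against the transition rules, with the leading action (for $B1$) or the structural shape guaranteeing rootedness. The crux is $B2$, the characteristic branching law $e\cdot(\tau\cdot(x+y)+x)=e\cdot(x+y)$. Here I would take the relation that, after the common initial $e$-step, relates the left-hand state --- which may either perform the silent $\tau$ into $x+y$ or directly expose the summand $x$ --- to the single right-hand state $x+y$, and then verify the either/or clause: the $\tau$-transition on the left is matched by staying put on the right (first alternative), while any visible move of $x$ or $y$ is matched after the appropriate $\tau^*$-prefix (second alternative). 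For the step and pomset versions the same relation works once transitions are read as (concurrent) multisets, and for the hp-version one additionally carries along the order-isomorphism $f$ of Definition \ref{RBHHPB}, extending it by $e_1\mapsto e_2$ on each matched pair.

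The main obstacle will be $CFAR$, the cluster fair abstraction rule, together with its interaction with the parallel exits $a_{k1}\parallel\cdots\parallel a_{ki}$. Here I would use guardedness and linearity of $E$ to argue that, after applying $\tau_I$, every transition internal to the cluster for $I$ becomes a $\tau$ and the only visible behavior is through the listed exits; fairness then ensures that the infinite $\tau$-loop inside the cluster is eventually left, so the relation collapsing all states $\tau_I$-reachable within the cluster onto $\tau_I$ of the sum of exits is a branching bisimulation, while the guarding $\tau\cdot$ present on both sides of the equation secures rootedness. I expect the bookkeeping of the cluster, its exits and the fairness argument --- rather than the congruence or the $B1$--$B3$ and $TI$ checks --- to be the delicate part, and I would phrase it so that the step, pomset and hp versions follow by the same reasoning applied to (concurrent) multisets of events and, for $\approx_{rbhp}$, to the accompanying posetal isomorphisms. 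An induction on the derivation of $APTC_\tau\vdash x=y$ (base: the sound axioms and reflexivity; steps: symmetry, transitivity and closure under the operators) then yields $x\approx_{rbs}y$, $x\approx_{rbp}y$ and $x\approx_{rbhp}y$.
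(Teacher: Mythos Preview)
The paper does not supply a proof of this theorem at all: it is stated as a result, with the substance deferred to the cited works on truly concurrent process algebra. There is therefore nothing in the paper to compare against step by step.

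Your outline is the standard soundness argument and is correct in substance: show that each rooted branching equivalence is a congruence for all operators (including $\tau_I$) and for the recursion construct, verify the genuinely new axioms $B1$--$B3$ and $TI1$--$TI6$ directly from the transition rules and Definitions~\ref{BPSB}--\ref{RBHHPB}, inherit soundness of the old $APTC$/recursion axioms from Theorem~\ref{SAPTCR} via the inclusion of strong in rooted branching bisimilarity, and close by induction on derivations. The treatment of $B2$ via the either/or clause, and of the hp-variant by carrying the isomorphism $f$, is exactly right.

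One organizational point: you fold the $CFAR$ argument into this theorem, but in the paper $CFAR$ soundness is isolated as a separate statement (Theorem~\ref{SCFAR}), and the completeness theorem is phrased as ``$APTC_\tau$ with guarded linear recursion \emph{and} $CFAR$'', indicating that $CFAR$ sits on top of $APTC_\tau$ rather than inside it. Your $CFAR$ paragraph is therefore not wrong, but it belongs to the proof of Theorem~\ref{SCFAR} rather than to the present theorem; for Theorem~\ref{SAPTCABS} itself only $B1$--$B3$ and $TI1$--$TI6$ need to be checked on top of the inherited axioms.
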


\begin{theorem}[Soundness of $CFAR$]\label{SCFAR}
$CFAR$ is sound modulo rooted branching truly concurrent bisimulation equivalences $\approx_{rbs}$, $\approx_{rbp}$ and $\approx_{rbhp}$.
\end{theorem}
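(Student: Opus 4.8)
The plan is to reduce the claim to the validity of a single equation modulo the non-rooted branching equivalences, and then to exhibit an explicit witnessing relation. Soundness of the rule $CFAR$ means precisely that every instance of its conclusion $\tau\cdot\tau_I(\langle X|E\rangle)=\tau\cdot\tau_I(P)$ is a valid equation modulo $\approx_{rbs}$, $\approx_{rbp}$ and $\approx_{rbhp}$, where $P$ denotes the sum of exits on the right-hand side; combined with the already established soundness of the remaining axioms and rules (Theorem \ref{SAPTCABS}), this yields soundness of the full system. Both sides of the instance begin with the same leading $\tau$, so by Definition \ref{RBPSB} and Definition \ref{RBHHPB} the root condition is discharged as soon as we prove $\tau_I(\langle X|E\rangle)\approx_{b\ast}\tau_I(P)$ for the matching non-rooted equivalence $\approx_{b\ast}\in\{\approx_{bs},\approx_{bp},\approx_{bhp}\}$. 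The essential semantic input is that the branching bisimulations of Definition \ref{BPSB} and Definition \ref{BHHPB} are insensitive to divergence: an internal $\tau$-cycle that can always be left is identified with its exits, which is exactly the fairness that $CFAR$ encodes.

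First I would describe the abstracted behaviour of the cluster. Let $\{X_1,\dots,X_k\}$ be the cluster for $I$ containing $X$; by guarded linear recursion this set is finite and each state is finitely branching. By the cluster condition every summand of each $\langle X_l|E\rangle$ is either an internal step, whose action lies in $I$ and whose target is again a cluster variable, or an exit, i.e. one of the summands appearing in $P$. After applying $\tau_I$, every internal step of $\tau_I(\langle X_l|E\rangle)$ becomes a $\tau$-transition to some $\tau_I(\langle X_{l'}|E\rangle)$ with $X_{l'}$ in the cluster, by the transition rules of Table \ref{TRForTau}, while each exit induces exactly the corresponding transition of $\tau_I(P)$. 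Crucially, the cluster condition also gives mutual reachability: from any $\tau_I(\langle X_l|E\rangle)$ one can reach every other cluster state, and hence every exit, by a finite sequence of such $\tau$-transitions.

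Next I would define the witnessing relation $R$ to consist of all pairs $(\tau_I(\langle X_l|E\rangle),\tau_I(P))$ with $X_l$ in the cluster, together with the identity relation on all states reachable once an exit has been taken. To verify the clauses of Definition \ref{BPSB}: an internal move $\tau_I(\langle X_l|E\rangle)\xrightarrow{\tau}\tau_I(\langle X_{l'}|E\rangle)$ on the left is matched by letting the right-hand side idle, since the target pair is still in $R$ (the first bullet of clause 1). An exit move on the left is matched by the identical exit summand of $\tau_I(P)$, after which both sides sit in the same post-exit state and are related by the identity. Conversely, an exit move of $\tau_I(P)$ is matched, using mutual reachability, by a finite $\tau$-path $\tau_I(\langle X_l|E\rangle)\xrightarrow{\tau^*}\tau_I(\langle X_{l'}|E\rangle)$ to the cluster variable that carries that exit, followed by the exit itself; every intermediate state remains paired with $\tau_I(P)$ in $R$, so the second bullet of clause 2 is satisfied. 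The termination clauses 3 and 4 hold because a cluster state terminates only through an exit, which the partner can reproduce.

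The main obstacle, and the step I would treat most carefully, is this converse matching of an exit of $\tau_I(P)$ by a $\tau^*$-prefixed exit on the left: it is exactly here that the divergence-insensitivity of branching bisimulation (fair abstraction) is used, and it is the only place where the cluster reachability hypothesis is indispensable. Lifting the argument from $\approx_{bs}$ to $\approx_{bp}$ is routine, since all internal moves carry label $\tau$ and the genuine events appear only in the pomset labelling the exits, which is copied verbatim on both sides. For $\approx_{bhp}$ one must additionally carry the order isomorphism $f$: internal $\tau$-steps extend $f$ by $e\mapsto\tau$ and contribute nothing observable, while each exit extends $f$ identically on both sides, so the posetal structure is preserved throughout. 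Finally, reinstating the leading $\tau$ promotes each $\approx_{b\ast}$ to the corresponding $\approx_{rb\ast}$, completing all three cases.
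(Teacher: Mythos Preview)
The paper states Theorem~\ref{SCFAR} without proof: Chapter~\ref{tcpa} is a survey of results imported from \cite{ATC,CTC,PITC} (and ultimately from the classical $CFAR$ literature \cite{CFAR,ACP}), and none of the soundness or completeness theorems in that chapter carry a proof environment. So there is no in-paper proof to compare against.

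That said, your reconstruction is exactly the standard argument one finds in Fokkink~\cite{ACP} (for the interleaving case) adapted to the truly concurrent setting, and it is correct. The three ingredients you isolate --- stripping the leading $\tau$ to reduce rooted branching to plain branching, building the relation that pairs every abstracted cluster state $\tau_I(\langle X_l|E\rangle)$ with $\tau_I(P)$ and is the identity thereafter, and using mutual cluster reachability to match an exit of $\tau_I(P)$ by a $\tau^*$-prefix on the left --- are precisely the steps of the classical proof, and your remark that this last step is where divergence-insensitivity (fairness) of branching bisimulation is actually used is the right diagnosis. The lifts to $\approx_{bp}$ and $\approx_{bhp}$ are routine for the reason you give: the only non-$\tau$ labels are carried by the exits, which are copied verbatim on both sides, so neither the pomset structure nor the history isomorphism $f$ is disturbed by the internal $\tau$-chatter inside the cluster. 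One small point worth making explicit in a polished write-up is that guardedness of $E$ (no infinite $\tau$-chains in $E$ itself) is what guarantees the cluster is finite and that the $\tau^*$-paths you invoke are finite; you use this implicitly when you say ``a finite sequence of such $\tau$-transitions''.
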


\begin{theorem}[Completeness of $APTC_{\tau}$ with guarded linear recursion and $CFAR$]\label{CCFAR}
Let $p$ and $q$ be closed $APTC_{\tau}$ with guarded linear recursion and $CFAR$ terms, then,
\begin{enumerate}
  \item if $p\approx_{rbs} q$ then $p=q$;
  \item if $p\approx_{rbp} q$ then $p=q$;
  \item if $p\approx_{rbhp} q$ then $p=q$.
\end{enumerate}
\end{theorem}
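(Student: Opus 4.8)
The plan is to reduce the completeness of $APTC_{\tau}$ with guarded linear recursion and $CFAR$ to the completeness of $APTC$ with linear recursion (Theorem \ref{CAPTCR}), by first establishing a normal-form result and then exploiting the uniqueness of solutions of guarded linear recursive specifications ($RSP$). Throughout I would treat the three equivalences $\approx_{rbs}$, $\approx_{rbp}$, $\approx_{rbhp}$ uniformly, pointing out the few places where the distinction (bare step sets versus pomsets versus the order-isomorphism $f$ carried by history-preserving bisimulations) actually matters.

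First I would prove a \emph{recursive specification / normal form} lemma: every closed $APTC_{\tau}$ with guarded linear recursion term $p$ is provably equal to a term $\langle X_1|E\rangle$ where $E$ is a guarded linear recursive specification. For the fragment without $\tau_I$ this follows by structural induction, pushing $\parallel$, $\mid$, $\Theta$, $\triangleleft$ and $\partial_H$ inside using the axioms of Table \ref{AxiomsForAPTC} exactly as in the completeness argument for $APTC$ with linear recursion; the genuinely new work is the abstraction operator.

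The core step is to show that abstraction preserves this normal form, i.e. that $\tau_I(\langle X_1|E\rangle)$ is again provably equal to $\langle X_1'|E'\rangle$ for a guarded linear $E'$. Applying $\tau_I$ via $TI1$--$TI6$ renames the actions of $I$ to $\tau$, which can destroy guardedness by creating clusters of $\tau$-transitions (infinite $\tau$-loops). This is precisely where $CFAR$ enters: for each cluster for $I$ I would use $CFAR$ together with $B1$--$B3$ to collapse the cluster into its finitely many exits, turning the unguarded $\tau$-cycles into a guarded linear specification in a constructible way. I expect this to be the main obstacle --- verifying that the cluster decomposition is exhaustive, that every relevant term indeed falls into the $CFAR$ shape $\{(a_{11}\parallel\cdots\parallel a_{1i})Y_1,\dots,b_{n1}\parallel\cdots\parallel b_{nj}\}$, and that the rootedness condition is respected so that the leading step is never silently absorbed.

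Finally, given $p$ and $q$ in normal form with $p$ rooted branching bisimilar to $q$, I would build a combined guarded linear recursive specification $E''$ whose variables are indexed by the pairs of reachable states related by the witnessing rooted branching bisimulation $R$, reading off the right-hand sides from the matched transitions (for $\approx_{rbhp}$ the pairs additionally carry the isomorphism $f$, and for $\approx_{rbp}$ one checks the step labels agree up to $\sim$). By soundness (Theorem \ref{SAPTCABS} and Theorem \ref{SCFAR}) both $p$ and $q$ are solutions of $E''$, and since $E''$ is guarded, $RSP$ forces $p=q$. The rooted branching conditions (items~1--2 of Definition \ref{RBPSB}, resp.\ Definition \ref{RBHHPB}) are exactly what guarantee that the first transition is matched one-for-one, which is what makes $E''$ guarded and legitimizes the appeal to $RSP$.
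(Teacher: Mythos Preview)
The paper does not actually supply a proof of this theorem: Chapter~\ref{tcpa} is a preliminaries chapter that states the soundness and completeness results for $APTC_{\tau}$ with guarded linear recursion and $CFAR$ without proof, deferring to the cited works \cite{ATC,CTC,PITC}. There is therefore no ``paper's own proof'' to compare against.

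That said, your proposal is the standard argument for this class of results (the pattern goes back to the $ACP_{\tau}$ completeness proof in Fokkink~\cite{ACP}) and is the approach the cited references take. Your three phases---normal-form lemma, use of $CFAR$ to restore guardedness after applying $\tau_I$, then construction of a common guarded linear specification solved by both $p$ and $q$ so that $RSP$ yields $p=q$---are exactly right. One small correction: you frame the final step as a reduction to Theorem~\ref{CAPTCR} (completeness of $\tau$-free $APTC$ with linear recursion), but in fact you are not reducing to that theorem; you are invoking $RSP$ directly on a guarded linear specification over $APTC_{\tau}$. The $\tau$-free completeness is not used as a black box here. Otherwise your identification of the main obstacle---checking that every cluster arising from $\tau_I$ matches the $CFAR$ template and that rootedness is preserved so the leading visible step is never swallowed---is accurate and is indeed where the technical work lies.
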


\subsection{Axiomatization for Hhp-Bisimilarity}{\label{ahhpb}}

Since hhp-bisimilarity is a downward closed hp-bisimilarity and can be downward closed to single atomic event, which implies bisimilarity. As Moller \cite{ILM} proven, there is not a finite sound and complete axiomatization for parallelism $\parallel$ modulo bisimulation equivalence, so there is not a finite sound and complete axiomatization for parallelism $\parallel$ modulo hhp-bisimulation equivalence either. Inspired by the way of left merge to modeling the full merge for bisimilarity, we introduce a left parallel composition $\leftmerge$ to model the full parallelism $\parallel$ for hhp-bisimilarity.

In the following subsection, we add left parallel composition $\leftmerge$ to the whole theory. Because the resulting theory is similar to the former, we only list the significant differences, and all proofs of the conclusions are left to the reader.

\subsubsection{$APTC$ with Left Parallel Composition}

The transition rules of left parallel composition $\leftmerge$ are shown in Table \ref{TRForLeftParallel}. With a little abuse, we extend the causal order relation $\leq$ on $\mathbb{E}$ to include the original partial order (denoted by $<$) and concurrency (denoted by $=$).

\begin{center}
    \begin{table}
        $$\frac{x\xrightarrow{e_1}\surd\quad y\xrightarrow{e_2}\surd \quad(e_1\leq e_2)}{x\leftmerge y\xrightarrow{\{e_1,e_2\}}\surd} \quad\frac{x\xrightarrow{e_1}x'\quad y\xrightarrow{e_2}\surd \quad(e_1\leq e_2)}{x\leftmerge y\xrightarrow{\{e_1,e_2\}}x'}$$
        $$\frac{x\xrightarrow{e_1}\surd\quad y\xrightarrow{e_2}y' \quad(e_1\leq e_2)}{x\leftmerge y\xrightarrow{\{e_1,e_2\}}y'} \quad\frac{x\xrightarrow{e_1}x'\quad y\xrightarrow{e_2}y' \quad(e_1\leq e_2)}{x\leftmerge y\xrightarrow{\{e_1,e_2\}}x'\between y'}$$
        \caption{Transition rules of left parallel operator $\leftmerge$}
        \label{TRForLeftParallel}
    \end{table}
\end{center}

The new axioms for parallelism are listed in Table \ref{AxiomsForLeftParallelism}.

\begin{center}
    \begin{table}
        \begin{tabular}{@{}ll@{}}
            \hline No. &Axiom\\
            $A6$ & $x+ \delta = x$\\
            $A7$ & $\delta\cdot x =\delta$\\
            $P1$ & $x\between y = x\parallel y + x\mid y$\\
            $P2$ & $x\parallel y = y \parallel x$\\
            $P3$ & $(x\parallel y)\parallel z = x\parallel (y\parallel z)$\\
            $P4$ & $x\parallel y = x\leftmerge y + y\leftmerge x$\\
            $P5$ & $(e_1\leq e_2)\quad e_1\leftmerge (e_2\cdot y) = (e_1\leftmerge e_2)\cdot y$\\
            $P6$ & $(e_1\leq e_2)\quad (e_1\cdot x)\leftmerge e_2 = (e_1\leftmerge e_2)\cdot x$\\
            $P7$ & $(e_1\leq e_2)\quad (e_1\cdot x)\leftmerge (e_2\cdot y) = (e_1\leftmerge e_2)\cdot (x\between y)$\\
            $P8$ & $(x+ y)\leftmerge z = (x\leftmerge z)+ (y\leftmerge z)$\\
            $P9$ & $\delta\leftmerge x = \delta$\\
            $C10$ & $e_1\mid e_2 = \gamma(e_1,e_2)$\\
            $C11$ & $e_1\mid (e_2\cdot y) = \gamma(e_1,e_2)\cdot y$\\
            $C12$ & $(e_1\cdot x)\mid e_2 = \gamma(e_1,e_2)\cdot x$\\
            $C13$ & $(e_1\cdot x)\mid (e_2\cdot y) = \gamma(e_1,e_2)\cdot (x\between y)$\\
            $C14$ & $(x+ y)\mid z = (x\mid z) + (y\mid z)$\\
            $C15$ & $x\mid (y+ z) = (x\mid y)+ (x\mid z)$\\
            $C16$ & $\delta\mid x = \delta$\\
            $C17$ & $x\mid\delta = \delta$\\
            $CE18$ & $\Theta(e) = e$\\
            $CE19$ & $\Theta(\delta) = \delta$\\
            $CE20$ & $\Theta(x+ y) = \Theta(x)\triangleleft y + \Theta(y)\triangleleft x$\\
            $CE21$ & $\Theta(x\cdot y)=\Theta(x)\cdot\Theta(y)$\\
            $CE22$ & $\Theta(x\leftmerge y) = ((\Theta(x)\triangleleft y)\leftmerge y)+ ((\Theta(y)\triangleleft x)\leftmerge x)$\\
            $CE23$ & $\Theta(x\mid y) = ((\Theta(x)\triangleleft y)\mid y)+ ((\Theta(y)\triangleleft x)\mid x)$\\
            $U24$ & $(\sharp(e_1,e_2))\quad e_1\triangleleft e_2 = \tau$\\
            $U25$ & $(\sharp(e_1,e_2),e_2\leq e_3)\quad e_1\triangleleft e_3 = e_1$\\
            $U26$ & $(\sharp(e_1,e_2),e_2\leq e_3)\quad e3\triangleleft e_1 = \tau$\\
            $U27$ & $e\triangleleft \delta = e$\\
            $U28$ & $\delta \triangleleft e = \delta$\\
            $U29$ & $(x+ y)\triangleleft z = (x\triangleleft z)+ (y\triangleleft z)$\\
            $U30$ & $(x\cdot y)\triangleleft z = (x\triangleleft z)\cdot (y\triangleleft z)$\\
            $U31$ & $(x\leftmerge y)\triangleleft z = (x\triangleleft z)\leftmerge (y\triangleleft z)$\\
            $U32$ & $(x\mid y)\triangleleft z = (x\triangleleft z)\mid (y\triangleleft z)$\\
            $U33$ & $x\triangleleft (y+ z) = (x\triangleleft y)\triangleleft z$\\
            $U34$ & $x\triangleleft (y\cdot z)=(x\triangleleft y)\triangleleft z$\\
            $U35$ & $x\triangleleft (y\leftmerge z) = (x\triangleleft y)\triangleleft z$\\
            $U36$ & $x\triangleleft (y\mid z) = (x\triangleleft y)\triangleleft z$\\
        \end{tabular}
        \caption{Axioms of parallelism with left parallel composition}
        \label{AxiomsForLeftParallelism}
    \end{table}
\end{center}

\begin{definition}[Basic terms of $APTC$ with left parallel composition]
The set of basic terms of $APTC$, $\mathcal{B}(APTC)$, is inductively defined as follows:
\begin{enumerate}
  \item $\mathbb{E}\subset\mathcal{B}(APTC)$;
  \item if $e\in \mathbb{E}, t\in\mathcal{B}(APTC)$ then $e\cdot t\in\mathcal{B}(APTC)$;
  \item if $t,s\in\mathcal{B}(APTC)$ then $t+ s\in\mathcal{B}(APTC)$;
  \item if $t,s\in\mathcal{B}(APTC)$ then $t\leftmerge s\in\mathcal{B}(APTC)$.
\end{enumerate}
\end{definition}

\begin{theorem}[Generalization of the algebra for left parallelism with respect to $BATC$]
The algebra for left parallelism is a generalization of $BATC$.
\end{theorem}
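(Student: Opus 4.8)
The plan is to read ``generalization'' in the sense the paper already uses for its four modules, namely that the algebra for left parallelism is a \emph{conservative extension} of $BATC$: its signature contains that of $BATC$, it proves every equation $BATC$ proves, and it proves no new equation between $BATC$ terms. The two nontrivial ingredients are a proof-theoretic inclusion and a semantic conservativity statement, which I would establish separately and then glue with soundness and completeness. The inclusion is the easy half: every $BATC$ term is built from $\mathbb{E}$ using only $+$ and $\cdot$, so it falls under clauses (1)--(3) of the definition of $\mathcal{B}(APTC)$ and is a term of the extended algebra. The axioms $A1$--$A5$ are retained verbatim (Table \ref{AxiomsForLeftParallelism} only \emph{adds} laws governing $\parallel,\leftmerge,\mid,\Theta,\triangleleft,\partial_H,\delta$), so any $BATC$ derivation is literally a derivation in the extended theory; hence $BATC\vdash s=t$ implies the extended algebra proves $s=t$.

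The heart of the argument is conservativity at the level of the operational semantics: I would show that the transition system specification of the extended algebra is a conservative extension of that of $BATC$. Two observations drive this. First, the rules for $+$ and $\cdot$ in Table \ref{TRForBATC} are \emph{source-dependent}, so for a pure $BATC$ term they generate exactly the $BATC$ transition relation. Second, every newly added rule --- those of Table \ref{TRForLeftParallel} for $\leftmerge$, together with the rules inherited from Table \ref{TRForAPTC} for $\parallel,\mid,\Theta,\triangleleft,\partial_H$ --- has a source whose outermost symbol is one of the new operators. Since a $BATC$ term contains none of these symbols, no new rule can ever fire with a $BATC$ term as its source. Consequently the labelled transition system rooted at any $BATC$ term is identical in the two theories, and therefore the truly concurrent bisimilarities $\sim_p,\sim_s,\sim_{hp},\sim_{hhp}$ restricted to $BATC$ terms coincide with the corresponding relations computed inside $BATC$.

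Finally I would combine the pieces. For closed $BATC$ terms $p,q$, suppose the extended algebra proves $p=q$. By soundness of the extended theory (the left-parallelism analogues of Theorems \ref{SBATC} and \ref{SAPTC}, whose proofs the paper defers to the reader, and which in particular cover $\sim_{hhp}$) we get $p\sim q$ in the extended model; by the conservativity of the semantics this is the same as $p\sim q$ inside $BATC$; and by completeness of $BATC$ (Theorem \ref{CBATC}) we conclude $BATC\vdash p=q$. Together with the easy direction this yields the conservative extension, i.e. the generalization. I expect the main obstacle to be the second observation above stated with full rigor: one must verify that the extended rule set really lies in a format for which ``a fresh operator in the source'' guarantees operational conservativity, and then check that this conservativity lifts from ordinary bisimulation to the finer pomset, step, hp- and hhp-equivalences. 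The lift is clean precisely because all four equivalences are defined purely from the single transition relation that is left unchanged on $BATC$ terms, so verifying the format conditions (rather than any new computation) is where the real care is needed.
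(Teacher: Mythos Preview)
Your proposal is correct and follows exactly the approach one would expect here. The paper in fact omits the proof entirely --- at the start of the subsection it says ``Because the resulting theory is similar to the former, we only list the significant differences, and all proofs of the conclusions are left to the reader'' --- so there is no authorial argument to compare against; your sketch (signature inclusion, retention of $A1$--$A5$, source-dependence of the old rules, fresh top symbols in the new rules, and the soundness/completeness sandwich) is precisely the standard meta-theoretic route such a reader would be expected to fill in.
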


\begin{theorem}[Congruence theorem of $APTC$ with left parallel composition]
Truly concurrent bisimulation equivalences $\sim_{p}$, $\sim_s$, $\sim_{hp}$ and $\sim_{hhp}$ are all congruences with respect to $APTC$ with left parallel composition.
\end{theorem}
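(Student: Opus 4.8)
The plan is to prove the four congruence statements one operator at a time. Since a relation that is a congruence for each operator of a signature is automatically a congruence for the whole algebra, and since the signature of $APTC$ with left parallel composition extends that of plain $APTC$ by the single new operator $\leftmerge$, it suffices to re-establish congruence for $\leftmerge$ alone: congruence of $+$, $\cdot$, $\between$, $\parallel$, $\mid$, $\Theta$, $\triangleleft$ and $\partial_H$ with respect to each of $\sim_p$, $\sim_s$, $\sim_{hp}$ and $\sim_{hhp}$ is inherited verbatim from the corresponding congruence results for $APTC$ in the truly concurrent process algebra on which this theory is built \cite{ATC} \cite{CTC} \cite{PITC}. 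Thus the whole theorem reduces to the four claims: if $x_1\sim_\star y_1$ and $x_2\sim_\star y_2$, then $x_1\leftmerge x_2\sim_\star y_1\leftmerge y_2$, for $\star\in\{p,s,hp,hhp\}$.

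For the pomset case I would fix a pomset bisimulation $R_1$ witnessing $x_1\sim_p y_1$ (so $(\emptyset,\emptyset)\in R_1$) and likewise $R_2$ for $x_2\sim_p y_2$. Inspecting the four rules of Table \ref{TRForLeftParallel} shows that every configuration of $x_1\leftmerge x_2$ splits uniquely into a part inherited from $x_1$ and a part inherited from $x_2$; writing such a configuration as $C^{(1)}\cup C^{(2)}$, I would set
$$R=\{(C^{(1)}\cup C^{(2)},\,D^{(1)}\cup D^{(2)})\mid (C^{(1)},D^{(1)})\in R_1,\ (C^{(2)},D^{(2)})\in R_2\}.$$
Clearly $(\emptyset,\emptyset)\in R$. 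To verify the transfer property, take a step $C^{(1)}\cup C^{(2)}\xrightarrow{X}C'$. By the rules, the initiating step is a synchronous pair $X=\{e_1,e_2\}$ with $e_1$ supplied by $x_1$, $e_2$ by $x_2$, and $e_1\leq e_2$; its target is governed by $\between$ (rule four) or by a single residual (rules two and three). Matching $e_1$ through $R_1$ and $e_2$ through $R_2$ produces a step $\{e_1',e_2'\}$ of $y_1\leftmerge y_2$ with $X\sim\{e_1',e_2'\}$, and the residual pair lands in the image of $\between$, where I invoke the already-established congruence of $\between$ to close the square. Restricting $X$ to pairwise-concurrent events gives the identical argument for $\sim_s$.

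For $\sim_{hp}$ and $\sim_{hhp}$ I would carry along the order-isomorphisms: starting from posetal (resp.\ downward-closed posetal) bisimulations $R_1,R_2$ with isomorphisms $f_1,f_2$, I would glue them into $f=f_1\cup f_2$ on the disjoint union of configurations and define $R$ analogously on triples $(C_1,f,C_2)$. The single-event transfer clause of Definition \ref{HHPB} is then checked against Table \ref{TRForLeftParallel}, updating $f$ by $f[e_1\mapsto e_2]$ exactly as the rules prescribe, with the continuation again handed to the $\between$-congruence. For $\sim_{hhp}$ one additionally verifies that the glued relation is downward closed, which follows because $R_1$ and $R_2$ are downward closed and the projection of a configuration onto its $x_1$- and $x_2$-parts is monotone for the pointwise order on the posetal product.

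The main obstacle, in all four cases, is the side condition $e_1\leq e_2$ attached to the rules for $\leftmerge$: I must ensure that when the initiating event $e_1$ of the left argument is replayed through $R_1$ as $e_1'$ and the event $e_2$ of the right argument through $R_2$ as $e_2'$, the matched pair still satisfies $e_1'\leq e_2'$ in the combined event structure, so that a genuine $\leftmerge$-step is available on the $y$-side. For $\sim_p$ and $\sim_s$ this is immediate, since the extended causal-or-concurrency order here is determined only by whether the two events originate from the same or from different parallel branches, and this is preserved by the split $C^{(1)}\cup C^{(2)}$; for $\sim_{hp}$ it follows from $f$ respecting causality, and for $\sim_{hhp}$ from downward closure, which guarantees that the causal predecessors needed to justify $e_1'\leq e_2'$ have already been matched. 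Once this compatibility is secured, every remaining verification is the routine rule-by-rule case analysis already carried out for $\between$ and $\parallel$.
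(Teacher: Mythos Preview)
The paper supplies no proof for this statement: the subsection introducing $\leftmerge$ explicitly says ``all proofs of the conclusions are left to the reader.'' So there is nothing to compare against; your sketch is precisely the kind of argument the reader is expected to fill in, and the overall strategy---reduce to the new operator and build componentwise bisimulations from $R_1,R_2$---is the standard one.

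Two points deserve tightening. First, you claim that congruence of $\between$, $\parallel$, $\mid$, $\Theta$, $\triangleleft$, $\partial_H$ with respect to $\sim_{hhp}$ is ``inherited verbatim'' from the earlier $APTC$ results. In this paper, however, the plain-$APTC$ soundness and completeness theorems deliberately omit $\sim_{hhp}$; the omission is for axiomatisability reasons rather than congruence, but congruence of those operators with respect to $\sim_{hhp}$ is nowhere stated here, and your $\leftmerge$ argument depends on it (the residual after a $\leftmerge$-step is a $\between$-term, so you immediately invoke $\between$-congruence). You should either locate an explicit $\sim_{hhp}$-congruence result in \cite{ATC} to cite, or fold that verification into your proof rather than assuming it.

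Second, your treatment of the side condition $e_1\leq e_2$ is a little loose. The paper extends $\leq$ to mean ``causally below \emph{or} concurrent,'' and the cleanest way to see that the matched pair $e_1',e_2'$ again satisfies it is simply that $e_1'$ lives in (the PES of) $y_1$ and $e_2'$ in $y_2$, so they are concurrent in $y_1\leftmerge y_2$ and hence $e_1'\leq e_2'$ in the extended sense. Saying this directly is shorter and more convincing than the case split you give.
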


\begin{theorem}[Elimination theorem of parallelism with left parallel composition]
Let $p$ be a closed $APTC$ with left parallel composition term. Then there is a basic $APTC$ term $q$ such that $APTC\vdash p=q$.
\end{theorem}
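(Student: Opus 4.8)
The plan is to prove the elimination result by turning the axioms of $APTC$ with left parallel composition into a term rewriting system whose normal forms are exactly the basic terms in $\mathcal{B}(APTC)$. First I would orient every defining equation from left to right as a rewrite rule. The rules that do the real work are those whose left-hand side carries an operator forbidden in a basic term, i.e. $\between$, $\parallel$, $\mid$, $\Theta$, $\triangleleft$ (and $\partial_H$, if it belongs to this signature): $P1$ rewrites $\between$ into a sum of a $\parallel$- and a $\mid$-term, $P4$ rewrites $\parallel$ into a sum of two $\leftmerge$-terms, the rules $P5$--$P9$ drive $\leftmerge$ through $\cdot$ and $+$ toward atomic arguments, $C10$--$C17$ remove $\mid$, $CE18$--$CE23$ remove $\Theta$, and $U24$--$U36$ remove $\triangleleft$, while $A3$--$A7$ deal with $\delta$ and distribute/reassociate $\cdot$. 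Note that $P2$ and $P3$ need not be oriented at all, since $\parallel$ is eliminated outright by $P4$; the commutativity and associativity of $+$ are handled by rewriting modulo associativity--commutativity of $+$. Because each oriented rule is literally an instance of an axiom, every rewrite step gives $APTC\vdash \ell = r$, so once termination and the normal-form characterization are in hand the theorem follows immediately.

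The heart of the proof is to show this rewriting relation is strongly normalizing. I would do this by exhibiting a weight function $|\cdot|$ from closed terms into a well-founded set (the naturals under a suitable polynomial interpretation, possibly refined lexicographically) that strictly decreases under each productive rule. Atomic actions and $\delta$ receive a base weight of at least $2$, $+$ is interpreted additively, $\cdot$ multiplicatively, and the merge-style operators are weighted so that $\between$ strictly dominates the combination of $\parallel$ and $\mid$ appearing on the right of $P1$, and $\parallel$ strictly dominates the two $\leftmerge$-terms of $P4$.

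The hard part will be the design and verification of this weight function, because the rewriting is genuinely cyclic and duplicating rather than length-decreasing. Several right-hand sides \emph{reintroduce} heavier operators: $P7$ and $C13$ turn a $\leftmerge$ (resp. $\mid$) of two products into a product whose second factor is an \emph{$\between$} of the subterms, so one obtains a dependence $\leftmerge \rightsquigarrow \between \rightsquigarrow \parallel \rightsquigarrow \leftmerge$. A naive count of operator symbols therefore increases, and no precedence-based path order can succeed, since the required precedence $\parallel > \leftmerge > \between > \parallel$ is a cycle. The interpretation must instead exploit that the reappearing heavier operator is always applied to \emph{strictly smaller} arguments -- in $P7$ the $\leftmerge$ on the left acts on $e_1\cdot x$ and $e_2\cdot y$, whereas on the right $\between$ acts only on $x$ and $y$ -- so that a multiplicative reading of $\cdot$ combined with a super-additive reading of the merge operators makes the overall weight drop even as a new $\between$ is born. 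A second subtlety is that the structural rules ($A4$, $A5$, and the $+$-laws) are weight-preserving under this interpretation and also duplicate subterms, so they must be absorbed either by working modulo those equations or by a well-founded secondary measure ordered lexicographically below the principal weight. Verifying that every rule decreases in this combined order, rule by rule, is the technical core.

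Finally I would characterize the normal forms by structural induction: suppose a normal form still contained one of $\between$, $\parallel$, $\mid$, $\Theta$, $\triangleleft$ (or $\partial_H$), and take an innermost such occurrence, so its immediate arguments are already normal forms built from $\mathbb{E}$, $\cdot$, $+$, and $\leftmerge$. A case analysis on the shape of those arguments (an atom, a product $e\cdot t$, a sum, or a $\leftmerge$) shows that one of the oriented rules $P1$, $P4$--$P9$, $C10$--$C17$, $CE18$--$CE23$, $U24$--$U36$ (or the $\partial_H$-rules) applies, contradicting normality; for instance any $\leftmerge$ whose arguments are not both atomic is reducible by $P5$--$P9$, and $+$-summands with $\delta$ by $A6$. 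Hence every normal form is generated solely by $\mathbb{E}$, $\cdot$, $+$, and $\leftmerge$, so it lies in $\mathcal{B}(APTC)$. Taking $q$ to be the normal form of $p$, strong normalization guarantees its existence and soundness of the steps gives $APTC\vdash p=q$ with $q$ a basic term, as required.
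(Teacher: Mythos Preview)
The paper does not actually supply a proof of this elimination theorem: in the subsection on left parallel composition it states explicitly that ``all proofs of the conclusions are left to the reader,'' so there is no argument in the text to compare against. Your term-rewriting strategy is precisely the standard one for ACP-style elimination theorems (as in the Fokkink text the paper cites as \cite{ACP}), and it is the approach the authors almost certainly have in mind; orienting the axioms left-to-right, establishing strong normalization via a polynomial/lexicographic interpretation, and then characterizing normal forms as basic terms is exactly how these results are proved in the referenced literature.

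Your identification of the core difficulty is on target and worth emphasizing: rules $P7$ and $C13$ reintroduce $\between$ on the right-hand side, so the operator-dependency graph has the cycle $\between \to \parallel \to \leftmerge \to \between$, ruling out any simple recursive-path ordering. Your fix---a multiplicative interpretation of $\cdot$ together with a super-additive interpretation of the merge operators, so that the reappearing $\between$ sits on strictly smaller arguments $x,y$ rather than $e_1\cdot x,\,e_2\cdot y$---is the right idea and is how the analogous ACP proof handles the left-merge/communication-merge cycle. Two small remarks: first, $\partial_H$ is \emph{not} in the signature for this particular theorem (it enters only in the subsequent ``Elimination theorem of $APTC$''), so you can drop those parenthetical cases; second, be careful that the side conditions $(e_1\leq e_2)$ on $P5$--$P7$ mean the $\leftmerge$-rules are conditional, so your normal-form case analysis must check that for any pair of atomic events at least one of $e_1\leq e_2$ or $e_2\leq e_1$ holds (the paper stipulates this by extending $\leq$ to include both the causal order and concurrency), otherwise an innermost $e_1\leftmerge e_2$ could be stuck.
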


\begin{theorem}[Soundness of parallelism  with left parallel composition modulo truly concurrent bisimulation equivalences]
Let $x$ and $y$ be $APTC$ with left parallel composition terms. If $APTC\vdash x=y$, then

\begin{enumerate}
  \item $x\sim_{s} y$;
  \item $x\sim_{p} y$;
  \item $x\sim_{hp} y$;
  \item $x\sim_{hhp} y$.
\end{enumerate}
\end{theorem}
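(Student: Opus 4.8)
\section*{Proof proposal}

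The plan is to reduce the statement to the soundness of the \emph{individual} axioms, exploiting two facts supplied above: each of $\sim_s$, $\sim_p$, $\sim_{hp}$ and $\sim_{hhp}$ is an equivalence relation, and by the preceding Congruence theorem of $APTC$ with left parallel composition each is a congruence with respect to every operator of the signature (in particular $\leftmerge$). Granting per-axiom soundness, the theorem follows by a routine induction on the length of a derivation $APTC\vdash x=y$: reflexivity, symmetry and transitivity discharge the equivalence-rule steps, congruence discharges closure under contexts, and the base case is precisely the claim that for every axiom $t=s$ of Table \ref{AxiomsForLeftParallelism} and every closing substitution $\sigma$ one has $\sigma(t)\sim\sigma(s)$ in each of the four senses. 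So it suffices to check the axioms one at a time.

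The axioms not mentioning $\leftmerge$ are, up to the obvious renumbering, exactly those of $APTC$, so their soundness modulo $\sim_s$, $\sim_p$ and $\sim_{hp}$ is already given by Theorem \ref{SAPTC}; the genuinely new work is (a) the axioms involving $\leftmerge$, namely $P4$--$P9$, $CE22$, $U31$ and $U35$, and (b) upgrading every case to also cover $\sim_{hhp}$. For each new axiom I would exhibit an explicit witnessing relation built from the transition rules of Table \ref{TRForLeftParallel} together with those of Table \ref{TRForAPTC}: take $R$ to be the identity on closed terms extended by the pair $(\sigma(t),\sigma(s))$ and its residuals, and verify the transfer conditions by matching each outgoing transition of one side with a transition of the other carrying the same label --- a step set $\{e_1,e_2\}$, a pomset, or a single event according to the equivalence. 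The side condition $e_1\leq e_2$ attached to the rules for $\leftmerge$ is what makes the step sets generated on the two sides carry the \emph{same} causal order, so the matching is both label- and order-preserving.

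For the history-preserving clauses the same $R$ is promoted to a posetal relation by recording, alongside each pair of configurations, the order isomorphism $f:C_1\to C_2$ between them. The isomorphism is updated by $f[e_1\mapsto e_2]$, exactly as in Definition \ref{HHPB}, each time a matching pair of events is fired, and one checks that $f$ remains an isomorphism because the $\leftmerge$-rules fire $e_1$ and $e_2$ with the causal relation $e_1\leq e_2$ imposed identically on both sides. This yields $\sim_{hp}$ for each new axiom.

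The main obstacle is the $\sim_{hhp}$ clause, which is the whole reason $\leftmerge$ was introduced in Subsection \ref{ahhpb}. Beyond the hp-conditions, the witnessing posetal relation must be shown to be \emph{downward closed}, and I expect this to be the delicate point: for the associativity- and distributivity-flavoured axioms ($P8$, $CE22$, $U31$, $U35$) one must argue that every pointwise-smaller triple $(C_1,f,C_2)$ is again generated, which requires analysing which sub-configurations are reachable under the ordered $\leftmerge$-rules rather than reading it off automatically. The payoff is exactly the contrast with the unrestricted $\parallel$: its step rule imposes \emph{no} order between $e_1$ and $e_2$ and thereby breaks downward closure, which is Moller's non-finite-axiomatizability obstruction cited in Subsection \ref{ahhpb}, whereas the causal side condition $e_1\leq e_2$ in Table \ref{TRForLeftParallel} restores downward closure and delivers the fourth item.
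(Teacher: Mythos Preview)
The paper does not actually prove this theorem: at the start of Subsection~\ref{ahhpb} it states that ``the resulting theory is similar to the former, we only list the significant differences, and all proofs of the conclusions are left to the reader.'' So there is nothing to compare against beyond the implicit pointer to the analogous soundness proofs for $APTC$ without $\leftmerge$ (Theorems~\ref{SBATC} and~\ref{SAPTC}), which themselves are stated without proof and deferred to the cited works.

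Your outline is exactly the standard route those deferred proofs take: reduce to per-axiom soundness via the congruence theorem and equivalence, then exhibit a bisimulation (respectively, a downward-closed posetal relation) for each axiom instance. Your identification of the genuinely new obligations --- the $\leftmerge$-axioms $P4$--$P9$, $CE22$, $U31$, $U35$, and the upgrade of \emph{every} axiom to $\sim_{hhp}$ --- is correct, as is your diagnosis that downward closure is the crux for item~4 and that the side condition $e_1\leq e_2$ on the $\leftmerge$-rules is what recovers it. One small correction: the axioms ``not mentioning $\leftmerge$'' are not literally a renumbering of Table~\ref{AxiomsForAPTC}, since the original $P4$--$P10$ for $\parallel$ have been \emph{removed} rather than kept alongside the new ones; but this only shrinks your obligation, so it does no harm. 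In short, your plan is sound and is precisely what the paper asks the reader to supply.
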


\begin{theorem}[Completeness of parallelism with left parallel composition modulo truly concurrent bisimulation equivalences]
Let $x$ and $y$ be $APTC$ terms.

\begin{enumerate}
  \item If $x\sim_{s} y$, then $APTC\vdash x=y$;
  \item if $x\sim_{p} y$, then $APTC\vdash x=y$;
  \item if $x\sim_{hp} y$, then $APTC\vdash x=y$;
  \item if $x\sim_{hhp} y$, then $APTC\vdash x=y$.
\end{enumerate}
\end{theorem}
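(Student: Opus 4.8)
The plan is to follow the standard normal-form route for completeness proofs in process algebra, reusing the machinery already set up for $APTC$ (Theorem~\ref{CAPTC}) and inserting the left parallel operator $\leftmerge$ wherever the original argument used the full merge. First I would reduce to basic terms: by the Elimination theorem just stated, every closed term $p$ is provably equal to a basic $APTC$ term, and by the companion Soundness theorem that provable equality is in turn a bisimilarity of each of the four kinds; hence it suffices to show that for basic terms $p,q$, if $p\sim q$ then $APTC\vdash p=q$, where $\sim$ is any of $\sim_s,\sim_p,\sim_{hp},\sim_{hhp}$. Since the statement is phrased for general terms, I treat any variables as fresh constants so that the reduction to basic terms applies uniformly.

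Next I would refine basic terms into a normal form. Using $A1$--$A3$ to reorder and collapse duplicate summands, the distributivity laws $P8$, $C14$, $U29$, and the prefixing laws $P5$--$P7$, every basic term is provably equal to a sum $\sum_i \alpha_i$ in which each summand $\alpha_i$ is either a single event $e$, a prefix $e\cdot t$ with $t$ again in normal form, or a left-parallel head $(e_1\leftmerge e_2)\cdot t$ carrying the side condition $e_1\leq e_2$ that its transition rule in Table~\ref{TRForLeftParallel} imposes. The point of this shape is that the initial transitions of a normal form correspond one-to-one with its summands, and that each left-parallel summand commits to a definite ordering $e_1\leq e_2$ of its two leading events.

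The core is then a lemma, proved by induction on the total number of symbols of the normal forms, stating that bisimilar normal forms are provably equal. For $\sim_s$ and $\sim_p$ the argument is the familiar one: given a summand of $p$ inducing an initial step (resp.\ pomset) transition $p\xrightarrow{X}p'$, step (resp.\ pomset) bisimilarity yields a matching $q\xrightarrow{Y}q'$ with $X\sim Y$ and $p'\sim q'$; the induction hypothesis gives $APTC\vdash p'=q'$, so the summand of $p$ is provably a summand of $q$, and by symmetry together with $A3$ we obtain $APTC\vdash p=q$. For $\sim_{hp}$ one carries along the configuration isomorphism $f$ and matches one event at a time, using the recorded order $e_1\leq e_2$ in each left-parallel head to force the same causal shape on the matching summand of $q$.

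The main obstacle is the $\sim_{hhp}$ case, and it is precisely why $\leftmerge$ was introduced. Here the posetal relation must be downward closed, so matching only the top-level transitions does not suffice: the match must be coherent across every sub-configuration simultaneously. The key observation is that $\leftmerge$ makes the causal/concurrent structure rigid --- its transition rule fires $\{e_1,e_2\}$ only under the extended relation $e_1\leq e_2$ (which, as noted in the text, subsumes both strict causality $<$ and concurrency $=$), so a single left-parallel head encodes an unambiguous ordering rather than the symmetric freedom of the full merge. I would show that a downward-closed hp-bisimulation between normal forms must therefore align these ordered heads exactly, and that this alignment propagates to all residuals so that the induction closes. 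Verifying that the normal form exposes all and only the order information observable by hhp-bisimulation --- so that no two provably distinct normal forms can be hhp-bisimilar --- is the delicate point; once it is established, all four statements follow uniformly.
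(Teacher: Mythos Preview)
The paper does not actually prove this theorem: at the start of Section~\ref{ahhpb} it states that ``the resulting theory is similar to the former, we only list the significant differences, and all proofs of the conclusions are left to the reader.'' So there is no proof in the paper to compare against; your proposal is already far more detailed than anything the paper supplies. Your overall route---reduce to basic terms via the Elimination theorem, push to a summand normal form, then match summands by induction using the bisimulation hypothesis and close with $A3$---is exactly the standard argument that the paper's earlier completeness results (Theorems~\ref{CBATC} and~\ref{CAPTC}) rely on, and it is clearly what the paper intends the reader to reproduce here with $\leftmerge$ in place of $\parallel$.

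One technical point worth tightening: your claimed normal form, with summands limited to $e$, $e\cdot t$, or $(e_1\leftmerge e_2)\cdot t$, is slightly too narrow. The basic-term grammar allows arbitrary $t\leftmerge s$ with $t,s$ basic, and the axioms in Table~\ref{AxiomsForLeftParallelism} give left-distribution over $+$ (axiom $P8$) but no right-distribution, and $P5$--$P7$ only peel off single-event prefixes. Consequently heads of the form $(e_1\leftmerge\cdots\leftmerge e_k)\cdot t$ with $k>2$ (or nested left-merges) can survive. This does not undermine your strategy---the induction and summand-matching go through the same way once you allow these wider heads---but the normal-form description and the ``one-to-one with initial transitions'' claim need to be stated for the general $k$-ary head, not just the binary one.
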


The transition rules of encapsulation operator are the same, and the its axioms are shown in \ref{AxiomsForEncapsulationLeft}.

\begin{center}
    \begin{table}
        \begin{tabular}{@{}ll@{}}
            \hline No. &Axiom\\
            $D1$ & $e\notin H\quad\partial_H(e) = e$\\
            $D2$ & $e\in H\quad \partial_H(e) = \delta$\\
            $D3$ & $\partial_H(\delta) = \delta$\\
            $D4$ & $\partial_H(x+ y) = \partial_H(x)+\partial_H(y)$\\
            $D5$ & $\partial_H(x\cdot y) = \partial_H(x)\cdot\partial_H(y)$\\
            $D6$ & $\partial_H(x\leftmerge y) = \partial_H(x)\leftmerge\partial_H(y)$\\
        \end{tabular}
        \caption{Axioms of encapsulation operator with left parallel composition}
        \label{AxiomsForEncapsulationLeft}
    \end{table}
\end{center}

\begin{theorem}[Conservativity of $APTC$ with respect to the algebra for parallelism with left parallel composition]
$APTC$ is a conservative extension of the algebra for parallelism with left parallel composition.
\end{theorem}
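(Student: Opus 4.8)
The plan is to reduce the algebraic statement to an operational one and then sandwich it between the soundness and completeness results already available for both layers. Recall that to say $APTC$ (the theory obtained by adjoining the encapsulation operator $\partial_H$ and its axioms $D1$--$D6$ of Table~\ref{AxiomsForEncapsulationLeft} to the algebra for parallelism with left parallel composition) is a \emph{conservative} extension of the algebra for parallelism with left parallel composition means: for all closed terms $s,t$ not containing $\partial_H$, we have $APTC\vdash s=t$ if and only if the algebra for parallelism with left parallel composition proves $s=t$. The ``if'' direction is immediate, since every axiom and rule of the smaller theory is retained. Hence all the content lies in the ``only if'' direction: no genuinely new equations between $\partial_H$-free terms are introduced.

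First I would establish operational conservativity of the associated transition system specifications. The transition rules of the base theory (Table~\ref{TRForAPTC} restricted to $\between,\parallel,\mid,\Theta,\triangleleft$ together with Table~\ref{TRForLeftParallel} for $\leftmerge$) are in the standard truly concurrent SOS format and are source-dependent and well-founded. The only rules added in passing to $APTC$ are the two encapsulation rules, and crucially each carries $\partial_H$ in its source. Consequently no new rule has a source built solely from base-signature operators; moreover no base rule introduces $\partial_H$ into its target, so the terms reachable from a $\partial_H$-free term remain $\partial_H$-free. By the meta-theorem on operational conservative extensions it follows that every closed $\partial_H$-free term has exactly the same outgoing transitions and termination behaviour, hence the same reachable configurations, in both transition systems. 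Therefore each of $\sim_p,\sim_s,\sim_{hp},\sim_{hhp}$, restricted to $\partial_H$-free closed terms, coincides with the corresponding equivalence computed in the base system.

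With operational conservativity in hand, I would close the argument by the soundness/completeness sandwich. Suppose $s,t$ are closed $\partial_H$-free terms with $APTC\vdash s=t$. By soundness of $APTC$ modulo the truly concurrent bisimulation equivalences (the axioms $D1$--$D6$ being sound by the usual congruence argument, exactly as for the base layer), $s$ and $t$ are equivalent in the full transition system for each of $\sim_s,\sim_p,\sim_{hp},\sim_{hhp}$. By the operational conservativity just proved, these equivalences already hold in the transition system of the base theory. Finally, by the completeness theorem for parallelism with left parallel composition---which holds modulo all four equivalences---the base theory proves $s=t$. This settles the nontrivial direction, uniformly for every equivalence.

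The main obstacle I anticipate is the verification of the format conditions underlying the operational conservativity meta-theorem: checking that the base truly concurrent TSS is genuinely source-dependent and well-founded once $\leftmerge$ and the side conditions $e_1\leq e_2$ of Table~\ref{TRForLeftParallel} are included, and confirming that the encapsulation rules, with their side conditions $e\notin H$, indeed place $\partial_H$ in every source. A secondary point requiring care is that the argument must go through for $\sim_{hhp}$ as well, since it is precisely the hereditary history-preserving case that motivated the left merge; here one must ensure that the configuration-level data (the order-isomorphisms $f$ and downward closure) are likewise preserved, which holds because the reachable configurations and their causal structure are unchanged for $\partial_H$-free terms.
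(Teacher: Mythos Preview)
Your proposal is correct and follows the standard meta-theoretic route for conservative extension results in process algebra: verify that the added transition rules for $\partial_H$ are source-dependent on the new operator so that operational conservativity holds, and then run the soundness/completeness sandwich to conclude equational conservativity. The paper itself does not give a proof of this theorem; the surrounding section explicitly states that ``all proofs of the conclusions are left to the reader,'' so there is no argument in the paper to compare against. Your write-up is, if anything, more detailed than what the paper would likely have supplied, and the two points you flag as needing care---checking the source-dependency/well-foundedness conditions in the presence of the $e_1\leq e_2$ side conditions on $\leftmerge$, and ensuring the argument goes through at the $\sim_{hhp}$ level---are exactly the right places to focus attention.
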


\begin{theorem}[Congruence theorem of encapsulation operator $\partial_H$]
Truly concurrent bisimulation equivalences $\sim_{p}$, $\sim_s$, $\sim_{hp}$ and $\sim_{hhp}$ are all congruences with respect to encapsulation operator $\partial_H$.
\end{theorem}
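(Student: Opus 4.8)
The plan is to prove, for each of the four equivalences $\star\in\{p,s,hp,hhp\}$, that $x\sim_\star y$ implies $\partial_H(x)\sim_\star\partial_H(y)$, by lifting a witnessing bisimulation $R$ for $x\sim_\star y$ to a bisimulation $R^{\partial}$ relating the configurations of $\partial_H(x)$ and $\partial_H(y)$. The structural fact that drives every case is that the only transition rules for $\partial_H$ are the two passing rules for $e\notin H$ (Table \ref{TRForAPTC}): hence every move of $\partial_H(x)$ is inherited from a move of $x$ that uses only events whose labels lie outside $H$, and conversely an $H$-free move of $x$ lifts to a move of $\partial_H(x)$.

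First I would treat pomset and step bisimulation. Assuming $x\sim_p y$ via $R\subseteq\mathcal{C}(\mathcal{E}_1)\times\mathcal{C}(\mathcal{E}_2)$ with $(\emptyset,\emptyset)\in R$, I set
$$R^{\partial}=\{(\partial_H(C_1),\partial_H(C_2)) : (C_1,C_2)\in R\},$$
where $\partial_H(C)$ denotes the configuration obtained by encapsulation. To verify the transfer condition, suppose $\partial_H(C_1)\xrightarrow{X_1}\partial_H(C_1')$; by the $\partial_H$ rules this forces $C_1\xrightarrow{X_1}C_1'$ with $X_1\cap H=\emptyset$. Using $(C_1,C_2)\in R$ I obtain $C_2\xrightarrow{X_2}C_2'$ with $X_1\sim X_2$ and $(C_1',C_2')\in R$; since $X_1\sim X_2$ preserves labels, $X_2\cap H=\emptyset$ as well, so $\partial_H(C_2)\xrightarrow{X_2}\partial_H(C_2')$ with $X_1\sim X_2$ and $(\partial_H(C_1'),\partial_H(C_2'))\in R^{\partial}$. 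The symmetric direction is identical, and the step case follows by restricting $X_1,X_2$ to pairwise concurrent sets.

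Next I would handle the (hereditary) history-preserving cases. Starting from a posetal bisimulation $R\subseteq\mathcal{C}(\mathcal{E}_1)\overline{\times}\mathcal{C}(\mathcal{E}_2)$, I define
$$R^{\partial}=\{(\partial_H(C_1),f,\partial_H(C_2)) : (C_1,f,C_2)\in R\}.$$
A single-event move $\partial_H(C_1)\xrightarrow{e_1}\partial_H(C_1')$ again comes from $C_1\xrightarrow{e_1}C_1'$ with $e_1\notin H$; the matching event $e_2$ has the same label, hence $e_2\notin H$, and $f[e_1\mapsto e_2]$ remains an isomorphism between the encapsulated configurations. For hhp-bisimilarity I would additionally check that $R^{\partial}$ is downward closed: because $\partial_H$ acts simply by deleting the $H$-labelled events, it commutes with the pointwise-inclusion order on posetal triples, so downward closedness of $R$ transfers to $R^{\partial}$.

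The hard part will be the bookkeeping around the interaction between the label-based action set $H$ and the event-based isomorphisms witnessing $\sim_\star$: one must verify carefully that "being blocked by $\partial_H$" is invariant under these isomorphisms, i.e. that $X_1\cap H=\emptyset\iff X_2\cap H=\emptyset$ and likewise for $f$, so that a matching move on one side is automatically admissible through $\partial_H$ on the other. Once this invariance is established the transfer-condition checks are routine, and the congruence for all four equivalences follows uniformly.
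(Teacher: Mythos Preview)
The paper does not actually give a proof of this theorem: at the beginning of Section~\ref{ahhpb} the author states that ``all proofs of the conclusions are left to the reader,'' and this congruence theorem is one of the results listed without argument. So there is no paper proof to compare your proposal against.

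Your outline is the standard argument and is sound. The only cosmetic point worth tightening is the level at which $R^{\partial}$ lives: $\partial_H$ is a syntactic operator on process terms, and the SOS rules in Table~\ref{TRForAPTC} describe transitions between terms, so it is cleaner to set $R^{\partial}=\{(\partial_H(p),\partial_H(q)) : p\sim_\star q\}$ (together with the identity pairs needed for the $\surd$ case) rather than writing $\partial_H(C)$ for a configuration $C$. With that phrasing the transfer check lines up directly with the two $\partial_H$ rules, and your key observation---that the pomset isomorphism $X_1\sim X_2$ (respectively the poset isomorphism $f$) preserves labels, hence $X_1\cap H=\emptyset\iff X_2\cap H=\emptyset$---is exactly what is needed and is correct. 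The downward-closure argument for $\sim_{hhp}$ is also fine: restricting a triple $(C_1,f,C_2)\in R$ to a sub-triple and then applying $\partial_H$ is the same as applying $\partial_H$ first and then restricting, so downward closedness transfers.
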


\begin{theorem}[Elimination theorem of $APTC$]
Let $p$ be a closed $APTC$ term including the encapsulation operator $\partial_H$. Then there is a basic $APTC$ term $q$ such that $APTC\vdash p=q$.
\end{theorem}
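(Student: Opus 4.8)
The plan is to prove the statement by the standard rewriting argument for elimination theorems: orient the axioms into a term rewriting system, show that the system is strongly normalizing, and verify that its normal forms all lie in $\mathcal{B}(APTC)$. First I would orient from left to right those axioms of Table~\ref{AxiomsForLeftParallelism} and Table~\ref{AxiomsForEncapsulationLeft} that genuinely eliminate an operator: $P1$ removes $\between$; $P4$ trades the full parallel $\parallel$ for two left merges; $P5$--$P8$ push $\leftmerge$ inward and distribute it over $+$; $C10$--$C17$ evaluate and distribute the communication merge $\mid$; $CE18$--$CE23$ distribute the conflict elimination operator $\Theta$; $U24$--$U36$ resolve the unless operator $\triangleleft$; $D1$--$D6$ drive the encapsulation $\partial_H$ down to the atomic actions; and $A6, A7, P9, C16, C17, U27, U28$ absorb the deadlock $\delta$. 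The operators $\cdot$, $+$ and $\leftmerge$ are retained, matching the grammar of basic terms, while the structural laws $A1$--$A3$ for $+$ and $P2, P3$ for $\parallel$ are treated as identities modulo which the rewriting is performed rather than as oriented rules.

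The heart of the proof is termination. A plain symbol count will not decrease, for two reasons: the distributivity rules $C14, C15, U29, CE20$ and $P8$ duplicate a subterm and thereby enlarge the term, and the rule $P7$, $(e_1\cdot x)\leftmerge(e_2\cdot y)=(e_1\leftmerge e_2)\cdot(x\between y)$, reintroduces the higher operator $\between$ on its right-hand side. I would therefore define a weight $\mathrm{w}$ into the positive integers with an additive--multiplicative shape --- roughly $\mathrm{w}(e)$ a fixed constant, $\mathrm{w}(x+y)=\mathrm{w}(x)+\mathrm{w}(y)$, $\mathrm{w}(x\cdot y)=\mathrm{w}(x)\cdot\mathrm{w}(y)$, and strictly larger multiplicative contributions for $\between$, $\parallel$, $\mid$, $\leftmerge$, $\Theta$, $\triangleleft$ and $\partial_H$ --- calibrated so that every oriented rule strictly lowers it. The reason this can be made to work for $P7$ is that the regenerated $\between$ acts on the strictly smaller operands $x, y$ in place of $e_1\cdot x$ and $e_2\cdot y$, so that with a sufficiently dominant weight for the parallel operators the total still drops; the duplicating rules are absorbed by giving the eliminated operator a weight that outweighs the doubled subterm. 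Checking the strict decrease rule by rule then gives strong normalization.

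I would then characterize the normal forms, showing each is a basic term, by induction on structure. Let $q$ be a normal form with outermost operator $o$. Its immediate subterms are normal forms, hence basic by the induction hypothesis, and by applying the left-merge and sum-distribution rules their summands may be taken to have (possibly parallel) atomic or sequential heads. If $o$ were any of $\between$, $\parallel$, $\mid$, $\Theta$, $\triangleleft$ or $\partial_H$, then one of the oriented rules would fire at the root --- $P1$ for $\between$, $P4$ for $\parallel$, and the groups $C10$--$C17$, $CE18$--$CE23$, $U24$--$U36$ and $D1$--$D6$ for $\mid$, $\Theta$, $\triangleleft$ and $\partial_H$ respectively --- contradicting that $q$ is a normal form; any residual $\delta$ is absorbed likewise. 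Hence $q$ is built only from $\cdot$, $+$ and $\leftmerge$ over atomic actions, i.e. $q\in\mathcal{B}(APTC)$.

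Finally, given a closed $APTC$ term $p$, possibly containing $\partial_H$, strong normalization yields a normal form $q$ reachable from $p$, which the preceding step shows to be basic; and since each rewrite step instantiates one of the axioms, it is derivable, whence $APTC\vdash p=q$, as required. I expect the termination proof to be the main obstacle: the real work is to calibrate a single weight function that simultaneously tolerates the subterm-duplicating distributivity rules and the rule $P7$ that regenerates $\between$, and then to verify the strict decrease uniformly across the parallel, communication, conflict-elimination and encapsulation rules. The characterization of normal forms is, by comparison, a routine though lengthy case analysis.
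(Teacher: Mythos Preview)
The paper does not actually give a proof of this theorem: at the start of Section~\ref{ahhpb} it states that ``the resulting theory is similar to the former, we only list the significant differences, and all proofs of the conclusions are left to the reader.'' So there is nothing to compare against directly; your rewriting-based argument is exactly the standard route for ACP-style elimination theorems (and is the method used in the references the paper cites), and is what one would expect the omitted proof to look like.

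Two small points worth tightening. First, you should also orient $A4$ and $A5$ (right-distributivity and associativity of $\cdot$) from left to right; without them a normal form could contain a subterm of the shape $(s_1+s_2)\cdot t$ or $(s_1\cdot s_2)\cdot t$, and the basic-term grammar only admits $e\cdot t$ with $e\in\mathbb{E}$ in the left argument of $\cdot$. Second, in the normal-form analysis you should note explicitly that a residual $e_1\leftmerge e_2$ with atomic $e_1,e_2$ is already basic by clause~(4) of the definition of $\mathcal{B}(APTC)$, so no further rule needs to fire on it. With those additions your outline is sound; the termination weight is indeed the only genuinely delicate step, and your diagnosis of why --- the duplicating distributivity rules and the regeneration of $\between$ in $P7$ --- is correct.
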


\begin{theorem}[Soundness of $APTC$ modulo truly concurrent bisimulation equivalences]
Let $x$ and $y$ be $APTC$ terms including encapsulation operator $\partial_H$. If $APTC\vdash x=y$, then

\begin{enumerate}
  \item $x\sim_{s} y$;
  \item $x\sim_{p} y$;
  \item $x\sim_{hp} y$;
  \item $x\sim_{hhp} y$.
\end{enumerate}
\end{theorem}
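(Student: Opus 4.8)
The plan is to peel the statement apart into (i) closure under the rules of equational logic and (ii) the soundness of each single axiom. Recall that $APTC\vdash x=y$ means that $x=y$ is derivable from the axioms of Table~\ref{AxiomsForLeftParallelism} together with the encapsulation axioms $D1$--$D6$ of Table~\ref{AxiomsForEncapsulationLeft}, using reflexivity, symmetry, transitivity, and substitution of equals into arbitrary contexts. Each of $\sim_s$, $\sim_p$, $\sim_{hp}$ and $\sim_{hhp}$ is an equivalence relation, and, by the congruence theorem of $APTC$ with left parallel composition and the congruence theorem of the encapsulation operator $\partial_H$ (both established above), each is a congruence for every operator of the signature. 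Consequently soundness of the whole proof system follows, by a routine induction on the length of an equational derivation, once every axiom is shown to be sound for all four equivalences. So the task reduces to checking the axioms one by one.

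For every axiom that does not mention $\partial_H$ -- that is, $A6$, $A7$, the parallelism laws $P1$--$P9$, the communication laws $C10$--$C17$, and the conflict-elimination and unless laws -- soundness modulo $\sim_s$, $\sim_p$, $\sim_{hp}$ and $\sim_{hhp}$ is exactly the content of the earlier soundness theorem for parallelism with left parallel composition, which I may invoke verbatim. The only genuinely new axioms are the six encapsulation laws $D1$--$D6$. For each of them I would produce an explicit witnessing relation and verify the transfer conditions of Definitions~\ref{PSB} and~\ref{HHPB} against the transition rules for $\partial_H$ recorded in Table~\ref{TRForAPTC}. The base cases are immediate: $D1$ compares the move $e\xrightarrow{e}\surd$ of $e$ with the identical move of $\partial_H(e)$ licensed by the rule with side condition $e\notin H$; $D2$ and $D3$ pair the behaviourless terms $\partial_H(e)$ (for $e\in H$), $\partial_H(\delta)$ and $\delta$, none of which has any outgoing transition. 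For the distribution laws $D4$, $D5$, $D6$ the relevant relation pairs $\partial_H(t)$ with the term obtained by driving $\partial_H$ through the outermost operator, closed under the subterms reachable under transitions; the $\partial_H$ rules fire precisely when the operand can perform a label outside $H$ and suppress it otherwise, so the two sides match move for move, which settles the step, pomset and hp cases.

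The hard part is the hhp case, since by Definition~\ref{HHPB} an hhp-bisimulation is a \emph{downward closed} hp-bisimulation, and this extra downward-closure requirement has to be verified for $D4$--$D6$, most delicately for $D6$, where the order side conditions $(e_1\leq e_2)$ on the left parallel operator interact with the pruning performed by $\partial_H$. The key observation I would exploit is that $\partial_H$ neither introduces nor reorders events: it only discards those configurations in which a label of $H$ would be executed. Hence the order isomorphism $f$ carried by each pair in the hp-bisimulations for $D1$--$D6$ can be taken to be the identity on the shared underlying events, and restricting such a pair pointwise to a sub-configuration again yields a pair of the same syntactic shape. Downward closure therefore holds automatically, and combining the per-axiom (downward closed) posetal relations with the congruence theorems and the earlier parallelism soundness theorem gives all four conclusions.
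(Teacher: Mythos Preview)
Your approach is correct and is essentially what the paper intends. Note that the paper does not actually supply a proof of this theorem: at the start of the subsection on $APTC$ with left parallel composition it explicitly says that ``the resulting theory is similar to the former'' and that ``all proofs of the conclusions are left to the reader.'' Your decomposition into (i) congruence plus (ii) per-axiom soundness, followed by invoking the already-established soundness for the parallelism/communication/unless axioms and a direct check of $D1$--$D6$, is exactly the shape of argument the paper is deferring, and your treatment of the $\sim_{hhp}$ case via the observation that $\partial_H$ only prunes transitions without perturbing the causal order is the right way to secure downward closure.
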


\begin{theorem}[Completeness of $APTC$ modulo truly concurrent bisimulation equivalences]
Let $p$ and $q$ be closed $APTC$ terms including encapsulation operator $\partial_H$,

\begin{enumerate}
  \item if $p\sim_{s} q$ then $p=q$;
  \item if $p\sim_{p} q$ then $p=q$;
  \item if $p\sim_{hp} q$ then $p=q$;
  \item if $p\sim_{hhp} q$ then $p=q$.
\end{enumerate}
\end{theorem}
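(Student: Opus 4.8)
The plan is to run the standard reduction-to-normal-form argument for process-algebra completeness, now in the true-concurrency setting where the left parallel operator is not eliminable and therefore persists in normal forms. By the Elimination theorem of $APTC$ stated just above, for the closed terms $p$ and $q$ there are basic $APTC$ terms $p'$ and $q'$ with $APTC\vdash p=p'$ and $APTC\vdash q=q'$. By the Soundness theorem we then have $p\sim p'$ and $q\sim q'$ for each of $\sim_s,\sim_p,\sim_{hp},\sim_{hhp}$, so the hypothesis $p\sim q$ yields $p'\sim q'$ in the corresponding equivalence. Since $APTC\vdash p=q$ follows from $APTC\vdash p'=q'$ together with the two derivations above, it suffices to prove the implication $p'\sim q'\Rightarrow APTC\vdash p'=q'$ for \emph{basic} terms $p',q'\in\mathcal{B}(APTC)$.

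Next I would fix a notion of normal form. Following the grammar of $\mathcal{B}(APTC)$, a normal form is a sum $\sum_i t_i$ in which each summand $t_i$ is either a single event $e$, a prefix $e\cdot n_i$ with $n_i$ again a normal form, or a left parallel composition $n_i\leftmerge n_i'$ of normal forms, and in which no summand is $\delta$ unless the whole term is $\delta$. Using $A1$--$A3$ to organise sums, the distributivity law $P8$ together with $P5$--$P7$ to push $\leftmerge$ inward, and $A6,A7,P9$ to absorb $\delta$, I would show by induction on the structure of a basic term that it is provably equal to such a normal form. The phenomenon emphasised in the text reappears here: the operators $\between,\mid,\Theta,\triangleleft,\partial_H$ can all be driven out, whereas $\leftmerge$ cannot, so it survives as a normal-form constructor.

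The core is to prove that bisimilar normal forms are provably equal, by induction on the combined size of two normal forms $n,n'$ with $n\sim n'$. Reading off the initial transitions from the rules of Tables~\ref{TRForBATC}, \ref{TRForAPTC} and \ref{TRForLeftParallel}, each summand of $n$ determines an initial move, which by $n\sim n'$ must be matched by a summand of $n'$ inducing an equivalent move with a bisimilar residual; the induction hypothesis makes the residuals provably equal, and $A1$--$A3$ let me absorb the matched summand of $n$ into $n'$ and conversely, yielding $APTC\vdash n=n'$. The four equivalences differ only in what ``equivalent move'' means: for $\sim_s$ one matches the set of pairwise-concurrent events labelling a step, for $\sim_p$ the entire initial pomset, and for $\sim_{hp},\sim_{hhp}$ one additionally carries the order-isomorphism $f$ recorded along the computation, extending it by $f[e_1\mapsto e_2]$ at each matched event as in Definition~\ref{HHPB}.

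The hard part will be the $\sim_{hhp}$ clause. Because hereditary history-preserving bisimilarity is downward closed, matching initial moves and residuals is not sufficient: the isomorphism must stay coherent under restriction to every sub-configuration, and it is precisely to make a \emph{finite} complete axiomatization possible here that $\leftmerge$ was introduced in place of $\parallel$ (recall Moller's result, cited above, that $\parallel$ admits no finite sound and complete axiomatization modulo bisimulation). The mechanism I would exploit is that the side condition $e_1\leq e_2$ in the rules of Table~\ref{TRForLeftParallel}, under the extended relation $\leq$ subsuming both causal order and concurrency, pins down a canonical leading event for any $n\leftmerge n'$; this orientation linearises the otherwise unordered concurrent initial events and lets the matching respect downward closure. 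Verifying that the orientation renders left-parallel normal forms genuinely canonical for $\sim_{hhp}$ — so that two such forms are provably equal whenever hhp-bisimilar — is where the argument demands the most care, and it is the step the coarser equivalences $\sim_s,\sim_p,\sim_{hp}$ do not require.
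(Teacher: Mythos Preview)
The paper does not actually prove this theorem: at the start of Section~\ref{ahhpb} it explicitly states that ``the resulting theory is similar to the former, we only list the significant differences, and all proofs of the conclusions are left to the reader.'' So there is no proof in the paper to compare your proposal against; the intended argument is simply the analogue of the completeness proofs for $BATC$ and for $APTC$ without left parallel (Theorems~\ref{CBATC} and~\ref{CAPTC}), which are themselves imported without proof from the cited works~\cite{ATC,CTC,PITC}.

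Your proposal follows precisely that standard template --- eliminate to basic terms, pass to normal forms, and show by induction on size that bisimilar normal forms are provably equal --- and is the right shape for what the paper has in mind. Two remarks on the sketch itself. First, your normal form still allows summands $n_i\leftmerge n_i'$ with arbitrary normal forms on both sides; to make the induction on the $\sim_{hhp}$ clause go through you will want a tighter canonical shape (for instance, iterated left parallels of event-prefixes, ordered by the extended $\leq$), since otherwise two structurally different $\leftmerge$-trees can have identical transition behaviour and the summand-matching step stalls. Second, in the matching step a $\leftmerge$-summand fires a \emph{set} $\{e_1,e_2\}$ rather than a single event, so the induction hypothesis must be invoked on the residual of a multi-event step, not a single prefix; you gesture at this for $\sim_s$ and $\sim_p$ but the $\sim_{hp}/\sim_{hhp}$ clauses need the isomorphism $f$ extended on all events of the step simultaneously, which is slightly more bookkeeping than ``$f[e_1\mapsto e_2]$'' suggests.
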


\subsubsection{Recursion}

\begin{definition}[Recursive specification]
A recursive specification is a finite set of recursive equations

$$X_1=t_1(X_1,\cdots,X_n)$$
$$\cdots$$
$$X_n=t_n(X_1,\cdots,X_n)$$

where the left-hand sides of $X_i$ are called recursion variables, and the right-hand sides $t_i(X_1,\cdots,X_n)$ are process terms in $APTC$ with possible occurrences of the recursion variables $X_1,\cdots,X_n$.
\end{definition}

\begin{definition}[Solution]
Processes $p_1,\cdots,p_n$ are a solution for a recursive specification $\{X_i=t_i(X_1,\cdots,X_n)|i\in\{1,\cdots,n\}\}$ (with respect to truly concurrent bisimulation equivalences $\sim_s$($\sim_p$, $\sim_{hp}$, $\sim_{hhp}$)) if $p_i\sim_s (\sim_p, \sim_{hp},\sim{hhp})t_i(p_1,\cdots,p_n)$ for $i\in\{1,\cdots,n\}$.
\end{definition}

\begin{definition}[Guarded recursive specification]
A recursive specification

$$X_1=t_1(X_1,\cdots,X_n)$$
$$...$$
$$X_n=t_n(X_1,\cdots,X_n)$$

is guarded if the right-hand sides of its recursive equations can be adapted to the form by applications of the axioms in $APTC$ and replacing recursion variables by the right-hand sides of their recursive equations,

$$(a_{11}\leftmerge\cdots\leftmerge a_{1i_1})\cdot s_1(X_1,\cdots,X_n)+\cdots+(a_{k1}\leftmerge\cdots\leftmerge a_{ki_k})\cdot s_k(X_1,\cdots,X_n)+(b_{11}\leftmerge\cdots\leftmerge b_{1j_1})+\cdots+(b_{1j_1}\leftmerge\cdots\leftmerge b_{lj_l})$$

where $a_{11},\cdots,a_{1i_1},a_{k1},\cdots,a_{ki_k},b_{11},\cdots,b_{1j_1},b_{1j_1},\cdots,b_{lj_l}\in \mathbb{E}$, and the sum above is allowed to be empty, in which case it represents the deadlock $\delta$.
\end{definition}

\begin{definition}[Linear recursive specification]
A recursive specification is linear if its recursive equations are of the form

$$(a_{11}\leftmerge\cdots\leftmerge a_{1i_1})X_1+\cdots+(a_{k1}\leftmerge\cdots\leftmerge a_{ki_k})X_k+(b_{11}\leftmerge\cdots\leftmerge b_{1j_1})+\cdots+(b_{1j_1}\leftmerge\cdots\leftmerge b_{lj_l})$$

where $a_{11},\cdots,a_{1i_1},a_{k1},\cdots,a_{ki_k},b_{11},\cdots,b_{1j_1},b_{1j_1},\cdots,b_{lj_l}\in \mathbb{E}$, and the sum above is allowed to be empty, in which case it represents the deadlock $\delta$.
\end{definition}

\begin{theorem}[Conservitivity of $APTC$ with guarded recursion]
$APTC$ with guarded recursion is a conservative extension of $APTC$.
\end{theorem}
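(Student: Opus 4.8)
The plan is to deduce conservativity from the standard meta-theorem for equational conservative extensions: if the base theory $APTC$ is sound and complete modulo the truly concurrent bisimulation equivalences, the extended theory is sound modulo the same equivalences, and the \emph{operational} semantics of $APTC$ with guarded recursion is an operational conservative extension of that of $APTC$, then the axiomatic extension is conservative. Concretely, I would show that for any two closed $APTC$ terms $p$ and $q$, a derivation $APTC\textrm{ with guarded recursion}\vdash p=q$ forces $p=q$ to be already derivable in $APTC$ alone. So the whole argument reduces to the operational step, since the soundness and completeness inputs are already in hand as Theorems \ref{SAPTCR}, \ref{SAPTC} and \ref{CAPTC}.

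First I would isolate the operational conservativity claim: for every closed $APTC$ term $p$ (a term over the original signature, i.e.\ one that contains no recursion constant $\langle X_i|E\rangle$), the transitions of $p$ provable in the combined transition system specification of Tables \ref{TRForAPTC} and \ref{TRForGR} coincide exactly with those provable using the $APTC$ rules alone. The key observation is that both transition rules in Table \ref{TRForGR} have a source headed by the fresh constant $\langle X_i|E\rangle$, which does not occur in any $APTC$ term; hence no instance of a recursion rule can be applied with a pure $APTC$ term as conclusion-source, and the recursion rules cannot contribute any new transitions to closed $APTC$ terms. To apply the conservativity meta-theorem rigorously, I would check that the combined specification is in the appropriate rule format (each rule pure and well-founded, every new rule carrying a fresh operator in its source), so that adding the new rules leaves the old transition relation untouched.

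With operational conservativity established, the assembly is short. Suppose $APTC\textrm{ with guarded recursion}\vdash p=q$ with $p,q$ closed $APTC$ terms. By soundness of $APTC$ with guarded recursion (Theorem \ref{SAPTCR}) we obtain $p\sim_\ast q$ for $\ast\in\{s,p,hp\}$ in the extended model. By operational conservativity the behaviour of $p$ and of $q$ in the extended transition system is identical to their behaviour in the pure $APTC$ transition system, so the same bisimulations witness $p\sim_\ast q$ already over $APTC$. Finally, completeness of $APTC$ (Theorem \ref{CAPTC}) yields $APTC\vdash p=q$, which is exactly conservativity. Soundness of $APTC$ (Theorem \ref{SAPTC}) is used implicitly to guarantee that the equivalence transported back is the right one.

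The main obstacle is the operational conservativity step, and specifically the treatment of the lookahead premises $t_i(\langle X_1|E\rangle,\cdots,\langle X_n|E\rangle)\xrightarrow{\{e_1,\cdots,e_k\}}\cdots$ in Table \ref{TRForGR}. These premises reach inside the unfolding of the recursion constant, so one must argue that the proof trees for transitions remain well-founded; this is where guardedness of the recursive specification is essential, since it prevents unproductive $\tau$-cycles and guarantees that each provable transition of $\langle X_i|E\rangle$ rests on a finite, well-founded derivation. Once well-foundedness and the freshness of $\langle X_i|E\rangle$ are in place, the fact that pure $APTC$ terms acquire no new transitions is immediate, and the remainder of the proof is the routine meta-theoretic bookkeeping sketched above.
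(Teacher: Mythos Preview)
The paper does not actually supply a proof for this theorem: at the start of Section~\ref{ahhpb} it states that ``all proofs of the conclusions are left to the reader,'' and the conservativity theorem is simply listed without argument. So there is nothing to compare against directly.

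Your approach is the standard and correct one in the ACP tradition that the paper follows (cf.\ the cited \cite{ACP}): establish \emph{operational} conservativity by observing that the only rules added in Table~\ref{TRForGR} have a fresh constant $\langle X_i|E\rangle$ in their source, hence cannot fire on any closed term of the base signature; then combine soundness of the extended system with completeness of the base system to obtain equational conservativity. This is exactly the route one would expect the omitted proof to take. One small bookkeeping point: the theorem you are proving sits in Section~\ref{ahhpb} (the left-parallel/hhp variant of $APTC$), so the soundness and completeness results you invoke should be the ones from that section rather than Theorems~\ref{SAPTCR}, \ref{SAPTC}, \ref{CAPTC} of Sections~2.2--2.3; the analogous theorems in Section~\ref{ahhpb} play the identical role, so this is only a labeling issue, not a mathematical one. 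Your remark that guardedness is what makes the derivations in Table~\ref{TRForGR} well-founded is on point and is the only nontrivial ingredient in the operational step.
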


\begin{theorem}[Congruence theorem of $APTC$ with guarded recursion]
Truly concurrent bisimulation equivalences $\sim_{p}$, $\sim_s$, $\sim_{hp}$, $\sim_{hhp}$ are all congruences with respect to $APTC$ with guarded recursion.
\end{theorem}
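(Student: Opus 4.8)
The plan is to split the theorem into congruence with respect to the static operators of $APTC$ and congruence with respect to the recursion construct $\langle X_i\mid E\rangle$, and to reduce the first part to results already in hand. The congruence theorems for $APTC$ with left parallel composition and for the encapsulation operator $\partial_H$ stated above already assert that $\sim_p$, $\sim_s$, $\sim_{hp}$ and $\sim_{hhp}$ are preserved by $+$, $\cdot$, $\leftmerge$, $\parallel$, $\mid$, $\Theta$, $\triangleleft$ and $\partial_H$. Since every body $t_i(X_1,\dots,X_n)$ of a recursive specification is built solely from these static operators and the recursion variables, the only genuinely new case is the recursion construct itself; everything else is an immediate consequence of compositionality of the static operators.

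For recursion I would phrase the congruence property as follows: if $E$ and $F$ are guarded recursive specifications over the same recursion variables whose corresponding bodies $t_i$ and $t_i'$ are related up to the chosen equivalence whenever the variables are instantiated by related processes, then $\langle X_i\mid E\rangle\sim\langle X_i\mid F\rangle$ for every $i$. The operational engine is Table~\ref{TRForGR}: a solution performs exactly the transitions of its unfolded body, i.e. $\langle X_i\mid E\rangle\xrightarrow{\{e_1,\dots,e_k\}}P$ holds precisely when $t_i(\langle X_1\mid E\rangle,\dots,\langle X_n\mid E\rangle)\xrightarrow{\{e_1,\dots,e_k\}}P$. First I would define, for the step and pomset cases, a candidate relation $R$ containing all pairs reachable from $(\langle X_i\mid E\rangle,\langle X_i\mid F\rangle)$ closed under the static contexts; for the $hp$ and $hhp$ cases I would carry along the order-isomorphism $f$ and take the corresponding posetal relation. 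The transfer conditions are then discharged by a single unfolding: a move of $\langle X_i\mid E\rangle$ is a move of the body, congruence of the static operators matches it by the corresponding move of the body of $F$ with related residuals, and Table~\ref{TRForGR} pushes the matched move back up to $\langle X_i\mid F\rangle$.

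Guardedness is what makes the construction well founded, and it is the first point requiring care. Because each body can be rewritten into the guarded form in which every occurrence of a recursion variable is preceded by at least one atomic action (or a left-parallel product of atomic actions), the first transition of a solution is always produced by a genuine action rather than by an unguarded variable, so I can induct on the guard depth to ensure the residuals again land in $R$ and to exclude spurious behaviour. I would treat $\sim_p$, $\sim_s$ and $\sim_{hp}$ uniformly, the sole difference being whether the labels are compared as pomsets, as steps of pairwise concurrent events, or together with an isomorphism on the accumulated history (Definitions~\ref{PSB} and~\ref{HHPB}).

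The step I expect to be the main obstacle is the $\sim_{hhp}$ case, because a hereditary history-preserving bisimulation must be a \emph{downward closed} posetal relation, and downward closure has to survive each unfolding. Unfolding a guarded specification enlarges configurations and their causal order, so I must verify that whenever a triple $(C_1,f,C_2)$ lies in the constructed relation, every pointwise-smaller triple obtained by restricting $f$ to a sub-configuration again lies in it and still corresponds to a legitimate pair of partial runs of the two solutions. I would handle this by taking the $hhp$-relation to be the downward closure of the natural $hp$-relation and checking that this closure remains a bisimulation, using the hereditary character of the conflict relation from Definition~\ref{PES} together with guardedness to show that restricting to a sub-configuration on one side matches restricting to the isomorphic sub-configuration on the other. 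This closure argument, rather than the routine transition matching, is where the real difficulty lies.
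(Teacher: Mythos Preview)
The paper gives no proof of this theorem. At the start of Section~\ref{ahhpb} it says explicitly that ``all proofs of the conclusions are left to the reader,'' and the congruence theorem for $APTC$ with guarded recursion is simply stated without argument. Your proposal is therefore considerably more detailed than anything the paper offers, and there is nothing to compare against line by line.

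As to the soundness of your outline: the decomposition into static operators (handled by the earlier congruence theorems for $APTC$ with left parallel composition and for $\partial_H$) plus the recursion construct is the natural one, and your use of Table~\ref{TRForGR} to reduce transitions of $\langle X_i\mid E\rangle$ to transitions of the unfolded body is exactly how such proofs go. Guardedness is indeed what guarantees that the derivation of any transition bottoms out after finitely many unfoldings, so your appeal to ``guard depth'' is the right idea, though in a fully written proof you would make this precise by induction on the depth of the transition derivation rather than on an informal notion of guard depth. Your identification of the $\sim_{hhp}$ case as the genuine obstacle is correct: downward closure does not come for free from the $hp$-bisimulation you build, and taking the downward closure and then re-verifying the transfer conditions is the standard manoeuvre. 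One point to be careful about is that your formulation of congruence for recursion---relating $\langle X_i\mid E\rangle$ and $\langle X_i\mid F\rangle$ when the bodies are related under related instantiations---is really an ``up-to'' or open-term statement; make sure you state precisely what it means for open terms $t_i$ and $t_i'$ to be related, since the paper's definitions of the equivalences are given only for closed configurations.
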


\begin{theorem}[Elimination theorem of $APTC$ with linear recursion]
Each process term in $APTC$ with linear recursion is equal to a process term $\langle X_1|E\rangle$ with $E$ a linear recursive specification.
\end{theorem}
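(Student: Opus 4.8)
The plan is to proceed by structural induction on the construction of a term $t$ of $APTC$ with linear recursion, establishing the slightly stronger closure statement that the class of processes provably equal to some $\langle X_1|E\rangle$ with $E$ a linear recursive specification both contains all generators and is closed under every operator of $APTC$. The master tool throughout is $RSP$: to show that a composite term $C$ equals $\langle Z_1|E\rangle$ for a freshly constructed linear specification $E$, I exhibit a tuple of processes (one per recursion variable of $E$, its first component being $C$) and verify, using $RDP$ for the specifications delivered by the induction hypothesis together with the $APTC$ axioms, that this tuple is a solution of $E$; since a linear specification is guarded, $RSP$ then forces the equality. Within each inductive step, the recursion-free rewriting is carried out with the Elimination theorem for $APTC$ proved above.

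The base cases and the distributive cases are routine. An atomic action $e$ equals $\langle X_1|\{X_1=e\}\rangle$ and $\delta$ equals $\langle X_1|\{X_1=\delta\}\rangle$ (the empty linear sum), both linear. For $+$, given $p=\langle X_1|E_p\rangle$ and $q=\langle Y_1|E_q\rangle$ with disjoint variable sets, I take $E=E_p\cup E_q$ together with a new initial equation $Z_1=s_1^p+s_1^q$, where $s_1^p,s_1^q$ are the already-linear right-hand sides of $X_1,Y_1$; a sum of linear right-hand sides is again linear. For $\cdot$, I keep $E_q$ and, after renaming the $E_p$-variables to fresh copies, replace in every $E_p$-equation each purely terminating summand $(b_{11}\leftmerge\cdots\leftmerge b_{1j})$ by $(b_{11}\leftmerge\cdots\leftmerge b_{1j})\cdot Y_1$; axioms $A4$, $A5$ and $RDP$ show $p\cdot q$ together with the components of $q$ solve the result. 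The operators $\Theta$, $\partial_H$ and $\triangleleft$ (against a constant) are handled identically: they distribute over $+$ and $\cdot$ (axioms $CE18$--$CE21$, $D4$--$D6$, $U29$--$U36$) and rewrite heads via $CE18$--$CE19$, $D1$--$D3$, $U24$--$U28$, so applying them to the right-hand sides of $E_p$ and renaming the variables again yields a linear specification.

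The substantial cases are the three merges. Using $P1$ and $P4$, the whole-parallel $\between$ and the parallel $\parallel$ reduce to the left merge $\leftmerge$ and the communication merge $\mid$, so it suffices to treat these two. Given $p=\langle X_1|E_p\rangle$ and $q=\langle Y_1|E_q\rangle$, I introduce one fresh variable $Z_{ij}$ for every pair $(X_i,Y_j)$, intended to denote the full merge $\langle X_i|E_p\rangle\between\langle Y_j|E_q\rangle$ (the full merge, not a one-sided one, is what reappears in the residuals through $P7$ and $C13$), together with renamed copies of the $E_p$- and $E_q$-variables to carry one-sided residuals and an initial variable $W$ whose equation is the expansion of $s_1^p\leftmerge s_1^q$ (resp.\ $s_1^p\mid s_1^q$). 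Since $E_p,E_q$ are linear, $s_i^p$ and $s_j^q$ are finite sums of summands of shape $(a\text{-product})\cdot X_{i'}$, $(a\text{-product})\cdot Y_{j'}$, or bare head products; distributing the merge over these finitely many summands with $P8$, $C14$, $C15$ and rewriting each head-pair with $P5$--$P7$ and $C10$--$C13$ produces in every case a summand $(\text{combined head})\cdot(\text{residual})$ whose residual is either a full merge of two recursion constants (hence one of the variables $Z_{i'j'}$), a single recursion constant (a one-sided copy), or empty. This is exactly the linear format, so the equations for $W$, all $Z_{ij}$ and the one-sided copies constitute a finite linear specification $E$, and $RSP$ gives $\langle W|E\rangle=p\leftmerge q$ (resp.\ $p\mid q$).

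I expect the main obstacle to be precisely this merge expansion: one must check that after applying the parallelism and communication axioms every summand carries a single recursion variable (one of the product variables) as its tail and that only finitely many distinct tails arise, so that the constructed $E$ is genuinely \emph{linear} and \emph{finite} rather than merely guarded. The bookkeeping of the product variables $Z_{ij}$ and their one-sided residuals, together with verifying that the exhibited tuple really solves $E$ so that $RSP$ is applicable, is where the care is needed; the remaining operators contribute only routine distributive rewriting.
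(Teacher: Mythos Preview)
The paper does not actually give a proof of this theorem: at the beginning of the subsection it states that, since the theory with left parallel composition is similar to the earlier one, ``all proofs of the conclusions are left to the reader'' (the result is listed without argument and implicitly referred back to \cite{ATC}). Your proposal is correct and is precisely the standard ACP-style argument one would expect here---structural induction on terms, using $RDP$ to unfold the linear specifications delivered by the induction hypothesis, the $APTC$ axioms to bring each composite into linear shape, and $RSP$ (applicable because linear specifications are guarded) to conclude; the product-variable construction $Z_{ij}$ for the merge operators is exactly the classical treatment. One small addition you may want to make explicit for completeness is the trivial base case of a recursion constant $\langle X_i|E\rangle$ itself (already in the required form since in ``$APTC$ with linear recursion'' only linear $E$ are admitted), but otherwise nothing is missing.
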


\begin{theorem}[Soundness of $APTC$ with guarded recursion]
Let $x$ and $y$ be $APTC$ with guarded recursion terms. If $APTC\textrm{ with guarded recursion}\vdash x=y$, then
\begin{enumerate}
  \item $x\sim_{s} y$;
  \item $x\sim_{p} y$;
  \item $x\sim_{hp} y$;
  \item $x\sim_{hhp} y$.
\end{enumerate}
\end{theorem}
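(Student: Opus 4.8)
The plan is to reduce the soundness of the whole calculus to the soundness of its individual axioms together with the two recursion principles, exploiting the congruence theorem for $APTC$ with guarded recursion already stated above. Since $\sim_s$, $\sim_p$, $\sim_{hp}$ and $\sim_{hhp}$ are congruences with respect to every operator of $APTC$ with left parallel composition and with respect to the construct $\langle X_i|E\rangle$, each of the four relations is closed under reflexivity, symmetry, transitivity and contextual substitution. Provability in $APTC$ with guarded recursion is generated from the axioms of Tables \ref{AxiomsForLeftParallelism} and \ref{AxiomsForEncapsulationLeft}, together with $RDP$ and $RSP$, by exactly these closure operations. Hence, by induction on the length of a derivation of $x=y$, it suffices to verify that (i) every axiom is sound for each of the four equivalences, and (ii) $RDP$ and $RSP$ are sound.

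For step (i) I would treat the axioms one block at a time, and for each axiom $s=t$ exhibit an explicit witnessing relation and check the transfer conditions against the transition rules of Tables \ref{TRForLeftParallel}, \ref{TRForAPTC} and the rules for $\partial_H$. In the step and pomset cases the witness is a plain relation on configurations; in the hp and hhp cases it is a posetal relation carrying the induced event isomorphism $f$, subject in the hhp case to the extra requirement of downward closure. The purely sequential and alternative axioms $A6$, $A7$, the distributivity, communication ($C10$--$C17$), conflict-elimination ($CE18$--$CE23$), unless ($U24$--$U36$) and encapsulation ($D1$--$D6$) axioms do not alter the causal structure of the events performed, so their witnesses transfer essentially verbatim from the earlier soundness argument (Theorem \ref{SAPTC}), the only additions being the $\leftmerge$-instances $CE22$, $U31$, $U35$ and $D6$, which are handled exactly as their $\parallel$-counterparts. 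The genuinely new obligations are the left-parallel axioms $P4$--$P9$.

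The crux, and the step I expect to be the main obstacle, is the $\sim_{hhp}$ case of $P4$--$P7$. Here the side condition $e_1\leq e_2$ in the transition rules for $\leftmerge$ (Table \ref{TRForLeftParallel}) is precisely what makes the argument go through: whereas the symmetric rule for $\parallel$ may fire a concurrent pair $\{e_1,e_2\}$ in either relative order, $\leftmerge$ commits to the order recorded by the extended relation $\leq$, so the posetal isomorphism $f$ built by the witness is forced to respect the causal and concurrent structure in a single canonical way. For each of $P4$--$P7$ I would check that the extension of $f$ across the fired pair is an order isomorphism, and then verify the one point special to $\sim_{hhp}$, namely that the witnessing posetal relation is \emph{downward closed}: shrinking a triple $(C_1,f,C_2)$ pointwise must keep it in the relation. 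Because $\leftmerge$ removes the nondeterministic choice of execution order, the executed-event poset stays rigid and downward closure is preserved, which is exactly why a finite sound axiomatization modulo $\sim_{hhp}$ becomes available (cf.\ the discussion following Moller's result \cite{ILM}).

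For step (ii), $RDP$ is sound directly from the two rules of Table \ref{TRForGR}: $\langle X_i|E\rangle$ and $t_i(\langle X_1|E\rangle,\cdots,\langle X_n|E\rangle)$ have literally the same outgoing transitions, so the identity-based posetal relation witnesses all four equivalences and the downward-closure check is trivial. For $RSP$ I would establish uniqueness of solutions: given two solutions $\vec p$ and $\vec q$ of a guarded specification $E$, I would build the relation consisting of the triples obtained by placing $\vec p$ and $\vec q$ into a common context, use guardedness to rewrite each right-hand side into the head-normal form $(a_{11}\leftmerge\cdots)\cdot s_1+\cdots$, and show by induction on transition depth that matching transitions exist with isomorphic labels and compatibly extending $f$. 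The delicate part is again the $\sim_{hhp}$ layer: I must check that the constructed relation is downward closed, and this is guaranteed precisely because the notion of guarded recursive specification in this subsection is formulated with $\leftmerge$ rather than the unordered $\parallel$, so the head-normal form keeps the poset of executed events rigid.
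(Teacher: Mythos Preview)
The paper does not actually give a proof of this theorem: at the start of Section~\ref{ahhpb} it states that the left-parallel version of the theory is ``similar to the former'' and that ``all proofs of the conclusions are left to the reader.'' So there is no explicit paper proof to compare against; the implicit intent is that the argument should follow the pattern of the earlier soundness results (Theorems~\ref{SAPTC} and~\ref{SAPTCR}), adapted to the new $\leftmerge$ axioms and extended to $\sim_{hhp}$.

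Your proposal does exactly this. The reduction via the congruence theorem to checking each axiom and the two recursion principles is the standard move, and it is the one the paper uses (without detail) for the non-$\leftmerge$ fragment. You correctly isolate the only genuinely new content, namely the $\sim_{hhp}$ layer for $P4$--$P9$, and your explanation of why the side condition $e_1\leq e_2$ on $\leftmerge$ is what rescues downward closure is the right intuition and matches the paper's stated motivation for introducing $\leftmerge$ in the first place. Your treatment of $RDP$ (identical transitions by Table~\ref{TRForGR}) and of $RSP$ (uniqueness of solutions via guarded head-normal forms) is also the expected one. In short, your outline is correct and is precisely the argument the paper is gesturing at when it leaves the proof to the reader.
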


\begin{theorem}[Completeness of $APTC$ with linear recursion]
Let $p$ and $q$ be closed $APTC$ with linear recursion terms, then,
\begin{enumerate}
  \item if $p\sim_{s} q$ then $p=q$;
  \item if $p\sim_{p} q$ then $p=q$;
  \item if $p\sim_{hp} q$ then $p=q$;
  \item if $p\sim_{hhp} q$ then $p=q$.
\end{enumerate}
\end{theorem}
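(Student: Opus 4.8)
The plan is to follow the standard pattern for completeness of a recursive process theory: reduce to linear recursive specifications by elimination, and then exploit the uniqueness of solutions guaranteed by $RSP$. This is exactly the route used for the analogous result in the main theory (cf.\ Theorem~\ref{CAPTCR}), now carried out in the presence of the left parallel composition $\leftmerge$ and for the extra equivalence $\sim_{hhp}$.

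First I would invoke the Elimination theorem of $APTC$ with linear recursion to replace the given closed terms: there are linear recursive specifications $E$ and $F$ with $APTC\vdash p=\langle X_1|E\rangle$ and $APTC\vdash q=\langle Y_1|F\rangle$. By the preceding soundness theorem these provable equalities are also equivalences in each of $\sim_s,\sim_p,\sim_{hp},\sim_{hhp}$, so the hypothesis transfers and it suffices to prove $\langle X_1|E\rangle=\langle Y_1|F\rangle$. Using $RDP$, each reachable state of $E$ (resp.\ $F$) is provably equal to some $\langle X_i|E\rangle$ (resp.\ $\langle Y_j|F\rangle$), whose defining equation is a finite sum of guarded summands $(a_{1}\leftmerge\cdots\leftmerge a_{m})\langle X_{i'}|E\rangle$ and $(b_{1}\leftmerge\cdots\leftmerge b_{n})$, possibly $\delta$; since $E,F$ are finite, only finitely many such states occur.

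The core step is to turn the witnessing bisimulation into a recursive specification. Let $R$ be the given bisimulation relating the reachable states of $E$ and $F$. For each related pair I introduce a fresh variable $Z_{ij}$, and I read off its equation by matching summands across $R$: a transition $\langle X_i|E\rangle\xrightarrow{\{a_1,\ldots,a_m\}}\langle X_{i'}|E\rangle$ must be answered by a transition of $\langle Y_j|F\rangle$ on a step with the same labels (a matching multiset for $\sim_s$, an isomorphic pomset for $\sim_p$, and additionally respecting the order isomorphism for $\sim_{hp}$ together with its downward closure for $\sim_{hhp}$), leading to a related pair $(\langle X_{i'}|E\rangle,\langle Y_{j'}|F\rangle)$; termination summands are matched likewise. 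The decisive observation is that in linear form every guard is a single parallel step $a_1\leftmerge\cdots\leftmerge a_m$, so on these prefixes all four equivalences impose the very same matching, and the four cases are handled uniformly. Collecting these matched summands gives a linear---hence guarded---recursive specification $G=\{Z_{ij}=\cdots\}$.

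Finally I would verify, using $RDP$ together with the axioms $A1$--$A7$, the parallelism laws $P1$--$P9$ and the encapsulation laws, that the families $\{\langle X_i|E\rangle\}$ and $\{\langle Y_j|F\rangle\}$ (indexed over the pairs of $R$) are both solutions of $G$. Since $G$ is guarded, $RSP$ forces these two solution families to coincide; in particular $\langle X_1|E\rangle=\langle Y_1|F\rangle$, and therefore $p=q$. The main obstacle I anticipate is the bookkeeping in the construction of $G$: one must check that the summand-matching induced by each bisimulation is consistent (independent of which direction of the bisimulation clause is used) so that $G$ is well defined and genuinely linear, and that both systems actually satisfy every equation of $G$ up to provable equality; the $\sim_{hhp}$ case additionally requires that the chosen matching respects downward closure, though on single steps this adds no real difficulty.
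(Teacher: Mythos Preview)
The paper does not actually supply a proof of this theorem: at the start of the section on axiomatization for hhp-bisimilarity it declares that ``all proofs of the conclusions are left to the reader,'' and the analogous completeness result without $\leftmerge$ (Theorem~\ref{CAPTCR}) is likewise only stated, with the argument delegated to the companion work~\cite{ATC}. Your plan---elimination to linear recursive specifications, then building a common guarded specification $G$ from the bisimulation and invoking $RSP$---is precisely the standard route one is expected to fill in, and it is the same pattern as for Theorem~\ref{CAPTCR}, only now with $\leftmerge$-prefixes in place of $\parallel$-prefixes and the additional equivalence $\sim_{hhp}$. Your observation that on single-step prefixes $a_1\leftmerge\cdots\leftmerge a_m$ all four equivalences coincide is the key point that makes the $\sim_{hhp}$ case go through without extra work.
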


\subsubsection{Abstraction}

\begin{definition}[Guarded linear recursive specification]
A recursive specification is linear if its recursive equations are of the form

$$(a_{11}\leftmerge\cdots\leftmerge a_{1i_1})X_1+\cdots+(a_{k1}\leftmerge\cdots\leftmerge a_{ki_k})X_k+(b_{11}\leftmerge\cdots\leftmerge b_{1j_1})+\cdots+(b_{1j_1}\leftmerge\cdots\leftmerge b_{lj_l})$$

where $a_{11},\cdots,a_{1i_1},a_{k1},\cdots,a_{ki_k},b_{11},\cdots,b_{1j_1},b_{1j_1},\cdots,b_{lj_l}\in \mathbb{E}\cup\{\tau\}$, and the sum above is allowed to be empty, in which case it represents the deadlock $\delta$.

A linear recursive specification $E$ is guarded if there does not exist an infinite sequence of $\tau$-transitions $\langle X|E\rangle\xrightarrow{\tau}\langle X'|E\rangle\xrightarrow{\tau}\langle X''|E\rangle\xrightarrow{\tau}\cdots$.
\end{definition}

The transition rules of $\tau$ are the same, and axioms of $\tau$ are as Table \ref{AxiomsForTauLeft} shows.

\begin{theorem}[Conservitivity of $APTC$ with silent step and guarded linear recursion]
$APTC$ with silent step and guarded linear recursion is a conservative extension of $APTC$ with linear recursion.
\end{theorem}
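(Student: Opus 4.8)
The plan is to derive axiomatic conservativity from operational conservativity of the underlying transition system specifications, combined with the completeness of the base theory and the soundness of the extension, using the observation that on $\tau$-free processes every rooted branching truly concurrent equivalence collapses to the corresponding strong one. Write $T_0$ for $APTC$ with linear recursion and $T_1$ for $APTC$ with silent step and guarded linear recursion; I must show that for closed $T_0$-terms $p,q$, if $T_1\vdash p=q$ then $T_0\vdash p=q$.

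First I would establish that the term deduction system of $T_1$ is an operationally conservative extension of that of $T_0$. Both specifications use termination predicates ($\surd$, $\downarrow$) and negative premises (the $y\nrightarrow^{e}$ premises in the rules for $\triangleleft$), so the relevant format is the panth format; one checks routinely that every rule fits it and that the system is source-dependent. The signature of $T_1$ extends that of $T_0$ only by the fresh constant $\tau$, and the sole transition rule of $T_1$ absent from $T_0$ is $\tau\xrightarrow{\tau}\surd$, whose source is precisely this fresh constant. The rules for guarded linear recursion have the same shape as those for linear recursion and merely let the atoms range over $\mathbb{E}\cup\{\tau\}$. By the standard meta-theorem on conservative extensions in structural operational semantics, a fresh function symbol in the source of each genuinely new rule guarantees that closed $\tau$-free terms gain no new transitions; hence every closed $T_0$-term exhibits exactly the same labelled behaviour in both systems.

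With operational conservativity established, suppose $p,q$ are closed $T_0$-terms with $T_1\vdash p=q$. By the left parallel analogue of the soundness theorem for $APTC_{\tau}$ with guarded linear recursion (Theorem \ref{SAPTCABS}), which in the presence of $\leftmerge$ also covers hereditary history preservation, we obtain $p\approx_{rbs}q$, $p\approx_{rbp}q$, $p\approx_{rbhp}q$ and $p\approx_{rbhhp}q$. Since $p$ and $q$ contain no $\tau$ and, by operational conservativity, admit no $\tau$-transitions, the silent matching clauses in Definitions \ref{BPSB}, \ref{BHHPB}, \ref{RBPSB} and \ref{RBHHPB} are vacuous, so $\xRightarrow{X}$ coincides with $\xrightarrow{X}$ and each rooted branching equivalence restricts to the corresponding strong equivalence on these terms; thus $p\sim_{s}q$, $p\sim_{p}q$, $p\sim_{hp}q$ and $p\sim_{hhp}q$. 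Finally, the left parallel completeness of $APTC$ with linear recursion (the analogue of Theorem \ref{CAPTCR}, which with $\leftmerge$ also handles $\sim_{hhp}$) yields $T_0\vdash p=q$, as desired.

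I expect the operational-conservativity step to be the main obstacle rather than the concluding equational argument. One must confirm that the recursion rules, which refer to the solution constants $\langle X_i|E\rangle$ and to the guardedness condition, still satisfy the panth and source-dependency hypotheses once $\tau$ is admitted into the guards, and that admitting $\tau$ into a recursive specification creates no spurious transitions for a specification whose atoms all lie in $\mathbb{E}$. Fitting the negative premises of the unless operator $\triangleleft$ into the format check, and making fully precise the claim that $\approx_{rb\ast}$ degenerates to $\sim_{\ast}$ on $\tau$-free terms, are the points demanding the most care.
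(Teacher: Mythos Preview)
The paper does not supply a proof of this theorem: at the start of Section~\ref{ahhpb} it says the resulting theory ``is similar to the former'' and that ``all proofs of the conclusions are left to the reader.'' In the Fokkink/ACP tradition the paper is following, a ``conservative extension'' theorem of this shape means \emph{operational} conservativity of the underlying transition system specification, established by the SOS meta-theorem: the new rules (here only $\tau\xrightarrow{\tau}\surd$, plus letting the atoms in guarded linear recursion range over $\mathbb{E}\cup\{\tau\}$) all have a fresh symbol in the source, the old rules are source-dependent and in an appropriate format, hence closed $\tau$-free terms acquire no new transitions. Your second paragraph is exactly this argument, and it is correct; that alone is the intended proof.

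Your first and third paragraphs go further and derive \emph{equational} conservativity (if $T_1\vdash p=q$ then $T_0\vdash p=q$) by chaining soundness of $T_1$, the collapse of rooted branching equivalences to strong ones on $\tau$-free processes, and completeness of $T_0$. That argument is also sound, and it is the standard way to lift operational conservativity to the axiomatic level, but it is more than the theorem as stated in this paper asks for. So your proposal is correct; it simply proves a stronger statement than the paper's (implicit) one-paragraph operational argument. The worries you flag about panth format and the recursion rules are legitimate bookkeeping points, but they are the same ones that arise already for the earlier conservativity theorems in the paper and are handled identically.
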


\begin{theorem}[Congruence theorem of $APTC$ with silent step and guarded linear recursion]
Rooted branching truly concurrent bisimulation equivalences $\approx_{rbp}$, $\approx_{rbs}$, $\approx_{rbhp}$, and $\approx_{rbhhp}$ are all congruences with respect to $APTC$ with silent step and guarded linear recursion.
\end{theorem}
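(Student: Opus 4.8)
The plan is to establish the congruence one operator at a time, showing that each of $+$, $\cdot$, $\parallel$, $\leftmerge$, $\mid$, $\between$, $\Theta$, $\triangleleft$, $\partial_H$, $\tau_I$, together with the recursion construct $\langle X|E\rangle$, preserves each of the four rooted branching equivalences. The uniform engine behind every case is the observation that the transition rules collected in Tables \ref{TRForBATC}, \ref{TRForAPTC}, \ref{TRForLeftParallel}, \ref{TRForTau} and \ref{TRForGR} fit a rule format guaranteeing congruence for rooted branching bisimulation---the truly concurrent analogue of Bloom's RBB cool format, in which the silent step $\tau$ may be introduced only in a root-safe position. Rather than invoke the format as a black box, I would give the direct witnessing construction, since it also covers the posetal refinements $\approx_{rbhp}$ and $\approx_{rbhhp}$ that the classical format does not address.

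For a binary operator $f$, suppose $x_1\approx_{rb\ast}y_1$ and $x_2\approx_{rb\ast}y_2$ (for $\ast\in\{p,s\}$), witnessed by rooted branching bisimulations $R_1,R_2$. I would define the lifted relation $R=\{(f(C_1,D_1),f(C_2,D_2)) : (C_1,C_2)\in R_1,\ (D_1,D_2)\in R_2\}$ together with the diagonal needed to close it under the residual $\approx_{b\ast}$ reached after the root step, and then verify the four transfer clauses of Definition \ref{RBPSB} by case analysis on the operational rules for $f$. Rootedness is the easy half: the top-level rules for every operator fire only on the first moves of the arguments, and those are matched strongly because $R_1,R_2$ are rooted; hence the first step of $f(x_1,x_2)$ is answered by an identically labelled first step of $f(y_1,y_2)$. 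After the root step one descends into ordinary branching bisimulation, where the clauses are discharged by transporting the $\tau^*$-sequences and intermediate states $C_2^0$ through the contexts that the rules generate, using that branching bisimilarity---though not a congruence in general---is preserved in exactly these residual positions. For $\approx_{rbhp}$ and $\approx_{rbhhp}$ I would additionally carry the order-isomorphism $f:C_1\to C_2$ along the construction, extend it by $f[e_1\mapsto e_2]$ at each matched event exactly as in Definition \ref{BHHPB}, and for the hereditary case check that downward closure of the posetal relation is preserved---this is precisely where the left parallel composition $\leftmerge$ earns its place, since it is the ingredient that made $\sim_{hhp}$, and hence $\approx_{rbhhp}$, a congruence for parallelism.

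The recursion construct is handled separately. Using the rules of Table \ref{TRForGR}, I would show that if $E$ and $F$ are guarded linear recursive specifications whose bodies are related argument-wise, then $\langle X_i|E\rangle\approx_{rb\ast}\langle X_i|F\rangle$, by taking the relation generated by pairing solutions of corresponding variables and pushing the transfer conditions down to the bodies $t_i$, which are finite contexts already covered by the operator cases. Guardedness enters crucially here: the clause forbidding an infinite $\tau$-sequence $\langle X|E\rangle\xrightarrow{\tau}\langle X'|E\rangle\xrightarrow{\tau}\cdots$ guarantees that the stuttering steps demanded by the branching clauses terminate, so the matching $\tau^*$-paths in Definition \ref{RBPSB} are always finite and the relation is well defined.

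The main obstacle I expect is the abstraction operator $\tau_I$ in combination with branching and recursion. Because $\tau_I$ turns visible actions into $\tau$, it is exactly the operator that can manufacture the silent stuttering that rooted branching bisimulation must tolerate, so verifying its transfer clauses requires showing that a renamed $\tau$ never destroys rootedness at the top and always fits the intermediate-state matching ($C_2^0\xrightarrow{\tau^*}$ followed by $\xRightarrow{X}$) one level down; one must also confirm, via the no-$\tau$-loop guardedness condition, that $\tau_I$ applied to a guarded linear recursive process cannot introduce a divergence that breaks the branching conditions. The posetal $\approx_{rbhhp}$ case compounds this, since downward closure must be maintained simultaneously with the $\tau$-renaming, and checking it for the parallel operators $\parallel$, $\leftmerge$ and $\mid$ is the most technical part of the whole argument.
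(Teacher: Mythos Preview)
The paper gives no proof of this theorem: at the start of Section~\ref{ahhpb} it declares that ``all proofs of the conclusions are left to the reader,'' and this congruence theorem is one of those conclusions. So there is nothing to compare your proposal against in the paper itself.

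Your plan is the standard and correct one: verify preservation operator by operator via witnessing relations built from the operational rules, then handle recursion using guardedness to ensure the $\tau^*$ matchings terminate. One small overreach: the theorem as stated concerns $APTC$ with \emph{silent step} and guarded linear recursion, i.e.\ the constant $\tau$ has been added but the abstraction operator $\tau_I$ has not---that comes only in the subsequent ``$APTC_\tau$ with guarded linear recursion'' block. So your discussion of $\tau_I$ as the main obstacle, and your inclusion of it in the operator list, belongs to the \emph{next} congruence theorem rather than this one. For the present statement the operators to check are $+$, $\cdot$, $\between$, $\parallel$, $\leftmerge$, $\mid$, $\Theta$, $\triangleleft$, $\partial_H$, and the recursion construct, together with the new constant $\tau$; the genuinely delicate parts are (i) that $\tau$ as a first action is matched strongly (rootedness), and (ii) that the $\leftmerge$ rules, which carry the side condition $e_1\le e_2$, interact correctly with the posetal isomorphism $f$ in the $\approx_{rbhp}$ and $\approx_{rbhhp}$ cases. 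Your identification of the downward-closure check for $\leftmerge$ as the technical heart is on target.
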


\begin{center}
\begin{table}
  \begin{tabular}{@{}ll@{}}
\hline No. &Axiom\\
  $B1$ & $e\cdot\tau=e$\\
  $B2$ & $e\cdot(\tau\cdot(x+y)+x)=e\cdot(x+y)$\\
  $B3$ & $x\leftmerge\tau=x$\\
\end{tabular}
\caption{Axioms of silent step}
\label{AxiomsForTauLeft}
\end{table}
\end{center}

\begin{theorem}[Elimination theorem of $APTC$ with silent step and guarded linear recursion]
Each process term in $APTC$ with silent step and guarded linear recursion is equal to a process term $\langle X_1|E\rangle$ with $E$ a guarded linear recursive specification.
\end{theorem}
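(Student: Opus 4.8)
The plan is to mirror the proof of the Elimination theorem of $APTC$ with linear recursion, treating $\tau$ as simply one more constant that is now permitted among the coefficients $a_{ij},b_{ij}\in\mathbb{E}\cup\{\tau\}$ of a guarded linear recursive specification, while paying special attention to the guardedness clause forbidding infinite $\tau$-transition sequences. First I would dispose of the recursion-free part: by the Elimination theorem of $APTC$ (the version with left parallel composition $\leftmerge$ and encapsulation $\partial_H$), every closed $APTC$ term built without recursion is provably equal to a basic $APTC$ term, and a basic term is trivially of the shape $\langle X_1|E\rangle$ for a one-variable guarded linear specification. The substance is therefore the inductive step: the class of terms provably equal to some $\langle X_1|E\rangle$ with $E$ guarded linear is closed under every operator of $APTC$ together with the silent step.

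Concretely, I would argue by induction on the structure of the term $t$. By the induction hypothesis each immediate subterm equals a constant $\langle X_1|E\rangle$ (resp. $\langle Y_1|F\rangle$) for a guarded linear $E$ (resp. $F$), and after renaming we may assume their variable sets are disjoint. For each operator I construct a single combined specification $G$ on the union of those variable sets together with finitely many fresh variables, whose defining equations are obtained by expanding the operator on the right-hand sides of $E$ and $F$ and then normalising with the axioms. For $+$ this is immediate; for $\cdot$ one replaces each terminating summand $b_{i1}\leftmerge\cdots\leftmerge b_{ij}$ of the left factor by a call to the start variable of the right factor; for $\leftmerge$, $\parallel$, $\mid$ and $\between$ one uses the expansion laws $P1$--$P9$ and $C10$--$C17$ to drive the operator inward so that each right-hand side is again a sum of terms of the prescribed linear shape; the operators $\Theta$, $\triangleleft$ and $\partial_H$ are handled by the distributive axioms $CE18$--$CE23$, $U24$--$U36$ and $D1$--$D6$. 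In every case the equations are returned to linear form using $A1$--$A7$, and $\langle X_1|G\rangle=t$ follows from $RDP$ and $RSP$.

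The part requiring the most care, and the step I expect to be the main obstacle, is verifying that the combined specification $G$ is guarded in the strong sense of the definition, namely that it admits no infinite sequence $\langle X|G\rangle\xrightarrow{\tau}\langle X'|G\rangle\xrightarrow{\tau}\cdots$. Silent steps can enter $G$ from three sources: $\tau$'s already present in the guards of $E$ or $F$, a fresh $\tau$ produced whenever $\gamma(e_1,e_2)=\tau$ in an application of a communication axiom $C10$--$C17$, and a $\tau$ produced by the unless axioms $U24$ and $U26$. To exclude infinite $\tau$-loops I would equip the variables of $G$ with a well-founded measure recording how much parallel and conflict structure remains to be expanded, and observe that every $\tau$-transition of $G$ either projects onto a $\tau$-transition of $E$ or of $F$, or else strictly decreases this measure, since a communication or an unless step consumes finitely many components. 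An infinite $\tau$-loop in $G$ would then, after discarding the finitely many measure-decreasing steps, induce an infinite $\tau$-loop in $E$ or in $F$, contradicting their guardedness. Finally, wherever normalisation leaves a summand guarded only by a single $\tau$ that would otherwise close a cycle, I would absorb that $\tau$ into the surrounding prefix using the silent-step axioms $B1$--$B3$, so that $G$ is genuinely a guarded linear recursive specification with $\langle X_1|G\rangle=t$, as required.
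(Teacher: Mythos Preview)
The paper does not actually give a proof of this theorem. It appears in the preliminaries chapter on truly concurrent process algebra, and at the start of the subsection on axiomatization for hhp-bisimilarity the author explicitly writes that ``all proofs of the conclusions are left to the reader,'' since the material is imported from the cited works \cite{ATC,CTC,PITC}. The theorem is stated and then immediately followed by the next result, with no proof environment in between.

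Your proposal follows the standard route for such elimination theorems in ACP-style process algebras: structural induction on the term, reduction of each operator case to a combined linear specification via the expansion and distributive axioms, and an appeal to $RDP$/$RSP$. That is exactly the shape of argument one would expect the cited sources to contain, so in spirit you are aligned with what the paper defers to. One small correction: you list ``a fresh $\tau$ produced whenever $\gamma(e_1,e_2)=\tau$'' as a source of silent steps, but in this setting the communication function is taken with codomain $\mathbb{E}\cup\{\delta\}$ and any communication involving $\tau$ yields $\delta$, so $\gamma$ never manufactures a $\tau$. This actually simplifies your guardedness argument, since the only $\tau$-prefixes in the combined specification $G$ are inherited from $E$ or $F$ (or arise from the unless axioms, which as you note strictly consume structure), and hence an infinite $\tau$-chain in $G$ projects directly to one in $E$ or $F$.
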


\begin{theorem}[Soundness of $APTC$ with silent step and guarded linear recursion]
Let $x$ and $y$ be $APTC$ with silent step and guarded linear recursion terms. If $APTC$ with silent step and guarded linear recursion $\vdash x=y$, then
\begin{enumerate}
  \item $x\approx_{rbs} y$;
  \item $x\approx_{rbp} y$;
  \item $x\approx_{rbhp} y$;
  \item $x\approx_{rbhhp} y$.
\end{enumerate}
\end{theorem}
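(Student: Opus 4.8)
The plan is to reduce the statement to a per-axiom check and then treat the recursion principles separately. By the Congruence theorem of $APTC$ with silent step and guarded linear recursion, each of $\approx_{rbs}$, $\approx_{rbp}$, $\approx_{rbhp}$ and $\approx_{rbhhp}$ is a congruence for every operator in the signature. Since derivability $\vdash$ is the least congruence containing all axiom instances, it suffices to verify that both sides of each axiom are related by the relevant rooted branching equivalence and that $RDP$ and $RSP$ preserve it.

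First I would dispatch the axioms that do not mention $\tau$ (namely $A6$, $A7$, and all the $P$, $C$, $CE$, $U$ and $D$ laws). These were already shown sound modulo the strong equivalences $\sim_s$, $\sim_p$, $\sim_{hp}$ and $\sim_{hhp}$ for $APTC$. Because every strong bisimulation is in particular a rooted branching bisimulation --- there are no $\tau$-steps to hide, so the ``either/or'' clauses of Definition \ref{RBPSB} collapse to the strong matching and rootedness is immediate, while the termination clauses on $\downarrow$ are preserved --- soundness of these axioms transfers verbatim to the four rooted branching equivalences.

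Next I would handle the genuinely new silent-step axioms $B1$, $B2$ and $B3$ of Table \ref{AxiomsForTauLeft} by exhibiting, for each, an explicit witnessing relation and checking the clauses of Definition \ref{RBPSB} (and, for the hp-variants, Definition \ref{RBHHPB}) together with rootedness on the initial moves. For $B1$ ($e\cdot\tau=e$) and $B3$ ($x\leftmerge\tau=x$) the relations are the obvious graphs matching the $e$-step (resp.\ the step of $x$) followed by the absorbed $\tau$ against the bare step, with the termination predicate handled through the final $\downarrow$. The characteristic branching law $B2$ ($e\cdot(\tau\cdot(x+y)+x)=e\cdot(x+y)$) is the delicate one: after the leading $e$ one must show that $\tau\cdot(x+y)+x$ is branching bisimilar to $x+y$, so that the silent $\tau$ entering $x+y$ is matched by remaining in $x+y$ while keeping all intermediate configurations $R$-related. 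I would build the relation to contain the pair $(\tau\cdot(x+y)+x,\ x+y)$ together with all pairs inherited from the subterms, and verify the branching clause in both directions.

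Finally I would treat guarded linear recursion. Soundness of $RDP$ is direct from the transition rules of Table \ref{TRForGR}: the defining transitions of $\langle X_i|E\rangle$ coincide with those of $t_i(\langle X_1|E\rangle,\cdots,\langle X_n|E\rangle)$, so the identity-induced relation is a rooted branching bisimulation. For $RSP$ I would show that any two solutions of a guarded linear specification are related by each rooted branching equivalence: guardedness, i.e.\ the absence of an infinite $\tau$-transition sequence, rules out silent divergence and forces the two solution vectors to be matched step-for-step, yielding a bisimulation by induction on the guarded structure. The hard part will be combining the branching law $B2$ with the $\approx_{rbhhp}$ case, where the witnessing posetal relations must additionally be downward closed; maintaining downward closure while absorbing $\tau$-steps, and propagating it through the $RSP$ argument, is where the bookkeeping is most delicate, and the guardedness hypothesis is precisely what keeps that argument terminating.
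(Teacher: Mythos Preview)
The paper does not actually give a proof of this theorem: at the start of Section~\ref{ahhpb} it states that the resulting theory is similar to the earlier one and that ``all proofs of the conclusions are left to the reader.'' So there is no paper proof to compare against; the theorem is listed without argument, by analogy with the corresponding result for $APTC_\tau$ (Theorem~\ref{SAPTCABS}), which is likewise stated without proof.

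Your proposal is the standard soundness argument one would expect here and is the natural way to fill the gap the paper leaves: reduce to per-axiom checks via the congruence theorem, inherit the non-$\tau$ axioms from the strong case, verify $B1$--$B3$ directly against the rooted branching definitions, and treat $RDP$/$RSP$ using the transition rules for guarded recursion together with guardedness to rule out $\tau$-divergence. One small remark: you cite Table~\ref{AxiomsForTauLeft} for $B3$ as $x\leftmerge\tau=x$, which is indeed the form used in this section (with left parallel composition), so be sure your transition-level check of $B3$ uses the $\leftmerge$ rules of Table~\ref{TRForLeftParallel} rather than the $\parallel$ rules. Otherwise the plan is sound and matches what the paper implicitly intends.
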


\begin{theorem}[Completeness of $APTC$ with silent step and guarded linear recursion]
Let $p$ and $q$ be closed $APTC$ with silent step and guarded linear recursion terms, then,
\begin{enumerate}
  \item if $p\approx_{rbs} q$ then $p=q$;
  \item if $p\approx_{rbp} q$ then $p=q$;
  \item if $p\approx_{rbhp} q$ then $p=q$;
  \item if $p\approx_{rbhhp} q$ then $p=q$.
\end{enumerate}
\end{theorem}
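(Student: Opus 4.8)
The plan is to follow the same route as the completeness proofs for the $\parallel$-based theory (Theorems \ref{CAPTCR} and \ref{CCFAR}), exploiting that the left-merge $\leftmerge$ yields a finite axiomatization so that every closed term has a basic normal form. First I would observe that the four claims are not independent: since $\approx_{rbhhp}\,\subseteq\,\approx_{rbhp}\,\subseteq\,\approx_{rbp}\,\subseteq\,\approx_{rbs}$ (a step transition being a special pomset transition, an hp-bisimulation refining a pomset one, and downward closure refining hp), a proof of the $\approx_{rbs}$ case (item 1) immediately yields items 2--4: from the stronger hypothesis $p\approx_{rbp}q$ one gets $p\approx_{rbs}q$, hence $p=q$, and similarly for $\approx_{rbhp}$ and $\approx_{rbhhp}$. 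So it suffices to prove completeness modulo $\approx_{rbs}$, keeping in mind that the hhp case (item 4) is exactly what motivated introducing $\leftmerge$, following Moller's impossibility of a finite axiomatization of full $\parallel$.

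The first reduction is to pass from arbitrary closed terms to linear recursive specifications. By the Elimination theorem for $APTC$ with silent step and guarded linear recursion, $p$ and $q$ are provably equal to $\langle X_1|E_1\rangle$ and $\langle Y_1|E_2\rangle$ for guarded linear recursive specifications $E_1$ and $E_2$; by the corresponding soundness theorem these remain rooted branching step bisimilar. It therefore suffices to show: whenever $\langle X_1|E_1\rangle\approx_{rbs}\langle Y_1|E_2\rangle$ with $E_1,E_2$ guarded linear, the two are provably equal. The core technique is to manufacture from the bisimulation a single guarded linear recursive specification $E$ of which both processes are provable solutions, and then invoke $RSP$ to force $\langle X_1|E_1\rangle=\langle Z|E\rangle=\langle Y_1|E_2\rangle$.

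Concretely, I would take as recursion variables of $E$ the pairs $(\langle X_i|E_1\rangle,\langle Y_j|E_2\rangle)$ of reachable states related by a fixed rooted branching step bisimulation $R$, and read off the right-hand sides from the linear equations of $E_1$ and $E_2$, grouping summands $(a_{k1}\leftmerge\cdots\leftmerge a_{ki_k})$ according to the step they perform while keeping the $\leftmerge$-structure intact (this is where the finiteness of the axiomatization, and for item 4 the side condition $e_1\leq e_2$ in the transition rules of $\leftmerge$ that records the concurrency order, are used). The delicate point is the treatment of the silent step: a $\tau$-transition on one side may be matched by zero $\tau$-transitions on the other, so the naive pairing need not yield a guarded specification. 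Here I would first bring $E_1$ and $E_2$ into a form where $\tau$-transitions are organized into clusters, apply $CFAR$ to collapse the resulting $\tau$-loops into a constructible normal form, and use the axioms $B1$--$B3$ together with the rootedness conditions (Definition \ref{RBPSB}) to guarantee that the initial transition is never silently absorbed, so that the combined specification is genuinely guarded and linear.

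The main obstacle, then, is precisely this interaction between abstraction and recursion: showing that the pair-indexed specification $E$ is guarded once the silent transitions have been accounted for, and that both $\langle X_1|E_1\rangle$ and $\langle Y_1|E_2\rangle$ satisfy it modulo $\approx_{rbs}$ so that $RSP$ applies. The concurrency layer contributes comparatively little difficulty once the terms are in $\leftmerge$-normal form, since the congruence and soundness results already reconcile $\leftmerge$ with all four equivalences; the real work is the $\tau$-bookkeeping and the correct deployment of $CFAR$, exactly as in the proof of Theorem \ref{CCFAR}, with only the replacement of $\parallel$-normal forms by $\leftmerge$-normal forms as the \emph{significant difference} flagged in the text.
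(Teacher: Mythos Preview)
Your overall architecture---reduce via the elimination theorem to guarded linear recursive specifications, build a product specification indexed by bisimilar pairs, and conclude via $RSP$---is the standard route and matches what the paper intends (the paper itself gives no proof here, explicitly stating that ``all proofs of the conclusions are left to the reader'' since the theory with $\leftmerge$ parallels the $\parallel$-based one). Your reduction of items 2--4 to item 1 via the inclusions $\approx_{rbhhp}\subseteq\approx_{rbhp}\subseteq\approx_{rbp}\subseteq\approx_{rbs}$ is also sound.

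There is, however, a genuine misstep: you invoke $CFAR$ to ``collapse the resulting $\tau$-loops,'' and you model the argument on Theorem~\ref{CCFAR}. But the present theorem concerns $APTC$ with \emph{silent step} and guarded linear recursion, \emph{not} $APTC_\tau$ with the abstraction operator. At this stage $CFAR$ has not yet been introduced (it appears only afterwards, together with $\tau_I$, in the separate completeness result for $APTC_\tau$), so it is not available to you. More to the point, it is not needed: by the very definition of a guarded linear recursive specification there is no infinite sequence of $\tau$-transitions from any $\langle X|E\rangle$, and this property is inherited by the product specification you construct. If $(X,Y)\xrightarrow{\tau}(X',Y')$ in the product, then at least one coordinate performs a $\tau$-step in its own (guarded) specification, so any $\tau$-chain in the product is bounded by the sum of the $\tau$-depths in $E_1$ and $E_2$. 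Thus the product is automatically guarded and $RSP$ applies directly; the only $\tau$-manipulation required is the use of axioms $B1$--$B3$ (in their $\leftmerge$ form) together with the rootedness condition to align the summands. Your reference to ``the interaction between abstraction and recursion'' as the main obstacle is therefore misplaced: there is no abstraction operator in this theorem, and you have conflated it with the later $APTC_\tau+CFAR$ completeness result.
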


The transition rules of $\tau_I$ are the same, and the axioms are shown in Table \ref{AxiomsForAbstractionLeft}.

\begin{theorem}[Conservitivity of $APTC_{\tau}$ with guarded linear recursion]
$APTC_{\tau}$ with guarded linear recursion is a conservative extension of $APTC$ with silent step and guarded linear recursion.
\end{theorem}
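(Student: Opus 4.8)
The plan is to follow the standard two-layer strategy for conservativity in the ACP tradition: first establish \emph{operational} conservativity at the level of transition systems, then lift it to \emph{axiomatic} conservativity using the soundness and completeness results already available for both theories. Write $T_0$ for $APTC$ with silent step and guarded linear recursion, and $T_1$ for $APTC_{\tau}$ with guarded linear recursion; the only difference is that $T_1$ adds the abstraction operator $\tau_I$, its transition rules (Table~\ref{TRForTau}), and its axioms $TI1$--$TI6$. Since every axiom and rule of $T_0$ is already present in $T_1$, the direction $T_0\vdash s=t \Rightarrow T_1\vdash s=t$ is immediate, so the whole content lies in showing that $T_1$ proves no new equalities between terms over the signature of $T_0$.

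First I would verify the syntactic hypotheses of the meta-theorem on operationally conservative extensions of transition system specifications (the Groote--Vaandrager / Verhoef criterion). The two points to check are: (i) the TSS of the base theory $T_0$ is well-founded and its rules lie in the required source-dependent (panth) format --- this is inherited from the rules of Tables~\ref{TRForBATC}, \ref{TRForAPTC}, \ref{TRForLeftParallel} and \ref{TRForGR}, together with the guardedness assumption, which forbids infinite $\tau$-loops and so guarantees a stratification; and (ii) every new rule, namely the four $\tau_I$-rules, carries the fresh operator $\tau_I$ at the head of the source of its conclusion. Because $\tau_I$ occurs in no rule of $T_0$, condition (ii) is immediate.

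Granting these checks, the meta-theorem yields the key lemma: for every closed term $p$ over the signature of $T_0$ and every label, the transitions $p\xrightarrow{e}p'$ and $p\xrightarrow{e}\surd$, as well as the termination predicate $p\downarrow$, are derivable in the TSS of $T_1$ if and only if they are already derivable in the TSS of $T_0$. In other words, the labelled transition system (including $\tau$-steps and $\downarrow$) generated for an old term is identical in both theories. Since each of the rooted branching truly concurrent bisimulation equivalences $\approx_{rbs}$, $\approx_{rbp}$, $\approx_{rbhp}$, $\approx_{rbhhp}$ is defined purely from these transition relations and the termination predicate, two terms over the signature of $T_0$ are equivalent in $T_1$ exactly when they are equivalent in $T_0$.

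Finally I would assemble the axiomatic statement. Suppose $s,t$ are terms over the signature of $T_0$ with $T_1\vdash s=t$. Soundness of $APTC_{\tau}$ with guarded linear recursion (the analogue of Theorem~\ref{SAPTCABS}) gives $s\approx_{rb\ast}t$ in $T_1$; by the key lemma this equivalence already holds in $T_0$; and completeness of $APTC$ with silent step and guarded linear recursion then delivers $T_0\vdash s=t$, as required. I expect the main obstacle to be the first layer rather than this bookkeeping: specifically, checking that the recursion rules together with guardedness really do produce a well-founded, stratifiable TSS, and confirming that operational conservativity is preserved not only for the action transitions but also for the silent steps and the termination predicate $\downarrow$ on which the \emph{branching} equivalences depend. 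This is exactly what guarantees that adjoining $\tau_I$ can neither create nor destroy any $\tau$-move or termination option of a term that does not itself mention $\tau_I$.
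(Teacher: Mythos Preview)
Your approach is correct and is precisely the standard one in the ACP tradition that this paper inhabits. Note, however, that the paper does not actually supply a proof here: at the beginning of Section~\ref{ahhpb} the author announces that ``all proofs of the conclusions are left to the reader,'' and this theorem is one of those conclusions. The intended argument is exactly the Groote--Vaandrager/Verhoef meta-theorem you invoke (this is the route taken in the author's own \cite{ATC} and in Fokkink \cite{ACP}, both cited as the background for this chapter), so your proposal matches what the paper has in mind.

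One small point of calibration: in this setting ``conservative extension'' is normally the \emph{operational} statement --- the TSS of $T_1$ generates no new transitions (or terminations) for closed $T_0$-terms --- rather than the axiomatic one. Your first two layers already deliver exactly that, so the theorem is done once you have checked source-dependency of the $\tau_I$-rules and well-foundedness under guardedness. Your third layer (lifting to equational conservativity via soundness and completeness) is correct but goes a bit beyond what is being claimed; it is a pleasant corollary rather than part of the theorem itself.
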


\begin{theorem}[Congruence theorem of $APTC_{\tau}$ with guarded linear recursion]
Rooted branching truly concurrent bisimulation equivalences $\approx_{rbp}$, $\approx_{rbs}$, $\approx_{rbhp}$ and $\approx_{rbhhp}$ are all congruences with respect to $APTC_{\tau}$ with guarded linear recursion.
\end{theorem}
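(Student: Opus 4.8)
The plan is to establish congruence operator by operator: I would show that every operation in the signature of $APTC_\tau$ with guarded linear recursion --- sequential composition $\cdot$, alternative composition $+$, the left parallel composition $\leftmerge$ (together with the derived $\parallel$ and $\between$), the communication merge $\mid$, the conflict elimination $\Theta$, the unless operator $\triangleleft$, the encapsulation $\partial_H$, the abstraction $\tau_I$, and the guarded linear recursion construct $\langle X|E\rangle$ --- preserves each of $\approx_{rbs}$, $\approx_{rbp}$, $\approx_{rbhp}$ and $\approx_{rbhhp}$. Since preservation under an arbitrary context reduces by structural induction to preservation under a single operator application, it suffices to prove, for each operator $f$ and each equivalence $\approx_{rb*}$, that $x_i\approx_{rb*}y_i$ for all $i$ implies $f(\ldots,x_i,\ldots)\approx_{rb*}f(\ldots,y_i,\ldots)$. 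The analogous statements for $APTC$ with guarded recursion and for $APTC$ with silent step are already available, so I would reuse their witnessing relations and only adapt the arguments at the points where $\tau_I$ and the silent step genuinely alter behaviour.

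For each static operator I would exploit the two-level shape of Definitions \ref{RBPSB} and \ref{RBHHPB}: a rooted branching bisimulation requires an exact match of the very first transition, with the two residuals only required to be branching bisimilar. Thus, given branching bisimulations witnessing $x_i\approx_{b*}y_i$, I would take
$$R=\{(f(\ldots,x_i,\ldots),\,f(\ldots,y_i,\ldots)):x_i\approx_{b*}y_i\}\cup{\approx_{b*}},$$
and verify the clauses of the rooted definition by a case analysis on the transition rules of Tables \ref{TRForLeftParallel}, \ref{TRForAPTC} and \ref{TRForTau}. The root clauses follow from the root matching of the $x_i$ and $y_i$; the deeper transitions land in the embedded $\approx_{b*}$, for which I would separately record that the branching equivalences of Definitions \ref{BPSB} and \ref{BHHPB} are themselves preserved by these operators. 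The exact first-step matching demanded by the root condition is precisely what makes alternative composition $+$ a congruence, where plain branching bisimulation fails.

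The pomset and step versions $\approx_{rbp}$ and $\approx_{rbs}$ differ only in whether the transition label $X$ must be pairwise concurrent, so one argument serves both. For $\approx_{rbhp}$ I would carry the order-isomorphism $f$ along the game and, at every matched single-event step, extend it to $f[e_1\mapsto e_2]$, checking that it remains an isomorphism because each operator respects the causality order $\leq$ and the conflict relation $\sharp$. For $\approx_{rbhhp}$ I would additionally verify that the posetal relations built above stay downward closed, which amounts to showing that each operator commutes with the restriction of a configuration to a causally closed subset.

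The main obstacle I anticipate is the abstraction operator $\tau_I$, especially in combination with recursion and with the hhp case. Because $\tau_I$ converts each $e\in I$ into a genuine silent step, a transition $\tau_I(x)\xrightarrow{\tau}\tau_I(x')$ has to be matched through the stuttering ``either $\tau$ / or $\tau^*$ then weak step'' alternatives of Definition \ref{BPSB}, so below the root the game must track the intermediate states reached by $\tau$-runs rather than relying on exact labels. Guardedness of the linear specification is what keeps this finite: since it forbids any infinite sequence $\langle X|E\rangle\xrightarrow{\tau}\langle X'|E\rangle\xrightarrow{\tau}\cdots$, the stuttering closures terminate, and for the recursion congruence itself --- given $E=\{X_i=t_i\}$ and $F=\{X_i=u_i\}$ with $t_i\approx u_i$ under closing substitution --- the relation pairing $\langle X_i|E\rangle$ with $\langle X_i|F\rangle$, driven by the rules of Table \ref{TRForGR}, is a well-founded rooted branching bisimulation. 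The final and most delicate point is confirming that downward closure for $\approx_{rbhhp}$ survives the creation of fresh silent events by $\tau_I$; I would address it last, once all $\tau_I$-free congruences are settled.
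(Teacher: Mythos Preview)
The paper does not supply a proof of this theorem: at the start of Section~\ref{ahhpb} it states that ``all proofs of the conclusions are left to the reader,'' and the congruence theorem for $APTC_\tau$ with guarded linear recursion is one of those conclusions. So there is no paper proof to compare against, only the implicit expectation that the argument parallel the non-hhp congruence results cited earlier (Theorems~\ref{SAPTCR}--\ref{CCFAR} and the references \cite{ATC,CTC}).

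Your plan is exactly the standard one and matches what the paper intends the reader to carry out: prove preservation operator by operator, exploit the root condition to recover congruence of $+$, push the branching game through the SOS rules, and rely on guardedness to keep $\tau$-chasing bounded. Two small remarks. First, for the recursion clause you should be explicit that RSP-style uniqueness is not being invoked (that would be circular here); instead you are directly checking that the relation pairing $\langle X_i|E\rangle$ with $\langle X_i|F\rangle$ is a rooted branching bisimulation up to the unfolding rules of Table~\ref{TRForGR}, using guardedness to ensure the first visible step is reached in finitely many unfoldings. Second, your closing worry about $\approx_{rbhhp}$ and $\tau_I$ is well placed, but note that in the paper's set-up the silent events introduced by $\tau_I$ are quotiented out when forming $\hat{C}$ (Definition~\ref{BHHPB} works with $f:\hat{C_1}\to\hat{C_2}$), so the downward-closure check is over the visible sub-pomsets only; this is what makes the hhp case go through and is worth stating when you write it up.
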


\begin{center}
\begin{table}
  \begin{tabular}{@{}ll@{}}
\hline No. &Axiom\\
  $TI1$ & $e\notin I\quad \tau_I(e)=e$\\
  $TI2$ & $e\in I\quad \tau_I(e)=\tau$\\
  $TI3$ & $\tau_I(\delta)=\delta$\\
  $TI4$ & $\tau_I(x+y)=\tau_I(x)+\tau_I(y)$\\
  $TI5$ & $\tau_I(x\cdot y)=\tau_I(x)\cdot\tau_I(y)$\\
  $TI6$ & $\tau_I(x\leftmerge y)=\tau_I(x)\leftmerge\tau_I(y)$\\
\end{tabular}
\caption{Axioms of abstraction operator}
\label{AxiomsForAbstractionLeft}
\end{table}
\end{center}

\begin{theorem}[Soundness of $APTC_{\tau}$ with guarded linear recursion]
Let $x$ and $y$ be $APTC_{\tau}$ with guarded linear recursion terms. If $APTC_{\tau}$ with guarded linear recursion $\vdash x=y$, then
\begin{enumerate}
  \item $x\approx_{rbs} y$;
  \item $x\approx_{rbp} y$;
  \item $x\approx_{rbhp} y$;
  \item $x\approx_{rbhhp} y$.
\end{enumerate}
\end{theorem}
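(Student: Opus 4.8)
The plan is to follow the standard soundness argument for an equational theory modulo a behavioural congruence. Since $\approx_{rbs}$, $\approx_{rbp}$, $\approx_{rbhp}$ and $\approx_{rbhhp}$ are equivalence relations and, by the congruence theorem of $APTC_{\tau}$ with guarded linear recursion, are congruences with respect to every operator of the theory, it suffices to prove that each axiom schema is sound modulo each of the four equivalences; the general claim $APTC_{\tau}\vdash x=y \Rightarrow x\approx y$ then follows by induction on the length of the equational derivation, using reflexivity, symmetry and transitivity at the leaves and the congruence property at each application of a context/substitution rule. Moreover, by conservativity of $APTC_{\tau}$ with guarded linear recursion over $APTC$ with silent step and guarded linear recursion, every axiom except the new ones for abstraction, namely $TI1$--$TI6$ of Table \ref{AxiomsForAbstractionLeft}, is already known to be sound from the soundness theorem for $APTC$ with silent step and guarded linear recursion. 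Hence the real content is to verify $TI1$--$TI6$, and for each of them I would exhibit an explicit witnessing relation (the instances of the axiom closed under contexts, together with the identity on all terms) and check the transfer conditions of Definitions \ref{RBPSB} and \ref{RBHHPB} against the transition rules for $\tau_I$ in Table \ref{TRForTau}.

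The atomic cases are immediate. For $TI1$ (with $e\notin I$) the rule $\frac{x\xrightarrow{e}\surd}{\tau_I(x)\xrightarrow{e}\surd}$ shows that $\tau_I(e)$ and $e$ perform the identical visible step to $\surd$, so the root condition holds and both reach the terminating state. For $TI2$ (with $e\in I$) the rule $\frac{x\xrightarrow{e}\surd}{\tau_I(x)\xrightarrow{\tau}\surd}$ gives $\tau_I(e)\xrightarrow{\tau}\surd$, matched exactly by $\tau\xrightarrow{\tau}\surd$, again satisfying the root condition with terminating residuals. $TI3$ is trivial, as $\tau_I(\delta)$ and $\delta$ have no transitions. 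For the distribution axioms $TI4$, $TI5$ and $TI6$ I would argue that the rules of $\tau_I$ commute with those of $+$, $\cdot$ and $\leftmerge$: every initial move of $\tau_I(x+y)$ (respectively of the product and the left merge) arises from a move of a subterm renamed by $\tau_I$, which is precisely an initial move of $\tau_I(x)+\tau_I(y)$ (respectively $\tau_I(x)\cdot\tau_I(y)$, $\tau_I(x)\leftmerge\tau_I(y)$), and symmetrically; one then checks clause by clause that the visible or $\tau$ step is reproduced and that the residuals again form a pair of $\tau_I$-images in the relation. For $TI5$ and $TI6$ the two-sided rules producing an $x'\between y'$ residual and the side condition $(e_1\le e_2)$ of the left merge are carried through unchanged on both sides, so no new matching obligation is created.

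The step, pomset and hp levels ($\approx_{rbs}$, $\approx_{rbp}$, $\approx_{rbhp}$) differ only in whether a transition is labelled by a single event, a pairwise-concurrent set, an arbitrary pomset, or an event together with the order-isomorphism $f$; since $\tau_I$ acts pointwise on transition labels, the matching established above transfers verbatim, with $f$ updated by $f[e_1\mapsto e_2]$, or by $f[e_1\mapsto\tau]$ when the event is abstracted, exactly as in Definition \ref{BHHPB}. The main obstacle is the hereditary case $\approx_{rbhhp}$, where the witnessing posetal relation must in addition be downward closed, and the delicate point is that abstraction can turn a visible event into $\tau$ and thereby alter which order-isomorphisms between sub-configurations are admissible. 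I would address this by taking the downward closure of the relation built for the hp-case and verifying that it is still a branching hp-bisimulation, using the hereditary property of the conflict relation and the fact that $\tau_I$ renames labels but never creates, deletes or reorders events or causal edges; consequently, restricting a related triple $(C_1,f,C_2)$ to a causally-closed subset again yields a pair of $\tau_I$-images related by the restricted isomorphism. Once downward closure is confirmed for $TI1$--$TI6$, the root condition for $\approx_{rbhhp}$ is checked exactly as in the non-hereditary cases, completing the soundness proof.
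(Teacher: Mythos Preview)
The paper does not actually give a proof of this theorem: at the start of Section~\ref{ahhpb} it states that, because the theory with left parallel composition is similar to the one without, ``all proofs of the conclusions are left to the reader.'' Your proposal therefore supplies precisely what the paper omits, and it follows the standard soundness schema that the paper implicitly relies on (and that underlies the analogous Theorems~\ref{SAPTCABS} and~\ref{SCFAR} for the version without $\leftmerge$): use the congruence theorem to reduce the derivation-level claim to per-axiom soundness, invoke conservativity to inherit soundness of all axioms already present in $APTC$ with silent step and guarded linear recursion, and then verify $TI1$--$TI6$ directly against the transition rules for $\tau_I$. Your treatment of the four equivalence levels, including the downward-closure argument for $\approx_{rbhhp}$ based on the fact that $\tau_I$ only relabels and never alters the causal or conflict structure, is the right way to handle the hereditary case. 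In short, your outline is correct and is exactly the kind of argument the paper defers to the reader.
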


\begin{definition}[Cluster]
Let $E$ be a guarded linear recursive specification, and $I\subseteq \mathbb{E}$. Two recursion variable $X$ and $Y$ in $E$ are in the same cluster for $I$ iff there exist sequences of transitions $\langle X|E\rangle\xrightarrow{\{b_{11},\cdots, b_{1i}\}}\cdots\xrightarrow{\{b_{m1},\cdots, b_{mi}\}}\langle Y|E\rangle$ and $\langle Y|E\rangle\xrightarrow{\{c_{11},\cdots, c_{1j}\}}\cdots\xrightarrow{\{c_{n1},\cdots, c_{nj}\}}\langle X|E\rangle$, where $b_{11},\cdots,b_{mi},c_{11},\cdots,c_{nj}\in I\cup\{\tau\}$.

$a_1\leftmerge\cdots\leftmerge a_k$ or $(a_1\leftmerge\cdots\leftmerge a_k) X$ is an exit for the cluster $C$ iff: (1) $a_1\leftmerge\cdots\leftmerge a_k$ or $(a_1\leftmerge\cdots\leftmerge a_k) X$ is a summand at the right-hand side of the recursive equation for a recursion variable in $C$, and (2) in the case of $(a_1\leftmerge\cdots\leftmerge a_k) X$, either $a_l\notin I\cup\{\tau\}(l\in\{1,2,\cdots,k\})$ or $X\notin C$.
\end{definition}

\begin{center}
\begin{table}
  \begin{tabular}{@{}ll@{}}
\hline No. &Axiom\\
  $CFAR$ & If $X$ is in a cluster for $I$ with exits \\
           & $\{(a_{11}\leftmerge\cdots\leftmerge a_{1i})Y_1,\cdots,(a_{m1}\leftmerge\cdots\leftmerge a_{mi})Y_m, b_{11}\leftmerge\cdots\leftmerge b_{1j},\cdots,b_{n1}\leftmerge\cdots\leftmerge b_{nj}\}$, \\
           & then $\tau\cdot\tau_I(\langle X|E\rangle)=$\\
           & $\tau\cdot\tau_I((a_{11}\leftmerge\cdots\leftmerge a_{1i})\langle Y_1|E\rangle+\cdots+(a_{m1}\leftmerge\cdots\leftmerge a_{mi})\langle Y_m|E\rangle+b_{11}\leftmerge\cdots\leftmerge b_{1j}+\cdots+b_{n1}\leftmerge\cdots\leftmerge b_{nj})$\\
\end{tabular}
\caption{Cluster fair abstraction rule}
\label{CFARLeft}
\end{table}
\end{center}

\begin{theorem}[Soundness of $CFAR$]
$CFAR$ is sound modulo rooted branching truly concurrent bisimulation equivalences $\approx_{rbs}$, $\approx_{rbp}$, $\approx_{rbhp}$ and $\approx_{rbhhp}$.
\end{theorem}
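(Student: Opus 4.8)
The plan is to exhibit, for each of the four equivalences, an explicit witnessing relation, exploiting the fact that both sides of the $CFAR$ equation carry a leading $\tau$. First I would reduce the rooted problem to an unrooted one: since the left-hand side $\tau\cdot\tau_I(\langle X|E\rangle)$ and the right-hand side $\tau\cdot\tau_I(S)$ (writing $S$ for the displayed sum of exits) each fire a single initial $\tau$ to the bodies $\tau_I(\langle X|E\rangle)$ and $\tau_I(S)$, by Definitions \ref{RBPSB} and \ref{RBHHPB} it suffices to show that these two bodies are branching (resp.\ branching step, hp-, hhp-) bisimilar. The conceptual crux is that branching bisimulation is insensitive to divergence: although $\tau_I(\langle X|E\rangle)$ may contain internal $\tau$-cycles inside the cluster, every state reachable while staying in the cluster will be collapsed onto the single state $\tau_I(S)$, so the $\tau$-loops do no harm.

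Concretely, I would take the witnessing relation to be
$$R=\{(\tau_I(\langle Z|E\rangle),\tau_I(S)) : Z \text{ is in the cluster } C\}\ \cup\ \mathrm{Id},$$
where $\mathrm{Id}$ is the identity relation on all terms reachable after an exit has been taken. The pair $(\tau_I(\langle X|E\rangle),\tau_I(S))$ lies in $R$ because $X\in C$, so after matching the two leading $\tau$'s one lands back inside $R$, which is exactly what rootedness demands.

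The verification then splits according to the two kinds of transitions a cluster state $\tau_I(\langle Z|E\rangle)$ admits. An internal move $\tau_I(\langle Z|E\rangle)\xrightarrow{\tau}\tau_I(\langle Z'|E\rangle)$ with $Z'\in C$ — arising from a summand whose guards all lie in $I\cup\{\tau\}$ and whose target stays in $C$ — is matched by the right-hand side \emph{staying put}, since $(\tau_I(\langle Z'|E\rangle),\tau_I(S))\in R$; this is the inert case of Definition \ref{BPSB}, and it is precisely here that the harmlessness of the $\tau$-cycles is used. An exit move of $Z$ is, by definition of exit, either a visible multiaction or a $\tau$ leading outside $C$; in either case the corresponding summand occurs in $S$, so $\tau_I(S)$ reproduces the same labelled transition to the same target, which sits in $\mathrm{Id}$. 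The reverse direction is the delicate one: a transition $\tau_I(S)\xrightarrow{\alpha}P$ may come from an exit of some cluster variable $W$ \emph{different} from $Z$, which $\tau_I(\langle Z|E\rangle)$ cannot in general fire immediately. Here I would use the defining property of a cluster, namely the existence of an internal path $\langle Z|E\rangle\Rightarrow\langle W|E\rangle$ of $I\cup\{\tau\}$-transitions, to first walk $\tau_I(\langle Z|E\rangle)\xrightarrow{\tau^*}\tau_I(\langle W|E\rangle)$ through states all related to $\tau_I(S)$, and only then take $W$'s exit; this is exactly the second clause of the branching condition, with the intermediate pair $(\tau_I(\langle W|E\rangle),\tau_I(S))$ in $R$. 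The termination clauses are handled analogously, a terminal exit $b_{\cdots}$ of any cluster variable being reachable from every $\langle Z|E\rangle$ along the same internal paths.

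Finally I would lift the argument from branching to branching step, hp- and hhp-bisimulation. For the step and pomset versions the matched exits are identical multiactions $a_{m1}\leftmerge\cdots\leftmerge a_{mi}$, so steps are reproduced verbatim; for the history-preserving versions the order isomorphism $f$ is extended along each matched transition exactly as in Definition \ref{BHHPB}, which is unproblematic because matched exit transitions carry identical labels and identical targets. The main obstacle I anticipate is the hereditary ($hhp$) case: there I must additionally check that $R$, turned into a weakly posetal relation, is downward closed, which forces me to verify that the collapse of cluster states respects the pointwise order on configurations and their isomorphisms, and in particular that the connectivity-based matching of a foreign exit $W$ \emph{extends}, rather than disturbs, the already-built isomorphism. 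Granting the soundness and congruence results for the underlying $APTC_{\tau}$ theory stated above, this downward-closure check is the only place where genuinely new work beyond the branching case is required.
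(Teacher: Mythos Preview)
The paper does not actually supply a proof of this theorem: the entire subsection on the left-parallel-composition variant is prefaced by the remark that ``all proofs of the conclusions are left to the reader,'' and the earlier incarnation of the $CFAR$ soundness theorem (Theorem~\ref{SCFAR}) is likewise stated without proof. So there is no paper proof to compare against; the relevant question is whether your argument stands on its own.

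It does. Your strategy --- peel off the leading $\tau$ to reduce rooted to plain branching bisimilarity, collapse all cluster states $\tau_I(\langle Z|E\rangle)$ onto the single state $\tau_I(S)$, absorb intra-cluster $\tau$-moves by the inert clause, and for the converse direction use the defining strong connectivity of a cluster to $\tau^*$-walk from $Z$ to the owner $W$ of a foreign exit before firing it --- is exactly the classical ACP argument (cf.\ Fokkink \cite{ACP} or Vaandrager \cite{CFAR}) transported to the truly concurrent setting. The identification of matched exits as identical left-merge multiactions is what lets the argument go through unchanged for the step/pomset variants, and you correctly isolate downward closure of the posetal relation as the only genuinely new obligation for $\approx_{rbhhp}$. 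One small point worth making explicit in the write-up: an exit of the form $(a_1\leftmerge\cdots\leftmerge a_k)Y$ with $Y\notin C$ may have \emph{all} $a_l\in I\cup\{\tau\}$, so after $\tau_I$ it becomes a $\tau$-labelled move out of the cluster; you allude to this (``a $\tau$ leading outside $C$'') but should check that in that case the branching clause still lets you match it via the $\tau^*$-prefix plus one more $\tau$, with the intermediate pair in $R$ and the final pair in $\mathrm{Id}$.
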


\begin{theorem}[Completeness of $APTC_{\tau}$ with guarded linear recursion and $CFAR$]
Let $p$ and $q$ be closed $APTC_{\tau}$ with guarded linear recursion and $CFAR$ terms, then,
\begin{enumerate}
  \item if $p\approx_{rbs} q$ then $p=q$;
  \item if $p\approx_{rbp} q$ then $p=q$;
  \item if $p\approx_{rbhp} q$ then $p=q$;
  \item if $p\approx_{rbhhp} q$ then $p=q$.
\end{enumerate}
\end{theorem}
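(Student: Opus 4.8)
The plan is to follow the standard ACP-style completeness argument: reduce both closed terms to recursive normal forms, read a single guarded linear recursive specification off the witnessing bisimulation, and then invoke RSP. Throughout I assume the elimination theorem, the congruence theorems, soundness, RDP/RSP and CFAR stated above; I will carry out part (1) in detail and indicate how (2)--(4) follow.

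First I would reduce to normal form. By the elimination theorem for $APTC$ with silent step and guarded linear recursion, together with the abstraction axioms $TI1$--$TI6$, every closed $APTC_{\tau}$ term with guarded linear recursion is provably equal to a term $\langle X_1|E\rangle$ with $E$ a linear recursive specification. The only obstruction to guardedness is that $\tau_I$ may rename actions of $I$ into $\tau$ and thereby create clusters carrying infinite $\tau$-loops. Here CFAR is applied: for each cluster for $I$ it rewrites $\tau\cdot\tau_I(\langle X|E\rangle)$ into $\tau$ times the sum of the exits of the cluster, so that after finitely many applications the residual specification contains no infinite sequence of $\tau$-transitions. Hence I may assume $p=\langle X_1|E_1\rangle$ and $q=\langle Y_1|E_2\rangle$ with $E_1,E_2$ guarded linear recursive specifications over disjoint sets of variables.

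Next, assuming $p\approx_{rbs}q$, let $R$ be a witnessing rooted branching step bisimulation. Because rootedness forces the initial steps of $p$ and $q$ to match exactly, and because guardedness guarantees that every reachable state is $\langle X|E_1\rangle$ or $\langle Y|E_2\rangle$ for some variable, I collect the pairs $(\langle X|E_1\rangle,\langle Y|E_2\rangle)$ of reachable states that are branching step bisimilar and introduce a fresh variable $Z_{X,Y}$ for each. For each pair I read off a linear equation for $Z_{X,Y}$ by matching the summands of the equation for $X$ in $E_1$ and for $Y$ in $E_2$ through $R$: every non-$\tau$ summand $(a_{1}\leftmerge\cdots\leftmerge a_{k})\cdot Z_{X',Y'}$ is included whenever the corresponding steps are related, and the stuttering $\tau^*$-prefixes permitted by branching bisimulation are absorbed using $B1$, $B2$ and CFAR. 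This yields a specification $E$ which is again guarded, since the cluster collapse has already removed all $\tau$-loops. I then verify that substituting $\langle X|E_1\rangle$ for each $Z_{X,Y}$ solves $E$ (its equations being provable consequences of those of $E_1$ up to the branching matching) and, symmetrically, that $\langle Y|E_2\rangle$ solves $E$; the congruence theorem for $\approx_{rbs}$ is what allows substituting branching-bisimilar tails inside contexts while preserving the equations. By RSP the guarded specification $E$ has a unique solution, so $\langle X_1|E_1\rangle=\langle Z_{X_1,Y_1}|E\rangle=\langle Y_1|E_2\rangle$, i.e. $p=q$. For parts (2)--(4) the identical construction applies, with the transition labels carrying respectively pomset labels, the order-isomorphism $f$ of the posetal product, and the downward-closure condition; since $\approx_{rbp}$, $\approx_{rbhp}$ and $\approx_{rbhhp}$ are also congruences and soundness holds for all four, the RSP step goes through unchanged.

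The main obstacle I expect is the verification that the normal forms are genuine \emph{solutions} of $E$, rather than merely bisimilar to one. Branching bisimulation permits a matching of the form $C_2\xrightarrow{\tau^*}C_2^0\xRightarrow{X}C_2'$ in place of a single step, so one must argue that these silent detours can be eliminated \emph{provably} (not only semantically) via $B1$, $B2$ and CFAR before the equations of $E$ hold on the nose. Ensuring that this absorption interacts correctly with guardedness — so that no $\tau$-loop survives and RSP is genuinely applicable — is the delicate point of the proof.
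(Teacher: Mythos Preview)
The paper does not actually provide a proof of this theorem: at the start of the subsection on hhp-bisimilarity it states explicitly that ``all proofs of the conclusions are left to the reader,'' so there is no argument in the text to compare against. Your proposal follows the standard ACP-style completeness template (elimination to guarded linear normal form via CFAR, construction of a joint specification from the witnessing bisimulation, then RSP), which is precisely the pattern the paper relies on in the analogous earlier theorems and is the natural route the reader is expected to reconstruct; your identification of the silent-detour absorption as the delicate step is also on point.
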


\subsection{Placeholder}\label{ph}

We introduce a constant called shadow constant $\circledS$ to act for the placeholder that we ever used to deal entanglement in quantum process algebra. The transition rule of the shadow constant $\circledS$ is shown in Table \ref{TRForShadow}. The rule say that $\circledS$ can terminate successfully without executing any action.

\begin{center}
    \begin{table}
        $$\frac{}{\circledS\rightarrow\surd}$$
        \caption{Transition rule of the shadow constant}
        \label{TRForShadow}
    \end{table}
\end{center}

We need to adjust the definition of guarded linear recursive specification
to the following one.

\begin{definition}[Guarded linear recursive specification]\label{GLRSS}
A linear recursive specification $E$ is guarded if there does not exist an infinite sequence of $\tau$-transitions $\langle X|E\rangle\xrightarrow{\tau}\langle X'|E\rangle\xrightarrow{\tau}\langle X''|E\rangle\xrightarrow{\tau}\cdots$, and there does not exist an infinite sequence of $\circledS$-transitions $\langle X|E\rangle\rightarrow\langle X'|E\rangle\rightarrow\langle X''|E\rangle\rightarrow\cdots$.
\end{definition}

\begin{theorem}[Conservativity of $APTC$ with respect to the shadow constant]
$APTC_{\tau}$ with guarded linear recursion and shadow constant is a conservative extension of $APTC_{\tau}$ with guarded linear recursion.
\end{theorem}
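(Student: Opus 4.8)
The plan is to prove this as an operational conservativity result in the standard structural-operational-semantics style, and then lift it to the axiomatic level using the soundness and completeness theorems already available for $APTC_{\tau}$ with guarded linear recursion. A conservative-extension claim amounts to two things: that the transition behaviour of every closed term over the \emph{old} signature (that is, $APTC_{\tau}$ with guarded linear recursion, without $\circledS$) is identical in the extended system, and that consequently the relevant equivalences $\approx_{rbs}$, $\approx_{rbp}$, $\approx_{rbhp}$ and $\approx_{rbhhp}$ agree on old terms. The only new syntax is the constant $\circledS$, and the only new transition rule is the axiom $\frac{}{\circledS\rightarrow\surd}$ of Table \ref{TRForShadow}.

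First I would verify operational conservativity by a format argument. The transition rules of $APTC_{\tau}$ with guarded linear recursion (Tables \ref{TRForBATC}, \ref{TRForAPTC}, \ref{TRForGR}, \ref{TRForTau}, together with the rules for $\leftmerge$, $\triangleleft$, $\Theta$, $\partial_H$ and $\tau_I$) fall within the panth/path format and form a well-founded, source-dependent transition system specification. The single added rule has the fresh constant $\circledS$ as the source of its conclusion and has no premises, so it trivially meets the format requirements and the freshness side condition of the standard conservative-extension meta-theorem. The decisive observation is that $\circledS$ occurs in no old term and cannot be introduced by any old rule; hence no old term can reach a state in which the new rule becomes applicable. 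Therefore, for every closed term $p$ over the old signature, the transitions $p\xrightarrow{X}p'$, $p\xrightarrow{X}\surd$ and the predicate $p\downarrow$ derivable in the extended system coincide exactly with those derivable in the original system, and every reachable residual $p'$ is again an old term.

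Next I would lift this to the behavioural equivalences. Because old terms reach only old terms and carry identical transitions in both systems, any rooted branching (pomset, step, hp- or hhp-) bisimulation in the extended system, when restricted to pairs of old configurations, is a bisimulation of the same kind in the original system, and conversely any such bisimulation in the original system is still one in the extended system. Here the termination predicate $\downarrow$ and the $\tau^{*}$-closure conditions transfer verbatim, since they too are determined solely by the unchanged old transitions. Consequently $p\approx_{rb\ast}q$ holds in the extended system if and only if it holds in the original system, for all old closed terms $p,q$.

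Finally I would close the argument axiomatically. Every axiom of the original theory remains available, so derivability of $p=q$ in the old theory immediately gives derivability in the extended theory. For the converse, suppose the extended theory derives $p=q$ with $p,q$ old terms; by soundness of the extended theory we get $p\approx_{rb\ast}q$ in the extended system, by the operational conservativity just established this equivalence also holds in the original system, and by completeness of $APTC_{\tau}$ with guarded linear recursion modulo $\approx_{rb\ast}$ the original theory derives $p=q$. This yields the claimed conservativity. The main obstacle I anticipate is not the format bookkeeping but the careful treatment of recursion and silent steps: one must confirm that the recursion rules of Table \ref{TRForGR}, whose premises unfold $\langle X_i|E\rangle$, still give a well-founded specification once $\circledS$ is present, and, under the revised Definition \ref{GLRSS}, that admitting $\circledS$-summands in guarded linear specifications introduces no new $\tau$- or $\circledS$-cycles that could let a bisimulation on old terms silently escape into terms mentioning $\circledS$; establishing that these reachability and guardedness invariants survive the extension is where the real work lies.
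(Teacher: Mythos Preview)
The paper does not supply a proof of this theorem; like the other conservativity theorems scattered through Section~\ref{tcpa}, it is asserted without argument. Your proposal is the standard route one would take here---a source-dependency/format meta-theorem for operational conservativity (the lone rule $\frac{}{\circledS\rightarrow\surd}$ has a fresh constant as source and no premises), followed by the observation that old terms reach only old terms so the rooted branching equivalences coincide on them, and finally the soundness/completeness sandwich to get equational conservativity---and it is correct.

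One small point worth tightening: your final step invokes completeness of $APTC_{\tau}$ with guarded linear recursion to conclude that the original theory derives $p=q$, but the completeness result actually stated in the paper (Theorem~\ref{CCFAR}) is for $APTC_{\tau}$ with guarded linear recursion \emph{and $CFAR$}. So either the equational conservativity claim should be read relative to the system with $CFAR$, or you should be content with operational conservativity plus preservation of the equivalences, which is almost certainly what the paper intends by ``conservative extension'' anyway. The concern you flag about Definition~\ref{GLRSS} is not really an obstacle: an old guarded linear specification contains no $\circledS$, so the added clause forbidding infinite $\circledS$-sequences is vacuous for it, and the notion of guardedness on old specifications is unchanged.
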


We design the axioms for the shadow constant $\circledS$ in Table \ref{AxiomsForShadow}. And for $\circledS^e_i$, we add superscript $e$ to denote $\circledS$ is belonging to $e$ and subscript $i$ to denote that it is the $i$-th shadow of $e$. And we extend the set $\mathbb{E}$ to the set $\mathbb{E}\cup\{\tau\}\cup\{\delta\}\cup\{\circledS^{e}_i\}$.

\begin{center}
\begin{table}
  \begin{tabular}{@{}ll@{}}
\hline No. &Axiom\\
  $SC1$ & $\circledS\cdot x = x$\\
  $SC2$ & $x\cdot \circledS = x$\\
  $SC3$ & $\circledS^{e}\parallel e=e$\\
  $SC4$ & $e\parallel(\circledS^{e}\cdot y) = e\cdot y$\\
  $SC5$ & $\circledS^{e}\parallel(e\cdot y) = e\cdot y$\\
  $SC6$ & $(e\cdot x)\parallel\circledS^{e} = e\cdot x$\\
  $SC7$ & $(\circledS^{e}\cdot x)\parallel e = e\cdot x$\\
  $SC8$ & $(e\cdot x)\parallel(\circledS^{e}\cdot y) = e\cdot (x\between y)$\\
  $SC9$ & $(\circledS^{e}\cdot x)\parallel(e\cdot y) = e\cdot (x\between y)$\\
\end{tabular}
\caption{Axioms of shadow constant}
\label{AxiomsForShadow}
\end{table}
\end{center}

The mismatch of action and its shadows in parallelism will cause deadlock, that is, $e\parallel \circledS^{e'}=\delta$ with $e\neq e'$. We must make all shadows $\circledS^e_i$ are distinct, to ensure $f$ in hp-bisimulation is an isomorphism.

\begin{theorem}[Soundness of the shadow constant]\label{SShadow}
Let $x$ and $y$ be $APTC_{\tau}$ with guarded linear recursion and the shadow constant terms. If $APTC_{\tau}$ with guarded linear recursion and the shadow constant $\vdash x=y$, then
\begin{enumerate}
  \item $x\approx_{rbs} y$;
  \item $x\approx_{rbp} y$;
  \item $x\approx_{rbhp} y$.
\end{enumerate}
\end{theorem}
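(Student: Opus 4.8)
The plan is to follow the standard route for soundness of an equational theory modulo a behavioural congruence: reduce the global claim to the soundness of each individual axiom, and then discharge those finitely many obligations by exhibiting witnessing bisimulations. First I would record that $\approx_{rbs}$, $\approx_{rbp}$ and $\approx_{rbhp}$ are congruences over the whole signature. The congruence theorems already proved for $APTC_{\tau}$ with guarded linear recursion give this for all the old operators, and the rule for $\circledS$ in Table~\ref{TRForShadow} is in the format that preserves rooted branching congruences, so adjoining the shadow constant does not disturb the congruence property; the conservativity result for the shadow constant guarantees that the previously sound axioms stay sound over the extended signature. Because derivability $\vdash x=y$ is by definition the least congruence containing all the axioms, and each of the three equivalences is such a congruence, it then suffices to check that every new axiom $SC1$--$SC9$ of Table~\ref{AxiomsForShadow} relates rooted-branching-bisimilar terms.

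For the per-axiom obligations I would group the nine laws. The unit laws $SC1$ and $SC2$ express that $\circledS$ behaves as the empty process: its only move is the unlabelled termination step $\circledS\rightarrow\surd$ of Table~\ref{TRForShadow}, so $\circledS\cdot x$ and $x\cdot\circledS$ inherit exactly the transitions, the $\downarrow$ behaviour, and the initial steps of $x$. Here the witness is $R=\{(\circledS\cdot x,x),(x\cdot\circledS,x)\}\cup\mathrm{Id}$ together with its converse, and the transfer conditions are immediate. The matching laws $SC3$--$SC9$ express that a shadow $\circledS^{e}$ carries no observable action but licenses exactly one occurrence of its partner $e$; for each I would take the relation pairing the left-hand parallel term with the collapsed right-hand sequential term, closed under the identity on the residuals (for $SC8$ and $SC9$ the residual is $x\between y$), and verify the transfer conditions from the rules for $\parallel$ in Table~\ref{TRForAPTC} and the defining law $P1$ for $\between$, combined with the rule for $\circledS$, using that the unlabelled move of $\circledS$ is transparent and lets its partner's action pass through the parallel composition. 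For the pomset and step readings the same relations work once the simultaneously executed events are checked to form, respectively, a pomset or a set of pairwise concurrent events.

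For the rooted versions I would then verify the two extra requirements: the strict matching of the very first transition and the termination clauses. In every collapse $SC3$--$SC9$ the genuine action $e$ stays in initial position on both sides, and the shadow contributes only the transparent termination move, so the root condition and the $\downarrow$ clauses are met without appealing to the branching closure. The mismatch remark preceding the theorem also has to be honoured: I would check that $e\parallel\circledS^{e'}$ with $e\neq e'$ produces $\delta$, so that no spurious transition can arise on the parallel side and destroy the bisimulation.

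The main obstacle is the history-preserving case. There the witnessing relation must carry an order-isomorphism $f$ between configurations, and the collapse of a shadow into its partner must not identify a shadow event with a genuine event in a way that breaks injectivity or the causal order $\leq$ (extended, as the paper stipulates, to record concurrency by $=$). This is precisely why all shadows $\circledS^{e}_i$ are taken to be distinct. The delicate step is therefore $SC8$ and $SC9$, where the interleaving of the continuations $x$ and $y$ generates many configurations: I would show that the natural candidate $f$, sending each event of the left-hand configuration to its counterpart on the right, respects $\leq$ in both directions and remains a bijection after each matched step, so that $(C_1',f[e_1\mapsto e_2],C_2')$ stays in the relation. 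Once $f$ is confirmed to be an isomorphism throughout, the $\approx_{rbhp}$ clauses follow as in the step case.
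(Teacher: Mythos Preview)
The paper does not actually give a proof of this theorem; like the other soundness and completeness results in this chapter it is simply stated, with the implicit understanding that the argument follows the same template as the earlier soundness proofs (Theorems~\ref{SAPTC}, \ref{SAPTCR}, \ref{SAPTCABS}) imported from the cited works. Your proposal is exactly that template: establish congruence, invoke conservativity for the old axioms, and verify the finitely many new axioms $SC1$--$SC9$ by exhibiting witnessing bisimulations. In that sense your approach matches what the paper intends.

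One point where your sketch is loose and deserves care is the operational support for $SC3$--$SC9$. The only transition rule the paper supplies for the shadow constant is the unlabelled $\circledS\rightarrow\surd$ of Table~\ref{TRForShadow}, whereas the rules for $\parallel$ in Table~\ref{TRForAPTC} all require \emph{labelled} transitions from both components. The paper does, however, also adjoin the symbols $\circledS^{e}_i$ to the event set $\mathbb{E}$, which makes the generic event rule $e\xrightarrow{e}\surd$ applicable to them as well. When you write down the witnessing relations for $SC3$--$SC9$ you need to commit to one of these readings and check that the $\parallel$-rules actually fire as claimed; your phrase ``the unlabelled move of $\circledS$ is transparent and lets its partner's action pass through'' does not literally correspond to any rule in Table~\ref{TRForAPTC}. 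This is as much a gap in the paper's presentation as in your proposal, but it is precisely the step where the per-axiom verification needs to be made concrete.
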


\begin{theorem}[Completeness of the shadow constant]\label{CRenaming}
Let $p$ and $q$ be closed $APTC_{\tau}$ with guarded linear recursion and $CFAR$ and the shadow constant terms, then,
\begin{enumerate}
  \item if $p\approx_{rbs} q$ then $p=q$;
  \item if $p\approx_{rbp} q$ then $p=q$;
  \item if $p\approx_{rbhp} q$ then $p=q$.
\end{enumerate}
\end{theorem}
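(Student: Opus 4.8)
The plan is to reduce this completeness statement to the already-established completeness of $APTC_{\tau}$ with guarded linear recursion and $CFAR$ (Theorem~\ref{CCFAR}) by first proving an auxiliary \emph{elimination theorem} for the shadow constant: every closed term $p$ of $APTC_{\tau}$ with guarded linear recursion, $CFAR$ and the shadow constant is provably equal to a closed term $p'$ of $APTC_{\tau}$ with guarded linear recursion and $CFAR$ in which no shadow constant occurs. Once this elimination is available, the argument is uniform across the three equivalences $\approx_{rbs}$, $\approx_{rbp}$ and $\approx_{rbhp}$. Given $p\approx q$, eliminate shadows to obtain shadow-free $p'$ and $q'$ with $p=p'$ and $q=q'$ provable; by soundness (Theorem~\ref{SShadow}) we get $p\approx p'$ and $q\approx q'$, hence $p'\approx q'$ by transitivity; then base completeness yields $p'=q'$, so $p=p'=q'=q$.

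First I would establish the elimination theorem on basic terms. The bare shadow $\circledS$ is a two-sided unit for sequential composition by $SC1$ and $SC2$, so every occurrence of $\circledS$ in a sequential position is absorbed. For the action-indexed shadows $\circledS^{e}_i$, I would push parallel compositions towards a normal form using the parallelism axioms and then resolve each shadow against the matching action in a sibling branch via $SC3$--$SC9$: whenever $\circledS^{e}$ is placed in parallel with the action $e$ these axioms collapse the pair to $e$ (with the appropriate continuation), while a mismatch $\circledS^{e}\parallel e'$ with $e\neq e'$ rewrites to $\delta$ as noted after Table~\ref{AxiomsForShadow}. Proceeding by structural induction on the basic-term grammar, every shadow is thereby either absorbed, matched and collapsed, or turned into $\delta$, leaving a shadow-free basic term.

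Next I would lift the elimination to recursion. By the existing elimination theorem for $APTC_{\tau}$ with guarded linear recursion, a term equals some $\langle X_1|E\rangle$ with $E$ guarded linear, so shadow constants occur only inside the right-hand sides $t_i$, which are sums of parallel products of atomic actions and shadows followed by a single recursion variable. Applying the basic-term elimination to each product produces a specification $E'$ whose right-hand sides are shadow-free, and I would verify that $E'$ is again guarded and linear: the collapses $SC3$--$SC9$ only remove shadows and never create new $\tau$- or $\circledS$-transitions, so they can only improve the guardedness condition of Definition~\ref{GLRSS} and preserve the linear shape. Then $RSP$ identifies $\langle X_1|E\rangle$ with $\langle X_1|E'\rangle$, and the conservativity result for the shadow constant guarantees no spurious identifications are introduced in this passage.

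The main obstacle I expect is precisely this recursion step: checking that shadow-elimination respects guardedness and linearity, and that the action/shadow matching inside parallel products interacts correctly with the strengthened guardedness of Definition~\ref{GLRSS}, which now also forbids infinite $\circledS$-transitions. For the $\approx_{rbhp}$ case in particular, care is needed because the stipulation that all shadows $\circledS^{e}_i$ be distinct is exactly what forces the witnessing map $f$ to be an isomorphism; I would have to confirm that the elimination preserves this distinctness so that the hp-bisimulation structure transfers cleanly to the shadow-free terms before invoking base completeness.
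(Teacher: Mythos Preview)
The paper does not give a proof of this theorem; it is stated without a proof environment, in keeping with the surrounding section where the author explicitly says ``all proofs of the conclusions are left to the reader.'' So there is no paper proof to compare against directly. Your strategy---prove an elimination theorem for the shadow constant and then invoke the base completeness of $APTC_{\tau}$ with guarded linear recursion and $CFAR$ (Theorem~\ref{CCFAR})---is the standard and expected route, and it matches the pattern the paper uses for all its other conservative extensions.

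One point to tighten: your elimination argument for the indexed shadows $\circledS^{e}_i$ relies on $SC3$--$SC9$ and the mismatch-to-$\delta$ convention, but those axioms only cover $\circledS^{e}$ when it sits in a \emph{parallel} composition alongside $e$ or $e\cdot y$. You should say what happens to a stray $\circledS^{e}$ that appears in a purely sequential position or as a lone summand---the transition rule $\circledS\rightarrow\surd$ suggests it behaves like the empty process, but the axiom table only gives $SC1$, $SC2$ for the bare $\circledS$, not for $\circledS^{e}$. Either argue that in the guarded-linear normal form every $\circledS^{e}$ necessarily occurs inside a parallel product (so $SC3$--$SC9$ always fire), or extend $SC1$, $SC2$ to the indexed case and justify that extension from the operational semantics. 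Without this, the inductive elimination step on basic terms has a hole for isolated indexed shadows.
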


With the shadow constant, we have

\begin{eqnarray}
\partial_H((a\cdot r_b)\between w_b)&=&\partial_H((a\cdot r_b) \between (\circledS^a_1\cdot w_b)) \nonumber\\
&=&a\cdot c_b\nonumber
\end{eqnarray}

with $H=\{r_b,w_b\}$ and $\gamma(r_b,w_b)\triangleq c_b$.

And we see the following example:

\begin{eqnarray}
a\between b&=&a\parallel b+a\mid b \nonumber\\
&=&a\parallel b + a\parallel b + a\parallel b +a\mid b \nonumber\\
&=&a\parallel (\circledS^a_1\cdot b) + (\circledS^b_1\cdot a)\parallel b+a\parallel b +a\mid b \nonumber\\
&=&(a\parallel\circledS^a_1)\cdot b + (\circledS^b_1\parallel b)\cdot a+a\parallel b +a\mid b \nonumber\\
&=&a\cdot b+b\cdot a+a\parallel b + a\mid b\nonumber
\end{eqnarray}

What do we see? Yes. The parallelism contains both interleaving and true concurrency. This may be why true concurrency is called \emph{\textbf{true} concurrency}.

\subsection{Process Creation}\label{pc}

To model process creation, we introduce a unity operator $\mathbf{new}$ inspired by Baeten's work on process creation \cite{PC}.

The transition rules of $\mathbf{new}$ are as Table \ref{TRForNew} shows. 

\begin{center}
    \begin{table}
        $$\frac{}{\mathbf{new}(x)\rightarrow x}\quad \frac{x\xrightarrow{e} x'}{\mathbf{new}(x)\xrightarrow{e} \mathbf{new}(x')}$$
        \caption{Transition rule of the $\mathbf{new}$ operator}
        \label{TRForNew}
    \end{table}
\end{center}


And the transition rules of the sequential composition $\cdot$ are adjusted to the followings, as Table \ref{TRForNew2} shows.

\begin{center}
    \begin{table}
        $$\frac{x\xrightarrow{e}\surd}{x\cdot y\xrightarrow{e}y}\quad \frac{x\xrightarrow{e}x'}{x\cdot y\xrightarrow{e}x'\cdot y}$$

        $$\frac{x\rightarrow x'\quad y\xrightarrow{e} y'}{x\cdot y\xrightarrow{e} x'\between y'}\quad
        \frac{x\xrightarrow{e} x'\quad y\rightarrow y'}{x\cdot y\xrightarrow{e} x'\between y'}$$

        $$\frac{x\xrightarrow{e_1} x'\quad y\xrightarrow{e_2} y'\quad\gamma(e_1,e_2)\textrm{ does not exist}}{x\cdot y\xrightarrow{\{e_1,e_2\}} x'\between y'}\quad
        \frac{x\xrightarrow{e_1} x'\quad y\xrightarrow{e_2} y'\quad\gamma(e_1,e_2)\textrm{ exists}}{x\cdot y\xrightarrow{\gamma(e_1,e_2)} x'\between y'}$$
        \caption{New transition rule of the $\cdot$ operator}
        \label{TRForNew2}
    \end{table}
\end{center}

We design the axioms for the $\mathbf{new}$ operator in Table \ref{AxiomsForNew}.
\begin{center}
\begin{table}
  \begin{tabular}{@{}ll@{}}
\hline No. &Axiom\\
  $PC1$ & if $isP(x)$, then $\mathbf{new}(x)\cdot y=x\between y$\\
  $PC2$ & $\mathbf{new}(x)\between y=x\between y$\\
  $PC3$ & $x\between\mathbf{new}(y)=x\between y$\\
\end{tabular}
\caption{Axioms of $\mathbf{new}$ operator}
\label{AxiomsForNew}
\end{table}
\end{center}

\begin{theorem}[Soundness of the $\mathbf{new}$ operator]
Let $x$ and $y$ be $APTC_{\tau}$ with guarded linear recursion and the $\mathbf{new}$ operator terms. If $APTC_{\tau}$ with guarded linear recursion and $\mathbf{new}$ operator $\vdash x=y$, then
\begin{enumerate}
  \item $x\approx_{rbs} y$;
  \item $x\approx_{rbp} y$;
  \item $x\approx_{rbhp} y$.
\end{enumerate}
\end{theorem}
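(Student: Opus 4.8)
The plan is to prove soundness by induction on the structure of a derivation $APTC_{\tau}$ with guarded linear recursion and $\mathbf{new}\vdash x=y$, reducing the whole claim to two ingredients. First, that each of $\approx_{rbs}$, $\approx_{rbp}$, $\approx_{rbhp}$ is a congruence with respect to every operator of the extended signature, in particular $\mathbf{new}$ and the sequential composition $\cdot$ under its adjusted rules. Second, that every new axiom---$PC1$, $PC2$, $PC3$---is individually sound for all three equivalences. The soundness of the older axioms is already available from the Soundness of $APTC_{\tau}$ with guarded linear recursion theorem, so only the new material needs fresh verification; congruence then lets me glue sound equations together along the derivation.

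For the congruence step I would first argue that the transition rules in Table \ref{TRForNew} and Table \ref{TRForNew2} fit a suitable rule format (a rooted-branching-safe, predicate-and-negative-premise format) that automatically yields congruence of the rooted branching equivalences, and otherwise verify it directly: given a rooted branching (pomset/step/hp-) bisimulation $R$ witnessing $x\approx x'$, I build the relation $\{(\mathbf{new}(p),\mathbf{new}(p')):(p,p')\in R\}\cup R$ and check the transfer conditions using the two rules for $\mathbf{new}$, where the creation transition $\mathbf{new}(x)\rightarrow x$ is matched structurally and the labelled transitions are inherited from those of $x$. Because the rules for $\cdot$ have been changed, I must re-confirm congruence for $\cdot$ as well, tracking the new cases in which one side fires a creation transition $\rightarrow$ and the resulting state is a merge $\between$.

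For axiom soundness I would exhibit an explicit witnessing relation for each of $PC1$, $PC2$, $PC3$. The governing observation is that $\mathbf{new}(x)$ always offers the creation step $\mathbf{new}(x)\rightarrow x$, and the adjusted $\cdot$-rules convert such a creation step on the left of $\cdot$ into a parallel merge $\between$; hence $\mathbf{new}(x)\cdot y$ and $x\between y$ have matching transition capabilities whenever $isP(x)$ holds, which gives $PC1$. For $PC2$ and $PC3$ the creation step is absorbed on the spot so that $\mathbf{new}(x)\between y$ and $x\between y$ (respectively $x\between\mathbf{new}(y)$) generate the same labelled transitions up to the structural $\rightarrow$ moves, and I would take the relation generated by these pairs together with the identity on all reachable residuals. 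In each case I verify the four clauses of rooted branching bisimulation, treating the unlabelled creation transition as a structural silent move matched by a possibly empty response, and for $\approx_{rbhp}$ I additionally carry along the configuration isomorphism $f$, checking that creation steps add no events and thus leave $f$ unchanged, so the posetal structure is preserved.

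The main obstacle I anticipate is the careful bookkeeping around the new creation transition $\rightarrow$ within the branching framework: I must decide consistently how $\rightarrow$ is treated relative to the $\tau$-steps of branching bisimulation (it carries no label and adds no event, so it should behave like a structural silent move), and then confirm that the modified transition rules for $\cdot$ in Table \ref{TRForNew2} do not retroactively break the soundness of the older sequential-composition axioms $A4$ and $A5$, whose operational meaning has now been enriched. Establishing the $\approx_{rbhp}$ clause is the most delicate, since the isomorphism $f$ between configurations must be shown to be preserved across both the creation steps and the merges $\between$ produced by the new rules; once the event-preservation of $\rightarrow$ is pinned down, the pomset and step cases follow by the usual weakening of the hp-conditions.
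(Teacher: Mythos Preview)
The paper does not actually supply a proof of this theorem: in section~\ref{pc} the soundness and completeness theorems for the $\mathbf{new}$ operator are simply stated, with no accompanying \texttt{proof} environment, following the pattern set earlier (``all proofs of the conclusions are left to the reader''). So there is no paper proof to compare against in the strict sense.

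That said, your proposal follows exactly the standard template that the paper implicitly relies on throughout chapter~\ref{tcpa}: reduce soundness to (i) congruence of each rooted branching equivalence with respect to the extended signature and (ii) soundness of each new axiom, then close under equational inference. This is the right decomposition, and your identification of the delicate points---the status of the unlabelled creation transition $\rightarrow$ in the branching framework, and whether the adjusted rules for $\cdot$ in Table~\ref{TRForNew2} disturb the previously established soundness of $A4$, $A5$---is on target. One small remark: the paper already handles an unlabelled $\rightarrow$ transition for the shadow constant $\circledS$ (Table~\ref{TRForShadow}) and adjusts the notion of guarded linear recursive specification accordingly (Definition~\ref{GLRSS}), so you should treat the creation transition $\mathbf{new}(x)\rightarrow x$ in the same way rather than inventing a fresh convention; this also means the congruence argument can piggyback on whatever was done for $\circledS$. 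With that alignment your plan is sound and would constitute the proof the paper omits.
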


\begin{theorem}[Completeness of the $\mathbf{new}$ operator]
Let $p$ and $q$ be closed $APTC_{\tau}$ with guarded linear recursion and $CFAR$ and the $\mathbf{new}$ operator terms, then,
\begin{enumerate}
  \item if $p\approx_{rbs} q$ then $p=q$;
  \item if $p\approx_{rbp} q$ then $p=q$;
  \item if $p\approx_{rbhp} q$ then $p=q$.
\end{enumerate}
\end{theorem}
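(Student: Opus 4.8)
The plan is to reduce this statement to the completeness theorem already established for $APTC_{\tau}$ with guarded linear recursion and $CFAR$, by showing that the $\mathbf{new}$ operator can be eliminated from every closed term. Concretely, I would first prove an \emph{elimination lemma}: for every closed term $p$ of $APTC_{\tau}$ with guarded linear recursion, $CFAR$ and the $\mathbf{new}$ operator there is a $\mathbf{new}$-free closed term $p^{\circ}$ with $\vdash p = p^{\circ}$, where $p^{\circ}$ is again a guarded linear recursive process (i.e. of the form $\langle X_1|E\rangle$ for a guarded linear specification $E$). Granting this lemma, the completeness argument is routine. Given $p \approx_{rb*} q$ for $* \in \{s,p,hp\}$, eliminate $\mathbf{new}$ to obtain $p^{\circ}, q^{\circ}$ with $\vdash p = p^{\circ}$ and $\vdash q = q^{\circ}$; by the Soundness of the $\mathbf{new}$ operator these provable equalities give $p \approx_{rb*} p^{\circ}$ and $q \approx_{rb*} q^{\circ}$, so by transitivity $p^{\circ} \approx_{rb*} q^{\circ}$. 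Since $p^{\circ}$ and $q^{\circ}$ lie in the $\mathbf{new}$-free fragment, the completeness theorem for $APTC_{\tau}$ with guarded linear recursion and $CFAR$ yields $\vdash p^{\circ} = q^{\circ}$, and chaining the three equalities gives $\vdash p = q$.

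The substance is in the elimination lemma, which I would prove by induction on the structure of $p$, using $PC1$--$PC3$ to push every occurrence of $\mathbf{new}$ outward and then discard it. A top-level factor $\mathbf{new}(x)\cdot y$ is rewritten by $PC1$ (when $isP(x)$ holds) to $x\between y$, while a $\mathbf{new}(x)$ occurring on the left or right of a merge is absorbed by $PC2$ or $PC3$ into $x\between y$; in each case $\mathbf{new}$ is replaced by $\between$. After this replacement, the Elimination theorems already available in the excerpt for $APTC$, its recursion extension, and its silent-step/abstraction extensions allow me to steady the resulting term back into basic form built from $\cdot$, $+$ and $\leftmerge$. For terms involving recursion I would carry the elimination through the right-hand sides $t_i$ of the guarded linear specification, replacing each $\mathbf{new}$ by $\between$ and renormalizing, and then check that the rewritten specification is still \emph{linear} and \emph{guarded} in the sense of Definition~\ref{GLRSS}: no new infinite $\tau$-sequence or $\circledS$-sequence is introduced, so the resulting $\mathbf{new}$-free specification genuinely satisfies the hypotheses of the underlying completeness theorem.

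The main obstacle will be the elimination lemma, and within it two points deserve care. First, the side condition $isP(x)$ in $PC1$ means $\mathbf{new}(x)\cdot y$ can only be collapsed when $x$ is a process rather than an arbitrary subterm, so the induction must either ensure this predicate holds at the point of rewriting or route around it via $PC2$/$PC3$; making this case analysis exhaustive is the delicate part. Second, I must verify that pushing $\mathbf{new}$ through a guarded linear recursive specification and re-linearizing preserves guardedness, since replacing $\mathbf{new}$ by $\between$ changes the head structure of summands and could \emph{a priori} expose an unguarded variable occurrence; ruling this out is what makes the reduction to the $\mathbf{new}$-free fragment legitimate. Once the elimination lemma is secured, the remainder is a direct application of soundness and of the prior completeness result and requires no further computation.
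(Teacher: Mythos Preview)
The paper does not actually supply a proof of this theorem: like most of the metatheorems in the surrounding subsections (cf.\ the remark at the start of \S\ref{ahhpb} that ``all proofs of the conclusions are left to the reader''), the Completeness of the $\mathbf{new}$ operator is stated bare, with no proof environment following it. So there is nothing to compare your proposal against directly.

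That said, your plan is exactly the standard route one expects here and matches the template the paper uses elsewhere (e.g.\ the shadow-constant completeness, Theorem~\ref{CRenaming}): prove an elimination result that removes every occurrence of $\mathbf{new}$ using $PC1$--$PC3$, land in the $\mathbf{new}$-free fragment, and invoke the already-established completeness of $APTC_{\tau}$ with guarded linear recursion and $CFAR$. The two cautions you flag are the right ones. The $isP(x)$ side condition on $PC1$ is genuinely the only place the argument is not purely mechanical, and your fallback of routing through $PC2$/$PC3$ (which have no side condition) is the correct workaround when a bare $\mathbf{new}(x)$ sits under $\between$. For guardedness preservation, note that $PC1$--$PC3$ each replace $\mathbf{new}(x)$ by $x$ inside a $\between$-context, so no new unguarded recursion variable can appear and no $\tau$- or $\circledS$-loop is introduced; this part is less delicate than you suggest. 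Your proposal is sound and is presumably what the author had in mind.
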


\subsection{Asynchronous Communication}\label{ac}

The communication in APTC is synchronous, for two atomic actions $a,b\in A$, if there exists a communication between $a$ and $b$, then they merge into a new communication action
$\gamma(a,b)$; otherwise let $\gamma(a,b)=\delta$.

Asynchronous communication between actions $a,b\in A$ does not exist a merge $\gamma(a,b)$, and it is only explicitly defined by the causality relation $a\leq b$ to ensure that the send
action $a$ to be executed before the receive action $b$.

APTC naturally support asynchronous communication to be adapted to the following aspects:

\begin{enumerate}
  \item remove the communication merge operator $\mid$, just because there does not exist a communication merger $\gamma(a,b)$ between two asynchronous communicating action $a,b\in A$;
  \item remove the asynchronous communicating actions $a,b\in A$ from $H$ of the encapsulation operator $\partial_H$;
  \item ensure the send action $a$ to be executed before the receive action $b$, by inserting appropriate numbers of placeholders during modeling time; or by adding a causality constraint
  between the communicating actions $a\leq b$, all process terms violate this constraint will cause deadlocks.
\end{enumerate}

\subsection{Guards}\label{gu}

To have the ability of data manipulation, we introduce guards into APTC in this section.

\subsubsection{Operational Semantics}{\label{os2}}

In this section, we extend truly concurrent bisimilarities to the ones containing data states.

\begin{definition}[Prime event structure with silent event and empty event]\label{PESG}
Let $\Lambda$ be a fixed set of labels, ranged over $a,b,c,\cdots$ and $\tau,\epsilon$. A ($\Lambda$-labelled) prime event structure with silent event $\tau$ and empty event $\epsilon$ is a tuple $\mathcal{E}=\langle \mathbb{E}, \leq, \sharp, \lambda\rangle$, where $\mathbb{E}$ is a denumerable set of events, including the silent event $\tau$ and empty event $\epsilon$. Let $\hat{\mathbb{E}}=\mathbb{E}\backslash\{\tau,\epsilon\}$, exactly excluding $\tau$ and $\epsilon$, it is obvious that $\hat{\tau^*}=\epsilon$. Let $\lambda:\mathbb{E}\rightarrow\Lambda$ be a labelling function and let $\lambda(\tau)=\tau$ and $\lambda(\epsilon)=\epsilon$. And $\leq$, $\sharp$ are binary relations on $\mathbb{E}$, called causality and conflict respectively, such that:

\begin{enumerate}
  \item $\leq$ is a partial order and $\lceil e \rceil = \{e'\in \mathbb{E}|e'\leq e\}$ is finite for all $e\in \mathbb{E}$. It is easy to see that $e\leq\tau^*\leq e'=e\leq\tau\leq\cdots\leq\tau\leq e'$, then $e\leq e'$.
  \item $\sharp$ is irreflexive, symmetric and hereditary with respect to $\leq$, that is, for all $e,e',e''\in \mathbb{E}$, if $e\sharp e'\leq e''$, then $e\sharp e''$.
\end{enumerate}

Then, the concepts of consistency and concurrency can be drawn from the above definition:

\begin{enumerate}
  \item $e,e'\in \mathbb{E}$ are consistent, denoted as $e\frown e'$, if $\neg(e\sharp e')$. A subset $X\subseteq \mathbb{E}$ is called consistent, if $e\frown e'$ for all $e,e'\in X$.
  \item $e,e'\in \mathbb{E}$ are concurrent, denoted as $e\parallel e'$, if $\neg(e\leq e')$, $\neg(e'\leq e)$, and $\neg(e\sharp e')$.
\end{enumerate}
\end{definition}

\begin{definition}[Configuration]
Let $\mathcal{E}$ be a PES. A (finite) configuration in $\mathcal{E}$ is a (finite) consistent subset of events $C\subseteq \mathcal{E}$, closed with respect to causality (i.e. $\lceil C\rceil=C$), and a data state $s\in S$ with $S$ the set of all data states, denoted $\langle C, s\rangle$. The set of finite configurations of $\mathcal{E}$ is denoted by $\langle\mathcal{C}(\mathcal{E}), S\rangle$. We let $\hat{C}=C\backslash\{\tau\}\cup\{\epsilon\}$.
\end{definition}

A consistent subset of $X\subseteq \mathbb{E}$ of events can be seen as a pomset. Given $X, Y\subseteq \mathbb{E}$, $\hat{X}\sim \hat{Y}$ if $\hat{X}$ and $\hat{Y}$ are isomorphic as pomsets. In the following of the paper, we say $C_1\sim C_2$, we mean $\hat{C_1}\sim\hat{C_2}$.

\begin{definition}[Pomset transitions and step]
Let $\mathcal{E}$ be a PES and let $C\in\mathcal{C}(\mathcal{E})$, and $\emptyset\neq X\subseteq \mathbb{E}$, if $C\cap X=\emptyset$ and $C'=C\cup X\in\mathcal{C}(\mathcal{E})$, then $\langle C,s\rangle\xrightarrow{X} \langle C',s'\rangle$ is called a pomset transition from $\langle C,s\rangle$ to $\langle C',s'\rangle$. When the events in $X$ are pairwise concurrent, we say that $\langle C,s\rangle\xrightarrow{X}\langle C',s'\rangle$ is a step. It is obvious that $\rightarrow^*\xrightarrow{X}\rightarrow^*=\xrightarrow{X}$ and $\rightarrow^*\xrightarrow{e}\rightarrow^*=\xrightarrow{e}$ for any $e\in\mathbb{E}$ and $X\subseteq\mathbb{E}$.
\end{definition}

\begin{definition}[Weak pomset transitions and weak step]
Let $\mathcal{E}$ be a PES and let $C\in\mathcal{C}(\mathcal{E})$, and $\emptyset\neq X\subseteq \hat{\mathbb{E}}$, if $C\cap X=\emptyset$ and $\hat{C'}=\hat{C}\cup X\in\mathcal{C}(\mathcal{E})$, then $\langle C,s\rangle\xRightarrow{X} \langle C',s'\rangle$ is called a weak pomset transition from $\langle C,s\rangle$ to $\langle C',s'\rangle$, where we define $\xRightarrow{e}\triangleq\xrightarrow{\tau^*}\xrightarrow{e}\xrightarrow{\tau^*}$. And $\xRightarrow{X}\triangleq\xrightarrow{\tau^*}\xrightarrow{e}\xrightarrow{\tau^*}$, for every $e\in X$. When the events in $X$ are pairwise concurrent, we say that $\langle C,s\rangle\xRightarrow{X}\langle C',s'\rangle$ is a weak step.
\end{definition}

We will also suppose that all the PESs in this paper are image finite, that is, for any PES $\mathcal{E}$ and $C\in \mathcal{C}(\mathcal{E})$ and $a\in \Lambda$, $\{e\in \mathbb{E}|\langle C,s\rangle\xrightarrow{e} \langle C',s'\rangle\wedge \lambda(e)=a\}$ and $\{e\in\hat{\mathbb{E}}|\langle C,s\rangle\xRightarrow{e} \langle C',s'\rangle\wedge \lambda(e)=a\}$ is finite.

\begin{definition}[Pomset, step bisimulation]\label{PSBG}
Let $\mathcal{E}_1$, $\mathcal{E}_2$ be PESs. A pomset bisimulation is a relation $R\subseteq\langle\mathcal{C}(\mathcal{E}_1),S\rangle\times\langle\mathcal{C}(\mathcal{E}_2),S\rangle$, such that if $(\langle C_1,s\rangle,\langle C_2,s\rangle)\in R$, and $\langle C_1,s\rangle\xrightarrow{X_1}\langle C_1',s'\rangle$ then $\langle C_2,s\rangle\xrightarrow{X_2}\langle C_2',s'\rangle$, with $X_1\subseteq \mathbb{E}_1$, $X_2\subseteq \mathbb{E}_2$, $X_1\sim X_2$ and $(\langle C_1',s'\rangle,\langle C_2',s'\rangle)\in R$ for all $s,s'\in S$, and vice-versa. We say that $\mathcal{E}_1$, $\mathcal{E}_2$ are pomset bisimilar, written $\mathcal{E}_1\sim_p\mathcal{E}_2$, if there exists a pomset bisimulation $R$, such that $(\langle\emptyset,\emptyset\rangle,\langle\emptyset,\emptyset\rangle)\in R$. By replacing pomset transitions with steps, we can get the definition of step bisimulation. When PESs $\mathcal{E}_1$ and $\mathcal{E}_2$ are step bisimilar, we write $\mathcal{E}_1\sim_s\mathcal{E}_2$.
\end{definition}

\begin{definition}[Weak pomset, step bisimulation]\label{WPSBG}
Let $\mathcal{E}_1$, $\mathcal{E}_2$ be PESs. A weak pomset bisimulation is a relation $R\subseteq\langle\mathcal{C}(\mathcal{E}_1),S\rangle\times\langle\mathcal{C}(\mathcal{E}_2),S\rangle$, such that if $(\langle C_1,s\rangle,\langle C_2,s\rangle)\in R$, and $\langle C_1,s\rangle\xRightarrow{X_1}\langle C_1',s'\rangle$ then $\langle C_2,s\rangle\xRightarrow{X_2}\langle C_2',s'\rangle$, with $X_1\subseteq \hat{\mathbb{E}_1}$, $X_2\subseteq \hat{\mathbb{E}_2}$, $X_1\sim X_2$ and $(\langle C_1',s'\rangle,\langle C_2',s'\rangle)\in R$ for all $s,s'\in S$, and vice-versa. We say that $\mathcal{E}_1$, $\mathcal{E}_2$ are weak pomset bisimilar, written $\mathcal{E}_1\approx_p\mathcal{E}_2$, if there exists a weak pomset bisimulation $R$, such that $(\langle\emptyset,\emptyset\rangle,\langle\emptyset,\emptyset\rangle)\in R$. By replacing weak pomset transitions with weak steps, we can get the definition of weak step bisimulation. When PESs $\mathcal{E}_1$ and $\mathcal{E}_2$ are weak step bisimilar, we write $\mathcal{E}_1\approx_s\mathcal{E}_2$.
\end{definition}

\begin{definition}[Posetal product]
Given two PESs $\mathcal{E}_1$, $\mathcal{E}_2$, the posetal product of their configurations, denoted $\langle\mathcal{C}(\mathcal{E}_1),S\rangle\overline{\times}\langle\mathcal{C}(\mathcal{E}_2),S\rangle$, is defined as

$$\{(\langle C_1,s\rangle,f,\langle C_2,s\rangle)|C_1\in\mathcal{C}(\mathcal{E}_1),C_2\in\mathcal{C}(\mathcal{E}_2),f:C_1\rightarrow C_2 \textrm{ isomorphism}\}.$$

A subset $R\subseteq\langle\mathcal{C}(\mathcal{E}_1),S\rangle\overline{\times}\langle\mathcal{C}(\mathcal{E}_2),S\rangle$ is called a posetal relation. We say that $R$ is downward closed when for any $(\langle C_1,s\rangle,f,\langle C_2,s\rangle),(\langle C_1',s'\rangle,f',\langle C_2',s'\rangle)\in \langle\mathcal{C}(\mathcal{E}_1),S\rangle\overline{\times}\langle\mathcal{C}(\mathcal{E}_2),S\rangle$, if $(\langle C_1,s\rangle,f,\langle C_2,s\rangle)\subseteq (\langle C_1',s'\rangle,f',\langle C_2',s'\rangle)$ pointwise and $(\langle C_1',s'\rangle,f',\langle C_2',s'\rangle)\in R$, then $(\langle C_1,s\rangle,f,\langle C_2,s\rangle)\in R$.

For $f:X_1\rightarrow X_2$, we define $f[x_1\mapsto x_2]:X_1\cup\{x_1\}\rightarrow X_2\cup\{x_2\}$, $z\in X_1\cup\{x_1\}$,(1)$f[x_1\mapsto x_2](z)=
x_2$,if $z=x_1$;(2)$f[x_1\mapsto x_2](z)=f(z)$, otherwise. Where $X_1\subseteq \mathbb{E}_1$, $X_2\subseteq \mathbb{E}_2$, $x_1\in \mathbb{E}_1$, $x_2\in \mathbb{E}_2$.
\end{definition}

\begin{definition}[Weakly posetal product]
Given two PESs $\mathcal{E}_1$, $\mathcal{E}_2$, the weakly posetal product of their configurations, denoted $\langle\mathcal{C}(\mathcal{E}_1),S\rangle\overline{\times}\langle\mathcal{C}(\mathcal{E}_2),S\rangle$, is defined as

$$\{(\langle C_1,s\rangle,f,\langle C_2,s\rangle)|C_1\in\mathcal{C}(\mathcal{E}_1),C_2\in\mathcal{C}(\mathcal{E}_2),f:\hat{C_1}\rightarrow \hat{C_2} \textrm{ isomorphism}\}.$$

A subset $R\subseteq\langle\mathcal{C}(\mathcal{E}_1),S\rangle\overline{\times}\langle\mathcal{C}(\mathcal{E}_2),S\rangle$ is called a weakly posetal relation. We say that $R$ is downward closed when for any $(\langle C_1,s\rangle,f,\langle C_2,s\rangle),(\langle C_1',s'\rangle,f,\langle C_2',s'\rangle)\in \langle\mathcal{C}(\mathcal{E}_1),S\rangle\overline{\times}\langle\mathcal{C}(\mathcal{E}_2),S\rangle$, if $(\langle C_1,s\rangle,f,\langle C_2,s\rangle)\subseteq (\langle C_1',s'\rangle,f',\langle C_2',s'\rangle)$ pointwise and $(\langle C_1',s'\rangle,f',\langle C_2',s'\rangle)\in R$, then $(\langle C_1,s\rangle,f,\langle C_2,s\rangle)\in R$.

For $f:X_1\rightarrow X_2$, we define $f[x_1\mapsto x_2]:X_1\cup\{x_1\}\rightarrow X_2\cup\{x_2\}$, $z\in X_1\cup\{x_1\}$,(1)$f[x_1\mapsto x_2](z)=
x_2$,if $z=x_1$;(2)$f[x_1\mapsto x_2](z)=f(z)$, otherwise. Where $X_1\subseteq \hat{\mathbb{E}_1}$, $X_2\subseteq \hat{\mathbb{E}_2}$, $x_1\in \hat{\mathbb{E}}_1$, $x_2\in \hat{\mathbb{E}}_2$. Also, we define $f(\tau^*)=f(\tau^*)$.
\end{definition}

\begin{definition}[(Hereditary) history-preserving bisimulation]\label{HHPBG}
A history-preserving (hp-) bisimulation is a posetal relation $R\subseteq\langle\mathcal{C}(\mathcal{E}_1),S\rangle\overline{\times}\langle\mathcal{C}(\mathcal{E}_2),S\rangle$ such that if $(\langle C_1,s\rangle,f,\langle C_2,s\rangle)\in R$, and $\langle C_1,s\rangle\xrightarrow{e_1} \langle C_1',s'\rangle$, then $\langle C_2,s\rangle\xrightarrow{e_2} \langle C_2',s'\rangle$, with $(\langle C_1',s'\rangle,f[e_1\mapsto e_2],\langle C_2',s'\rangle)\in R$ for all $s,s'\in S$, and vice-versa. $\mathcal{E}_1,\mathcal{E}_2$ are history-preserving (hp-)bisimilar and are written $\mathcal{E}_1\sim_{hp}\mathcal{E}_2$ if there exists a hp-bisimulation $R$ such that $(\langle\emptyset,\emptyset\rangle,\emptyset,\langle\emptyset,\emptyset\rangle)\in R$.

A hereditary history-preserving (hhp-)bisimulation is a downward closed hp-bisimulation. $\mathcal{E}_1,\mathcal{E}_2$ are hereditary history-preserving (hhp-)bisimilar and are written $\mathcal{E}_1\sim_{hhp}\mathcal{E}_2$.
\end{definition}

\begin{definition}[Weak (hereditary) history-preserving bisimulation]\label{WHHPBG}
A weak history-preserving (hp-) bisimulation is a weakly posetal relation $R\subseteq\langle\mathcal{C}(\mathcal{E}_1),S\rangle\overline{\times}\langle\mathcal{C}(\mathcal{E}_2),S\rangle$ such that if $(\langle C_1,s\rangle,f,\langle C_2,s\rangle)\in R$, and $\langle C_1,s\rangle\xRightarrow{e_1} \langle C_1',s'\rangle$, then $\langle C_2,s\rangle\xRightarrow{e_2} \langle C_2',s'\rangle$, with $(\langle C_1',s'\rangle,f[e_1\mapsto e_2],\langle C_2',s'\rangle)\in R$ for all $s,s'\in S$, and vice-versa. $\mathcal{E}_1,\mathcal{E}_2$ are weak history-preserving (hp-)bisimilar and are written $\mathcal{E}_1\approx_{hp}\mathcal{E}_2$ if there exists a weak hp-bisimulation $R$ such that $(\langle\emptyset,\emptyset\rangle,\emptyset,\langle\emptyset,\emptyset\rangle)\in R$.

A weakly hereditary history-preserving (hhp-)bisimulation is a downward closed weak hp-bisimulation. $\mathcal{E}_1,\mathcal{E}_2$ are weakly hereditary history-preserving (hhp-)bisimilar and are written $\mathcal{E}_1\approx_{hhp}\mathcal{E}_2$.
\end{definition}

\subsubsection{$BATC$ with Guards}{\label{batcg}}

In this subsection, we will discuss the guards for $BATC$, which is denoted as $BATC_G$. Let $\mathbb{E}$ be the set of atomic events (actions), and we assume that there is a data set $\Delta$ and data $D_1,\cdots,D_n\in\Delta$, the data variable $d_1,\cdots,d_n$ range over $\Delta$, and $d_i$ has the same data type as $D_i$ and can have a substitution $D_i/d_i$, for process $x$, $x[D_i/d_i]$ denotes that all occurrences of $d_i$ in $x$ are replaced by $D_i$. And also the atomic action $e$ may manipulate on data and has the form $e(d_1,\cdots,d_n)$ or $e(D_1,\cdots,D_n)$. $G_{at}$ be the set of atomic guards, $\delta$ be the deadlock constant, and $\epsilon$ be the empty event. We extend $G_{at}$ to the set of basic guards $G$ with element $\phi,\psi,\cdots$, which is generated by the following formation rules:

$$\phi::=\delta|\epsilon|\neg\phi|\psi\in G_{at}|\phi+\psi|\phi\cdot\psi$$

In the following, let $e_1, e_2, e_1', e_2'\in \mathbb{E}$, $\phi,\psi\in G$ and let variables $x,y,z$ range over the set of terms for true concurrency, $p,q,s$ range over the set of closed terms. The predicate $test(\phi,s)$ represents that $\phi$ holds in the state $s$, and $test(\epsilon,s)$ holds and $test(\delta,s)$ does not hold. $effect(e,s)\in S$ denotes $s'$ in $s\xrightarrow{e}s'$. The predicate weakest precondition $wp(e,\phi)$ denotes that $\forall s\in S, test(\phi,effect(e,s))$ holds.

The set of axioms of $BATC_G$ consists of the laws given in Table \ref{AxiomsForBATCG}.

\begin{center}
    \begin{table}
        \begin{tabular}{@{}ll@{}}
            \hline No. &Axiom\\
            $A1$ & $x+ y = y+ x$\\
            $A2$ & $(x+ y)+ z = x+ (y+ z)$\\
            $A3$ & $x+ x = x$\\
            $A4$ & $(x+ y)\cdot z = x\cdot z + y\cdot z$\\
            $A5$ & $(x\cdot y)\cdot z = x\cdot(y\cdot z)$\\
            $A6$ & $x+\delta = x$\\
            $A7$ & $\delta\cdot x = \delta$\\
            $A8$ & $\epsilon\cdot x = x$\\
            $A9$ & $x\cdot\epsilon = x$\\
            $G1$ & $\phi\cdot\neg\phi = \delta$\\
            $G2$ & $\phi+\neg\phi = \epsilon$\\
            $G3$ & $\phi\delta = \delta$\\
            $G4$ & $\phi(x+y)=\phi x+\phi y$\\
            $G5$ & $\phi(x\cdot y)= \phi x\cdot y$\\
            $G6$ & $(\phi+\psi)x = \phi x + \psi x$\\
            $G7$ & $(\phi\cdot \psi)\cdot x = \phi\cdot(\psi\cdot x)$\\
            $G8$ & $\phi=\epsilon$ if $\forall s\in S.test(\phi,s)$\\
            $G9$ & $\phi_0\cdot\cdots\cdot\phi_n = \delta$ if $\forall s\in S,\exists i\leq n.test(\neg\phi_i,s)$\\
            $G10$ & $wp(e,\phi)e\phi=wp(e,\phi)e$\\
            $G11$ & $\neg wp(e,\phi)e\neg\phi=\neg wp(e,\phi)e$\\
        \end{tabular}
        \caption{Axioms of $BATC_G$}
        \label{AxiomsForBATCG}
    \end{table}
\end{center}

Note that, by eliminating atomic event from the process terms, the axioms in Table \ref{AxiomsForBATCG} will lead to a Boolean Algebra. And $G9$ is a precondition of $e$ and $\phi$, $G10$ is the weakest precondition of $e$ and $\phi$. A data environment with $effect$ function is sufficiently deterministic, and it is obvious that if the weakest precondition is expressible and $G9$, $G10$ are sound, then the related data environment is sufficiently deterministic.

\begin{definition}[Basic terms of $BATC_G$]\label{BTBATCG}
The set of basic terms of $BATC_G$, $\mathcal{B}(BATC_G)$, is inductively defined as follows:
\begin{enumerate}
  \item $\mathbb{E}\subset\mathcal{B}(BATC_G)$;
  \item $G\subset\mathcal{B}(BATC_G)$;
  \item if $e\in \mathbb{E}, t\in\mathcal{B}(BATC_G)$ then $e\cdot t\in\mathcal{B}(BATC_G)$;
  \item if $\phi\in G, t\in\mathcal{B}(BATC_G)$ then $\phi\cdot t\in\mathcal{B}(BATC_G)$;
  \item if $t,s\in\mathcal{B}(BATC_G)$ then $t+ s\in\mathcal{B}(BATC_G)$.
\end{enumerate}
\end{definition}

\begin{theorem}[Elimination theorem of $BATC_G$]\label{ETBATCG}
Let $p$ be a closed $BATC_G$ term. Then there is a basic $BATC_G$ term $q$ such that $BATC_G\vdash p=q$.
\end{theorem}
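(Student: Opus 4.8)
The plan is to prove the theorem by structural induction on the closed term $p$, reducing the only non-trivial case (sequential composition) to an auxiliary lemma asserting that the product of two basic terms is provably equal to a basic term. Concretely, I would orient the relevant axioms of Table~\ref{AxiomsForBATCG} from left to right as rewrite rules --- principally $A4$ (right distributivity) and $A5$ (associativity of $\cdot$), together with the cleanup rules $A7$ ($\delta\cdot x=\delta$), $A8$ ($\epsilon\cdot x=x$), $A9$ ($x\cdot\epsilon=x$) and the guard-distribution rules $G3$--$G7$ --- and show that exhaustive rewriting drives every closed term into the shape described by Definition~\ref{BTBATCG}. Since each rule is an instance of a $BATC_G$ axiom, every such rewrite step is a provable equality, so it suffices to establish that the process terminates in a basic term.

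First I would record the key auxiliary lemma: for all basic terms $s,t\in\mathcal{B}(BATC_G)$, there is a basic term $u$ with $BATC_G\vdash s\cdot t=u$. I would prove this by induction on the structure of $s$ (with $t$ held arbitrary). If $s=e\in\mathbb{E}$ or $s=\phi\in G$, then $s\cdot t$ is already basic by clauses (3) and (4) of Definition~\ref{BTBATCG}. If $s=e\cdot s'$ or $s=\phi\cdot s'$ with $s'$ basic, then $A5$ gives $s\cdot t=e\cdot(s'\cdot t)$ (resp. $\phi\cdot(s'\cdot t)$); the induction hypothesis applied to the structurally smaller $s'$ yields a basic term provably equal to $s'\cdot t$, and prefixing by $e$ (resp. $\phi$) keeps it basic. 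If $s=s_1+s_2$, then $A4$ gives $s\cdot t=s_1\cdot t+s_2\cdot t$, and applying the induction hypothesis to $s_1$ and $s_2$ (both structurally smaller) produces two basic terms whose sum is basic by clause (5).

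With the lemma in hand, the main statement follows by a routine structural induction on $p$: an atomic event or a guard (including the constants $\delta,\epsilon$, which lie in $G$) is basic by clauses (1)--(2); a sum $p_1+p_2$ is basic by the induction hypothesis and clause (5); and a product $p_1\cdot p_2$ is handled by first replacing $p_1,p_2$ by basic terms via the induction hypothesis and then invoking the lemma. I expect the main subtlety to lie precisely in the sequential-composition case, and specifically in the well-foundedness of the elimination procedure there. If one instead packages the argument as a terminating term rewriting system, the genuine obstacle is that $A4$ duplicates its right argument, so a naive symbol-count measure need not decrease under the interaction of $A4$ and $A5$; one must exhibit a suitable weight function (or a recursive path ordering giving $\cdot$ higher precedence than $+$) to certify termination. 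The structural-induction formulation above sidesteps this by inducting on the left factor, since both $A5$ and $A4$ strictly shrink the left factor that drives the recursion, so the only care needed is to confirm that guards are admitted wholesale as basic building blocks by clauses (2) and (4), which makes the $s=\phi$ and $s=\phi\cdot s'$ cases go through without having to simplify guard expressions internally.
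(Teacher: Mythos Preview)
The paper does not actually supply a proof of this theorem: in the preliminaries chapter it is stated as a known fact (imported from the cited works \cite{ATC,CTC,PITC}) and the text immediately moves on to the operational semantics. So there is no proof in the paper against which to compare your argument step by step.

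That said, your proposal is correct and is precisely the standard route one takes for elimination theorems of this type. The decomposition into (i) a closure lemma ``basic $\cdot$ basic is provably basic'' by induction on the left factor, and (ii) an outer structural induction on $p$, is exactly how such results are established in the ACP tradition (cf.\ Fokkink \cite{ACP}). Your case analysis is complete for the grammar of $\mathcal{B}(BATC_G)$ in Definition~\ref{BTBATCG}, and your observation that $\delta,\epsilon\in G$ (so the constants are handled by clause~(2)) is accurate given the formation rules for guards stated just before Table~\ref{AxiomsForBATCG}. The remark that inducting on the left factor sidesteps the termination subtlety induced by the duplication in $A4$ is also apt; the alternative packaging as a terminating TRS would indeed require a recursive path ordering or a multiset measure, whereas your structural argument avoids that bookkeeping entirely. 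In short, nothing is missing, and had the paper included a proof it would almost certainly have followed the same line.
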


We will define a term-deduction system which gives the operational semantics of $BATC_G$. We give the operational transition rules for $\epsilon$, atomic guard $\phi\in G_{at}$, atomic event $e\in\mathbb{E}$, operators $\cdot$ and $+$ as Table \ref{SETRForBATCG} shows. And the predicate $\xrightarrow{e}\surd$ represents successful termination after execution of the event $e$.

\begin{center}
    \begin{table}
        $$\frac{}{\langle\epsilon,s\rangle\rightarrow\langle\surd,s\rangle}$$
        $$\frac{}{\langle e,s\rangle\xrightarrow{e}\langle\surd,s'\rangle}\textrm{ if }s'\in effect(e,s)$$
        $$\frac{}{\langle\phi,s\rangle\rightarrow\langle\surd,s\rangle}\textrm{ if }test(\phi,s)$$
        $$\frac{\langle x,s\rangle\xrightarrow{e}\langle\surd,s'\rangle}{\langle x+ y,s\rangle\xrightarrow{e}\langle\surd,s'\rangle} \quad\frac{\langle x,s\rangle\xrightarrow{e}\langle x',s'\rangle}{\langle x+ y,s\rangle\xrightarrow{e}\langle x',s'\rangle}$$
        $$\frac{\langle y,s\rangle\xrightarrow{e}\langle\surd,s'\rangle}{\langle x+ y,s\rangle\xrightarrow{e}\langle\surd,s'\rangle} \quad\frac{\langle y,s\rangle\xrightarrow{e}\langle y',s'\rangle}{\langle x+ y,s\rangle\xrightarrow{e}\langle y',s'\rangle}$$
        $$\frac{\langle x,s\rangle\xrightarrow{e}\langle\surd,s'\rangle}{\langle x\cdot y,s\rangle\xrightarrow{e} \langle y,s'\rangle} \quad\frac{\langle x,s\rangle\xrightarrow{e}\langle x',s'\rangle}{\langle x\cdot y,s\rangle\xrightarrow{e}\langle x'\cdot y,s'\rangle}$$
        \caption{Single event transition rules of $BATC_G$}
        \label{SETRForBATCG}
    \end{table}
\end{center}

Note that, we replace the single atomic event $e\in\mathbb{E}$ by $X\subseteq\mathbb{E}$, we can obtain the pomset transition rules of $BATC_G$, and omit them.

\begin{theorem}[Congruence of $BATC_G$ with respect to truly concurrent bisimulation equivalences]\label{CBATCG}
(1) Pomset bisimulation equivalence $\sim_{p}$ is a congruence with respect to $BATC_G$;

(2) Step bisimulation equivalence $\sim_{s}$ is a congruence with respect to $BATC_G$;

(3) Hp-bisimulation equivalence $\sim_{hp}$ is a congruence with respect to $BATC_G$;

(4) Hhp-bisimulation equivalence $\sim_{hhp}$ is a congruence with respect to $BATC_G$.
\end{theorem}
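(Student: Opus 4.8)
The plan is to reduce the statement to the defining property of a congruence: since the only operators of $BATC_G$ are the binary $+$ and $\cdot$, it suffices to show that each of $\sim_p$, $\sim_s$, $\sim_{hp}$ and $\sim_{hhp}$ is preserved when an argument is replaced by an equivalent one. Because $+$ is commutative and associative (axioms $A1$, $A2$), arguing symmetrically in the two positions, the real work is to exhibit, from a bisimulation $R$ witnessing $x_1\sim x_2$ and a bisimulation $R'$ witnessing $y_1\sim y_2$, explicit witnessing relations for $x_1+y_1\sim x_2+y_2$ and for $x_1\cdot y_1\sim x_2\cdot y_2$. Throughout I would work directly with the term-indexed transition system of Table~\ref{SETRForBATCG}, using the standard correspondence between closed terms and their associated PESs so that the configuration-level bisimulations of Definition~\ref{PSBG}–\ref{HHPBG} can be checked on the syntactic transitions. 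I keep in mind that every configuration carries a data state $s$ and that a matching move must land in the same $s'$; the guard and empty transitions $\langle\phi,s\rangle\rightarrow\langle\surd,s\rangle$ and $\langle\epsilon,s\rangle\rightarrow\langle\surd,s\rangle$ change no state and are matched whenever the same $test(\phi,s)$ holds on both sides, which is guaranteed by the component relations.

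For the alternative composition I would take
$$\mathcal{R}_+ \;=\; \{(\langle x_1+y_1,s\rangle,\langle x_2+y_2,s\rangle)\mid s\in S\}\cup R\cup R',$$
and check the transfer condition by case analysis on which of the four $+$-rules fires: a first step of $x_1+y_1$ is inherited either from $x_1$ or from $y_1$, it is answered by the corresponding step of $x_2$ or of $y_2$ supplied by $R$ or $R'$, and the resulting continuation pair already lies in $R$ or $R'$. For sequential composition I would take
$$\mathcal{R}_\cdot \;=\; \{(\langle x_1'\cdot y_1,s\rangle,\langle x_2'\cdot y_2,s\rangle)\mid (\langle x_1',s\rangle,\langle x_2',s\rangle)\in R\}\cup R',$$
and verify transfer using the $\cdot$-rules: while the prefix is still running the pair stays in the first component of $\mathcal{R}_\cdot$ (tracking $R$), and at the moment the prefix terminates with $\surd$ in some state $s'$ control passes to $y_1$ resp.\ $y_2$, landing in $R'$. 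Here one uses that $R$ relates terminating configurations to terminating configurations in the same state, and that the ``for all $s,s'\in S$'' clause of the bisimulation definitions lets $R'$ relate the suffixes at exactly the post-prefix state $s'$ rather than only at the initial state.

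The four equivalences are then handled by the same two relations, enriched as their definitions demand. For $\sim_p$ and $\sim_s$ nothing more is needed than matching the pomset (resp.\ step) label $X_1\sim X_2$, which is immediate because in both $\mathcal{R}_+$ and $\mathcal{R}_\cdot$ a combined move uses exactly one component's move (there is no parallel operator here, so a single step never spans the two summands nor the prefix/suffix boundary). For $\sim_{hp}$ I would additionally carry the order isomorphism $f$: in $\mathcal{R}_+$ it is the isomorphism delivered by whichever component answered the first move, while in $\mathcal{R}_\cdot$ it is the prefix isomorphism, extended by $f[e_1\mapsto e_2]$ along each step and then glued to the suffix isomorphism once the prefix terminates. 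One checks the glued map is again an order isomorphism, which holds because every prefix event causally precedes every suffix event in both processes.

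The main obstacle is the $\sim_{hhp}$ case, where the witnessing relation must in addition be \emph{downward closed}. For $\mathcal{R}_+$ this is inherited directly from downward closure of $R$ and $R'$. For $\mathcal{R}_\cdot$ it is genuinely delicate: restricting a triple $(\langle C_1,s\rangle,f,\langle C_2,s\rangle)$ pointwise to a sub-configuration may separate prefix events from suffix events, so I would prove that any such down-set decomposes as a prefix part lying in $R$ together with a causally later suffix part lying in $R'$, invoking the hereditary causality/conflict structure of Definition~\ref{PESG} and the downward closure of $R$ and $R'$ separately. Verifying that the reassembled triple is again a member of $\mathcal{R}_\cdot$, stably under the pointwise ordering, is the step I expect to require the most care, and it is precisely where sequential composition and hereditary history preservation interact.
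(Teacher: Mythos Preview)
The paper does not actually prove this theorem: it sits in the preliminaries chapter, where the results of the underlying truly concurrent process algebra are imported from \cite{ATC}, \cite{CTC}, \cite{PITC} and merely stated. There is therefore no paper-side argument to compare your proposal against.

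On its own merits, your plan is the standard congruence argument and is sound in outline. A couple of points worth tightening. First, your claim that $+$ and $\cdot$ are the only operators is correct at the process level, but $BATC_G$ also carries the nullary constants $\epsilon$, $\delta$, atomic events, and guards (the latter closed under $\neg$); congruence for constants is trivial, but say so explicitly. Second, the unlabelled $\rightarrow$-transitions for $\epsilon$ and guards are handled by the absorption clause $\rightarrow^*\xrightarrow{X}\rightarrow^*=\xrightarrow{X}$ in the pomset-transition definition; you should invoke this explicitly rather than leave it implicit. Third, your witnessing relations are written on \emph{terms} while Definitions~\ref{PSBG}--\ref{HHPBG} are phrased on \emph{configurations}; the ``standard correspondence'' you appeal to is real but not spelled out anywhere in the paper, so in a self-contained proof you would need to state precisely how a term-state pair $\langle t,s\rangle$ determines a configuration (and, for the hp/hhp cases, how the isomorphism $f$ is transported).

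The step you correctly single out as delicate---downward closure of $\mathcal{R}_\cdot$ for $\sim_{hhp}$---does go through here essentially because $BATC_G$ has no parallel operator: in $x\cdot y$ every event originating from $x$ causally precedes every event originating from $y$, so any sub-configuration decomposes cleanly into a prefix part (handled by downward closure of $R$) and a suffix part (handled by downward closure of $R'$). That observation is the key lemma you should isolate; once stated, the verification is routine.
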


\begin{theorem}[Soundness of $BATC_G$ modulo truly concurrent bisimulation equivalences]\label{SBATCG}
(1) Let $x$ and $y$ be $BATC_G$ terms. If $BATC\vdash x=y$, then $x\sim_{p} y$;

(2) Let $x$ and $y$ be $BATC_G$ terms. If $BATC\vdash x=y$, then $x\sim_{s} y$;

(3) Let $x$ and $y$ be $BATC_G$ terms. If $BATC\vdash x=y$, then $x\sim_{hp} y$;

(4) Let $x$ and $y$ be $BATC_G$ terms. If $BATC\vdash x=y$, then $x\sim_{hhp} y$.
\end{theorem}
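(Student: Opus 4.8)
The plan is to prove soundness of $BATC_G$ by showing that for each axiom in Table \ref{AxiomsForBATCG}, the left-hand side and right-hand side are related by each of the four truly concurrent bisimulation equivalences, and then to lift this to arbitrary provable equalities using congruence. Concretely, soundness of an equational theory modulo an equivalence $\sim$ means that whenever $x=y$ is derivable, we have $x\sim y$. Since every derivation is built from instances of the axioms closed under reflexivity, symmetry, transitivity, and substitution into contexts, it suffices to (i) verify that each axiom instance is valid modulo $\sim_p$, $\sim_s$, $\sim_{hp}$, and $\sim_{hhp}$, and (ii) invoke Theorem \ref{CBATCG}, which states that each of these four equivalences is a congruence with respect to $BATC_G$. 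The congruence result is precisely what allows the axiom-by-axiom check to propagate through the term structure, so the proof reduces to a finite case analysis over the axioms.

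First I would set up the induction on the structure of the derivation of $BATC_G\vdash x=y$. The base case handles direct axiom instances; the inductive steps for reflexivity, symmetry, and transitivity follow because each $\sim$ is an equivalence relation, and the step for closure under contexts follows from Theorem \ref{CBATCG}. Then for each equivalence in turn, I would dispatch the base case by exhibiting an explicit witnessing bisimulation relation for each axiom. The axioms $A1$ through $A5$ are the standard $BATC$ laws and their validity is inherited from Theorem \ref{SBATC}; the genuinely new work concerns the guard-specific laws. For $A8$, $A9$ (the empty-event unit laws) and $G1$ through $G11$, I would construct, for each axiom, a small relation on configuration pairs $(\langle C_1,s\rangle,\langle C_2,s\rangle)$ using the transition rules of Table \ref{SETRForBATCG}, and check the transfer conditions of Definition \ref{PSBG} (and the step, hp-, hhp- variants). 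For the guard laws in particular, the key observation is that a guard $\phi$ performs no labelled action but only tests the data state via $\langle\phi,s\rangle\rightarrow\langle\surd,s\rangle$ when $test(\phi,s)$ holds, so the relations must track the data component $s$ and use the semantics of $test$, $effect$, and $wp$.

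The main obstacle will be verifying the data-sensitive axioms $G8$, $G9$, $G10$, and $G11$, since these are not purely structural: their soundness depends essentially on the interaction between the guard predicates and the data environment. For $G8$ ($\phi=\epsilon$ when $\forall s.\,test(\phi,s)$) I would argue that under this hypothesis $\langle\phi,s\rangle$ and $\langle\epsilon,s\rangle$ have exactly the same outgoing transition $\rightarrow\langle\surd,s\rangle$ in every state, so the identity-like relation is a bisimulation. For $G10$ and $G11$, which encode the weakest-precondition laws $wp(e,\phi)e\phi=wp(e,\phi)e$ and its negation, I would use the definition $wp(e,\phi)\Leftrightarrow\forall s.\,test(\phi,effect(e,s))$ to show that after executing $e$ from a state satisfying $wp(e,\phi)$, the resulting state $s'\in effect(e,s)$ always satisfies $test(\phi,s')$, so the trailing guard $\phi$ is forced to succeed and contributes nothing observable; hence both sides match transition-for-transition. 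This requires the determinism assumption on the data environment noted after Table \ref{AxiomsForBATCG}. Once the base cases are settled for the pomset case, the step, hp-, and hhp- cases follow by the same witnessing relations, refined to respect the pairwise-concurrency condition for steps and the order-isomorphism $f$ for the history-preserving variants; because the guard transitions carry no event label and leave the configuration unchanged, they impose no new constraints on $f$, so the posetal and downward-closure conditions of Definitions \ref{HHPBG} are discharged essentially for free.
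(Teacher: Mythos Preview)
The paper does not actually supply a proof for this theorem: it appears in the preliminaries chapter, where results from the cited works on truly concurrent process algebra are recalled without argument. So there is no ``paper's own proof'' to compare against.

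That said, your proposal is the standard and correct route to soundness of an equational axiomatization: induction on the derivation, with the base case dispatched axiom-by-axiom via explicit bisimulation witnesses, and the inductive step handled by the equivalence-relation properties of $\sim$ together with the congruence theorem you cite. Your treatment of the guard-specific axioms is on target, in particular the observation that guard transitions carry no event label and hence impose no new constraints on the posetal isomorphism $f$ in the hp- and hhp-cases. One minor point: your sketch for $G11$ is bundled with $G10$, but the argument for $G11$ is the dual---from a state satisfying $\neg wp(e,\phi)$ one must show that after $e$ the trailing $\neg\phi$ necessarily succeeds, which needs the definition of $wp$ unpacked in the negated form. This is routine once stated, but worth making explicit since it is not literally the same check as $G10$.
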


\begin{theorem}[Completeness of $BATC_G$ modulo truly concurrent bisimulation equivalences]\label{CBATCG}
(1) Let $p$ and $q$ be closed $BATC_G$ terms, if $p\sim_{p} q$ then $p=q$;

(2) Let $p$ and $q$ be closed $BATC_G$ terms, if $p\sim_{s} q$ then $p=q$;

(3) Let $p$ and $q$ be closed $BATC_G$ terms, if $p\sim_{hp} q$ then $p=q$;

(4) Let $p$ and $q$ be closed $BATC_G$ terms, if $p\sim_{hhp} q$ then $p=q$.
\end{theorem}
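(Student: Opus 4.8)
The plan is to follow the standard completeness route for an axiomatic process algebra: reduce arbitrary closed terms to a canonical shape, and then show that two canonical terms that are bisimilar must already be provably equal. First I would invoke the Elimination Theorem (Theorem \ref{ETBATCG}): every closed $BATC_G$ term $p$ is provably equal to a basic term, so it suffices to treat the case where $p$ and $q$ are basic terms in the sense of Definition \ref{BTBATCG}. Using $A1$--$A3$ to treat $+$ as an associative, commutative, idempotent operator, and using the guard laws $G1$--$G11$ together with $A8$, $A9$ to push guards to the front of summands and to simplify guard expressions, I would put each basic term into a normal form: a finite sum $\sum_i \alpha_i$ where each summand $\alpha_i$ is $e_i$, $\phi_i\cdot e_i$, $e_i\cdot t_i$, or $\phi_i\cdot e_i\cdot t_i$ with $t_i$ again in normal form, together with a possible standalone $\phi$ or $\epsilon$ summand. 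The remark following Table \ref{AxiomsForBATCG} --- that the guard fragment forms a Boolean algebra with $G8$ as its key completeness law --- is what guarantees that state-indistinguishable guards collapse to provably equal ones.

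Next comes the core matching argument, carried out by induction on the combined size of the two normal forms. Given $p\sim q$ for the relevant equivalence, consider any summand of $p$; by the transition rules of Table \ref{SETRForBATCG} it induces a transition $\langle p,s\rangle\xrightarrow{e}\langle p',s'\rangle$ (or a termination $\langle p,s\rangle\xrightarrow{e}\langle\surd,s'\rangle$) enabled precisely on those states $s$ passing the leading guard's $test$. Bisimilarity furnishes a matching transition of $q$ to a configuration related to $\langle p',s'\rangle$; by the induction hypothesis the residuals are provably equal, and the guard-indexing of the two transitions must agree on all $s\in S$, whence by the Boolean-algebra completeness captured by $G1$--$G11$ their leading guards are provably equal. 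Using $A3$ to absorb duplicates, every summand of $p$ is thereby shown to be derivably a summand of $q$; running the symmetric argument, and appealing to Soundness (Theorem \ref{SBATCG}) to keep both directions mutually consistent, yields $BATC_G\vdash p=q$.

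Finally I would dispatch the four equivalences together. Because $BATC_G$ has no parallel operator, the operational rules only ever generate single-event (hence singleton-pomset) transitions, so on closed terms step, pomset, hp- and hhp-bisimulation coincide and the downward-closure clause defining $\sim_{hhp}$ is vacuous; thus the single matching argument settles all four claims at once. I expect the main obstacle to be the guard/data-state bookkeeping: one must verify that the state-indexed bisimulation genuinely forces the leading guards of matched summands to be logically equivalent over all of $S$, and that the $effect$ relation on data states is reflected faithfully, so that the reduction to Boolean-algebra completeness (via $G8$--$G11$) is legitimate rather than merely plausible. Everything else is the routine summand-matching induction familiar from the guard-free case.
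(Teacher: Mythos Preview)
The paper does not actually supply a proof of this theorem: it appears in the preliminaries chapter (Section~\ref{tcpa}) as a bare statement imported from the cited works \cite{ATC}, \cite{CTC}, \cite{PITC}, with no accompanying \texttt{proof} environment, so there is no in-paper argument to compare against.

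That said, your proposal is the standard and correct route for such a completeness result. The reduction to basic terms via Theorem~\ref{ETBATCG}, normalization using $A1$--$A3$ and the guard axioms, and the summand-matching induction are exactly what one expects; your observation that the four equivalences collapse on $BATC_G$ (no parallel operator, hence only singleton transitions) is the right way to handle all four items uniformly. The one place to be careful is precisely where you flag it: showing that bisimilarity over \emph{all} data states $s\in S$ forces the leading guards of matched summands to be semantically equivalent, so that $G8$ (and more generally $G9$--$G11$) can convert this into provable equality. You should also make explicit how summands of shape $\phi$ alone (no trailing event) are matched, since these contribute only $\epsilon$-transitions via the rule $\langle\phi,s\rangle\rightarrow\langle\surd,s\rangle$ and do not fit the $e$-labelled matching template; the Boolean-algebra remark after Table~\ref{AxiomsForBATCG} is what lets you finish that case.
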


\begin{theorem}[Sufficient determinacy]
All related data environments with respect to $BATC_G$ can be sufficiently deterministic.
\end{theorem}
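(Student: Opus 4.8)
The plan is to unwind the definition of a \emph{sufficiently deterministic} data environment and verify the two conditions flagged in the discussion following Table~\ref{AxiomsForBATCG}: that the weakest precondition is expressible, and that the precondition axiom $G9$ together with the weakest-precondition axiom $G10$ are sound. Recall that a data environment consists of the set $S$ of data states together with the interpretation $test(\cdot,\cdot)$ of guards and the function $effect(\cdot,\cdot)$ on atomic events; it is sufficiently deterministic precisely when, for every $e\in\mathbb{E}$ and every $\phi\in G$, the weakest precondition $wp(e,\phi)$ is realized by an actual guard of $G$ carrying the intended test-semantics, and when $G9$ and $G10$ are valid under this interpretation. So the whole argument reduces to establishing these two facts and then quoting the observation already recorded after the axiom table.

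First I would make the Boolean-algebra structure explicit. As noted, deleting every atomic event from the terms collapses $A1$--$A3$, $A6$ and $G1$--$G8$ into the laws of a Boolean algebra on $G$, with $+$ as join, $\cdot$ as meet, $\neg$ as complement, $\epsilon$ as top and $\delta$ as bottom; correspondingly, for each fixed $s\in S$ the map $\phi\mapsto test(\phi,s)$ is a Boolean homomorphism into $\{\textrm{true},\textrm{false}\}$. This is the algebraic backbone that lets us treat $wp(e,\phi)$ as an element of $G$ rather than as an external predicate.

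Next I would establish expressibility. Because $effect(e,\cdot)$ is a genuine function on $S$, the assignment $s\mapsto test(\phi,effect(e,s))$ is a well-defined Boolean-valued function of the state, and by closure of $G$ under the Boolean operations it is the test-semantics of a guard we name $wp(e,\phi)$, so that $test(wp(e,\phi),s)$ holds iff $test(\phi,effect(e,s))$ holds. With $wp$ so realized, soundness of $G10$ and $G11$ reads off directly: if $wp(e,\phi)$ tests true in $s$ then $\phi$ is guaranteed in $effect(e,s)$, so appending $\phi$ after $e$ is redundant, and dually for $\neg wp$. Soundness of $G9$ follows from the same Boolean reading, since if in every state some $\neg\phi_i$ tests true then the meet $\phi_0\cdots\phi_n$ is nowhere satisfiable and hence equals $\delta$. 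These clauses are already contained in the Soundness theorem (Theorem~\ref{SBATCG}), so I would invoke that result rather than recompute them, and then the observation after Table~\ref{AxiomsForBATCG} applies verbatim to conclude that the environment is sufficiently deterministic.

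The hard part will be the expressibility step, which is where the word ``can'' in the statement does its work. One must ensure both that $effect$ is single-valued, so that $s\mapsto test(\phi,effect(e,s))$ is even well-defined, and that the guard language $G$ is rich enough that \emph{every} such state-indexed Boolean condition is captured by a genuine guard term. The single-valuedness is arranged by refining any multi-valued $effect$ relation into a deterministic function before interpreting the guards; once that refinement is fixed, the richness of $G$ is guaranteed by its Boolean-algebra structure together with the atomic guards $G_{at}$, and the remaining verifications are routine. I would therefore present the obstacle as the lemma that such a deterministic refinement always exists and supports an expressible $wp$, after which sufficient determinacy is immediate.
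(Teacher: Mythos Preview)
The paper does not actually prove this theorem: like every other result in the background chapter on $BATC_G$, it is stated without proof, the intended justification being the single sentence after Table~\ref{AxiomsForBATCG} (``it is obvious that if the weakest precondition is expressible and $G9$, $G10$ are sound, then the related data environment is sufficiently deterministic''). Your proposal is therefore not competing with an existing argument but supplying one where the paper offers only a pointer; your reading of the ``can be'' as licensing a refinement of $effect$ and, implicitly, an enlargement of $G_{at}$ matches the spirit of that sentence.

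One point to tighten: in your third paragraph you write that closure of $G$ under the Boolean operations already makes $s\mapsto test(\phi,effect(e,s))$ the test-semantics of some guard. That does not follow. $G$ is the term algebra generated from $G_{at}$ by $\neg,+,\cdot$, and closure under these connectives says nothing about whether an \emph{arbitrary} state predicate---in particular one obtained by precomposing with $effect(e,\cdot)$---lies in the image of $test(\cdot,s)$. You yourself flag exactly this gap in the final paragraph, where you correctly locate the real content in arranging that $effect$ is single-valued and that $G_{at}$ is rich enough to express each $wp(e,\phi)$. I would delete the premature claim in paragraph three and move the expressibility discussion entirely into the lemma you outline at the end; once that lemma is stated (every data environment admits a deterministic refinement in which $G_{at}$ can be extended so that each $wp(e,\phi)$ is a guard), the appeal to Theorem~\ref{SBATCG} for the soundness of $G9$--$G11$ and the concluding sentence after Table~\ref{AxiomsForBATCG} finish the job exactly as you say.
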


\subsubsection{$APTC$ with Guards}{\label{aptcg}}

In this subsection, we will extend $APTC$ with guards, which is abbreviated $APTC_G$. The set of basic guards $G$ with element $\phi,\psi,\cdots$, which is extended by the following formation rules:

$$\phi::=\delta|\epsilon|\neg\phi|\psi\in G_{at}|\phi+\psi|\phi\cdot\psi|\phi\parallel\psi$$

The set of axioms of $APTC_G$ including axioms of $BATC_G$ in Table \ref{AxiomsForBATCG} and the axioms are shown in Table \ref{AxiomsForAPTCG}.

\begin{center}
    \begin{table}
        \begin{tabular}{@{}ll@{}}
            \hline No. &Axiom\\
            $P1$ & $x\between y = x\parallel y + x\mid y$\\
            $P2$ & $e_1\parallel (e_2\cdot y) = (e_1\parallel e_2)\cdot y$\\
            $P3$ & $(e_1\cdot x)\parallel e_2 = (e_1\parallel e_2)\cdot x$\\
            $P4$ & $(e_1\cdot x)\parallel (e_2\cdot y) = (e_1\parallel e_2)\cdot (x\between y)$\\
            $P5$ & $(x+ y)\parallel z = (x\parallel z)+ (y\parallel z)$\\
            $P6$ & $x\parallel (y+ z) = (x\parallel y)+ (x\parallel z)$\\
            $P7$ & $\delta\parallel x = \delta$\\
            $P8$ & $x\parallel \delta = \delta$\\
            $P9$ & $\epsilon\parallel x = x$\\
            $P10$ & $x\parallel \epsilon = x$\\
            $C1$ & $e_1\mid e_2 = \gamma(e_1,e_2)$\\
            $C2$ & $e_1\mid (e_2\cdot y) = \gamma(e_1,e_2)\cdot y$\\
            $C3$ & $(e_1\cdot x)\mid e_2 = \gamma(e_1,e_2)\cdot x$\\
            $C4$ & $(e_1\cdot x)\mid (e_2\cdot y) = \gamma(e_1,e_2)\cdot (x\between y)$\\
            $C5$ & $(x+ y)\mid z = (x\mid z) + (y\mid z)$\\
            $C6$ & $x\mid (y+ z) = (x\mid y)+ (x\mid z)$\\
            $C7$ & $\delta\mid x = \delta$\\
            $C8$ & $x\mid\delta = \delta$\\
            $C9$ & $\epsilon\mid x = \delta$\\
            $C10$ & $x\mid\epsilon = \delta$\\
            $CE1$ & $\Theta(e) = e$\\
            $CE2$ & $\Theta(\delta) = \delta$\\
            $CE3$ & $\Theta(\epsilon) = \epsilon$\\
            $CE4$ & $\Theta(x+ y) = \Theta(x)\triangleleft y + \Theta(y)\triangleleft x$\\
            $CE5$ & $\Theta(x\cdot y)=\Theta(x)\cdot\Theta(y)$\\
            $CE6$ & $\Theta(x\parallel y) = ((\Theta(x)\triangleleft y)\parallel y)+ ((\Theta(y)\triangleleft x)\parallel x)$\\
            $CE7$ & $\Theta(x\mid y) = ((\Theta(x)\triangleleft y)\mid y)+ ((\Theta(y)\triangleleft x)\mid x)$\\
            $U1$ & $(\sharp(e_1,e_2))\quad e_1\triangleleft e_2 = \tau$\\
            $U2$ & $(\sharp(e_1,e_2),e_2\leq e_3)\quad e_1\triangleleft e_3 = e_1$\\
            $U3$ & $(\sharp(e_1,e_2),e_2\leq e_3)\quad e3\triangleleft e_1 = \tau$\\
            $U4$ & $e\triangleleft \delta = e$\\
            $U5$ & $\delta \triangleleft e = \delta$\\
            $U6$ & $e\triangleleft \epsilon = e$\\
            $U7$ & $\epsilon \triangleleft e = e$\\
            $U8$ & $(x+ y)\triangleleft z = (x\triangleleft z)+ (y\triangleleft z)$\\
            $U9$ & $(x\cdot y)\triangleleft z = (x\triangleleft z)\cdot (y\triangleleft z)$\\
            $U10$ & $(x\parallel y)\triangleleft z = (x\triangleleft z)\parallel (y\triangleleft z)$\\
            $U11$ & $(x\mid y)\triangleleft z = (x\triangleleft z)\mid (y\triangleleft z)$\\
            $U12$ & $x\triangleleft (y+ z) = (x\triangleleft y)\triangleleft z$\\
            $U13$ & $x\triangleleft (y\cdot z)=(x\triangleleft y)\triangleleft z$\\
            $U14$ & $x\triangleleft (y\parallel z) = (x\triangleleft y)\triangleleft z$\\
            $U15$ & $x\triangleleft (y\mid z) = (x\triangleleft y)\triangleleft z$\\
        \end{tabular}
        \caption{Axioms of $APTC_G$}
        \label{AxiomsForAPTCG}
    \end{table}
\end{center}

\begin{center}
    \begin{table}
        \begin{tabular}{@{}ll@{}}
            \hline No. &Axiom\\
            $D1$ & $e\notin H\quad\partial_H(e) = e$\\
            $D2$ & $e\in H\quad \partial_H(e) = \delta$\\
            $D3$ & $\partial_H(\delta) = \delta$\\
            $D4$ & $\partial_H(x+ y) = \partial_H(x)+\partial_H(y)$\\
            $D5$ & $\partial_H(x\cdot y) = \partial_H(x)\cdot\partial_H(y)$\\
            $D6$ & $\partial_H(x\parallel y) = \partial_H(x)\parallel\partial_H(y)$\\
            $G12$ & $\phi(x\parallel y) =\phi x\parallel \phi y$\\
            $G13$ & $\phi(x\mid y) =\phi x\mid \phi y$\\
            $G14$ & $\phi\parallel \delta = \delta$\\
            $G15$ & $\delta\parallel \phi = \delta$\\
            $G16$ & $\phi\mid \delta = \delta$\\
            $G17$ & $\delta\mid \phi = \delta$\\
            $G18$ & $\phi\parallel \epsilon = \phi$\\
            $G19$ & $\epsilon\parallel \phi = \phi$\\
            $G20$ & $\phi\mid \epsilon = \delta$\\
            $G21$ & $\epsilon\mid \phi = \delta$\\
            $G22$ & $\phi\parallel\neg\phi = \delta$\\
            $G23$ & $\Theta(\phi) = \phi$\\
            $G24$ & $\partial_H(\phi) = \phi$\\
            $G25$ & $\phi_0\parallel\cdots\parallel\phi_n = \delta$ if $\forall s_0,\cdots,s_n\in S,\exists i\leq n.test(\neg\phi_i,s_0\cup\cdots\cup s_n)$\\
        \end{tabular}
        \caption{Axioms of $APTC_G$(continuing)}
        \label{AxiomsForAPTCG2}
    \end{table}
\end{center}

\begin{definition}[Basic terms of $APTC_G$]\label{BTAPTCG}
The set of basic terms of $APTC_G$, $\mathcal{B}(APTC_G)$, is inductively defined as follows:
\begin{enumerate}
    \item $\mathbb{E}\subset\mathcal{B}(APTC_G)$;
    \item $G\subset\mathcal{B}(APTC_G)$;
    \item if $e\in \mathbb{E}, t\in\mathcal{B}(APTC_G)$ then $e\cdot t\in\mathcal{B}(APTC_G)$;
    \item if $\phi\in G, t\in\mathcal{B}(APTC_G)$ then $\phi\cdot t\in\mathcal{B}(APTC_G)$;
    \item if $t,s\in\mathcal{B}(APTC_G)$ then $t+ s\in\mathcal{B}(APTC_G)$;
    \item if $t,s\in\mathcal{B}(APTC_G)$ then $t\parallel s\in\mathcal{B}(APTC_G)$.
\end{enumerate}
\end{definition}

Based on the definition of basic terms for $APTC_G$ (see Definition \ref{BTAPTCG}) and axioms of $APTC_G$, we can prove the elimination theorem of $APTC_G$.

\begin{theorem}[Elimination theorem of $APTC_G$]\label{ETAPTCG}
Let $p$ be a closed $APTC_G$ term. Then there is a basic $APTC_G$ term $q$ such that $APTC_G\vdash p=q$.
\end{theorem}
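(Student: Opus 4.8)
The plan is to prove this by the standard term-rewriting argument used for elimination results in process algebra. I orient the axioms of $APTC_G$ (together with the $BATC_G$ laws of Table \ref{AxiomsForBATCG} on which it builds, cf. Theorem \ref{ETBATCG}) from left to right as a rewrite relation $\to$ on closed terms, and then establish two facts: (i) $\to$ is strongly normalizing, so every closed term has a normal form, and (ii) every closed normal form is a basic term in the sense of Definition \ref{BTAPTCG}. Since each rewrite step is an instance of an $APTC_G$-provable equation, normalizing a closed term $p$ to its normal form $q$ yields $APTC_G \vdash p = q$ with $q \in \mathcal{B}(APTC_G)$, which is exactly the claim.

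For the orientation I work modulo associativity and commutativity of $+$ (laws $A1$, $A2$) and direct every remaining law left to right, the aim being to drive the operators $\between$, $\mid$, $\Theta$, $\triangleleft$, $\partial_H$ toward the leaves and delete them. Concretely, $P1$ rewrites $\between$ into $\parallel$ and $\mid$; the laws $P2$--$P10$ (with $G12$, $G14$, $G15$, $G18$, $G19$, $G22$, $G25$) push $\parallel$ through prefixing, $+$, guards and $\delta/\epsilon$ until only irreducible steps $e_1 \parallel e_2$ remain; $C1$--$C10$ (with $G13$, $G16$, $G17$, $G20$, $G21$) collapse $\mid$ into a $\gamma$-action or $\delta$; $CE1$--$CE7$ and $G23$ push $\Theta$ inward; $U1$--$U15$ and $G24$ evaluate or distribute $\triangleleft$; and $D1$--$D6$ with $G24$ distribute $\partial_H$ down to atoms and guards. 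The constructors that survive are precisely $\cdot$, $+$, $\parallel$ and prefixing by an atom or guard, i.e.\ the grammar of $\mathcal{B}(APTC_G)$.

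The crux is strong normalization, and the obstacle is that the synchronizing laws $P4$ and $C4$ --- $(e_1\cdot x)\parallel(e_2\cdot y) = (e_1\parallel e_2)\cdot(x\between y)$ and $(e_1\cdot x)\mid(e_2\cdot y) = \gamma(e_1,e_2)\cdot(x\between y)$ --- reintroduce $\between$ on their right-hand sides, so a naive count of parallel symbols need not decrease. The remedy is a polynomial (multiplicative) interpretation: assign each atom and guard a weight bounded below by a positive constant, interpret $+$ and $\cdot$ additively, and interpret $\between$, $\parallel$, $\mid$ by strictly monotone multiplicative expressions in their arguments, giving $\between$ a strictly larger coefficient than $\parallel$ and $\mid$ together. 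With such coefficients every oriented law strictly decreases the value: the binding cases are $P1 = x\between y \to x\parallel y + x\mid y$, where the coefficient gap forces the drop, and $P4$/$C4$, where the drop comes from the reintroduced $\between$ acting only on the strictly shorter tails $x$ and $y$. For the operators $\Theta$, $\triangleleft$, $\partial_H$, whose distribution laws ($CE4$--$CE7$, $U8$--$U15$, $D4$--$D6$) may duplicate the operator while shrinking its arguments, I would add secondary lexicographic components measuring the multiset of argument-sizes of their occurrences; the only nonroutine verifications are $P4$/$C4$ and the nested-$\triangleleft$ laws $U12$--$U15$.

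Finally, for (ii) I argue by structural induction that a closed normal form is basic. Given a closed term in normal form, bring its immediate subterms into basic form by the induction hypothesis, then inspect the head operator. If it is one of $\between$, $\mid$, $\Theta$, $\triangleleft$, $\partial_H$, a case split on the basic shapes of the arguments (atom, guard, prefix, sum, step, or $\delta/\epsilon$) always exposes an applicable oriented law, contradicting normality, so these heads cannot occur. The admissible heads are $+$, $\cdot$ and $\parallel$, and one checks that in a normal form their arguments have the shapes permitted by Definition \ref{BTAPTCG}: a $\parallel$ of two prefixes or of a prefix and a sum is excluded by $P4$/$P5$/$P6$, so a normal $\parallel$ is a parallel composition of atoms and guards (a step), and the left operand of a $\cdot$ reduces to an atom, a guard, or such a step regarded as a generalized prefix. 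Hence every closed normal form is a basic term, which together with (i) completes the proof.
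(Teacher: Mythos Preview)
The paper does not actually prove this theorem: it only states, just before the theorem, that ``based on the definition of basic terms for $APTC_G$ and axioms of $APTC_G$, we can prove the elimination theorem,'' and then moves on. So there is no proof in the paper to compare against; the result is treated as a preliminary imported from the cited works on truly concurrent process algebra.

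Your approach is exactly the standard one used for such elimination theorems (and is what the cited reference \cite{ACP} does for ACP): orient the axioms into a terminating rewrite system modulo AC of $+$, prove termination via a well-founded measure, and show normal forms are basic. The identification of $P4$/$C4$ as the delicate cases for termination is correct, and a multiplicative interpretation with $\between$ weighted above $\parallel+\mid$ is the right fix. One point to tighten: your description of normal forms for $\parallel$ as ``parallel compositions of atoms and guards'' is too narrow, since e.g.\ $(\phi\cdot t)\parallel(\psi\cdot s)$ has no reducing rule in Table~\ref{AxiomsForAPTCG}; fortunately clause~6 of Definition~\ref{BTAPTCG} closes $\mathcal{B}(APTC_G)$ under $\parallel$ unrestrictedly, so such residuals are still basic. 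A related wrinkle, which is really a defect of the paper rather than of your argument, is that $P2$--$P4$ produce terms of the form $(e_1\parallel e_2)\cdot t$, whose head $\cdot$ is not licensed by clauses~3--4 of Definition~\ref{BTAPTCG} read literally; your ``generalized prefix'' remark is the correct patch, but you should make this explicit.
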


We will define a term-deduction system which gives the operational semantics of $APTC_G$. Two atomic events $e_1$ and $e_2$ are in race condition, which are denoted $e_1\% e_2$.

\begin{center}
    \begin{table}
        $$\frac{}{\langle e_1\parallel\cdots \parallel e_n,s\rangle\xrightarrow{\{e_1,\cdots,e_n\}}\langle\surd,s'\rangle}\textrm{ if }s'\in effect(e_1,s)\cup\cdots\cup effect(e_n,s)$$
        $$\frac{}{\langle\phi_1\parallel\cdots\parallel \phi_n,s\rangle\rightarrow\langle\surd,s\rangle}\textrm{ if }test(\phi_1,s),\cdots,test(\phi_n,s)$$

        $$\frac{\langle x,s\rangle\xrightarrow{e_1}\langle\surd,s'\rangle\quad \langle y,s\rangle\xrightarrow{e_2}\langle\surd,s''\rangle}{\langle x\parallel y,s\rangle\xrightarrow{\{e_1,e_2\}}\langle\surd,s'\cup s''\rangle} \quad\frac{\langle x,s\rangle\xrightarrow{e_1}\langle x',s'\rangle\quad \langle y,s\rangle\xrightarrow{e_2}\langle\surd,s''\rangle}{\langle x\parallel y,s\rangle\xrightarrow{\{e_1,e_2\}}\langle x',s'\cup s''\rangle}$$

        $$\frac{\langle x,s\rangle\xrightarrow{e_1}\langle\surd,s'\rangle\quad \langle y,s\rangle\xrightarrow{e_2}\langle y',s''\rangle}{\langle x\parallel y,s\rangle\xrightarrow{\{e_1,e_2\}}\langle y',s'\cup s''\rangle} \quad\frac{\langle x,s\rangle\xrightarrow{e_1}\langle x',s'\rangle\quad \langle y,s\rangle\xrightarrow{e_2}\langle y',s''\rangle}{\langle x\parallel y,s\rangle\xrightarrow{\{e_1,e_2\}}\langle x'\between y',s'\cup s''\rangle}$$

        $$\frac{\langle x,s\rangle\xrightarrow{e_1}\langle\surd,s'\rangle\quad \langle y,s\rangle\xnrightarrow{e_2}\quad(e_1\%e_2)}{\langle x\parallel y,s\rangle\xrightarrow{e_1}\langle y,s'\rangle} \quad\frac{\langle x,s\rangle\xrightarrow{e_1}\langle x',s'\rangle\quad \langle y,s\rangle\xnrightarrow{e_2}\quad(e_1\%e_2)}{\langle x\parallel y,s\rangle\xrightarrow{e_1}\langle x'\between y,s'\rangle}$$

        $$\frac{\langle x,s\rangle\xnrightarrow{e_1}\quad \langle y,s\rangle\xrightarrow{e_2}\langle\surd,s''\rangle\quad(e_1\%e_2)}{\langle x\parallel y,s\rangle\xrightarrow{e_2}\langle x,s''\rangle} \quad\frac{\langle x,s\rangle\xnrightarrow{e_1}\quad \langle y,s\rangle\xrightarrow{e_2}\langle y',s''\rangle\quad(e_1\%e_2)}{\langle x\parallel y,s\rangle\xrightarrow{e_2}\langle x\between y',s''\rangle}$$

        $$\frac{\langle x,s\rangle\xrightarrow{e_1}\langle\surd,s'\rangle\quad \langle y,s\rangle\xrightarrow{e_2}\langle\surd,s''\rangle}{\langle x\mid y,s\rangle\xrightarrow{\gamma(e_1,e_2)}\langle\surd,effect(\gamma(e_1,e_2),s)\rangle} \quad\frac{\langle x,s\rangle\xrightarrow{e_1}\langle x',s'\rangle\quad \langle y,s\rangle\xrightarrow{e_2}\langle\surd,s''\rangle}{\langle x\mid y,s\rangle\xrightarrow{\gamma(e_1,e_2)}\langle x',effect(\gamma(e_1,e_2),s)\rangle}$$

        $$\frac{\langle x,s\rangle\xrightarrow{e_1}\langle\surd,s'\rangle\quad \langle y,s\rangle\xrightarrow{e_2}\langle y',s''\rangle}{\langle x\mid y,s\rangle\xrightarrow{\gamma(e_1,e_2)}\langle y',effect(\gamma(e_1,e_2),s)\rangle} \quad\frac{\langle x,s\rangle\xrightarrow{e_1}\langle x',s'\rangle\quad \langle y,s\rangle\xrightarrow{e_2}\langle y',s''\rangle}{\langle x\mid y,s\rangle\xrightarrow{\gamma(e_1,e_2)}\langle x'\between y',effect(\gamma(e_1,e_2),s)\rangle}$$

        $$\frac{\langle x,s\rangle\xrightarrow{e_1}\langle\surd,s'\rangle\quad (\sharp(e_1,e_2))}{\langle \Theta(x),s\rangle\xrightarrow{e_1}\langle\surd,s'\rangle} \quad\frac{\langle x,s\rangle\xrightarrow{e_2}\langle\surd,s''\rangle\quad (\sharp(e_1,e_2))}{\langle\Theta(x),s\rangle\xrightarrow{e_2}\langle\surd,s''\rangle}$$

        $$\frac{\langle x,s\rangle\xrightarrow{e_1}\langle x',s'\rangle\quad (\sharp(e_1,e_2))}{\langle\Theta(x),s\rangle\xrightarrow{e_1}\langle\Theta(x'),s'\rangle} \quad\frac{\langle x,s\rangle\xrightarrow{e_2}\langle x'',s''\rangle\quad (\sharp(e_1,e_2))}{\langle\Theta(x),s\rangle\xrightarrow{e_2}\langle\Theta(x''),s''\rangle}$$

        $$\frac{\langle x,s\rangle\xrightarrow{e_1}\langle\surd,s'\rangle \quad \langle y,s\rangle\nrightarrow^{e_2}\quad (\sharp(e_1,e_2))}{\langle x\triangleleft y,s\rangle\xrightarrow{\tau}\langle\surd,s'\rangle}
        \quad\frac{\langle x,s\rangle\xrightarrow{e_1}\langle x',s'\rangle \quad \langle y,s\rangle\nrightarrow^{e_2}\quad (\sharp(e_1,e_2))}{\langle x\triangleleft y,s\rangle\xrightarrow{\tau}\langle x',s'\rangle}$$

        $$\frac{\langle x,s\rangle\xrightarrow{e_1}\langle\surd,s\rangle \quad \langle y,s\rangle\nrightarrow^{e_3}\quad (\sharp(e_1,e_2),e_2\leq e_3)}{\langle x\triangleleft y,s\rangle\xrightarrow{e_1}\langle\surd,s'\rangle}
        \quad\frac{\langle x,s\rangle\xrightarrow{e_1}\langle x',s'\rangle \quad \langle y,s\rangle\nrightarrow^{e_3}\quad (\sharp(e_1,e_2),e_2\leq e_3)}{\langle x\triangleleft y,s\rangle\xrightarrow{e_1}\langle x',s'\rangle}$$

        $$\frac{\langle x,s\rangle\xrightarrow{e_3}\langle\surd,s'\rangle \quad \langle y,s\rangle\nrightarrow^{e_2}\quad (\sharp(e_1,e_2),e_1\leq e_3)}{\langle x\triangleleft y,s\rangle\xrightarrow{\tau}\langle\surd,s'\rangle}
        \quad\frac{\langle x,s\rangle\xrightarrow{e_3}\langle x',s'\rangle \quad \langle y,s\rangle\nrightarrow^{e_2}\quad (\sharp(e_1,e_2),e_1\leq e_3)}{\langle x\triangleleft y,s\rangle\xrightarrow{\tau}\langle x',s'\rangle}$$

        $$\frac{\langle x,s\rangle\xrightarrow{e}\langle\surd,s'\rangle}{\langle\partial_H(x),s\rangle\xrightarrow{e}\langle\surd,s'\rangle}\quad (e\notin H)\quad\frac{\langle x,s\rangle\xrightarrow{e}\langle x',s'\rangle}{\langle\partial_H(x),s\rangle\xrightarrow{e}\langle\partial_H(x'),s'\rangle}\quad(e\notin H)$$
        \caption{Transition rules of $APTC_G$}
        \label{TRForAPTCG}
    \end{table}
\end{center}

\begin{theorem}[Generalization of $APTC_G$ with respect to $BATC_G$]
$APTC_G$ is a generalization of $BATC_G$.
\end{theorem}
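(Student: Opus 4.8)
The plan is to read \emph{generalization} in the sense used throughout this chapter, namely that $BATC_G$ is a subtheory of $APTC_G$: its signature is contained in that of $APTC_G$ and its axioms are retained verbatim, so that every equation provable in $BATC_G$ remains provable in $APTC_G$ and the behaviour of the shared operators is unchanged. Accordingly I would split the argument into a signature part, an axiom part, and an operational-consistency check.

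First I would verify the signature inclusion. The signature of $BATC_G$ consists of the atomic events $\mathbb{E}$, the basic guards $G$, the constants $\delta$ and $\epsilon$, and the two binary operators $+$ and $\cdot$. By inspection of Section \ref{aptcg}, the signature of $APTC_G$ is obtained by adjoining to this exactly the parallel operators $\between$, $\parallel$, $\mid$, the conflict-elimination operator $\Theta$, the unless operator $\triangleleft$, and the encapsulation operator $\partial_H$; none of the old symbols is removed or redefined. Hence $\Sigma(BATC_G)\subseteq\Sigma(APTC_G)$, and every closed $BATC_G$ term is a closed $APTC_G$ term.

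Next I would check the axiom inclusion. By construction the set of axioms of $APTC_G$ is declared to consist of the axioms of $BATC_G$ in Table \ref{AxiomsForBATCG} together with the new laws in Tables \ref{AxiomsForAPTCG} and \ref{AxiomsForAPTCG2}, so the axioms $A1$--$A9$ and $G1$--$G11$ are present unchanged in $APTC_G$. Consequently any derivation witnessing $BATC_G\vdash x=y$ is, line for line, a valid derivation in $APTC_G$, whence $BATC_G\vdash x=y$ implies $APTC_G\vdash x=y$. Finally, to confirm that the extension does not alter the meaning of the old operators, I would compare the transition rules for $+$ and $\cdot$ in Table \ref{TRForAPTCG} with those in Table \ref{SETRForBATCG} and observe that they coincide; this guarantees that the restriction of the $APTC_G$ transition system to $BATC_G$ terms is exactly the $BATC_G$ transition system, so the behaviour of the $BATC_G$ fragment is preserved inside $APTC_G$.

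The proof is essentially a bookkeeping exercise, so there is no deep obstacle; the only point demanding care is the confirmation that the newly added operators are genuinely fresh and that the shared operators $+$, $\cdot$, $\delta$, $\epsilon$ retain both their axioms and their transition rules. Once that is observed, the statement follows immediately from the subtheory relationship, and I would conclude that $APTC_G$ is a generalization of $BATC_G$.
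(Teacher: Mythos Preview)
Your proposal is correct and follows exactly the standard argument for this kind of result. The paper itself states the theorem without proof (as it does for most of the meta-theoretic results in this survey chapter, deferring details to the cited works on truly concurrent process algebra), so there is nothing to compare against beyond noting that your three-part check---signature inclusion, verbatim retention of the $BATC_G$ axioms in Table~\ref{AxiomsForBATCG}, and agreement of the transition rules for the shared operators---is precisely the expected justification.
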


\begin{theorem}[Congruence of $APTC_G$ with respect to truly concurrent bisimulation equivalences]\label{CAPTCG}
(1) Pomset bisimulation equivalence $\sim_{p}$ is a congruence with respect to $APTC_G$;

(2) Step bisimulation equivalence $\sim_{s}$ is a congruence with respect to $APTC_G$;

(3) Hp-bisimulation equivalence $\sim_{hp}$ is a congruence with respect to $APTC_G$;

(4) Hhp-bisimulation equivalence $\sim_{hhp}$ is a congruence with respect to $APTC_G$.
\end{theorem}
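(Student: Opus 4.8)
The plan is to establish the four congruences one operator at a time, using the operational transition rules in Table \ref{TRForAPTCG} and reducing, wherever possible, to the congruence result already available for $BATC_G$. Since the preceding theorem shows that $APTC_G$ is a generalization of $BATC_G$, the cases of $+$ and $\cdot$ are inherited directly, so what remains is to treat each of the new operators $\parallel$, $\mid$, $\Theta$, $\triangleleft$, $\partial_H$ together with the guard constants $\phi\in G$. For each operator I would fix witnessing bisimulations for the arguments and exhibit a witnessing bisimulation for the composite term, verifying the transfer property by a case analysis over the transition rule that could have produced a given move. The data states must be carried along throughout, since all configurations here have the form $\langle C,s\rangle$.

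Concretely, take the representative binary case $\parallel$ for step bisimulation. Suppose $x_1\sim_s y_1$ and $x_2\sim_s y_2$, witnessed by step bisimulations $R_1$ and $R_2$ each containing $(\langle\emptyset,\emptyset\rangle,\langle\emptyset,\emptyset\rangle)$. I would define a relation $R$ by pairing $\langle C_1\cup C_2,s\rangle$ with $\langle C_1'\cup C_2',s\rangle$, letting $(\langle C_1,s\rangle,\langle C_1',s\rangle)$ range over $R_1$ and $(\langle C_2,s\rangle,\langle C_2',s\rangle)$ over $R_2$. A transition of $\langle x_1\parallel x_2,s\rangle$ must match one of the $\parallel$-rules: a joint step labelled $\{e_1,e_2\}$, or one of the race-condition rules with side condition $e_1\% e_2$. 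In each case the constituent moves of $x_1$ and of $x_2$ are matched through $R_1$ and $R_2$, the composite step labels satisfy $X\sim X'$ because their restrictions to the two branches do, and the data states agree because they are computed by the same $effect$ applied to related configurations. The operators $\mid$, $\Theta$, $\triangleleft$, $\partial_H$ are handled analogously, using for $\Theta$ and $\triangleleft$ that the side conditions on $\sharp(e_1,e_2)$ and on $\leq$ depend only on the underlying event structure and so transfer across bisimilar configurations, and for the guards that $test(\phi,s)$ is a predicate on the common state $s$. The pomset case is the same with $\sim$ taken up to pomset isomorphism.

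For the history-preserving equivalences $\sim_{hp}$ and $\sim_{hhp}$ the same constructions are carried out at the level of the posetal product $\overline{\times}$: the witnessing isomorphism $f$ for the composite is assembled from the component isomorphisms (a disjoint combination for $\parallel$ and $\mid$, and the unchanged map for $\Theta$, $\triangleleft$, $\partial_H$, which do not alter the event set on which $f$ acts), and one checks that $f$ remains an order isomorphism for the combined causal order and that each extension $f[e_1\mapsto e_2]$ is again of this form. For $\sim_{hhp}$ one argues in addition that the constructed posetal relation is downward closed, using that downward closure of $R_1$ and $R_2$ is inherited by the pairings defining $R$.

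The main obstacle I expect is the $\sim_{hhp}$ case for $\parallel$ and $\mid$. These are the operators that create fresh concurrency and, via $\gamma$, fresh causal links between the two branches, so one must verify carefully that the assembled map $f$ is an isomorphism with respect to the combined order $\leq$ and that downward closure is preserved precisely when new concurrent events are introduced; this is exactly the point where naive componentwise arguments can fail, and where consistency with the data states $s$ (tracked through $effect$ and the weakest-precondition side conditions $G10$, $G11$) must also be checked. All remaining operators reduce to routine adaptations of the $BATC_G$ congruence argument.
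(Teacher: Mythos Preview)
The paper does not actually supply a proof of this theorem: it is one of a sequence of preliminary results in Chapter~\ref{tcpa} (alongside the Generalization, Soundness, Completeness, and Sufficient-determinacy theorems for $APTC_G$) that are stated without argument, the proofs being deferred to the cited works \cite{ATC,CTC,PITC}. So there is no in-paper proof against which your proposal can be compared.

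That said, your outline is the standard shape such congruence proofs take in the ACP tradition: inherit the $BATC_G$ cases, then for each new operator close the assumed bisimulations under the relevant transition rules and check the transfer property rule-by-rule. Your treatment of the data-state component and of the guard predicates $test(\phi,s)$ is appropriate for the $APTC_G$ setting, and your identification of the $\sim_{hhp}$ case for $\parallel$ and $\mid$ as the delicate point is well placed. One remark: the paper's own later development is consistent with your caution here, since the Soundness theorem for $APTC_G$ (Theorem~\ref{SAPTCG}) omits $\sim_{hhp}$ even though the present Congruence theorem includes it, and Section~\ref{ahhpb} introduces the left parallel $\leftmerge$ precisely to recover a workable theory at the hhp level. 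Your sketch of assembling the posetal isomorphism componentwise for $\parallel$ would need to be carried out with care, as downward closure under the combined causal order is exactly where such constructions can break; the paper does not give you guidance on how (or whether) that step goes through for the unrestricted $\parallel$.
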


\begin{theorem}[Soundness of $APTC_G$ modulo truly concurrent bisimulation equivalences]\label{SAPTCG}
(1) Let $x$ and $y$ be $APTC_G$ terms. If $APTC\vdash x=y$, then $x\sim_{p} y$;

(2) Let $x$ and $y$ be $APTC_G$ terms. If $APTC\vdash x=y$, then $x\sim_{s} y$;

(3) Let $x$ and $y$ be $APTC_G$ terms. If $APTC\vdash x=y$, then $x\sim_{hp} y$.
\end{theorem}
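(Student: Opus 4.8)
The plan is to reduce the soundness claim to a finite, axiom-by-axiom verification by exploiting the congruence result of Theorem \ref{CAPTCG}. Recall that provability $APTC_G \vdash x=y$ means precisely that the pair $(x,y)$ lies in the least congruence on closed terms containing all closed substitution instances of the axioms in Tables \ref{AxiomsForBATCG}, \ref{AxiomsForAPTCG} and \ref{AxiomsForAPTCG2}. Since Theorem \ref{CAPTCG} already establishes that each of $\sim_p$, $\sim_s$ and $\sim_{hp}$ is a congruence with respect to every operator of $APTC_G$, and since each of these relations is reflexive, symmetric and transitive, each is a congruence closed under the axioms once we know the axioms themselves are valid. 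Hence it suffices to check, for every single axiom $u=v$ and every closed substitution instance, that the two sides are related by the equivalence in question; soundness for arbitrary derivations then follows because the equivalence contains the least such congruence.

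I would then dispatch the axioms in groups. The purely sequential laws $A1$–$A9$ together with the guard laws $G1$–$G11$ are inherited verbatim from $BATC_G$, so their soundness is already available from Theorem \ref{SBATCG}; what is genuinely new are the parallelism laws $P1$–$P10$, the communication-merge laws $C1$–$C10$, the conflict-elimination and unless laws $CE1$–$CE7$ and $U1$–$U15$, the encapsulation laws $D1$–$D6$, and the guard-parallelism laws $G12$–$G25$. For each such axiom $u=v$ I would take an arbitrary closed instance and exhibit a witnessing relation: start from the pair of initial configuration--state pairs $(\langle\emptyset,s\rangle,\langle\emptyset,s\rangle)$ and close it under the demand that whenever one side fires a transition governed by Table \ref{TRForAPTCG}, the other side matches it with an identically labelled transition into a related pair, with the data state updated identically on both sides. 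For the step and pomset cases $R$ is a relation on configuration--state pairs; for the hp-case one additionally carries the order isomorphism $f$ and verifies that each extension $f[e_1\mapsto e_2]$ remains an isomorphism after a matched event. Because each axiom equates terms whose associated labelled transition systems have essentially the same reachable shape, these bisimulations can be read off directly from the transition rules.

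The main obstacle will be the axioms that combine the parallel operator $\parallel$ (and the communication merge $\mid$) with the data-state bookkeeping, most notably $P4$, $C4$ and the guard laws $G12$, $G13$, $G25$. Here the parallel transition rules produce a single step $\{e_1,e_2\}$ whose resulting data state is the union $s'\cup s''$ of the two component effects, and one must confirm that the right-hand side $(e_1\parallel e_2)\cdot(x\between y)$ generates exactly the same step and the same combined state, so that the candidate relation is genuinely a bisimulation; the race-condition side conditions $e_1 \% e_2$ in Table \ref{TRForAPTCG} must be tracked carefully to ensure no spurious interleaving transition is introduced or dropped. For the guard laws one additionally invokes the $test$ and $effect$ predicates and checks the side conditions on data states (for instance the hypothesis of $G25$ that in every combined state some $\neg\phi_i$ holds), so that both sides collapse to $\delta$ or to the same guarded behaviour in every data environment. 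In the hp-case the extra subtlety is that concurrent events contribute incomparable elements to the configuration, so one must verify that $f$ preserves not only labelling but the causal order $\leq$ across such parallel steps; since hhp-bisimilarity is deliberately excluded from the statement, no downward closure of the posetal relation need be maintained, which is exactly what keeps these parallel axioms tractable here.
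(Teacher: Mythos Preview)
The paper does not actually supply a proof of this theorem: it is stated in the preliminaries chapter as a result imported from the cited works on truly concurrent process algebra, and no proof environment accompanies it. Your proposal is therefore not comparable line-by-line against anything in the paper.

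That said, your approach is the standard and correct one for soundness results of this kind. Reducing to an axiom-by-axiom check via the congruence theorem (Theorem \ref{CAPTCG}), reusing Theorem \ref{SBATCG} for the $BATC_G$ fragment, and then exhibiting explicit bisimulation relations for each remaining law using the transition rules of Table \ref{TRForAPTCG} is exactly how such proofs are organized in the underlying literature. Your identification of the delicate points---the combined data state $s'\cup s''$ in the parallel rules, the race-condition side conditions, and the need to track the order isomorphism $f$ for $\sim_{hp}$ without requiring downward closure---is accurate, and your remark that excluding $\sim_{hhp}$ is what keeps the parallel axioms manageable matches the paper's own observation in Section \ref{ahhpb} that $\parallel$ is not finitely axiomatizable modulo hhp-bisimilarity. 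There is no gap in your plan.
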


\begin{theorem}[Completeness of $APTC_G$ modulo truly concurrent bisimulation equivalences]\label{CAPTCG}
(1) Let $p$ and $q$ be closed $APTC_G$ terms, if $p\sim_{p} q$ then $p=q$;

(2) Let $p$ and $q$ be closed $APTC_G$ terms, if $p\sim_{s} q$ then $p=q$;

(3) Let $p$ and $q$ be closed $APTC_G$ terms, if $p\sim_{hp} q$ then $p=q$.
\end{theorem}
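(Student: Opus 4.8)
The plan is to follow the standard three-step pattern for completeness proofs in process algebra: first reduce arbitrary closed terms to basic terms, then impose a canonical normal form, and finally show that bisimilar normal forms are provably equal. First I would invoke the Elimination Theorem for $APTC_G$ (Theorem \ref{ETAPTCG}): for the given closed terms $p,q$ there are basic $APTC_G$ terms $p',q'$ with $APTC_G\vdash p=p'$ and $APTC_G\vdash q=q'$. By the Soundness Theorem (Theorem \ref{SAPTCG}) we have $p\sim p'$ and $q\sim q'$ for the relevant equivalence, so from $p\sim q$ we obtain $p'\sim q'$. Thus it suffices to prove the claim for basic terms, and it is enough to establish the key lemma that two bisimilar basic terms are provably equal.

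Second, I would refine the basic-term grammar (Definition \ref{BTAPTCG}) into a canonical normal form in which every summand has one of the shapes $(e_1\parallel\cdots\parallel e_k)$, $(e_1\parallel\cdots\parallel e_k)\cdot t$, $\phi$, or $\phi\cdot t$, with $t$ again in normal form and guards pushed as far left as possible. Using the axioms of $APTC_G$ — in particular the parallel-distribution laws $P5,P6$, the guard laws $G4$–$G7,G12,G13$, and idempotence $A3$ to delete duplicate summands — every basic term reduces to such a normal form. The point of the normal form is that the transition rules of Table \ref{TRForAPTCG} make the initial behaviour transparent: each summand $(e_1\parallel\cdots\parallel e_k)\cdot t$ is exactly responsible for the step transition $\langle C,s\rangle\xrightarrow{\{e_1,\dots,e_k\}}\langle C',s'\rangle$ with residual $t$, and likewise the guard summands fire precisely when $test(\phi,s)$ holds.

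Third, I would prove the key lemma by induction on the combined depth of the two normal forms $n_1,n_2$. For the step case $\sim_s$, given a summand of $n_1$ inducing an initial step, step bisimilarity yields a matching step of $n_2$ with the same event set and a step-bisimilar residual; by the induction hypothesis the residuals are provably equal, so the summand of $n_1$ is absorbed into $n_2$ (and symmetrically), and $A3$ finishes the equality. For the pomset case $\sim_p$ one matches labelled partial orders rather than plain sets, which constrains how a parallel head may be split across the causal structure encoded by $\cdot$. For the hp case $\sim_{hp}$ one tracks the order-isomorphism $f$ of the posetal product, matching single events while preserving causality and updating $f$ with $f[e_1\mapsto e_2]$ along the induction; here the nesting of $\cdot$ versus $\parallel$ in the normal form encodes precisely the order that $f$ must respect. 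Throughout, the data component is handled by carrying the state $s$ through every transition and appealing to the guard axioms $G8$–$G11$ and $G25$ so that guard-headed summands are matched only when their tests agree on all states.

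The hard part will be the treatment of the parallel operator $\parallel$, which — as the text stresses — cannot be eliminated and therefore survives at the head of normal-form summands; matching two parallel heads requires showing that step, pomset, and hp bisimilarity force the sets of concurrent initial events together with their induced residuals to coincide, rather than merely their interleavings. For $\sim_{hp}$ this is the most delicate point, since maintaining the posetal isomorphism $f$ across the inductive step demands that the causal order recovered from the $\cdot$/$\parallel$ structure be matched exactly; this is also why hhp-bisimilarity is deliberately omitted from the statement, the completeness for $\sim_{hhp}$ being obtained separately through the left-parallel-composition axiomatization of Section \ref{ahhpb}. A secondary but routine complication is threading the data states consistently, which the guard axioms $G8$–$G11$ and $G25$ are designed to absorb.
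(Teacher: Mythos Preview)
The paper does not supply a proof of this theorem: it is stated in the preliminaries chapter as a result imported from the cited works \cite{ATC}, \cite{CTC}, \cite{PITC}, with no argument given in the present text. Your proposal follows the standard completeness strategy used in those references (and in the ACP tradition generally): eliminate to basic terms, normalise summands to parallel-headed or guard-headed form, and then prove by induction on term depth that bisimilar normal forms are summand-wise absorbable into one another. That is the expected route, and your identification of the parallel operator $\parallel$ as the structurally persistent head, of the guard axioms $G8$--$G11$, $G25$ as the mechanism for collapsing state-dependent summands, and of the hhp case as deliberately excluded here (handled instead via left merge in Section~\ref{ahhpb}) all match the design of the axiom system as presented.
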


\begin{theorem}[Sufficient determinacy]
All related data environments with respect to $APTC_G$ can be sufficiently deterministic.
\end{theorem}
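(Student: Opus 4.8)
The plan is to follow the template already used for the sufficient determinacy of $BATC_G$, lifting it to the parallel setting. First I would make the informal notion precise, following the note after Table~\ref{AxiomsForBATCG}: a data environment, specified by the $effect$ function together with the $test$ predicate, is \emph{sufficiently deterministic} when for every action and every guard the associated weakest precondition is itself expressible as a guard in $G$. That note already records that, in the sequential fragment, expressibility of $wp(e,\phi)$ together with soundness of $G9$ and $G10$ forces determinacy. The task here is therefore to check that adding the operators $\parallel$, $\mid$, $\Theta$, $\triangleleft$ and $\partial_H$, governed by the new guard axioms $G12$--$G25$, does not destroy this property.

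Next I would extend the weakest precondition to parallel steps. For a pairwise concurrent set of events $\{e_1,\cdots,e_n\}$ the transition rules of Table~\ref{TRForAPTCG} prescribe that the resulting data state lies in $effect(e_1,s)\cup\cdots\cup effect(e_n,s)$, so I would define $wp(e_1\parallel\cdots\parallel e_n,\phi)$ to hold at $s$ precisely when $test(\phi,s')$ holds for the merged successor $s'$. The key lemma to establish is that this parallel weakest precondition is again expressible in $G$; this is exactly the content needed to make $G25$ meaningful as the parallel analogue of $G9$. I would prove it by showing that $wp$ distributes over $\parallel$ in accordance with $G12$ (namely $\phi(x\parallel y)=\phi x\parallel\phi y$) and the unit/annihilation laws $G18$--$G21$, thereby reducing the parallel case to finitely many instances of the sequential weakest preconditions $wp(e_i,\phi)$ already known to be expressible.

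I would then invoke the soundness of $APTC_G$ (Theorem~\ref{SAPTCG}): since $G10$, $G11$ and $G25$ are sound modulo the truly concurrent bisimulation equivalences, any data environment realising the $effect$/$test$ interpretation must validate them, and validating $G25$ is precisely the statement that whenever the merged state $s_0\cup\cdots\cup s_n$ falsifies some $\phi_i$ the parallel guard composition collapses to $\delta$. Combining this with the distributivity established above yields that the successor state of every parallel step is determined, as far as the guards can observe, by the starting state and the chosen multiset of events, which is the required determinacy.

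The main obstacle I anticipate is the behaviour of the state-merge operation $\cup$ under genuinely concurrent updates. When two concurrent events write to overlapping portions of the data state, $effect(e_1,s)\cup effect(e_2,s)$ need not be single-valued, and nothing in the axioms a priori forbids such interference. I would therefore argue that sufficient determinacy is exactly the hypothesis ruling this out: the quantifier $\forall s_0,\cdots,s_n\in S$ in $G25$ is to be read as demanding that the merge be well-defined and deterministic on the reachable states. The theorem is thus less a computation than a verification that the axioms $G12$--$G25$ are mutually consistent with, and in fact force, a confluent (non-interfering) reading of $\cup$; establishing that consistency, rather than any individual weakest-precondition calculation, is where the real work lies.
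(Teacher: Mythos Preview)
The paper does not actually prove this theorem. Like most of the foundational results in Section~\ref{tcpa}, it is stated without proof, presumably imported from the cited works \cite{ATC,CTC,PITC}. The only relevant justification anywhere in the paper is the one-sentence remark following Table~\ref{AxiomsForBATCG} for the $BATC_G$ case: ``it is obvious that if the weakest precondition is expressible and $G9$, $G10$ are sound, then the related data environment is sufficiently deterministic.'' For $APTC_G$ there is not even that; the theorem is simply asserted.

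Your proposal is therefore far more detailed than anything the paper offers, and there is no paper proof to compare it against. Your plan---lifting the $BATC_G$ remark to the parallel setting by defining $wp$ on concurrent steps, reducing it via $G12$ to the sequential case, and reading $G25$ as the parallel analogue of $G9$---is a plausible reconstruction of what an actual proof might contain. The weakest point is the one you already flag: the semantics of the state-merge $s'\cup s''$ in Table~\ref{TRForAPTCG} is never defined in the paper, so your argument that $G25$ \emph{forces} confluence of $\cup$ is reading a lot into an axiom that, on its face, is only a collapse rule for inconsistent parallel guards. A cleaner line would be to take non-interference of concurrent $effect$s as a standing assumption on the data environment (which is almost certainly what the author intends) rather than trying to derive it from $G25$; once that assumption is explicit, your reduction to the sequential $wp(e_i,\phi)$ goes through without the speculative final paragraph.
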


\subsubsection{Recursion}{\label{recg}}

In this subsection, we introduce recursion to capture infinite processes based on $APTC_G$. In the following, $E,F,G$ are recursion specifications, $X,Y,Z$ are recursive variables.

\begin{definition}[Guarded recursive specification]
A recursive specification

$$X_1=t_1(X_1,\cdots,X_n)$$
$$...$$
$$X_n=t_n(X_1,\cdots,X_n)$$

is guarded if the right-hand sides of its recursive equations can be adapted to the form by applications of the axioms in $APTC$ and replacing recursion variables by the right-hand sides of their recursive equations,

$$(a_{11}\parallel\cdots\parallel a_{1i_1})\cdot s_1(X_1,\cdots,X_n)+\cdots+(a_{k1}\parallel\cdots\parallel a_{ki_k})\cdot s_k(X_1,\cdots,X_n)+(b_{11}\parallel\cdots\parallel b_{1j_1})+\cdots+(b_{1j_1}\parallel\cdots\parallel b_{lj_l})$$

where $a_{11},\cdots,a_{1i_1},a_{k1},\cdots,a_{ki_k},b_{11},\cdots,b_{1j_1},b_{1j_1},\cdots,b_{lj_l}\in \mathbb{E}$, and the sum above is allowed to be empty, in which case it represents the deadlock $\delta$. And there does not exist an infinite sequence of $\epsilon$-transitions $\langle X|E\rangle\rightarrow\langle X'|E\rangle\rightarrow\langle X''|E\rangle\rightarrow\cdots$.
\end{definition}

\begin{center}
    \begin{table}
        $$\frac{\langle t_i(\langle X_1|E\rangle,\cdots,\langle X_n|E\rangle),s\rangle\xrightarrow{\{e_1,\cdots,e_k\}}\langle\surd,s'\rangle}{\langle\langle X_i|E\rangle,s\rangle\xrightarrow{\{e_1,\cdots,e_k\}}\langle\surd,s'\rangle}$$
        $$\frac{\langle t_i(\langle X_1|E\rangle,\cdots,\langle X_n|E\rangle),s\rangle\xrightarrow{\{e_1,\cdots,e_k\}} \langle y,s'\rangle}{\langle\langle X_i|E\rangle,s\rangle\xrightarrow{\{e_1,\cdots,e_k\}} \langle y,s'\rangle}$$
        \caption{Transition rules of guarded recursion}
        \label{TRForGRG}
    \end{table}
\end{center}

\begin{theorem}[Conservitivity of $APTC_G$ with guarded recursion]
$APTC_G$ with guarded recursion is a conservative extension of $APTC_G$.
\end{theorem}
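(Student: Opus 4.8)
The plan is to prove conservativity in the usual two layers: first establish \emph{operational} conservativity at the level of the term-deduction system, and then read off the \emph{equational} statement using the ground-completeness of $APTC_G$ recorded in Theorem~\ref{CAPTCG}. Concretely, what must be shown is that for closed $APTC_G$ terms $p,q$ --- terms built \emph{without} the recursion constants $\langle X_i|E\rangle$ --- a derivation of $p=q$ in $APTC_G$ with guarded recursion already yields a derivation in $APTC_G$ alone; equivalently, the extension creates no new identities between pure $APTC_G$ terms. The heart of the argument is operational and needs no equational input, so I would carry out that part in a self-contained way.

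First I would check that the transition system specification of $APTC_G$ in Table~\ref{TRForAPTCG} is in the right shape for the meta-theory of structural operational semantics: every rule is source-dependent and the specification is well-founded, while the only features beyond the plain tyft/tyxt pattern are the termination predicate $\xrightarrow{e}\surd$, the data-state components $\langle\cdot,s\rangle$, and the negative premises produced by the race condition $e_1\%e_2$ (the $\xnrightarrow{e}$ hypotheses). All of these are accommodated by the panth format, which supports negative premises and predicates while still admitting the general conservative-extension theorem. I would then observe the decisive syntactic point: each rule for guarded recursion in Table~\ref{TRForGRG} has a conclusion whose source is a recursion constant $\langle X_i|E\rangle$, a symbol absent from the signature of $APTC_G$. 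Hence no rule of the extended system has a pure $APTC_G$ term as the source of its conclusion except the rules already present in $APTC_G$.

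By the conservative-extension theorem for the panth format, this fresh-symbol-in-the-source condition is exactly what guarantees operational conservativity: for every closed $APTC_G$ term $p$ and state $s$, the transitions of $\langle p,s\rangle$ derivable in the extended system coincide with those derivable in $APTC_G$, and their targets are again pure terms. Consequently the bisimilarities $\sim_p,\sim_s,\sim_{hp}$, restricted to pure closed terms, are unaffected by the extension. To finish, given closed $APTC_G$ terms $p,q$ with $APTC_G$ with guarded recursion $\vdash p=q$, soundness of the extension gives $p\sim q$ in the extended transition system, operational conservativity transports this bisimilarity to the transition system of $APTC_G$, and the ground-completeness of $APTC_G$ (Theorem~\ref{CAPTCG}, with soundness Theorem~\ref{SAPTCG} for the base) then delivers $APTC_G\vdash p=q$.

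The hard part will be the format bookkeeping rather than any conceptual difficulty: I must confirm that the negative premises coming from the race predicate $\%$, taken together with the silent and termination predicates and the state-indexed transitions, still satisfy the stratification and well-foundedness hypotheses under which the conservative-extension theorem is valid, so that operational conservativity genuinely applies. A secondary point to verify is that guardedness of $E$ prevents the recursion rules from generating a transition whose conclusion could be reattached to a pure term; once source-dependency on the fresh constant $\langle X_i|E\rangle$ is in place this is automatic, but it should be recorded explicitly that every derivation invoking Table~\ref{TRForGRG} keeps a recursion constant in the source of its conclusion.
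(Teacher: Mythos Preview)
The paper states this theorem in its preliminaries chapter without supplying a proof; the results in this chapter are imported from the cited works \cite{ATC}, \cite{CTC}, \cite{PITC}, and proofs are uniformly omitted. Your plan via the SOS meta-theory---checking that the base system fits the panth format, observing that every rule in Table~\ref{TRForGRG} has a fresh constant $\langle X_i|E\rangle$ as the source of its conclusion, and invoking the general conservative-extension theorem---is precisely the standard technique in this literature (it is the method developed in Fokkink's textbook, cited here as \cite{ACP}, and in the Verhoef/Fokkink meta-theorems on which these process-algebra papers routinely rely). So your approach is correct and is almost certainly what the authors have in mind.

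One small calibration: in this tradition ``conservative extension'' is an \emph{operational} notion---the extended transition system, restricted to terms of the old signature, coincides with the original transition system. Your operational argument therefore already discharges the theorem as stated; the further equational lifting via soundness of the extension (Theorem~\ref{SAPTC_GRG}) and ground-completeness of the base (Theorem~\ref{CAPTCG}) is a nice corollary but is not required for the statement itself. Your caveat about verifying stratification/well-foundedness in the presence of the negative premises from the race predicate $e_1\% e_2$ is the right thing to flag; once that bookkeeping is done the fresh-source condition does all the work, and guardedness plays no role in the conservativity argument (it matters only for the later soundness and elimination results).
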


\begin{theorem}[Congruence theorem of $APTC_G$ with guarded recursion]
Truly concurrent bisimulation equivalences $\sim_{p}$, $\sim_s$ and $\sim_{hp}$ are all congruences with respect to $APTC_G$ with guarded recursion.
\end{theorem}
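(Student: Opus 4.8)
The plan is to reduce the statement to the congruence theorem for $APTC_G$ already established above, and then to handle the single new syntactic construct, the guarded recursion constant $\langle X_i|E\rangle$, separately. The signature of $APTC_G$ with guarded recursion is obtained from that of $APTC_G$ only by adjoining the constants $\langle X_i|E\rangle$ (one for each guarded recursive specification $E$ and index $i$) together with the two transition rules of Table \ref{TRForGRG}. Using the conservativity of $APTC_G$ with guarded recursion over $APTC_G$ established just above, no new transitions are added to pure $APTC_G$ terms, so the congruence argument for $\cdot$, $+$, $\parallel$, $\mid$, $\Theta$, $\triangleleft$ and $\partial_H$ is undisturbed on that sub-signature. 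The genuinely new obligations therefore all concern terms in which a recursion constant may fire, and the core task is to verify that $\sim_p$, $\sim_s$ and $\sim_{hp}$ remain preserved once moves can originate from the rules of Table \ref{TRForGRG}.

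First I would record the basic fact, immediate from Table \ref{TRForGRG}, that $\langle X_i|E\rangle$ and its unfolding $t_i(\langle X_1|E\rangle,\cdots,\langle X_n|E\rangle)$ have identical outgoing pomset and step transitions and identical termination and data-state behaviour; hence $\langle X_i|E\rangle$ is equivalent to $t_i(\langle X_1|E\rangle,\cdots,\langle X_n|E\rangle)$ for each of the three equivalences. Next, for a fixed $APTC_G$ operator and arguments $p_k$ equivalent to $q_k$, I would build the witnessing relation for the two compound terms exactly as in the congruence proof for $APTC_G$, the only addition being that whenever a component move is produced by a recursion rule, one matches it through the unfolding equivalence before applying the structural clause. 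Concretely, for each equivalence I would exhibit a relation of the form $R=\{(\langle C_1,s\rangle,\langle C_2,s\rangle)\}$, and for $\sim_{hp}$ a posetal relation $\{(\langle C_1,s\rangle,f,\langle C_2,s\rangle)\}$, and verify the back-and-forth transfer conditions and the data-state matching for every rule of the combined system. Because each rule in Tables \ref{TRForAPTCG} and \ref{TRForGRG} carries a single constant or a known operator at the source of its conclusion and uses only positive, source-guarded premises, the combined specification stays within a transition-rule format for which pomset, step and hp-bisimilarity are guaranteed to be congruences, and this is what ultimately licenses the three claims uniformly.

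The main obstacle will be the $\sim_{hp}$ case. There the witnessing object is not merely a relation on configurations but a posetal relation carrying an order-isomorphism $f$ between the executed events, and I must check that unfolding a recursion constant neither breaks $f$ nor the causal order it respects, so that the extension $f[e_1\mapsto e_2]$ remains well defined at each matched move. Guardedness is exactly what is needed here: it guarantees that every step of $\langle X_i|E\rangle$ corresponds to a genuine event-producing step of the body rather than to an unproductive unfolding, so that the labelling used to define $f$ stays finite-to-finite and consistent across the recursive structure. A secondary point to treat carefully is the uniform handling of data states: the transfer conditions must match $s$ and $s'$ on both sides, so the relations I build must be indexed by $S$ precisely as in Definitions \ref{PSBG} and \ref{HHPBG}. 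Finally I would note that $\sim_{hhp}$ is deliberately absent from the statement, consistently with the paper's use of the left-merge machinery rather than plain $\parallel$ to control the downward closure required for hereditary history-preserving bisimilarity.
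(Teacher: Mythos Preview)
The paper does not actually supply a proof of this theorem: it is one of the preliminary results imported from the author's prior work on truly concurrent process algebra (the citations \cite{ATC}, \cite{CTC}, \cite{PITC} at the head of the chapter), and in the text the theorem statement is immediately followed by the next theorem with no proof environment in between. So there is no ``paper's own proof'' to compare against here.

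That said, your proposal follows the standard route one would expect for such a result and is sound in outline: lifting congruence from $APTC_G$ to $APTC_G$ with guarded recursion by observing that the only new construct is the constant $\langle X_i|E\rangle$, that its transition rules (Table~\ref{TRForGRG}) make it behave identically to its unfolding, and that the remaining operators inherit their congruence properties from the base system via conservativity. Your attention to the posetal-relation machinery in the $\sim_{hp}$ case and to the role of guardedness in keeping the event-isomorphism $f$ well defined is appropriate. The one point where you lean on something not made fully explicit in the paper is the appeal to a general ``rule format'' meta-theorem guaranteeing congruence; the cited works presumably establish congruence directly by exhibiting the witnessing (posetal) relations, so if you want a self-contained argument you should carry out that relation-construction explicitly rather than invoke a format result.
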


\begin{theorem}[Elimination theorem of $APTC_G$ with linear recursion]\label{ETRecursionG}
Each process term in $APTC_G$ with linear recursion is equal to a process term $\langle X_1|E\rangle$ with $E$ a linear recursive specification.
\end{theorem}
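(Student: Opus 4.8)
The plan is to prove the statement by structural induction on the process term $p$, establishing in each case that $APTC_G$ with linear recursion proves $p=\langle X_1|E\rangle$ for a suitable linear recursive specification $E$, with the equalities justified by $RDP$ and $RSP$. For the purely finite (recursion-free) subterms I would first invoke the elimination theorem for $APTC_G$ (Theorem~\ref{ETAPTCG}), which removes $\between$, $\mid$, $\Theta$, $\triangleleft$ and $\partial_H$ down to basic $APTC_G$ terms built only from $\mathbb{E}$, $G$, $\cdot$, $+$ and $\parallel$; a basic term has a finite transition system and is therefore directly representable by a linear specification whose variables are its subterms. The remaining base cases are immediate: an atomic event $e$, a guard $\phi$, and the constants $\delta$ and $\epsilon$ are each captured by a one-equation linear specification (e.g. $e=\langle X_1|\{X_1=e\}\rangle$ by $RDP$), while a recursion constant $\langle X_i|F\rangle$ already has the required shape once its variables are reindexed so that $X_i$ is the initial one.

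For the inductive steps I would combine the two linear specifications supplied by the induction hypothesis, $\langle X_1|E\rangle$ for $p_1$ and $\langle Y_1|F\rangle$ for $p_2$, into one linear specification $G$ and apply $RSP$. For alternative composition I would introduce a fresh initial variable $Z_1$ whose right-hand side is the sum of the right-hand sides of $X_1$ and $Y_1$, keeping all other equations; linearity is preserved since a sum of linear summands is again linear. For sequential composition I would retain the equations of $E$ but replace every terminating summand $b_{11}\parallel\cdots\parallel b_{1j}$ of $E$ by $(b_{11}\parallel\cdots\parallel b_{1j})\cdot Y_1$ and adjoin all equations of $F$, which again yields a linear specification with $\langle X_1|E\rangle\cdot\langle Y_1|F\rangle=\langle Z_1|G\rangle$ by $RSP$. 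The operators $\Theta$, $\triangleleft$ and $\partial_H$ are handled by distributing them equation-by-equation using axioms $CE$, $U$ and $D$: each maps a leading action-tuple to an action-tuple, to $\tau$, or to $\delta$, so the transformed system stays linear.

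The genuinely delicate case, and the one I expect to be the main obstacle, is parallelism: $\parallel$, the whole merge $\between$, and the communication merge $\mid$. Here I would form the \emph{product} specification whose variables are pairs $Z_{(i,j)}$ indexed by a variable $X_i$ of $E$ and a variable $Y_j$ of $F$, with the defining equation for $Z_{(i,j)}$ obtained by expanding $\langle X_i|E\rangle\between\langle Y_j|F\rangle$ via the parallel axioms $P1$--$P10$ and the communication axioms $C1$--$C10$. Every summand of the expansion is a left step from $X_i$, a right step from $Y_j$, or a synchronization $\gamma(\cdot,\cdot)$ of two leading actions, in each case a single parallel action-tuple followed by a single product or component variable. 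The real work is to verify carefully that this expansion closes up into \emph{finitely many} summands all of the admissible linear shape, and that whenever one component reaches a terminating summand the trailing factor is the already linear-representable constant of the other component.

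Finally I would check that guardedness is preserved, so that $RSP$ is legitimately applicable in every step: the leading action-tuples guard the variables in each constructed summand, and none of the constructions creates a new $\epsilon$- or $\tau$-cycle, so the combined specifications continue to satisfy the no-infinite-$\epsilon$-transition condition demanded of (guarded) linear recursive specifications. Assembling the base and inductive cases then produces, for an arbitrary term $p$, a linear recursive specification $E$ with $APTC_G\vdash p=\langle X_1|E\rangle$, which is exactly the assertion of the theorem.
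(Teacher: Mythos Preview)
The paper does not actually include a proof of this theorem; it is merely stated, like most of the meta-theorems in Section~2, with the implicit understanding that the proof follows the pattern established in the cited works on $APTC$ and its predecessors (in particular \cite{ACP}). So there is no ``paper's own proof'' to compare against.

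Your proposal follows the standard route for elimination theorems of this kind in ACP-style process algebras: structural induction on the term, with the product-specification construction for the parallel operators and the summand-rewriting construction for sequential composition. This is exactly the argument one finds in Fokkink~\cite{ACP} for ACP with linear recursion, adapted here to the richer signature of $APTC_G$. The outline is correct.

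Two small points worth tightening. First, your invocation of Theorem~\ref{ETAPTCG} for ``purely finite subterms'' is not quite the right framing: the induction is over \emph{all} $APTC_G$-with-linear-recursion terms, and recursion constants may occur arbitrarily deep, so you cannot first peel off a recursion-free layer. What you actually need is that each operator case of the induction works uniformly whether or not the immediate subterms already contain recursion---which your construction does, since the induction hypothesis hands you linear specifications regardless. Second, in the sequential-composition step you must also thread $Y_1$ through the non-initial equations of $E$ (every terminating summand in \emph{every} equation of $E$, not just the initial one), and you should note that the resulting specification remains linear because prefixing a variable by an action-tuple is still a linear summand. With those clarifications the argument goes through.
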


\begin{theorem}[Soundness of $APTC_G$ with guarded recursion]\label{SAPTC_GRG}
Let $x$ and $y$ be $APTC_G$ with guarded recursion terms. If $APTC_G\textrm{ with guarded recursion}\vdash x=y$, then

(1) $x\sim_{s} y$;

(2) $x\sim_{p} y$;

(3) $x\sim_{hp} y$.
\end{theorem}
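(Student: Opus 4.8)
The plan is to reduce the soundness of the whole theory to the soundness of each individual defining equality, and then to treat the recursion principle as the only genuinely new ingredient. Concretely, since each of $\sim_s$, $\sim_p$ and $\sim_{hp}$ is an equivalence relation and, by the congruence theorem for $APTC_G$ with guarded recursion, a congruence for every operator (including the recursion construct $\langle X_i|E\rangle$), it suffices to check that each generating equality $s=t$ of the theory satisfies $s\sim_\star t$ for $\star\in\{s,p,hp\}$. The general claim ``if $APTC_G$ with guarded recursion $\vdash x=y$ then $x\sim_\star y$'' then follows by a routine induction on the length of the derivation: reflexivity, symmetry and transitivity are handled because $\sim_\star$ is an equivalence, and each congruence rule is handled because $\sim_\star$ is a congruence.

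For the equational axioms themselves (those of $BATC_G$ in Table \ref{AxiomsForBATCG} together with the parallel, communication, conflict-elimination and encapsulation axioms of $APTC_G$), I would simply invoke the already-established Soundness of $APTC_G$ (Theorem \ref{SAPTCG}). These axioms carry over verbatim: the transition rules for $\cdot$, $+$, $\parallel$, $\mid$, $\Theta$, $\triangleleft$ and $\partial_H$ in Table \ref{TRForAPTCG} are unchanged by the addition of guarded recursion, so any bisimulation witnessing $s\sim_\star t$ in the recursion-free setting still works here.

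The new work concerns the recursion principle, i.e.\ the equality $\langle X_i|E\rangle = t_i(\langle X_1|E\rangle,\cdots,\langle X_n|E\rangle)$ induced by the transition rules in Table \ref{TRForGRG}. To show this is sound, I would exhibit an explicit relation $R$ containing the pair $(\langle X_i|E\rangle,\, t_i(\langle X_1|E\rangle,\cdots,\langle X_n|E\rangle))$ and closed under the identity elsewhere. The key observation is that Table \ref{TRForGRG} is formulated so that the outgoing transitions of $\langle X_i|E\rangle$ are \emph{exactly} those of its right-hand side $t_i(\langle X_1|E\rangle,\cdots,\langle X_n|E\rangle)$: for every step $\langle t_i(\cdots),s\rangle\xrightarrow{\{e_1,\cdots,e_k\}}\langle y,s'\rangle$ there is a matching $\langle\langle X_i|E\rangle,s\rangle\xrightarrow{\{e_1,\cdots,e_k\}}\langle y,s'\rangle$ with identical label, identical residual $y$, and identical resulting data state $s'$, and conversely. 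Hence $R$ is immediately a step and a pomset bisimulation; for $\sim_{hp}$ I would take the accompanying order-isomorphism $f$ to be the identity on events, which trivially preserves the causal order since the two configurations reached are literally the same.

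The main obstacle I anticipate is not any single verification but the bookkeeping required to thread the data state $s$ and, for the history-preserving case, the isomorphism $f$ through the matching transitions uniformly, and to confirm that the guardedness hypothesis genuinely guarantees the transition relation is well defined and image-finite, so that the bisimulation clauses can always be discharged. In particular, guardedness rules out spurious infinite branching at $\langle X_i|E\rangle$, which is what makes the finite-configuration reasoning legitimate. I would also remark, paralleling the recursion-free $APTC$ case, that $\sim_{hhp}$ is deliberately absent from the statement because the full merge $\parallel$ lacks a finite sound and complete axiomatization modulo hhp-bisimilarity without the left-parallel operator.
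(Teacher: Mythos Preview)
The paper does not actually supply a proof of this theorem: it appears in the preliminaries chapter, which collects results from \cite{ATC,CTC,PITC} without reproducing their proofs. So there is no ``paper's own proof'' to compare against; I can only assess your proposal on its merits.

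Your overall strategy---reduce soundness to checking each generating equality, invoke congruence and equivalence properties of $\sim_\star$, and reuse Theorem~\ref{SAPTCG} for the non-recursive axioms---is the standard and correct shape of such a proof. Your treatment of $RDP$, namely $\langle X_i|E\rangle = t_i(\langle X_1|E\rangle,\ldots,\langle X_n|E\rangle)$, is also correct: the transition rules in Table~\ref{TRForGRG} are set up precisely so that $\langle X_i|E\rangle$ and its unfolding have literally the same outgoing transitions, so the identity-augmented relation you describe is a bisimulation of each kind.

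However, there is a genuine gap: you do not address $RSP$ at all. The theory ``$APTC_G$ with guarded recursion'' includes both principles from Table~\ref{RDPRSP}, and $RSP$ is the substantive one. Its soundness is the statement that any two solutions of a guarded recursive specification are bisimilar, i.e.\ if processes $p_1,\ldots,p_n$ satisfy $p_i\sim_\star t_i(p_1,\ldots,p_n)$ for all $i$, then $p_i\sim_\star\langle X_i|E\rangle$. This is not a consequence of $RDP$ and requires a separate argument: one typically builds a relation pairing each $p_i$ with $\langle X_i|E\rangle$ (and their reachable derivatives) and shows it is a bisimulation, using guardedness to ensure that every transition of $\langle X_i|E\rangle$ is matched after finitely many unfoldings. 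Your remark about guardedness ensuring well-definedness and image-finiteness is relevant here, but it is not the same as the uniqueness-of-solutions argument that $RSP$ demands. Without treating $RSP$, the induction on derivations cannot discharge the step where $RSP$ is applied.
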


\begin{theorem}[Completeness of $APTC_G$ with linear recursion]\label{CAPTC_GRG}
Let $p$ and $q$ be closed $APTC_G$ with linear recursion terms, then,

(1) if $p\sim_{s} q$ then $p=q$;

(2) if $p\sim_{p} q$ then $p=q$;

(3) if $p\sim_{hp} q$ then $p=q$.
\end{theorem}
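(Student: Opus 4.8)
The plan is to use the standard strategy for completeness of a process algebra with linear recursion: reduce both terms to recursion constants over linear specifications, fuse the two specifications into a single linear one indexed by the assumed bisimulation, and conclude by the uniqueness of solutions guaranteed by $RSP$.

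First I would invoke the elimination theorem for $APTC_G$ with linear recursion (Theorem~\ref{ETRecursionG}) to obtain linear recursive specifications $E_1$ and $E_2$ with $APTC_G \vdash p = \langle X_1|E_1\rangle$ and $APTC_G \vdash q = \langle Y_1|E_2\rangle$. By the soundness result (Theorem~\ref{SAPTC_GRG}) these provable equalities are also bisimilarities, so from the hypothesis $p \sim_{\star} q$ (where $\star\in\{s,p,hp\}$) I obtain $\langle X_1|E_1\rangle \sim_{\star} \langle Y_1|E_2\rangle$. Each recursive equation may be assumed to be in the standard linear form, a sum of summands $(a_1\parallel\cdots\parallel a_k)\cdot X'$ together with terminating summands $(b_1\parallel\cdots\parallel b_l)$, with data dependence carried by leading guards $\phi$ and the $effect$/$test$ predicates of Table~\ref{SETRForBATCG} and Table~\ref{TRForAPTCG}.

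The heart of the argument is the construction of a combined linear specification $E$. For step and pomset bisimulation, I would take recursion variables $Z_{X,Y}$ ranging over all pairs with $\langle X|E_1\rangle \sim_{\star} \langle Y|E_2\rangle$; for hp-bisimulation the index additionally records the posetal isomorphism, giving variables $Z_{X,f,Y}$. The equation of $Z_{X,Y}$ is assembled by summand matching: for each summand $(a_1\parallel\cdots\parallel a_k)X'$ in the equation of $X$, the bisimulation supplies a matching summand in the equation of $Y$ with the same multiset of labels (as a step, respectively as an isomorphic pomset) leading to some $Y'$ with $\langle X'|E_1\rangle \sim_{\star} \langle Y'|E_2\rangle$, and I place $(a_1\parallel\cdots\parallel a_k)Z_{X',Y'}$ into the equation of $Z_{X,Y}$, and likewise for the terminating summands. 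One checks that $E$ is again linear, hence guarded. I would then verify that both assignments $Z_{X,Y}\mapsto \langle X|E_1\rangle$ and $Z_{X,Y}\mapsto \langle Y|E_2\rangle$ are solutions of $E$: unfolding the left-hand sides by $RDP$ (Table~\ref{RDPRSP}) and reorganizing summands by $A1$--$A3$ and the guard axioms $G1$--$G11$ recovers exactly the right-hand sides of $E$, since matched summands carry identical action multisets and identical guards. Because $E$ is guarded and linear, $RSP$ (Table~\ref{RDPRSP}) forces its solution to be unique, so $\langle X_1|E_1\rangle = \langle Z_{X_1,Y_1}|E\rangle = \langle Y_1|E_2\rangle$, whence $p = q$. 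The three cases differ only in the meaning of ``matching summand,'' so a single construction handles all of them.

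The main obstacle will be the solution-verification step in the presence of data states and guards: one must ensure that matched summands agree not only on their action multisets but also on the leading guards $\phi$, and that the data state $s$ produced by $effect$ stays synchronized across the two candidate solutions, so that the bisimulation genuinely certifies each family as a solution of $E$. For hp-bisimulation there is the further bookkeeping of threading the posetal isomorphism $f$ coherently through matched transitions, so that the index set $\{Z_{X,f,Y}\}$ is well defined and closed under summand matching.
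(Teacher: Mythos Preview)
The paper does not actually supply a proof of this theorem: it is stated in the preliminaries chapter as a known result imported from the cited references \cite{ATC,CTC,PITC}, with no accompanying argument. Your proposal therefore cannot be compared line-by-line to a proof in the paper, but it is precisely the standard completeness argument for ACP-style algebras with linear recursion (elimination to linear specifications, fusion of the two specifications along the bisimulation, and an appeal to $RSP$), adapted to the guarded/data-state setting, and this is exactly the strategy one expects the cited sources to use.
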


\subsubsection{Abstraction}{\label{absg}}

To abstract away from the internal implementations of a program, and verify that the program exhibits the desired external behaviors, the silent step $\tau$ and abstraction operator $\tau_I$ are introduced, where $I\subseteq \mathbb{E}\cup G_{at}$ denotes the internal events or guards. The silent step $\tau$ represents the internal events or guards, when we consider the external behaviors of a process, $\tau$ steps can be removed, that is, $\tau$ steps must keep silent. The transition rule of $\tau$ is shown in Table \ref{TRForTauG}. In the following, let the atomic event $e$ range over $\mathbb{E}\cup\{\epsilon\}\cup\{\delta\}\cup\{\tau\}$, and $\phi$ range over $G\cup \{\tau\}$, and let the communication function $\gamma:\mathbb{E}\cup\{\tau\}\times \mathbb{E}\cup\{\tau\}\rightarrow \mathbb{E}\cup\{\delta\}$, with each communication involved $\tau$ resulting in $\delta$. We use $\tau(s)$ to denote $effect(\tau,s)$, for the fact that $\tau$ only change the state of internal data environment, that is, for the external data environments, $s=\tau(s)$.

\begin{center}
    \begin{table}
        $$\frac{}{\langle\tau,s\rangle\rightarrow\langle\surd,s\rangle}\textrm{ if }test(\tau,s)$$
        $$\frac{}{\langle\tau,s\rangle\xrightarrow{\tau}\langle\surd,\tau(s)\rangle}$$
        \caption{Transition rule of the silent step}
        \label{TRForTauG}
    \end{table}
\end{center}

In section \ref{os2}, we introduce $\tau$ into event structure, and also give the concept of weakly true concurrency. In this subsection, we give the concepts of rooted branching truly concurrent bisimulation equivalences, based on these concepts, we can design the axiom system of the silent step $\tau$ and the abstraction operator $\tau_I$.

\begin{definition}[Branching pomset, step bisimulation]\label{BPSBG}
Assume a special termination predicate $\downarrow$, and let $\surd$ represent a state with $\surd\downarrow$. Let $\mathcal{E}_1$, $\mathcal{E}_2$ be PESs. A branching pomset bisimulation is a relation $R\subseteq\langle\mathcal{C}(\mathcal{E}_1),S\rangle\times\langle\mathcal{C}(\mathcal{E}_2),S\rangle$, such that:
 \begin{enumerate}
   \item if $(\langle C_1,s\rangle,\langle C_2,s\rangle)\in R$, and $\langle C_1,s\rangle\xrightarrow{X}\langle C_1',s'\rangle$ then
   \begin{itemize}
     \item either $X\equiv \tau^*$, and $(\langle C_1',s'\rangle,\langle C_2,s\rangle)\in R$ with $s'\in \tau(s)$;
     \item or there is a sequence of (zero or more) $\tau$-transitions $\langle C_2,s\rangle\xrightarrow{\tau^*} \langle C_2^0,s^0\rangle$, such that $(\langle C_1,s\rangle,\langle C_2^0,s^0\rangle)\in R$ and $\langle C_2^0,s^0\rangle\xRightarrow{X}\langle C_2',s'\rangle$ with $(\langle C_1',s'\rangle,\langle C_2',s'\rangle)\in R$;
   \end{itemize}
   \item if $(\langle C_1,s\rangle,\langle C_2,s\rangle)\in R$, and $\langle C_2,s\rangle\xrightarrow{X}\langle C_2',s'\rangle$ then
   \begin{itemize}
     \item either $X\equiv \tau^*$, and $(\langle C_1,s\rangle,\langle C_2',s'\rangle)\in R$;
     \item or there is a sequence of (zero or more) $\tau$-transitions $\langle C_1,s\rangle\xrightarrow{\tau^*} \langle C_1^0,s^0\rangle$, such that $(\langle C_1^0,s^0\rangle,\langle C_2,s\rangle)\in R$ and $\langle C_1^0,s^0\rangle\xRightarrow{X}\langle C_1',s'\rangle$ with $(\langle C_1',s'\rangle,\langle C_2',s'\rangle)\in R$;
   \end{itemize}
   \item if $(\langle C_1,s\rangle,\langle C_2,s\rangle)\in R$ and $\langle C_1,s\rangle\downarrow$, then there is a sequence of (zero or more) $\tau$-transitions $\langle C_2,s\rangle\xrightarrow{\tau^*}\langle C_2^0,s^0\rangle$ such that $(\langle C_1,s\rangle,\langle C_2^0,s^0\rangle)\in R$ and $\langle C_2^0,s^0\rangle\downarrow$;
   \item if $(\langle C_1,s\rangle,\langle C_2,s\rangle)\in R$ and $\langle C_2,s\rangle\downarrow$, then there is a sequence of (zero or more) $\tau$-transitions $\langle C_1,s\rangle\xrightarrow{\tau^*}\langle C_1^0,s^0\rangle$ such that $(\langle C_1^0,s^0\rangle,\langle C_2,s\rangle)\in R$ and $\langle C_1^0,s^0\rangle\downarrow$.
 \end{enumerate}

We say that $\mathcal{E}_1$, $\mathcal{E}_2$ are branching pomset bisimilar, written $\mathcal{E}_1\approx_{bp}\mathcal{E}_2$, if there exists a branching pomset bisimulation $R$, such that $(\langle\emptyset,\emptyset\rangle,\langle\emptyset,\emptyset\rangle)\in R$.

By replacing pomset transitions with steps, we can get the definition of branching step bisimulation. When PESs $\mathcal{E}_1$ and $\mathcal{E}_2$ are branching step bisimilar, we write $\mathcal{E}_1\approx_{bs}\mathcal{E}_2$.
\end{definition}

\begin{definition}[Rooted branching pomset, step bisimulation]\label{RBPSBG}
Assume a special termination predicate $\downarrow$, and let $\surd$ represent a state with $\surd\downarrow$. Let $\mathcal{E}_1$, $\mathcal{E}_2$ be PESs. A rooted branching pomset bisimulation is a relation $R\subseteq\langle\mathcal{C}(\mathcal{E}_1),S\rangle\times\langle\mathcal{C}(\mathcal{E}_2),S\rangle$, such that:
 \begin{enumerate}
   \item if $(\langle C_1,s\rangle,\langle C_2,s\rangle)\in R$, and $\langle C_1,s\rangle\xrightarrow{X}\langle C_1',s'\rangle$ then $\langle C_2,s\rangle\xrightarrow{X}\langle C_2',s'\rangle$ with $\langle C_1',s'\rangle\approx_{bp}\langle C_2',s'\rangle$;
   \item if $(\langle C_1,s\rangle,\langle C_2,s\rangle)\in R$, and $\langle C_2,s\rangle\xrightarrow{X}\langle C_2',s'\rangle$ then $\langle C_1,s\rangle\xrightarrow{X}\langle C_1',s'\rangle$ with $\langle C_1',s'\rangle\approx_{bp}\langle C_2',s'\rangle$;
   \item if $(\langle C_1,s\rangle,\langle C_2,s\rangle)\in R$ and $\langle C_1,s\rangle\downarrow$, then $\langle C_2,s\rangle\downarrow$;
   \item if $(\langle C_1,s\rangle,\langle C_2,s\rangle)\in R$ and $\langle C_2,s\rangle\downarrow$, then $\langle C_1,s\rangle\downarrow$.
 \end{enumerate}

We say that $\mathcal{E}_1$, $\mathcal{E}_2$ are rooted branching pomset bisimilar, written $\mathcal{E}_1\approx_{rbp}\mathcal{E}_2$, if there exists a rooted branching pomset bisimulation $R$, such that $(\langle\emptyset,\emptyset\rangle,\langle\emptyset,\emptyset\rangle)\in R$.

By replacing pomset transitions with steps, we can get the definition of rooted branching step bisimulation. When PESs $\mathcal{E}_1$ and $\mathcal{E}_2$ are rooted branching step bisimilar, we write $\mathcal{E}_1\approx_{rbs}\mathcal{E}_2$.
\end{definition}

\begin{definition}[Branching (hereditary) history-preserving bisimulation]\label{BHHPBG}
Assume a special termination predicate $\downarrow$, and let $\surd$ represent a state with $\surd\downarrow$. A branching history-preserving (hp-) bisimulation is a weakly posetal relation $R\subseteq\langle\mathcal{C}(\mathcal{E}_1),S\rangle\overline{\times}\langle\mathcal{C}(\mathcal{E}_2),S\rangle$ such that:

 \begin{enumerate}
   \item if $(\langle C_1,s\rangle,f,\langle C_2,s\rangle)\in R$, and $\langle C_1,s\rangle\xrightarrow{e_1}\langle C_1',s'\rangle$ then
   \begin{itemize}
     \item either $e_1\equiv \tau$, and $(\langle C_1',s'\rangle,f[e_1\mapsto \tau],\langle C_2,s\rangle)\in R$;
     \item or there is a sequence of (zero or more) $\tau$-transitions $\langle C_2,s\rangle\xrightarrow{\tau^*} \langle C_2^0,s^0\rangle$, such that $(\langle C_1,s\rangle,f,\langle C_2^0,s^0\rangle)\in R$ and $\langle C_2^0,s^0\rangle\xrightarrow{e_2}\langle C_2',s'\rangle$ with $(\langle C_1',s'\rangle,f[e_1\mapsto e_2],\langle C_2',s'\rangle)\in R$;
   \end{itemize}
   \item if $(\langle C_1,s\rangle,f,\langle C_2,s\rangle)\in R$, and $\langle C_2,s\rangle\xrightarrow{e_2}\langle C_2',s'\rangle$ then
   \begin{itemize}
     \item either $e_2\equiv \tau$, and $(\langle C_1,s\rangle,f[e_2\mapsto \tau],\langle C_2',s'\rangle)\in R$;
     \item or there is a sequence of (zero or more) $\tau$-transitions $\langle C_1,s\rangle\xrightarrow{\tau^*} \langle C_1^0,s^0\rangle$, such that $(\langle C_1^0,s^0\rangle,f,\langle C_2,s\rangle)\in R$ and $\langle C_1^0,s^0\rangle\xrightarrow{e_1}\langle C_1',s'\rangle$ with $(\langle C_1',s'\rangle,f[e_2\mapsto e_1],\langle C_2',s'\rangle)\in R$;
   \end{itemize}
   \item if $(\langle C_1,s\rangle,f,\langle C_2,s\rangle)\in R$ and $\langle C_1,s\rangle\downarrow$, then there is a sequence of (zero or more) $\tau$-transitions $\langle C_2,s\rangle\xrightarrow{\tau^*}\langle C_2^0,s^0\rangle$ such that $(\langle C_1,s\rangle,f,\langle C_2^0,s^0\rangle)\in R$ and $\langle C_2^0,s^0\rangle\downarrow$;
   \item if $(\langle C_1,s\rangle,f,\langle C_2,s\rangle)\in R$ and $\langle C_2,s\rangle\downarrow$, then there is a sequence of (zero or more) $\tau$-transitions $\langle C_1,s\rangle\xrightarrow{\tau^*}\langle C_1^0,s^0\rangle$ such that $(\langle C_1^0,s^0\rangle,f,\langle C_2,s\rangle)\in R$ and $\langle C_1^0,s^0\rangle\downarrow$.
 \end{enumerate}

$\mathcal{E}_1,\mathcal{E}_2$ are branching history-preserving (hp-)bisimilar and are written $\mathcal{E}_1\approx_{bhp}\mathcal{E}_2$ if there exists a branching hp-bisimulation $R$ such that $(\langle\emptyset,\emptyset\rangle,\emptyset,\langle\emptyset,\emptyset\rangle)\in R$.

A branching hereditary history-preserving (hhp-)bisimulation is a downward closed branching hp-bisimulation. $\mathcal{E}_1,\mathcal{E}_2$ are branching hereditary history-preserving (hhp-)bisimilar and are written $\mathcal{E}_1\approx_{bhhp}\mathcal{E}_2$.
\end{definition}

\begin{definition}[Rooted branching (hereditary) history-preserving bisimulation]\label{RBHHPBG}
Assume a special termination predicate $\downarrow$, and let $\surd$ represent a state with $\surd\downarrow$. A rooted branching history-preserving (hp-) bisimulation is a weakly posetal relation $R\subseteq\langle\mathcal{C}(\mathcal{E}_1),S\rangle\overline{\times}\langle\mathcal{C}(\mathcal{E}_2),S\rangle$ such that:

 \begin{enumerate}
   \item if $(\langle C_1,s\rangle,f,\langle C_2,s\rangle)\in R$, and $\langle C_1,s\rangle\xrightarrow{e_1}\langle C_1',s'\rangle$, then $\langle C_2,s\rangle\xrightarrow{e_2}\langle C_2',s'\rangle$ with $\langle C_1',s'\rangle\approx_{bhp}\langle C_2',s'\rangle$;
   \item if $(\langle C_1,s\rangle,f,\langle C_2,s\rangle)\in R$, and $\langle C_2,s\rangle\xrightarrow{e_2}\langle C_2',s'\rangle$, then $\langle C_1,s\rangle\xrightarrow{e_1}\langle C_1',s'\rangle$ with $\langle C_1',s'\rangle\approx_{bhp}\langle C_2',s'\rangle$;
   \item if $(\langle C_1,s\rangle,f,\langle C_2,s\rangle)\in R$ and $\langle C_1,s\rangle\downarrow$, then $\langle C_2,s\rangle\downarrow$;
   \item if $(\langle C_1,s\rangle,f,\langle C_2,s\rangle)\in R$ and $\langle C_2,s\rangle\downarrow$, then $\langle C_1,s\rangle\downarrow$.
 \end{enumerate}

$\mathcal{E}_1,\mathcal{E}_2$ are rooted branching history-preserving (hp-)bisimilar and are written $\mathcal{E}_1\approx_{rbhp}\mathcal{E}_2$ if there exists a rooted branching hp-bisimulation $R$ such that $(\langle\emptyset,\emptyset\rangle,\emptyset,\langle\emptyset,\emptyset\rangle)\in R$.

A rooted branching hereditary history-preserving (hhp-)bisimulation is a downward closed rooted branching hp-bisimulation. $\mathcal{E}_1,\mathcal{E}_2$ are rooted branching hereditary history-preserving (hhp-)bisimilar and are written $\mathcal{E}_1\approx_{rbhhp}\mathcal{E}_2$.
\end{definition}

\begin{definition}[Guarded linear recursive specification]\label{GLRSG}
A linear recursive specification $E$ is guarded if there does not exist an infinite sequence of $\tau$-transitions $\langle X|E\rangle\xrightarrow{\tau}\langle X'|E\rangle\xrightarrow{\tau}\langle X''|E\rangle\xrightarrow{\tau}\cdots$, and there does not exist an infinite sequence of $\epsilon$-transitions $\langle X|E\rangle\rightarrow\langle X'|E\rangle\rightarrow\langle X''|E\rangle\rightarrow\cdots$.
\end{definition}

\begin{theorem}[Conservitivity of $APTC_G$ with silent step and guarded linear recursion]
$APTC_G$ with silent step and guarded linear recursion is a conservative extension of $APTC_G$ with linear recursion.
\end{theorem}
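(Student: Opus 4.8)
The plan is to reduce the statement to a purely operational fact about transition systems and then transport it back to provability using the soundness and completeness results already in hand. Recall that a theory $T_1$ over a signature $\Sigma_1 \supseteq \Sigma_0$ is a \emph{conservative extension} of $T_0$ exactly when, for closed terms $s,t$ built solely from $\Sigma_0$, one has $T_1 \vdash s = t$ iff $T_0 \vdash s = t$. One direction (from $T_0$ to $T_1$) is immediate, because every axiom and every transition rule of $APTC_G$ with linear recursion is retained verbatim in $APTC_G$ with silent step and guarded linear recursion; all the work lies in the converse.

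First I would isolate the syntactic gap between the two theories. The extended theory adjoins the fresh constant $\tau$, with the two transition rules of Table \ref{TRForTauG} and the accompanying silent-step axioms $B1$--$B3$, while the recursion constants $\langle X|E\rangle$ are restricted to guarded linear specifications in the sense of Definition \ref{GLRSG}. The decisive point is that $\tau$ is a genuinely new symbol of the signature: each of its transition rules has $\tau$ itself as source, and $\tau$ does not occur in any term of the base signature. Hence no transition is ever attached to a closed base term that was not already derivable in the base system.

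Second I would invoke the standard meta-theorem on operationally conservative extensions: if the transition system specification of the base theory is stratified and in \emph{panth} format --- a format accommodating the termination predicates $\surd$, $\downarrow$ together with the negative premises ($\nrightarrow$) occurring in the rules for $\triangleleft$ and for the race-condition cases of $\parallel$ --- if the extended specification is likewise in panth format, and if every rule added in the extension has a source containing a symbol absent from the base signature, then the extension is operationally conservative, i.e. every closed base term has the same derivable transitions in both systems. I would verify these hypotheses: the configuration and data-state decorations $\langle C,s\rangle$ preserve the format; a stratification exists because the negative premises test only the immediate (non-)enabledness of simpler actions; and source-dependency holds by the freshness of $\tau$ noted above.

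Finally I would close the loop equationally. Assume $APTC_G$ with silent step and guarded linear recursion proves $s = t$ for closed $\Sigma_0$-terms $s,t$. By the soundness theorem for the extended theory, $s \approx_{rbs} t$ (and similarly $\approx_{rbp}$, $\approx_{rbhp}$); but $s$ and $t$ contain no $\tau$, so by operational conservativity their transition systems are $\tau$-free, and on $\tau$-free systems rooted branching truly concurrent bisimilarity coincides with the corresponding strong equivalence, giving $s \sim_s t$. The completeness theorem for $APTC_G$ with linear recursion then yields $APTC_G \vdash s = t$, as required. I expect the main obstacle to be exactly the format and stratification check of the third step: one must confirm that the combination of negative premises with the lookahead introduced by the recursion rules of Table \ref{TRForGRG} still falls within panth format and admits a stratification, since it is this interaction --- rather than the addition of $\tau$, which is routine --- that could in principle break operational conservativity.
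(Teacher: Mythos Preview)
The paper does not actually supply a proof of this theorem: like the other meta-results in the preliminaries chapter on truly concurrent process algebra, it is merely stated, with the proofs deferred to the referenced works \cite{ATC,CTC,PITC}. So there is no paper-internal argument to compare against.

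That said, your approach is the standard one in the ACP tradition (cf.\ \cite{ACP}) and is correct in outline. Reducing conservativity of provable equality to operational conservativity of the underlying transition system specification, via the panth-format/source-dependency meta-theorem, and then closing the loop through soundness of the extension and completeness of the base, is exactly how such results are established in this literature. Your observation that rooted branching bisimilarity collapses to strong bisimilarity on $\tau$-free terms is the right bridge between the two semantics. The caveat you flag --- that the lookahead in the recursion rules of Table~\ref{TRForGRG} together with the negative premises in the rules for $\triangleleft$ and the race-condition clauses of $\parallel$ must be shown to admit a stratification --- is indeed the only place where real work is required, and it goes through because the recursion rules are pure (no negative premises) and the negative premises elsewhere test only immediate enabledness, so the standard layering by term depth suffices.
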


\begin{theorem}[Congruence theorem of $APTC_G$ with silent step and guarded linear recursion]
Rooted branching truly concurrent bisimulation equivalences $\approx_{rbp}$, $\approx_{rbs}$ and $\approx_{rbhp}$ are all congruences with respect to $APTC_G$ with silent step and guarded linear recursion.
\end{theorem}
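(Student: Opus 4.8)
The plan is to prove congruence operator by operator: for each function symbol $f$ of $APTC_G$ with silent step and guarded linear recursion I would show that pairwise $\approx_{rb*}$-equivalent arguments (for $*\in\{p,s,hp\}$) yield $\approx_{rb*}$-equivalent results. Since each rooted equivalence is a rootedness condition layered on the corresponding branching equivalence $\approx_{b*}$, I would first establish that $\approx_{b*}$ is preserved by $f$ and then separately verify the extra clause governing the first transition step. The congruence of the plain fragment (without $\tau$) for $\sim_p$, $\sim_s$, $\sim_{hp}$ is already available from the Congruence theorem of $APTC_G$ with guarded recursion stated earlier, so the genuinely new work concerns the silent step and the weakening from $\sim$ to $\approx_{rb}$.

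For the operators $+$, $\cdot$, $\parallel$, $\mid$, $\between$, $\Theta$, $\triangleleft$, $\partial_H$ and the guards $\phi$, I would, assuming $x_i\approx_{b*}y_i$, form the natural relation $R$ pairing $f(x_1,\dots,x_n)$ with $f(y_1,\dots,y_n)$, close it under the reachable subterms of Table \ref{TRForAPTCG}, and check the four clauses of Definition \ref{BPSBG}: every move of the composite is answered either by a $\tau^*$-sequence staying inside $R$ or by a matching $\langle C_2,s\rangle\xrightarrow{\tau^*}\langle C_2^0,s^0\rangle\xRightarrow{X}\langle C_2',s'\rangle$, with the termination predicate $\downarrow$ matched up to $\tau$-closure and the data state $s$ carried along. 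The silent step contributes only the rule $\langle\tau,s\rangle\xrightarrow{\tau}\langle\surd,\tau(s)\rangle$ of Table \ref{TRForTauG}; this is exactly what forces the rootedness clause, since an initial $\tau$ must be matched by an initial $\tau$ and cannot be absorbed, and I would verify that prefixing and the parallel rules respect this. The abstraction $\tau_I$ is then handled by observing that it relabels events of $I$ to $\tau$ and commutes with the transition structure, so a branching bisimulation for the arguments induces one for $\tau_I(\cdot)$.

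For guarded linear recursion I would use guardedness (Definition \ref{GLRSG}), which excludes infinite $\tau$- and $\epsilon$-sequences, to ensure each $\langle X_i|E\rangle$ has a well-defined image-finite transition system, and then show via the rules of Table \ref{TRForGRG}---which transfer the behaviour of $t_i(\langle X_1|E\rangle,\dots)$ directly to $\langle X_i|E\rangle$---that replacing bodies of $E$ by $\approx_{rb*}$-equivalent terms yields $\approx_{rb*}$-equivalent solutions.

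The main obstacle I expect is the history-preserving case $\approx_{rbhp}$ in the simultaneous presence of the parallel operators and the silent step. Unlike the pomset and step cases, $\approx_{rbhp}$ carries an order-isomorphism $f$ between configurations that must be maintained along all matching transitions, including across the $\tau^*$-prefixes permitted by the branching clauses; when $\parallel$, $\mid$ or $\between$ produce events from two branches, I must check that the extended map $f[e_1\mapsto e_2]$ stays a well-defined isomorphism respecting causality, and that the weakly posetal product is stable under the $\tau$-relabelling of $\tau_I$. Keeping this isomorphism coherent while also honouring the exactness of the first step, and doing so uniformly across recursion unfoldings, is where the real care lies; the remaining operator cases are routine adaptations of the $\sim_*$ congruence arguments already invoked.
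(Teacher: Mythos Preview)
The paper does not actually prove this theorem: it is stated in the preliminaries chapter as a background result imported from \cite{ATC,CTC,PITC}, with no accompanying \texttt{proof} environment. There is therefore nothing in the paper to compare your proposal against line by line.

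Your plan is the standard congruence-by-cases argument for process algebra and is essentially sound: build the natural relation for each operator, verify the branching-bisimulation clauses of Definition~\ref{BPSBG} (resp.\ \ref{BHHPBG}), then check the rootedness condition separately, and handle recursion via guardedness and the unfolding rules. You have correctly identified the delicate point, namely keeping the posetal isomorphism $f$ coherent across $\tau^*$-prefixes when parallel operators produce events from both branches.

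One scope correction: you include $\tau_I$ among the operators to be treated, but $APTC_G$ with silent step and guarded linear recursion does \emph{not} yet contain the abstraction operator; $\tau_I$ is only added in the subsequent extension $APTC_{G_\tau}$, which has its own congruence theorem. For the present statement you need only handle the constant $\tau$ (via the rules in Table~\ref{TRForTauG}) together with the operators already present in $APTC_G$ and the recursion construct. Dropping the $\tau_I$ case slightly simplifies your task and keeps the argument aligned with the theorem's actual signature.
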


We design the axioms for the silent step $\tau$ in Table \ref{AxiomsForTauG}.

\begin{center}
\begin{table}
  \begin{tabular}{@{}ll@{}}
\hline No. &Axiom\\
  $B1$ & $e\cdot\tau=e$\\
  $B2$ & $e\cdot(\tau\cdot(x+y)+x)=e\cdot(x+y)$\\
  $B3$ & $x\parallel\tau=x$\\
  $G26$ & $\phi\cdot\tau=\phi$\\
  $G27$ & $\phi\cdot(\tau\cdot(x+y)+x)=\phi\cdot(x+y)$\\
\end{tabular}
\caption{Axioms of silent step}
\label{AxiomsForTauG}
\end{table}
\end{center}

\begin{theorem}[Elimination theorem of $APTC_G$ with silent step and guarded linear recursion]\label{ETTauG}
Each process term in $APTC_G$ with silent step and guarded linear recursion is equal to a process term $\langle X_1|E\rangle$ with $E$ a guarded linear recursive specification.
\end{theorem}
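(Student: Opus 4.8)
The plan is to prove this by structural induction on the process term, reducing everything to the already-established elimination theorem for $APTC_G$ with linear recursion (Theorem \ref{ETRecursionG}) and then handling the silent step $\tau$ as an additional atomic constant that is permitted to occupy the atomic positions of a linear recursive equation, exactly as Definition \ref{GLRSG} allows (there the coefficients $a_{ij},b_{ij}$ range over $\mathbb{E}\cup\{\tau\}$). Since the preceding conservativity result guarantees that the axioms of $APTC_G$ with linear recursion remain valid in the extended theory, I may freely re-use the linearization machinery of Theorem \ref{ETRecursionG} without re-deriving it; the genuinely new content is only the presence of $\tau$ and the guardedness side-condition.

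First I would dispose of the base cases: each of $e\in\mathbb{E}$, $\tau$, a guard $\phi\in G$, $\delta$, and $\epsilon$ is trivially expressible as $\langle X_1|E\rangle$ for a one-equation specification, and by the elimination theorem of $APTC_G$ (Theorem \ref{ETAPTCG}) any recursion-free term equals a basic term, which is already of the required shape once $\tau$ is admitted as an atomic event. Next, for the inductive step I would treat the operators $+$, $\cdot$, $\parallel$, $\mid$, $\Theta$, $\triangleleft$ and $\partial_H$ together with the recursion constants $\langle X_i|E\rangle$: assuming each immediate subterm has been rewritten to a linear recursive specification, I would combine these specifications by introducing fresh recursion variables for the compound processes and pushing the operators through using the $APTC_G$ axioms, so that every right-hand side is brought into the linear format $(a_{11}\parallel\cdots\parallel a_{1i_1})X_1+\cdots+(b_{11}\parallel\cdots\parallel b_{1j_1})$. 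The silent-step axioms $B1$, $B2$, $B3$, $G26$ and $G27$ of Table \ref{AxiomsForTauG} would then be applied to absorb or normalize the $\tau$-summands wherever the surrounding context permits.

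The final, and hardest, step is to verify that the specification $E$ produced in this way is \emph{guarded} in the sense of Definition \ref{GLRSG}, i.e. that it admits no infinite sequence of $\tau$-transitions and no infinite sequence of $\epsilon$-transitions. Linearization preserves solutions but can in principle create new $\tau$- or $\epsilon$-prefixed summands, so I would argue that every such summand introduced during the combination of subspecifications either corresponds to a genuine $\tau$- or $\epsilon$-step already present in the original guarded term, or can be removed by $B1$/$B3$ (for $\tau$) and by $A8$/$A9$, $G26$ (for $\epsilon$). I expect the main obstacle to lie precisely here: the interaction of the empty event $\epsilon$ arising from the guards with the silent step $\tau$ must be controlled so that neither an infinite $\tau$-loop nor an infinite $\epsilon$-loop is introduced, and this requires a careful termination argument tracking which summands can fire without consuming an atomic event, inheriting guardedness from the original specification. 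Once guardedness is secured, collecting all the introduced equations into a single finite specification $E$ and letting $X_1$ name the original term completes the proof.
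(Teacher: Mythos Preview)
The paper does not supply a proof of this theorem; it is stated without argument, in the same style as the neighbouring results in the guards section. So there is no ``paper's own proof'' to compare against, and I can only assess your proposal on its merits.

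Your overall strategy---reduce to the already-proved linear-recursion elimination (Theorem~\ref{ETRecursionG}), treat $\tau$ as one more atomic constant admissible in the coefficient positions of a guarded linear equation, and then check guardedness---is the standard one and is correct in outline. Two remarks. First, your inductive case analysis over all the $APTC_G$ operators is more work than necessary: once Theorem~\ref{ETRecursionG} is available you already know that any term built from $+$, $\cdot$, $\parallel$, $\mid$, $\Theta$, $\triangleleft$, $\partial_H$ and linear-recursion constants is equal to some $\langle X_1|E\rangle$ with $E$ linear; the only genuinely new ingredient is that $\tau$ may now occur, and $\tau$ behaves exactly like an atomic event for the purposes of that construction. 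Second, you make the guardedness step sound harder than it is. Linearization does not invent transitions; every $\tau$- or $\epsilon$-summand in the constructed $E$ corresponds to a $\tau$- or $\epsilon$-transition already exhibited by the original term. Since the input term lives in $APTC_G$ with silent step and \emph{guarded} linear recursion, its recursion constants already satisfy Definition~\ref{GLRSG}, and the finitely many operator applications on top cannot manufacture an infinite $\tau$- or $\epsilon$-chain. So guardedness of $E$ is inherited directly rather than requiring a separate termination argument, and axioms $B1$--$B3$, $G26$, $G27$ are used only for normalization, not to break loops.
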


\begin{theorem}[Soundness of $APTC_G$ with silent step and guarded linear recursion]\label{SAPTC_GTAUG}
Let $x$ and $y$ be $APTC_G$ with silent step and guarded linear recursion terms. If $APTC_G$ with silent step and guarded linear recursion $\vdash x=y$, then

(1) $x\approx_{rbs} y$;

(2) $x\approx_{rbp} y$;

(3) $x\approx_{rbhp} y$.
\end{theorem}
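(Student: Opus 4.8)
The plan is to reduce the soundness statement to a finite, axiom-by-axiom verification, exploiting two facts already available in the excerpt: first, that $\approx_{rbs}$, $\approx_{rbp}$ and $\approx_{rbhp}$ are all congruences with respect to every operator of $APTC_G$ with silent step and guarded linear recursion (the congruence theorem stated immediately above); and second, that each of these relations is an equivalence. Since provability $\vdash x=y$ in equational logic is generated by reflexivity, symmetry, transitivity, substitution, and closure under the term-forming operators, it suffices to show that for each axiom $s=t$ in the relevant tables, every closed substitution instance satisfies $s\approx_{rbs}t$ (and likewise for $\approx_{rbp}$ and $\approx_{rbhp}$). The congruence property then propagates soundness of the individual axioms through arbitrary contexts, while the equivalence properties close it under chains of rewriting.

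For the bulk of the axioms, namely those inherited from $APTC_G$ (the $P$-, $C$-, $CE$-, $U$-, $D$- and $G$-laws), I would first invoke the earlier result that they are sound modulo the strong equivalences $\sim_s$, $\sim_p$ and $\sim_{hp}$ (Theorem \ref{SAPTCG}), and then observe that strong bisimilarity is contained in rooted branching bisimilarity, since a strong bisimulation is in particular a rooted branching bisimulation in which no $\tau$ absorption is ever invoked. Hence these axioms are automatically sound for the three rooted branching equivalences as well. The genuinely new content is the silent-step group $B1$, $B2$, $B3$ together with the guard laws $G26$, $G27$, for each of which I would exhibit an explicit relation and check the transfer conditions of Definitions \ref{RBPSBG} and \ref{RBHHPBG}. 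For $B1$ ($e\cdot\tau=e$) and $B3$ ($x\parallel\tau=x$) the relation is essentially the identity augmented with the pair relating the two sides, and the root condition is met because the leading observable step is matched exactly before any $\tau$ is absorbed; for $G26$ and $G27$ I would use the transition rule that $\langle\tau,s\rangle$ terminates while keeping the external state fixed ($s=\tau(s)$), so that a guard followed by $\tau$ behaves externally like the guard alone.

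The crux, and the step I expect to cost the most care, is the branching axiom $B2$, $e\cdot(\tau\cdot(x+y)+x)=e\cdot(x+y)$, since it is precisely the law separating branching bisimilarity from coarser weak equivalences. Here I would build a rooted branching bisimulation whose root pair is the two sides: the leading $e$-transition is matched identically, guaranteeing rootedness via clauses $(1)$ and $(2)$ of Definition \ref{RBPSBG}, after which one must verify that $\tau\cdot(x+y)+x$ and $x+y$ are branching bisimilar in the sense of Definition \ref{BPSBG}. The $\tau$-branch leading to $x+y$ is absorbed using the ``either $X\equiv\tau^*$'' option while preserving membership in the relation, and the direct $x$-summand is matched by the corresponding summand on the other side. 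I must carry this out simultaneously for the step version (pomset transitions replaced by concurrent steps), the pomset version, and the hp-version, where in the last case the configuration isomorphism $f$ must be maintained across the absorbed $\tau$ using the weakly posetal product and the convention $f(\tau^*)=f(\tau^*)$.

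Finally, for terms containing guarded linear recursion I would argue from the transition rules of Table \ref{TRForGRG} that $\langle X_i|E\rangle$ and $t_i(\langle X_1|E\rangle,\ldots,\langle X_n|E\rangle)$ share exactly the same outgoing transitions, so they are rooted branching bisimilar; guardedness, meaning the absence of infinite $\tau$- and $\epsilon$-sequences (Definition \ref{GLRSG}), ensures that the constructed relations are well founded and that the termination-predicate clauses $(3)$ and $(4)$ of the rooted branching definitions can always be discharged. Combining the congruence theorem with soundness of each axiom and of the recursion rules then yields all three claims.
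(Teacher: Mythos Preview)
The paper does not actually supply a proof of this theorem: it appears in the preliminaries chapter, which explicitly imports results from the author's earlier works \cite{ATC,CTC,PITC} and, as remarked in Section~\ref{ahhpb}, leaves all such proofs to the reader. So there is no paper-side argument to compare against beyond the implicit appeal to those references.

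Your outline is the standard soundness argument and is correct in structure: reduce to per-axiom verification via the congruence theorem, inherit soundness of the $APTC_G$ fragment from the inclusion of strong bisimilarity in rooted branching bisimilarity, and check the new $\tau$-laws $B1$--$B3$, $G26$, $G27$ directly against Definitions~\ref{BPSBG}--\ref{RBHHPBG}. Two small points worth tightening. First, you should also explicitly handle $RDP$ and $RSP$ as inference rules, not just the transition behaviour of $\langle X_i|E\rangle$; soundness of $RSP$ in particular requires the guardedness hypothesis to guarantee uniqueness of solutions up to $\approx_{rb\ast}$, and this is where Definition~\ref{GLRSG} (no infinite $\tau$- or $\epsilon$-chains) is really used, rather than merely for well-foundedness of your relations. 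Second, in the hp-case for $B2$ the bookkeeping of the isomorphism $f$ across an absorbed $\tau$ is delicate because $\tau$ is excluded from $\hat{C}$; you should say explicitly that the weakly posetal relation identifies $(\langle C_1',s'\rangle,f,\langle C_2,s\rangle)$ when $C_1'$ differs from $C_1$ only by a $\tau$-event, using the clause $f[e_1\mapsto\tau]$ of Definition~\ref{BHHPBG}.
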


\begin{theorem}[Completeness of $APTC_G$ with silent step and guarded linear recursion]\label{CAPTC_GTAUG}
Let $p$ and $q$ be closed $APTC_G$ with silent step and guarded linear recursion terms, then,

(1) if $p\approx_{rbs} q$ then $p=q$;

(2) if $p\approx_{rbp} q$ then $p=q$;

(3) if $p\approx_{rbhp} q$ then $p=q$.
\end{theorem}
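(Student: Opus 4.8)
The plan is to run the standard RSP-based completeness argument: reduce both sides to guarded linear recursive specifications, then extract from the witnessing bisimulation a single guarded linear specification of which both $p$ and $q$ are provable solutions, and conclude by the uniqueness of solutions. First I would apply the elimination theorem (Theorem \ref{ETTauG}) to replace $p$ and $q$ by provably equal terms $\langle X_1|E_1\rangle$ and $\langle Y_1|E_2\rangle$, where $E_1$ and $E_2$ are guarded linear recursive specifications. By soundness (Theorem \ref{SAPTC_GTAUG}) provable equality entails the relevant rooted branching equivalence, so the hypothesis $p\approx_{rbs}q$ (respectively $\approx_{rbp}$, $\approx_{rbhp}$) transfers to $\langle X_1|E_1\rangle\approx_{rbs}\langle Y_1|E_2\rangle$. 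It therefore suffices to prove the statement for two guarded linear recursive specifications.

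Next I would use the witnessing rooted branching bisimulation $R$ (together with the underlying branching bisimulation used below the root, per Definition \ref{RBPSBG}) to build a combined guarded linear recursive specification $E$. For each pair of reachable states $\bigl(\langle X|E_1\rangle,\langle Y|E_2\rangle\bigr)$ related by the bisimulation I introduce a fresh recursion variable $Z_{X,Y}$, and I define its equation from the matched outgoing transitions. Because $E_1$ and $E_2$ are linear, every right-hand side is a finite sum of summands of the form $(a_1\parallel\cdots\parallel a_k)\cdot Z_{X',Y'}$ and terminating summands $(a_1\parallel\cdots\parallel a_k)$, so $E$ is again linear; guardedness of $E$ follows from guardedness of $E_1$ and $E_2$, since neither admits an infinite sequence of $\tau$-transitions (Definition \ref{GLRSG}) and hence neither does $E$. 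The root clause of rooted branching bisimilarity guarantees that the first transition out of $Z_{X_1,Y_1}$ is matched with no intervening $\tau$, which is precisely what forces equality at the top level rather than only up to a leading silent step.

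The main obstacle is showing that the two natural candidate assignments, $Z_{X,Y}\mapsto\langle X|E_1\rangle$ and $Z_{X,Y}\mapsto\langle Y|E_2\rangle$, are \emph{provable} solutions of $E$. The branching condition permits one side to perform a sequence of $\tau$-steps before matching a visible move of the other side, and these silent detours must be folded into the algebraic form of the defining equations. This is where the silent-step axioms $B1$ ($e\cdot\tau=e$) and $B2$ ($e\cdot(\tau\cdot(x+y)+x)=e\cdot(x+y)$) from Table \ref{AxiomsForTauG} do the work, absorbing the $\tau$-prefixes and $\tau$-branches so that each verified equation literally matches the right-hand side of $E$; the careful bookkeeping here — keeping the rewritten specification guarded and linear throughout — is the delicate part. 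Once both assignments are established as solutions, RSP yields $\langle X_1|E_1\rangle=\langle Z_{X_1,Y_1}|E\rangle=\langle Y_1|E_2\rangle$, hence $p=q$.

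Finally, the three cases are handled uniformly. The rooted branching step and pomset versions differ only in whether the matched label sets $X$ are required to be steps (pairwise concurrent events) or general pomsets, and the rooted branching hp-case additionally carries the order-isomorphism $f$ on executed events (Definition \ref{RBHHPBG}); in each case the combined specification $E$ is constructed from the corresponding bisimulation, and the same RSP-plus-$B$-axiom machinery applies, with the $f$-component in the hp-case determining the identification of variables $Z_{X,Y}$ so that the isomorphism is respected.
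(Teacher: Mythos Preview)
The paper does not actually supply a proof of this theorem: like the other results in the preliminaries chapter, it is stated without argument and implicitly deferred to the cited works \cite{ATC}, \cite{CTC}. So there is no paper-side proof to compare against directly.

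That said, your plan is exactly the standard RSP-based completeness argument one finds in the referenced literature (and in Fokkink \cite{ACP} for the interleaving case): eliminate to guarded linear specifications via Theorem~\ref{ETTauG}, build a merged specification $E$ from the witnessing (rooted) branching bisimulation, use the $\tau$-laws $B1$, $B2$ (and, in the guarded setting, their analogues $G26$, $G27$ from Table~\ref{AxiomsForTauG}) to absorb the inert $\tau$-moves permitted by the branching clause, and conclude by RSP. Your identification of the ``delicate part'' --- turning the branching-bisimulation matches into literal right-hand-side equalities while keeping $E$ guarded and linear --- is correct, and your remark that the root condition is what prevents a stray leading $\tau$ is the key point distinguishing $\approx_{rb*}$ from $\approx_{b*}$. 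One small addition: in the $APTC_G$ setting the summands may also be guarded by $\phi\in G$, so the combined specification $E$ must carry these guard prefixes as well, and the data-state component $s$ in the configurations (Definition~\ref{RBPSBG}) has to be threaded through the bisimulation pairs when you define $Z_{X,Y}$; otherwise the argument is as you describe.
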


The unary abstraction operator $\tau_I$ ($I\subseteq \mathbb{E}\cup G_{at}$) renames all atomic events or atomic guards in $I$ into $\tau$. $APTC_G$ with silent step and abstraction operator is called $APTC_{G_{\tau}}$. The transition rules of operator $\tau_I$ are shown in Table \ref{TRForAbstractionG}.

\begin{center}
    \begin{table}
        $$\frac{\langle x,s\rangle\xrightarrow{e}\langle\surd,s'\rangle}{\langle\tau_I(x),s\rangle\xrightarrow{e}\langle\surd,s'\rangle}\quad e\notin I
        \quad\quad\frac{\langle x,s\rangle\xrightarrow{e}\langle x',s'\rangle}{\langle\tau_I(x),s\rangle\xrightarrow{e}\langle\tau_I(x'),s'\rangle}\quad e\notin I$$

        $$\frac{\langle x,s\rangle\xrightarrow{e}\langle\surd,s'\rangle}{\langle\tau_I(x),s\rangle\xrightarrow{\tau}\langle\surd,\tau(s)\rangle}\quad e\in I
        \quad\quad\frac{\langle x,s\rangle\xrightarrow{e}\langle x',s'\rangle}{\langle\tau_I(x),s\rangle\xrightarrow{\tau}\langle\tau_I(x'),\tau(s)\rangle}\quad e\in I$$
        \caption{Transition rule of the abstraction operator}
        \label{TRForAbstractionG}
    \end{table}
\end{center}

\begin{theorem}[Conservitivity of $APTC_{G_{\tau}}$ with guarded linear recursion]
$APTC_{G_{\tau}}$ with guarded linear recursion is a conservative extension of $APTC_G$ with silent step and guarded linear recursion.
\end{theorem}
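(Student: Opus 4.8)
The plan is to prove conservativity by the two-step route used throughout this chapter: first establish that the transition system specification (TSS) of $APTC_{G_{\tau}}$ with guarded linear recursion is an \emph{operationally} conservative extension of the TSS of $APTC_G$ with silent step and guarded linear recursion, and then lift this operational fact to the equational level. Operational conservativity means that for every closed term $t$ over the smaller signature (a term not mentioning $\tau_I$) and every data state $s$, the transitions $\langle t,s\rangle\xrightarrow{X}\langle t',s'\rangle$ derivable in the extended TSS are exactly those already derivable in the base TSS. Granting this, the equational statement follows by the standard argument: if $x=y$ is provable in $APTC_{G_{\tau}}$ for base-signature terms $x,y$, then by soundness of the extended axiomatization $x$ and $y$ are rooted branching equivalent ($\approx_{rbs}$, $\approx_{rbp}$, $\approx_{rbhp}$) in the extended system; by operational conservativity this equivalence restricts to the base transition relation; and by completeness of the base theory (Theorem \ref{CAPTC_GTAUG}) the equation is already provable in $APTC_G$ with silent step and guarded linear recursion. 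The converse direction is immediate since the extended theory syntactically contains the base one.

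First I would verify the format hypotheses required by the operational conservative extension meta-theorem (in the Baeten–Verhoef style). The base TSS must be source-dependent and in path/panth format: every variable in a rule must be reachable from the source of the conclusion through the premises. This is immediate by inspection of the rules for $+$, $\cdot$, $\parallel$, $\mid$, $\Theta$, $\triangleleft$, $\partial_H$, the silent step, and guarded recursion, each of which carries its source variables $x$ (and $y$) into the premises; well-foundedness of the recursion part is guaranteed by the guardedness condition of Definition \ref{GLRSG}, which forbids infinite $\tau$- and $\epsilon$-transition sequences. Second I would examine the newly added rules for $\tau_I$ in Table \ref{TRForAbstractionG}: each of the four rules has the \emph{fresh} operator $\tau_I$ as the head symbol of its source $\langle\tau_I(x),s\rangle$, and its single premise is a transition of the strictly smaller term $x$. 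Since every new rule contains in its source a function symbol ($\tau_I$) absent from the base signature, and adds no premises over old operators, the meta-theorem applies and yields operational conservativity.

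The main obstacle will be that the off-the-shelf conservative-extension theorems are stated for ordinary labelled transition systems, whereas here states are configurations $\langle C,s\rangle$ carrying a data component $s\in S$, transition labels range over pomsets/steps $X\subseteq\mathbb{E}$, and the semantics is a truly concurrent one with $\tau^*$-closure built into the rooted branching equivalences. I therefore expect the careful part to be checking that the data state and the step labels do not break source-dependency or the format: the target $\langle\tau_I(x'),\tau(s)\rangle$ of a $\tau_I$-rule must be seen to depend only on the premise's outcome through the $effect$ and $\tau(s)$ bookkeeping, and one must confirm that no derivation of a base transition in the extended system can route through a $\tau_I$-step, which holds precisely because $\tau_I$ cannot occur as a subterm of a $\tau_I$-free closed term. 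Once these checks are discharged, combining operational conservativity with the congruence property of the base theory and with Theorem \ref{CAPTC_GTAUG} closes the argument.
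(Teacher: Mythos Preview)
The paper does not actually supply a proof of this theorem. It appears in the preliminaries chapter, where the truly concurrent process algebra framework is summarized from the author's prior work; like all the other conservativity, congruence, soundness, and completeness theorems in that chapter, it is stated as an imported result without argument. So there is no paper proof to compare against.

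Your proposal is the standard process-algebraic route (the Baeten--Verhoef operational-conservative-extension meta-theorem applied to the new $\tau_I$ rules, followed by soundness/completeness to transfer the result to the equational level), and it is the argument one would expect to find in the cited sources. Two small remarks. First, in the ACP tradition that this paper follows, ``conservative extension'' ordinarily refers to the \emph{operational} statement only: restricting the extended transition relation to $\tau_I$-free closed terms reproduces the base transition relation. Your step~1 already delivers that, so the equational lifting in your step~2, while correct, is more than the theorem as stated requires. Second, your concern about the data-state component and pomset/step labels is well placed but ultimately harmless here: the meta-theorem is insensitive to the structure of labels and of the auxiliary state, and the only thing to check is exactly what you identified, namely that every newly added rule has the fresh symbol $\tau_I$ at the head of its source so that no derivation of a transition for a $\tau_I$-free term can invoke it. With that observation the operational conservativity is immediate.
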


\begin{theorem}[Congruence theorem of $APTC_{G_{\tau}}$ with guarded linear recursion]
Rooted branching truly concurrent bisimulation equivalences $\approx_{rbp}$, $\approx_{rbs}$ and $\approx_{rbhp}$ are all congruences with respect to $APTC_{G_{\tau}}$ with guarded linear recursion.
\end{theorem}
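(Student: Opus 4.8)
The plan is to reduce the theorem to a single new obligation. The preceding conservativity result tells us that $APTC_{G_{\tau}}$ with guarded linear recursion adds no transitions to terms built from the operators of $APTC_G$ with silent step and guarded linear recursion, and the congruence theorem for that latter system already shows that $\approx_{rbp}$, $\approx_{rbs}$ and $\approx_{rbhp}$ are preserved by $+$, $\cdot$, $\parallel$, $\between$, $\mid$, $\Theta$, $\triangleleft$, $\partial_H$, the guards, and guarded linear recursion. Hence the inherited operators carry their congruence proofs over verbatim (the witnessing relations are unchanged because their transition rules are unchanged), and the only operator left to treat is the abstraction operator $\tau_I$. First I would record this reduction explicitly, so that the entire argument concentrates on showing that $x\approx_{rb\ast}y$ implies $\tau_I(x)\approx_{rb\ast}\tau_I(y)$ for $\ast\in\{p,s,hp\}$.

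I would prove this in two stages, first the non-rooted branching version and then a lift. For the non-rooted stage, given a branching (pomset/step/hp) bisimulation $R$ relating the configurations of $x$ and $y$, I take $R'=\{(\langle\tau_I(C_1),s\rangle,\langle\tau_I(C_2),s\rangle):(\langle C_1,s\rangle,\langle C_2,s\rangle)\in R\}$, decorated with the isomorphism $f$ restricted to non-silent events in the hp-case, and verify the four clauses of Definition \ref{BPSBG} (resp.\ Definition \ref{BHHPBG}). The verification is by case analysis on a transition $\langle\tau_I(C_1),s\rangle\xrightarrow{X}\cdot$ produced by the rules of Table \ref{TRForAbstractionG}: such a move descends from an underlying transition $\langle C_1,s\rangle\xrightarrow{Y}\cdot$ whose events lying in $I$ have been relabelled to $\tau$ and whose post-state has been replaced by $\tau(s)$. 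When $Y\subseteq I$ the move becomes silent and is either absorbed by the first disjunct or matched through a $\tau^\ast$-prefix on the other side, with the data states aligned because both relabelled moves land in $\tau(s)$ from a common $s$; when $Y\not\subseteq I$ the branching clause applied to $R$ supplies a matching move whose $\tau_I$-image keeps the resulting pair in $R'$. The termination clauses follow because $\tau_I$ preserves $\downarrow$.

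To lift to the rooted equivalences, I take a rooted branching bisimulation $R$ witnessing $x\approx_{rb\ast}y$ and reuse $R'=\{(\langle\tau_I(C_1),s\rangle,\langle\tau_I(C_2),s\rangle):(\langle C_1,s\rangle,\langle C_2,s\rangle)\in R\}$. The rooted clauses of Definition \ref{RBPSBG} (resp.\ Definition \ref{RBHHPBG}) demand a \emph{strong} match with targets only required to be non-rooted branching bisimilar; since $R$ is rooted, any $\langle C_1,s\rangle\xrightarrow{Y}\langle C_1',s'\rangle$ is matched by $\langle C_2,s\rangle\xrightarrow{Y}\langle C_2',s'\rangle$ with $\langle C_1',s'\rangle\approx_{b\ast}\langle C_2',s'\rangle$, and applying $\tau_I$ to both sides yields the correspondingly relabelled transition whose targets are related by the non-rooted congruence established in the first stage. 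This closes the argument for all three equivalences.

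The main obstacle I anticipate is the branching/silent bookkeeping in the non-rooted stage for the pomset and step cases, where a single relabelled set $Y$ may straddle $I$ and its complement. One must check that splitting $Y$ into its $I$-part, which turns silent and merely advances the hidden data state to $\tau(s)$, and its non-$I$-part is compatible with the branching transfer conditions, and, in the hp-case, with keeping $f$ an isomorphism on precisely the surviving non-silent events; making the causality and concurrency of the relabelled events line up with the pomset isomorphism is the delicate point. The sequential behaviour, the data-state alignment, and the termination predicate are by contrast routine.
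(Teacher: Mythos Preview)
The paper does not actually supply a proof of this theorem: it appears in the preliminary chapter that collects results from the author's earlier works \cite{ATC,CTC,PITC}, and is stated bare, without argument, like every other congruence, soundness and completeness result in that chapter. So there is nothing in the paper to compare your proposal against.

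That said, your plan is the standard one and is essentially correct. Reducing the new obligation to preservation under $\tau_I$ alone, building the lifted relation $R'$ from a witnessing $R$, analysing the $\tau_I$-transition rules by cases $e\in I$ versus $e\notin I$, and then passing from the branching to the rooted version by noting that a strong first move on the underlying terms becomes a strong first move (with the same, possibly relabelled, label) after applying $\tau_I$ --- this is exactly how such congruence results are proved in the ACP tradition, and nothing in the guarded/data-state extension changes the shape of that argument. One small point of notation: your $R'$ is written with $\tau_I(C_1)$ on configurations, whereas $\tau_I$ is an operator on process terms; you should phrase $R'$ over pairs $\langle\tau_I(p),s\rangle$ of term and data state, matching the way the transition rules in Table~\ref{TRForAbstractionG} are stated. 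The obstacle you flag about a pomset $Y$ straddling $I$ and its complement is real but manageable, since the single-event rules in Table~\ref{TRForAbstractionG} lift pointwise to sets (as the paper remarks after Table~\ref{SETRForBATCG}), so the relabelled set is simply $Y$ with its $I$-events renamed to $\tau$, and the pomset isomorphism survives because renaming labels does not disturb the causal order.
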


We design the axioms for the abstraction operator $\tau_I$ in Table \ref{AxiomsForAbstractionG}.

\begin{center}
\begin{table}
  \begin{tabular}{@{}ll@{}}
\hline No. &Axiom\\
  $TI1$ & $e\notin I\quad \tau_I(e)=e$\\
  $TI2$ & $e\in I\quad \tau_I(e)=\tau$\\
  $TI3$ & $\tau_I(\delta)=\delta$\\
  $TI4$ & $\tau_I(x+y)=\tau_I(x)+\tau_I(y)$\\
  $TI5$ & $\tau_I(x\cdot y)=\tau_I(x)\cdot\tau_I(y)$\\
  $TI6$ & $\tau_I(x\parallel y)=\tau_I(x)\parallel\tau_I(y)$\\
  $G28$ & $\phi\notin I\quad \tau_I(\phi)=\phi$\\
  $G29$ & $\phi\in I\quad \tau_I(\phi)=\tau$\\
\end{tabular}
\caption{Axioms of abstraction operator}
\label{AxiomsForAbstractionG}
\end{table}
\end{center}

\begin{theorem}[Soundness of $APTC_{G_{\tau}}$ with guarded linear recursion]\label{SAPTC_GABSG}
Let $x$ and $y$ be $APTC_{G_{\tau}}$ with guarded linear recursion terms. If $APTC_{G_{\tau}}$ with guarded linear recursion $\vdash x=y$, then

(1) $x\approx_{rbs} y$;

(2) $x\approx_{rbp} y$;

(3) $x\approx_{rbhp} y$.
\end{theorem}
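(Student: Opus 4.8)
The plan is to reduce soundness to a finite, per-axiom verification by exploiting the congruence result that immediately precedes this theorem. Since $\approx_{rbs}$, $\approx_{rbp}$ and $\approx_{rbhp}$ are congruences for every operator of $APTC_{G_{\tau}}$ with guarded linear recursion, any derivation $APTC_{G_{\tau}}\vdash x=y$ decomposes into finitely many rewrite steps, each replacing a subterm by a closed instance of a single axiom; congruence then lifts each step to the surrounding context. Thus it suffices to show that for every axiom $s=t$ and every closing substitution $\sigma$, the terms $\sigma(s)$ and $\sigma(t)$ are related by each of the three equivalences. I would split the axioms into three groups and treat recursion separately, mirroring the structure used for Theorem~\ref{SAPTCABS}.

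First, every axiom inherited from $BATC_G$ and $APTC_G$ with guarded recursion (the laws $A1$--$A9$, $G1$--$G25$, and the $P$-, $C$-, $CE$-, $U$-, $D$-axioms) was already shown sound modulo the strong equivalences $\sim_p$, $\sim_s$, $\sim_{hp}$ in the earlier soundness theorems. Because every strong bisimulation is in particular a rooted branching bisimulation (take the empty $\tau$-sequences in the branching clauses of Definitions~\ref{RBPSBG} and~\ref{RBHHPBG}, and observe that termination is respected), we have strong bisimilarity contained in the corresponding rooted branching bisimilarity for each of $p$, $s$, $hp$; hence these axioms stay sound with no further argument. For recursion, $RDP$ is immediate from the transition rules of guarded recursion, while $RSP$ uses guardedness in the sense of Definition~\ref{GLRSG}: the absence of infinite $\tau$- and $\epsilon$-loops guarantees that a guarded linear specification has a unique solution up to $\approx_{rb}$, which is exactly what makes $RSP$ sound here.

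Second, the silent-step laws $B1$, $B3$, $G26$ and the abstraction laws $TI1$--$TI6$, $G28$, $G29$ are handled by direct construction of an explicit relation: for each I would let $R$ consist of the pair $(\langle\sigma(s),s_0\rangle,\langle\sigma(t),s_0\rangle)$ together with the pairs forced by the transition rules of Tables~\ref{TRForTauG} and~\ref{TRForAbstractionG}, and then check the four clauses of Definitions~\ref{RBPSBG} and~\ref{RBHHPBG}. The only subtlety is the data component: one invokes the convention $s=\tau(s)$ for external data states, so that a $\tau$-move leaves the observable state fixed, which is precisely what the branching clauses demand. For the $TI$-axioms the matching is a routine induction following the renaming performed by $\tau_I$, and for the hp-variant the isomorphism $f$ is carried across $\tau$-steps unchanged because the events erased by $\hat{(\cdot)}$ do not enter the posetal comparison (cf.\ Definition~\ref{BHHPBG}).

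The main obstacle, and the step I would develop most carefully, is the characteristic branching law $B2$: $e\cdot(\tau\cdot(x+y)+x)=e\cdot(x+y)$, together with its guarded analogue $G27$: $\phi\cdot(\tau\cdot(x+y)+x)=\phi\cdot(x+y)$. For these I would build a rooted branching bisimulation in which, after the leading observable step $e$ (respectively the guard $\phi$), the residuals $\tau\cdot(\sigma(x)+\sigma(y))+\sigma(x)$ and $\sigma(x)+\sigma(y)$ are related by a non-rooted branching bisimulation. The key observation is that the $\tau$-branch reaching $\sigma(x)+\sigma(y)$ is \emph{inert}: every move it subsequently enables is already available directly from $\sigma(x)$ in the left summand, so the pair satisfies the ``either $X\equiv\tau^*$, or matching after a $\tau^*$-prefix'' clause of Definition~\ref{BPSBG}, while the leading $e$ or $\phi$ supplies the rootedness that forces the first step to be matched strictly. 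I would treat the pomset and step cases uniformly by letting $X$ range over sets of pairwise concurrent events, and then upgrade to $\approx_{rbhp}$ by verifying that the event bijection $f$ extends consistently at each matched step; here the inert $\tau$ contributes no event to $\hat{C}$, so it leaves $f$ an isomorphism. Because the theorem only asserts the step, pomset and hp variants (not the hereditary one), no downward-closure argument is needed, which keeps this final verification self-contained.
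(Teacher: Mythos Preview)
The paper does not actually supply a proof for this theorem: Chapter~\ref{tcpa} is expressly preliminary material imported from \cite{ATC,CTC,PITC}, and Theorem~\ref{SAPTC_GABSG}, like the other soundness and completeness results in that chapter, is stated without argument. There is therefore nothing in the paper to compare your proposal against.

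That said, your plan is the standard and correct route. Reducing soundness to a per-axiom check via the congruence theorem, inheriting soundness of the strong-bisimulation axioms through the inclusion $\sim\subseteq\approx_{rb}$, and isolating $B2$/$G27$ as the essential branching laws requiring an explicit rooted-branching-bisimulation witness is exactly how such theorems are proved in the ACP tradition (e.g.\ in \cite{ACP}). Your handling of the data-state component via $s=\tau(s)$ and of the hp-variant via the observation that $\tau$-events are excised by $\hat{(\cdot)}$ is also on target. One small refinement: for $RSP$ you appeal to uniqueness of solutions of guarded specifications up to $\approx_{rb}$, but that uniqueness is itself a nontrivial lemma (it is what makes $RSP$ sound, not a consequence of the transition rules alone); in a fully written-out proof you would either cite it from \cite{ATC} or prove it by the usual ``bisimulation up to'' or approximation-induction argument adapted to the branching setting with guards.
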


Though $\tau$-loops are prohibited in guarded linear recursive specifications (see Definition \ref{GLRSG}) in a specifiable way, they can be constructed using the abstraction operator, for example, there exist $\tau$-loops in the process term $\tau_{\{a\}}(\langle X|X=aX\rangle)$. To avoid $\tau$-loops caused by $\tau_I$ and ensure fairness, the concept of cluster and $CFAR$ (Cluster Fair Abstraction Rule) \cite{CFAR} are still needed.

\begin{theorem}[Completeness of $APTC_{G_{\tau}}$ with guarded linear recursion and $CFAR$]\label{CCFARG}
Let $p$ and $q$ be closed $APTC_{G_{\tau}}$ with guarded linear recursion and $CFAR$ terms, then,

(1) if $p\approx_{rbs} q$ then $p=q$;

(2) if $p\approx_{rbp} q$ then $p=q$;

(3) if $p\approx_{rbhp} q$ then $p=q$.
\end{theorem}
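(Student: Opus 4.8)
The plan is to follow the standard route for completeness proofs in process algebra with abstraction: reduce the full theory back to a subtheory whose completeness is already available, namely $APTC_G$ with silent step and guarded linear recursion (Theorem \ref{CAPTC_GTAUG}), and then invoke that result. The three cases are treated in parallel; they differ only in whether transitions are read as steps, pomsets, or history-preserving single events, so I would carry out the argument once, writing $\approx_{rb*}$ for whichever of $\approx_{rbs},\approx_{rbp},\approx_{rbhp}$ is at hand, and note the uniform adaptation.

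First I would show that every closed $APTC_{G_{\tau}}$ term with guarded linear recursion and $CFAR$ is provably equal to a closed term of $APTC_G$ with silent step and guarded linear recursion, i.e. one in which the abstraction operator $\tau_I$ no longer occurs. The abstraction operator is pushed inward through the term structure using $TI1$--$TI6$ together with $G28$ and $G29$; the only nontrivial case is $\tau_I(\langle X|E\rangle)$, where renaming the events of $I$ into $\tau$ may create $\tau$-loops and thereby destroy guardedness in the sense of Definition \ref{GLRSG}. This is exactly where $CFAR$ is used: each cluster of $I$-labelled transitions is collapsed onto its exits, replacing the offending loop by a finite choice among exit summands and yielding, after finitely many applications, a guarded linear recursive specification $E'$ over the signature without $\tau_I$. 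Thus $p=\langle X_1'|E'\rangle$ and $q=\langle Y_1'|F'\rangle$ become provable, with $E',F'$ guarded linear and abstraction-free.

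Next, by the soundness of $APTC_{G_{\tau}}$ with guarded linear recursion (Theorem \ref{SAPTC_GABSG}) together with the soundness of $CFAR$, each of these provable rewriting steps preserves the relevant rooted branching equivalence, so from $p\approx_{rb*}q$ I obtain $\langle X_1'|E'\rangle\approx_{rb*}\langle Y_1'|F'\rangle$ for the same $*\in\{s,p,hp\}$. Since both sides are now closed terms of $APTC_G$ with silent step and guarded linear recursion, the completeness of that subtheory (Theorem \ref{CAPTC_GTAUG}) gives $\langle X_1'|E'\rangle=\langle Y_1'|F'\rangle$, and chaining the provable equalities yields $p=q$. Throughout I would carry the data state $s$ explicitly, since the equivalences are defined on configurations $\langle C,s\rangle$; the guard axioms $G1$--$G29$ and the sufficient-determinacy results let the data environment be manipulated soundly alongside the control structure.

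The main obstacle I expect is the $CFAR$ reduction of $\tau_I(\langle X|E\rangle)$: one must verify that the clusters for $I$ are well defined, that their exits are correctly identified (so that the right-hand side of $CFAR$ really is the intended sum of exit summands), and above all that after collapsing every cluster the resulting specification $E'$ is genuinely guarded and linear, so that Theorem \ref{CAPTC_GTAUG} applies. Establishing that this cluster-collapsing process terminates and respects the guard and data bookkeeping --- rather than any single algebraic manipulation --- is the delicate core of the argument.
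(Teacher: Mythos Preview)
The paper does not actually give a proof of this theorem: Section~2 is a preliminaries chapter that states the soundness and completeness results of $APTC$, $APTC_G$, and their extensions without proof (the results are imported from \cite{ATC,CTC,PITC}, and in the left-parallel subsection the paper even says explicitly that ``all proofs of the conclusions are left to the reader''). So there is no paper-side argument to compare against.

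Your proposal is the standard and correct route for this kind of completeness result. The reduction strategy --- push $\tau_I$ inward with $TI1$--$TI6$, $G28$, $G29$, use $CFAR$ to collapse $I$-clusters in $\tau_I(\langle X|E\rangle)$ so that the resulting specification is again guarded linear, then invoke Theorem~\ref{CAPTC_GTAUG} --- is exactly how completeness of ACP-style abstraction is established in the literature (e.g.\ \cite{ACP}), and the guard layer does not change the structure of the argument. Your identification of the delicate point is also right: the nontrivial work is showing that after finitely many $CFAR$ applications one lands in a \emph{guarded} linear specification (no infinite $\tau$- or $\epsilon$-chains in the sense of Definition~\ref{GLRSG}), so that the hypothesis of Theorem~\ref{CAPTC_GTAUG} is met. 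Nothing in your outline is wrong; it simply supplies what the paper omits.
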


\subsection{Applications}\label{app}

$APTC$ provides a formal framework based on truly concurrent behavioral semantics, which can be used to verify the correctness of system behaviors. In this subsection,
we tend to choose alternating bit protocol (ABP) \cite{ABP}.

The ABP protocol is used to ensure successful transmission of data through a corrupted channel. This success is based on the assumption that data can be resent an unlimited number of times, which is illustrated in Figure \ref{ABP}, we alter it into the true concurrency situation.

\begin{enumerate}
  \item Data elements $d_1,d_2,d_3,\cdots$ from a finite set $\Delta$ are communicated between a Sender and a Receiver.
  \item If the Sender reads a datum from channel $A_1$, then this datum is sent to the Receiver in parallel through channel $A_2$.
  \item The Sender processes the data in $\Delta$, formes new data, and sends them to the Receiver through channel $B$.
  \item And the Receiver sends the datum into channel $C_2$.
  \item If channel $B$ is corrupted, the message communicated through $B$ can be turn into an error message $\bot$.
  \item Every time the Receiver receives a message via channel $B$, it sends an acknowledgement to the Sender via channel $D$, which is also corrupted.
  \item Finally, then Sender and the Receiver send out their outputs in parallel through channels $C_1$ and $C_2$.
\end{enumerate}

\begin{figure}
    \centering
    \includegraphics{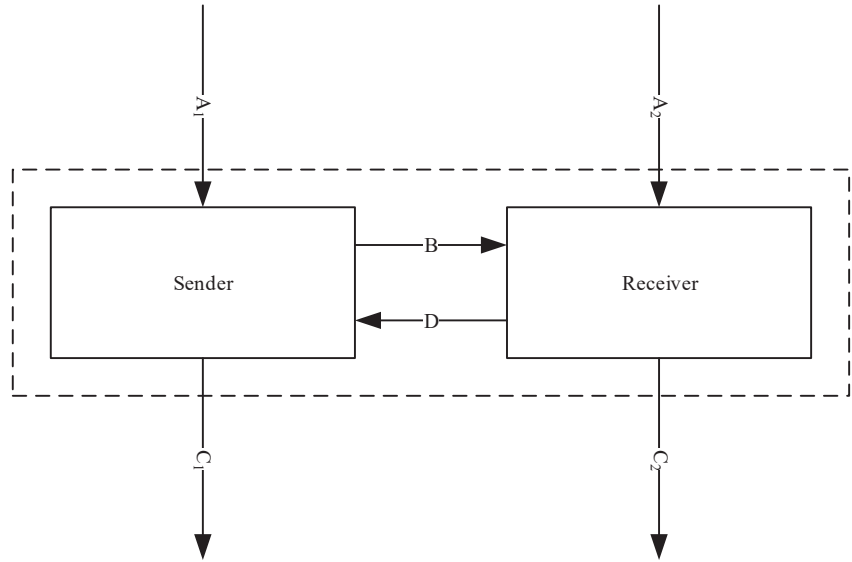}
    \caption{Alternating bit protocol}
    \label{ABP}
\end{figure}

In the truly concurrent ABP, the Sender sends its data to the Receiver; and the Receiver can also send its data to the Sender, for simplicity and without loss of generality, we assume that only the Sender sends its data and the Receiver only receives the data from the Sender. The Sender attaches a bit 0 to data elements $d_{2k-1}$ and a bit 1 to data elements $d_{2k}$, when they are sent into channel $B$. When the Receiver reads a datum, it sends back the attached bit via channel $D$. If the Receiver receives a corrupted message, then it sends back the previous acknowledgement to the Sender.

Then the state transition of the Sender can be described by $APTC$ as follows.

\begin{eqnarray}
&&S_b=\sum_{d\in\Delta}r_{A_1}(d)\cdot T_{db}\nonumber\\
&&T_{db}=(\sum_{d'\in\Delta}(s_B(d',b)\cdot s_{C_1}(d'))+s_B(\bot))\cdot U_{db}\nonumber\\
&&U_{db}=r_D(b)\cdot S_{1-b}+(r_D(1-b)+r_D(\bot))\cdot T_{db}\nonumber
\end{eqnarray}

where $s_B$ denotes sending data through channel $B$, $r_D$ denotes receiving data through channel $D$, similarly, $r_{A_1}$ means receiving data via channel $A_1$, $s_{C_1}$ denotes sending data via channel $C_1$, and $b\in\{0,1\}$.

And the state transition of the Receiver can be described by $APTC$ as follows.

\begin{eqnarray}
&&R_b=\sum_{d\in\Delta}r_{A_2}(d)\cdot R_b'\nonumber\\
&&R_b'=\sum_{d'\in\Delta}\{r_B(d',b)\cdot s_{C_2}(d')\cdot Q_b+r_B(d',1-b)\cdot Q_{1-b}\}+r_B(\bot)\cdot Q_{1-b}\nonumber\\
&&Q_b=(s_D(b)+s_D(\bot))\cdot R_{1-b}\nonumber
\end{eqnarray}

where $r_{A_2}$ denotes receiving data via channel $A_2$, $r_B$ denotes receiving data via channel $B$, $s_{C_2}$ denotes sending data via channel $C_2$, $s_D$ denotes sending data via channel $D$, and $b\in\{0,1\}$.

The send action and receive action of the same data through the same channel can communicate each other, otherwise, a deadlock $\delta$ will be caused. We define the following communication functions.

\begin{eqnarray}
&&\gamma(s_B(d',b),r_B(d',b))\triangleq c_B(d',b)\nonumber\\
&&\gamma(s_B(\bot),r_B(\bot))\triangleq c_B(\bot)\nonumber\\
&&\gamma(s_D(b),r_D(b))\triangleq c_D(b)\nonumber\\
&&\gamma(s_D(\bot),r_D(\bot))\triangleq c_D(\bot)\nonumber
\end{eqnarray}

Let $R_0$ and $S_0$ be in parallel, then the system $R_0S_0$ can be represented by the following process term.

$$\tau_I(\partial_H(\Theta(R_0\between S_0)))=\tau_I(\partial_H(R_0\between S_0))$$

where $H=\{s_B(d',b),r_B(d',b),s_D(b),r_D(b)|d'\in\Delta,b\in\{0,1\}\}\\
\{s_B(\bot),r_B(\bot),s_D(\bot),r_D(\bot)\}$

$I=\{c_B(d',b),c_D(b)|d'\in\Delta,b\in\{0,1\}\}\cup\{c_B(\bot),c_D(\bot)\}$.

Then we get the following conclusion.

\begin{theorem}[Correctness of the ABP protocol]
The ABP protocol $\tau_I(\partial_H(R_0\between S_0))$ can exhibits desired external behaviors.
\end{theorem}

\begin{proof}

By use of the algebraic laws of $APTC$, we have the following expansions.

\begin{eqnarray}
R_0\between S_0&\overset{\text{P1}}{=}&R_0\parallel S_0+R_0\mid S_0\nonumber\\
&\overset{\text{RDP}}{=}&(\sum_{d\in\Delta}r_{A_2}(d)\cdot R_0')\parallel(\sum_{d\in\Delta}r_{A_1}(d)T_{d0})\nonumber\\
&&+(\sum_{d\in\Delta}r_{A_2}(d)\cdot R_0')\mid(\sum_{d\in\Delta}r_{A_1}(d)T_{d0})\nonumber\\
&\overset{\text{P6,C14}}{=}&\sum_{d\in\Delta}(r_{A_2}(d)\parallel r_{A_1}(d))R_0'\between T_{d0} + \delta\cdot R_0'\between T_{d0}\nonumber\\
&\overset{\text{A6,A7}}{=}&\sum_{d\in\Delta}(r_{A_2}(d)\parallel r_{A_1}(d))R_0'\between T_{d0}\nonumber
\end{eqnarray}

\begin{eqnarray}
\partial_H(R_0\between S_0)&=&\partial_H(\sum_{d\in\Delta}(r_{A_2}(d)\parallel r_{A_1}(d))R_0'\between T_{d0})\nonumber\\
&&=\sum_{d\in\Delta}(r_{A_2}(d)\parallel r_{A_1}(d))\partial_H(R_0'\between T_{d0})\nonumber
\end{eqnarray}

Similarly, we can get the following equations.

\begin{eqnarray}
\partial_H(R_0\between S_0)&=&\sum_{d\in\Delta}(r_{A_2}(d)\parallel r_{A_1}(d))\cdot\partial_H(T_{d0}\between R_0')\nonumber\\
\partial_H(T_{d0}\between R_0')&=&c_B(d',0)\cdot(s_{C_1}(d')\parallel s_{C_2}(d'))\cdot\partial_H(U_{d0}\between Q_0)+c_B(\bot)\cdot\partial_H(U_{d0}\between Q_1)\nonumber\\
\partial_H(U_{d0}\between Q_1)&=&(c_D(1)+c_D(\bot))\cdot\partial_H(T_{d0}\between R_0')\nonumber\\
\partial_H(Q_0\between U_{d0})&=&c_D(0)\cdot\partial_H(R_1\between S_1)+c_D(\bot)\cdot\partial_H(R_1'\between T_{d0})\nonumber\\
\partial_H(R_1'\between T_{d0})&=&(c_B(d',0)+c_B(\bot))\cdot\partial_H(Q_0\between U_{d0})\nonumber\\
\partial_H(R_1\between S_1)&=&\sum_{d\in\Delta}(r_{A_2}(d)\parallel r_{A_1}(d))\cdot\partial_H(T_{d1}\between R_1')\nonumber\\
\partial_H(T_{d1}\between R_1')&=&c_B(d',1)\cdot(s_{C_1}(d')\parallel s_{C_2}(d'))\cdot\partial_H(U_{d1}\between Q_1)+c_B(\bot)\cdot\partial_H(U_{d1}\between Q_0')\nonumber\\
\partial_H(U_{d1}\between Q_0')&=&(c_D(0)+c_D(\bot))\cdot\partial_H(T_{d1}\between R_1')\nonumber\\
\partial_H(Q_1\between U_{d1})&=&c_D(1)\cdot\partial_H(R_0\between S_0)+c_D(\bot)\cdot\partial_H(R_0'\between T_{d1})\nonumber\\
\partial_H(R_0'\between T_{d1})&=&(c_B(d',1)+c_B(\bot))\cdot\partial_H(Q_1\between U_{d1})\nonumber
\end{eqnarray}

Let $\partial_H(R_0\between S_0)=\langle X_1|E\rangle$, where E is the following guarded linear recursion specification:

\begin{eqnarray}
&&\{X_1=\sum_{d\in \Delta}(r_{A_2}(d)\parallel r_{A_1}(d))\cdot X_{2d},Y_1=\sum_{d\in\Delta}(r_{A_2}(d)\parallel r_{A_1}(d))\cdot Y_{2d},\nonumber\\
&&X_{2d}=c_B(d',0)\cdot X_{4d}+c_B(\bot)\cdot X_{3d}, Y_{2d}=c_B(d',1)\cdot Y_{4d}+c_B(\bot)\cdot Y_{3d},\nonumber\\
&&X_{3d}=(c_D(1)+c_D(\bot))\cdot X_{2d}, Y_{3d}=(c_D(0)+c_D(\bot))\cdot Y_{2d},\nonumber\\
&&X_{4d}=(s_{C_1}(d')\parallel s_{C_2}(d'))\cdot X_{5d}, Y_{4d}=(s_{C_1}(d')\parallel s_{C_2}(d'))\cdot Y_{5d},\nonumber\\
&&X_{5d}=c_D(0)\cdot Y_1+c_D(\bot)\cdot X_{6d}, Y_{5d}=c_D(1)\cdot X_1+c_D(\bot)\cdot Y_{6d},\nonumber\\
&&X_{6d}=(c_B(d,0)+c_B(\bot))\cdot X_{5d}, Y_{6d}=(c_B(d,1)+c_B(\bot))\cdot Y_{5d}\nonumber\\
&&|d,d'\in\Delta\}\nonumber
\end{eqnarray}

Then we apply abstraction operator $\tau_I$ into $\langle X_1|E\rangle$.

\begin{eqnarray}
\tau_I(\langle X_1|E\rangle)
&=&\sum_{d\in\Delta}(r_{A_1}(d)\parallel r_{A_2}(d))\cdot\tau_I(\langle X_{2d}|E\rangle)\nonumber\\
&=&\sum_{d\in\Delta}(r_{A_1}(d)\parallel r_{A_2}(d))\cdot\tau_I(\langle X_{4d}|E\rangle)\nonumber\\
&=&\sum_{d,d'\in\Delta}(r_{A_1}(d)\parallel r_{A_2}(d))\cdot (s_{C_1}(d')\parallel s_{C_2}(d'))\cdot\tau_I(\langle X_{5d}|E\rangle)\nonumber\\
&=&\sum_{d,d'\in\Delta}(r_{A_1}(d)\parallel r_{A_2}(d))\cdot (s_{C_1}(d')\parallel s_{C_2}(d'))\cdot\tau_I(\langle Y_1|E\rangle)\nonumber
\end{eqnarray}

Similarly, we can get $\tau_I(\langle Y_1|E\rangle)=\sum_{d,d'\in\Delta}(r_{A_1}(d)\parallel r_{A_2}(d))\cdot (s_{C_1}(d')\parallel s_{C_2}(d'))\cdot\tau_I(\langle X_1|E\rangle)
$.

We get $\tau_I(\partial_H(R_0\between S_0))=\sum_{d,d'\in \Delta}(r_{A_1}(d)\parallel r_{A_2}(d))\cdot (s_{C_1}(d')\parallel s_{C_2}(d'))\cdot \tau_I(\partial_H(R_0\between S_0))$. So, the ABP protocol $\tau_I(\partial_H(R_0\between S_0))$ can exhibits desired external behaviors.
\end{proof}

With the help of shadow constant, now we can verify the traditional alternating bit protocol (ABP) \cite{ABP}.

The ABP protocol is used to ensure successful transmission of data through a corrupted channel. This success is based on the assumption that data can be resent an unlimited number of times, which is illustrated in Figure \ref{ABP2}, we alter it into the true concurrency situation.

\begin{enumerate}
  \item Data elements $d_1,d_2,d_3,\cdots$ from a finite set $\Delta$ are communicated between a Sender and a Receiver.
  \item If the Sender reads a datum from channel $A$.
  \item The Sender processes the data in $\Delta$, formes new data, and sends them to the Receiver through channel $B$.
  \item And the Receiver sends the datum into channel $C$.
  \item If channel $B$ is corrupted, the message communicated through $B$ can be turn into an error message $\bot$.
  \item Every time the Receiver receives a message via channel $B$, it sends an acknowledgement to the Sender via channel $D$, which is also corrupted.
\end{enumerate}

\begin{figure}
    \centering
    \includegraphics{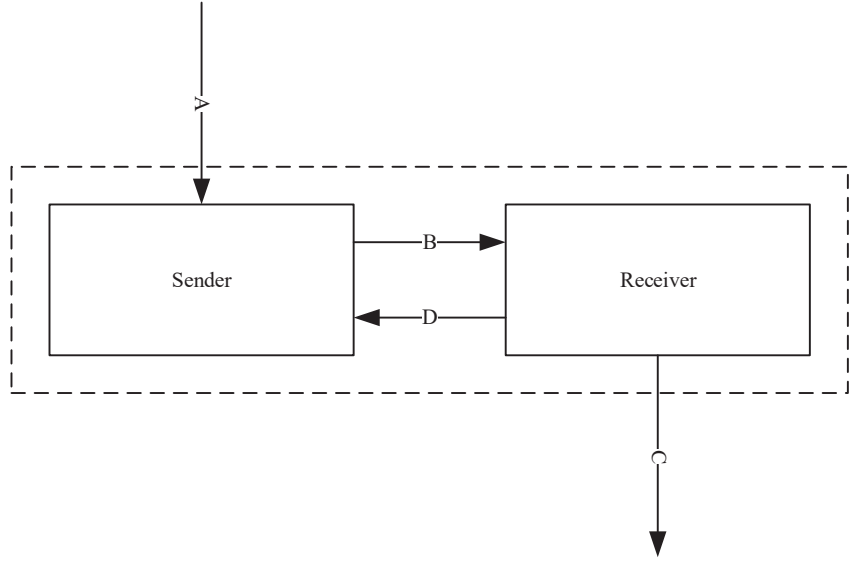}
    \caption{Alternating bit protocol}
    \label{ABP2}
\end{figure}

The Sender attaches a bit 0 to data elements $d_{2k-1}$ and a bit 1 to data elements $d_{2k}$, when they are sent into channel $B$. When the Receiver reads a datum, it sends back the attached bit via channel $D$. If the Receiver receives a corrupted message, then it sends back the previous acknowledgement to the Sender.

Then the state transition of the Sender can be described by $APTC$ as follows.

\begin{eqnarray}
&&S_b=\sum_{d\in\Delta}r_{A}(d)\cdot T_{db}\nonumber\\
&&T_{db}=(\sum_{d'\in\Delta}(s_B(d',b)\cdot \circledS^{s_{C}(d')})+s_B(\bot))\cdot U_{db}\nonumber\\
&&U_{db}=r_D(b)\cdot S_{1-b}+(r_D(1-b)+r_D(\bot))\cdot T_{db}\nonumber
\end{eqnarray}

where $s_B$ denotes sending data through channel $B$, $r_D$ denotes receiving data through channel $D$, similarly, $r_{A}$ means receiving data via channel $A$, $\circledS^{s_{C}(d')}$ denotes the shadow of $s_{C}(d')$.

And the state transition of the Receiver can be described by $APTC$ as follows.

\begin{eqnarray}
&&R_b=\sum_{d\in\Delta}\circledS^{r_{A}(d)}\cdot R_b'\nonumber\\
&&R_b'=\sum_{d'\in\Delta}\{r_B(d',b)\cdot s_{C}(d')\cdot Q_b+r_B(d',1-b)\cdot Q_{1-b}\}+r_B(\bot)\cdot Q_{1-b}\nonumber\\
&&Q_b=(s_D(b)+s_D(\bot))\cdot R_{1-b}\nonumber
\end{eqnarray}

where $\circledS^{r_{A}(d)}$ denotes the shadow of $r_{A}(d)$, $r_B$ denotes receiving data via channel $B$, $s_{C}$ denotes sending data via channel $C$, $s_D$ denotes sending data via channel $D$, and $b\in\{0,1\}$.

The send action and receive action of the same data through the same channel can communicate each other, otherwise, a deadlock $\delta$ will be caused. We define the following communication functions.

\begin{eqnarray}
&&\gamma(s_B(d',b),r_B(d',b))\triangleq c_B(d',b)\nonumber\\
&&\gamma(s_B(\bot),r_B(\bot))\triangleq c_B(\bot)\nonumber\\
&&\gamma(s_D(b),r_D(b))\triangleq c_D(b)\nonumber\\
&&\gamma(s_D(\bot),r_D(\bot))\triangleq c_D(\bot)\nonumber
\end{eqnarray}

Let $R_0$ and $S_0$ be in parallel, then the system $R_0S_0$ can be represented by the following process term.

$$\tau_I(\partial_H(\Theta(R_0\between S_0)))=\tau_I(\partial_H(R_0\between S_0))$$

where $H=\{s_B(d',b),r_B(d',b),s_D(b),r_D(b)|d'\in\Delta,b\in\{0,1\}\}\\
\{s_B(\bot),r_B(\bot),s_D(\bot),r_D(\bot)\}$

$I=\{c_B(d',b),c_D(b)|d'\in\Delta,b\in\{0,1\}\}\cup\{c_B(\bot),c_D(\bot)\}$.

Then we get the following conclusion.

\begin{theorem}[Correctness of the ABP protocol]
The ABP protocol $\tau_I(\partial_H(R_0\between S_0))$ can exhibits desired external behaviors.
\end{theorem}

\begin{proof}

Similarly, we can get $\tau_I(\langle X_1|E\rangle)=\sum_{d,d'\in\Delta}r_{A}(d)\cdot s_{C}(d')\cdot\tau_I(\langle Y_1|E\rangle)$ and $\tau_I(\langle Y_1|E\rangle)=\sum_{d,d'\in\Delta}r_{A}(d)\cdot s_{C}(d')\cdot\tau_I(\langle X_1|E\rangle)$.

So, the ABP protocol $\tau_I(\partial_H(R_0\between S_0))$ can exhibits desired external behaviors.
\end{proof}

\newpage\section{Process Algebra Based Actor Model}\label{pabam}

In this chapter, we introduce an actor model described by the truly concurrent process algebra in the chapter \ref{tcpa}. Firstly, we introduce the traditional actor model; then we introduce
the model based on truly concurrent process algebra, and analyze the advantages of this model.

\subsection{The Actor Model}

An actor \cite{Actor1} \cite{Actor2} \cite{Actor3} acts as an atomic function unit of concurrent and encapsulates a set of states, a control thread and a set of local computations. It
has a unique mail address and maintains a mail box to accept messages sent by other actors. Actors do local computations by means of processing the messages stored in the mail box
sequentially and block when their mail boxes are empty.

During processing a message in mail box, an actor may perform three candidate actions:
\begin{enumerate}
  \item \textbf{send} action: sending messages asynchronously to other actors by their mail box addresses;
  \item \textbf{create} action: creating new actors with new behaviors;
  \item \textbf{ready} action: becoming ready to process the next message from the mail box or block if the mail box is empty.
\end{enumerate}

The illustration of an actor model is shown in Figure \ref{actor}.

\begin{figure}
  \centering
  \includegraphics{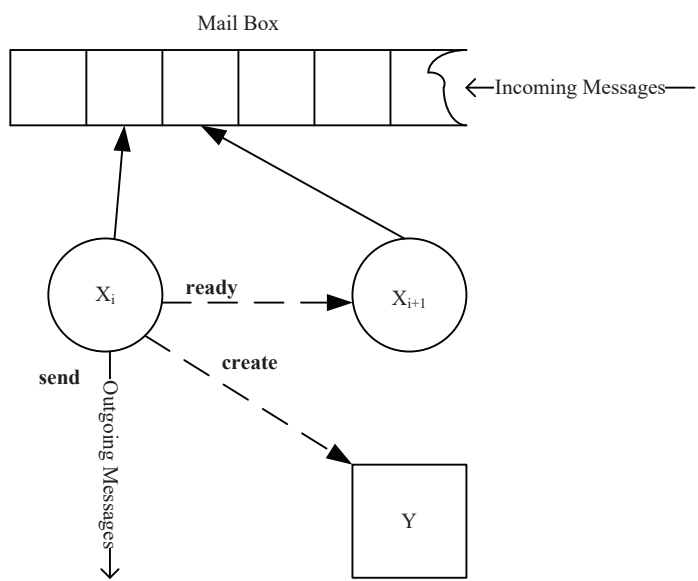}
  \caption{Model of an actor}
  \label{actor}
\end{figure}

The work $\textrm{A}\pi$ of Agha \cite{Actor4} gives actors an algebraic model based on $\pi$-calculus \cite{PI1} \cite{PI2}. In this work, Agha pointed out that it must satisfy the following
characteristics as an actor:

\begin{enumerate}
  \item Concurrency: all actors execute concurrently;
  \item Asynchrony: an actor receives and sends messages asynchronously;
  \item Uniqueness: an actor has a unique name and the associate unique mail box name;
  \item Concentration: an actor focuses on the processing messages, including some local computations, creations of some new actors, and sending some messages to other actors;
  \item Communication Dependency: the only way of affecting an actor is sending a message to it;
  \item Abstraction: except for the receiving and sending message, and creating new actors, the local computations are abstracted;
  \item Persistence: an actor does not disappear after processing a message.
\end{enumerate}

\subsection{Modelling Actors Based on Process Algebra}

In this section, we model the characteristics of an actor based on APTC, then we take all the modelling into a whole actor model. Finally, we take a simple example to show the
application of the new actor model.

\subsubsection{Modelling Characteristics of An Actor}

The characteristics of an actor are modelled as follows.

\begin{enumerate}
  \item Computations: the computations are modeled as atomic actions, and the computational logics are captured by sequential composition $\cdot$, alternative composition $+$, parallel
  composition $\between$, and the conditional guards (see section \ref{gu} for details) of truly concurrent process algebra;
  \item Asynchronous Communications: a communication are composed of a pair of sending/receiving actions, the asynchrony of communication only requires that the sending action occurs
  before the receiving action, see section \ref{ac} for details;
  \item Uniqueness: for the simplicity, the unique name of an actor and the unique name of its mail box are combined into its one unique name;
  \item Abstraction: the local computations are encapsulated and abstracted as internal steps $\tau$, see abstraction of truly concurrent process algebra;
  \item Actor Creations: by use of process creations in section \ref{pc}, we can create new actors;
  \item Concurrency: all the actors are executed in parallel which can be captured by the parallel composition $\between$ of truly concurrent process algebra;
  \item Persistence: once an actor has been created, it will receive and process messages continuously, this infinite computation can be captured by recursion of truly concurrent process
  algebra.
\end{enumerate}

\subsubsection{Combining All the Elements into A Whole}

Based on the modelling elements of an actor, we can model a whole actor computational system consisted of a set of actors as follows.

\begin{enumerate}
  \item According to the requirements of the system, design the system (including the inputs/outputs and functions) and divide it into a set of actors by the modular methodology;
  \item Determine the interfaces of all actors, including receiving messages, sending messages, and creating other actors;
  \item Determine the interactions among all actors, mainly including the causal relations of the sending/receiving actions for each interaction;
  \item Implement the functions of each actor by programming its state transitions based on truly concurrent process algebra, and the program is consists of a set of atomic actions and
  the computational logics among them, including $\cdot$, $+$, $\between$ and guards;
  \item Apply recursion to the program of each actor to capture the persistence property of each actor;
  \item Apply abstraction to the program of each actor to encapsulate it;
  \item Prove that each actor has desired external behaviors;
  \item Put all actors in parallel and plug the interactions among them to implement the whole actor system;
  \item Apply recursion to the whole system to capture the persistence property of the whole system;
  \item Apply abstraction to the whole system by abstracting the interactions among actors as internal actions;
  \item Finally prove that the whole system has desired external behaviors.
\end{enumerate}

Comparing with other models of actors, the truly concurrent process algebra based model has the following advantages.

\begin{enumerate}
  \item The truly concurrent process algebra has rich expressive abilities to describe almost all characteristics of actors, especially for asynchronous communication, actor creation,
  recursion, abstraction, etc;
  \item The truly concurrent process algebra and actors are all models for true concurrency, and have inborn intimacy;
  \item The truly concurrent process algebra has a firm semantics foundation and a powerful proof theory, the correctness of an actor system can be proven easily.
\end{enumerate}

In the following chapters, we will apply this new actor model to model and verify different computational systems, and show the advantages of this new model together.

\newpage\section{Process Algebra Based Actor Model of Map-Reduce}\label{ammr}

In this chapter, we will use the process algebra based actor model to model and verify map-reduce. In section \ref{rmr}, we introduce the requirements of map-reduce;
we model the map-reduce by use of the new actor model in section \ref{mmr}.

\subsection{Requirements of Map-Reduce}\label{rmr}

Map-Reduce is a programming model and system aiming at large scale data set, which uses the thinking of functional programming language. It includes two programs: Map and Reduce, and
also a framework to execute the program instances on a computer cluster. Map program reads the data set from the inputting files, executes some filters and transformations, and then
outputs the data set as the form of $(key,value)$. While the Reduce program combines the outputs of the Map program according to the rules defined by the user. The architecture and the
execution process are shown in Figure \ref{ArMR}.

\begin{figure}
  \centering
  \includegraphics{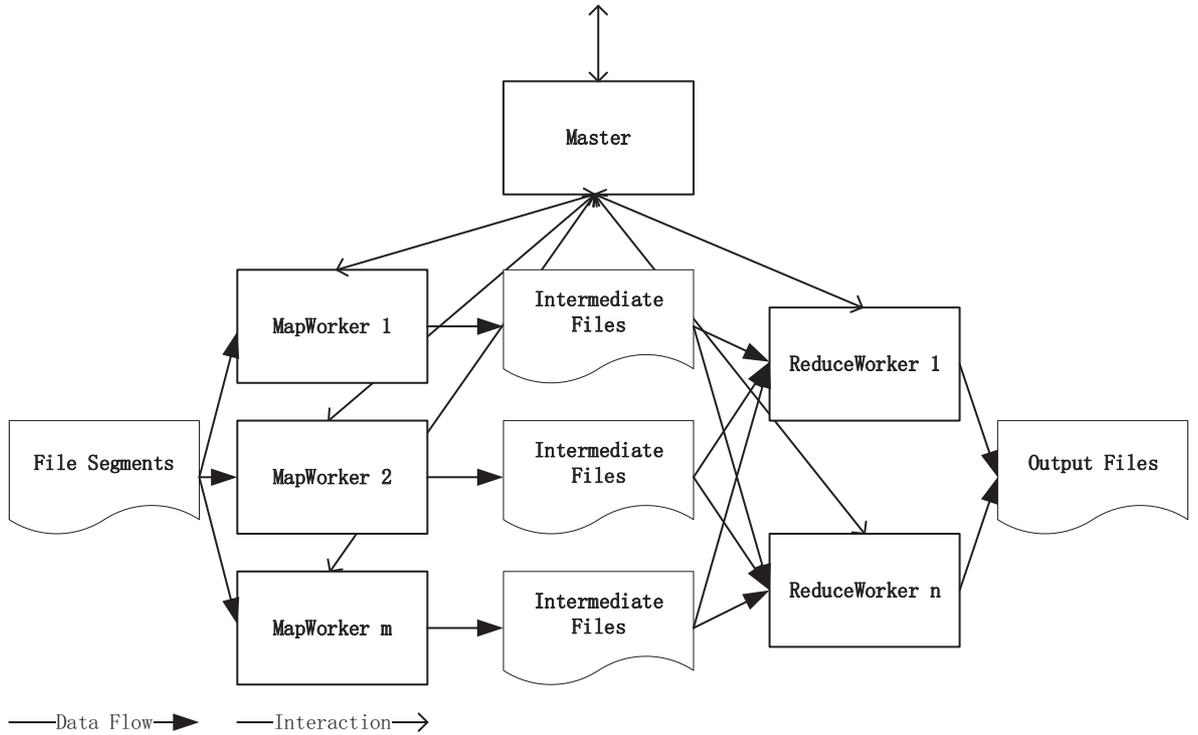}
  \caption{An architecture of Map-Reduce}
  \label{ArMR}
\end{figure} 
As shown in Figure \ref{ArMR}, the execution process is as follows.

\begin{enumerate}
  \item The lib of Map-Reduce in the user program divides the input files into 16-64MB size of file segments;
  \item Then the Master program receives the requests from the user including the addresses of the input files, then creates $m$ map worker programs, and allocates a map task for 
  each map worker including the addresses of the input files;
  \item The map workers receive the tasks from the Master and get the addresses of the input files, read the corresponding input file segments, execute some filters and transformations, 
  and generate the $(key,value)$ pairs form intermediate files, and also notify the Master when their map tasks are finished;
  \item The Master receives the task finished notifications from the map workers, including the addresses of the intermediate files, then creates $n$ reduce workers, and sends the 
  reduce tasks to the reduce workers (also including the addresses of the intermediate files);
  \item The reduce workers receive the tasks from the Master and get the addresses of the intermediate files, read the corresponding intermediate files, execute some reduce actions, and
  generate the output files, and also notify the Master when their reduce tasks are finished;
  \item The Master receives the task finished notifications from the reduce workers, including the addresses of the output files, then generates the output responses to the user.
\end{enumerate}

\subsection{The New Actor Model of Map-Reduce}\label{mmr}

According to the architecture of Map-Reduce, the whole actors system implemented by actors can be divided into three kinds of actors: the Map actors (MapAs), the Reduce actors (RAs),
and the Master actor (Mas).

\subsubsection{Map Actor, MapA}

A Map worker is an atomic function unit to execute the map tasks and managed by the Master. We use an actor called Map actor (MapA) to model a Map worker.

A MapA has a unique name, local information and variables to contain its states, and local computation procedures to manipulate the information and variables. A MapA is always managed by
the Master and it receives messages from the Master, sends messages to the Master, and is created by the Master. Note that a MapA can not create new MapAs, it can only be created
by the Master. That is, a MapA is an actor with a constraint that is without \textbf{create} action.

After a MapA is created, the typical process is as follows.

\begin{enumerate}
  \item The MapA receives the map tasks $DI_{MapA}$ (including the addresses of the input files) from the Master through its mail box denoted by its name $MapA$ (the corresponding 
  reading action is denoted $r_{MapA}(DI_{MapA})$);
  \item Then it does some local computations mixed some atomic filter and transformation actions by computation logics, including $\cdot$, $+$, $\between$ and guards, the whole local 
  computations are denoted $I_{MapA}$, which is the set of all local atomic actions;
  \item When the local computations are finished, the MapA generates the intermediate files containing a series of $(key,value)$ pairs, generates the output message $DO_{MapA}$ (containing
  the addresses of the intermediate files), and sends to the Master's mail box denoted by the Master's name $Mas$ (the corresponding sending
  action is denoted $s_{Mas}(DO_{MapA})$), and then processes the next message from the Master recursively.
\end{enumerate}

The above process is described as the following state transitions by APTC.

$MapA=r_{MapA}(DI_{MapA})\cdot MapA_1$

$MapA_1=I_{MapA}\cdot MapA_2$

$MapA_2=s_{Mas}(DO_{MapA})\cdot MapA$

By use of the algebraic laws of APTC, the MapA may be proven exhibiting desired external behaviors. If it exhibits desired external behaviors, the MapA should have the following form:

$$\tau_{I_{MapA}}(\partial_{\emptyset}(MapA))=r_{MapA}(DI_{MapA})\cdot s_{Mas}(DO_{MapA})\cdot \tau_{I_{MapA}}(\partial_{\emptyset}(MapA))$$

\subsubsection{Reduce Actor, RA}

A Reduce worker is an atomic function unit to execute the reduce tasks and managed by the Master. We use an actor called Reduce actor (RA) to model a Reduce worker.

A RA has a unique name, local information and variables to contain its states, and local computation procedures to manipulate the information and variables. A RA is always managed by
the Master and it receives messages from the Master, sends messages to the Master, and is created by the Master. Note that a RA can not create new RAs, it can only be created
by the Master. That is, a RA is an actor with a constraint that is without \textbf{create} action.

After a RA is created, the typical process is as follows.

\begin{enumerate}
  \item The RA receives the reduce tasks $DI_{RA}$ (including the addresses of the intermediate files) from the Master through its mail box denoted by its name $RA$ (the corresponding
  reading action is denoted $r_{RA}(DI_{RA})$);
  \item Then it does some local computations mixed some atomic reduce actions by computation logics, including $\cdot$, $+$, $\between$ and guards, the whole local
  computations are denoted $I_{RA}$, which is the set of all local atomic actions;
  \item When the local computations are finished, the RA generates the output files, generates the output message $DO_{RA}$ (containing
  the addresses of the output files), and sends to the Master's mail box denoted by the Master's name $Mas$ (the corresponding sending
  action is denoted $s_{Mas}(DO_{RA})$), and then processes the next message from the Master recursively.
\end{enumerate}

The above process is described as the following state transitions by APTC.

$RA=r_{RA}(DI_{RA})\cdot RA_1$

$RA_1=I_{RA}\cdot RA_2$

$RA_2=s_{Mas}(DO_{RA})\cdot RA$

By use of the algebraic laws of APTC, the RA may be proven exhibiting desired external behaviors. If it exhibits desired external behaviors, the RA should have the following form:

$$\tau_{I_{RA}}(\partial_{\emptyset}(RA))=r_{RA}(DI_{RA})\cdot s_{Mas}(DO_{RA})\cdot \tau_{I_{RA}}(\partial_{\emptyset}(RA))$$

\subsubsection{Master Actor, Mas}

The Master receives the requests from the user, manages the Map actors and the Reduce actors, and returns the responses to the user. We use an actor called Master actor (Mas) to model 
the Master.

After the Master actor is created, the typical process is as follows.

\begin{enumerate}
  \item The Mas receives the requests $DI_{Mas}$ from the user through its mail box denoted by its name $Mas$ (the corresponding
  reading action is denoted $r_{Mas}(DI_{Mas})$);
  \item Then it does some local computations mixed some atomic division actions to divide the input files into file segments by computation logics, including $\cdot$, $+$, $\between$ 
  and guards, the whole local computations are denoted and included into $I_{Mas}$, which is the set of all local atomic actions;
  \item The Mas creates $m$ Map actors $MapA_i$ for $1\leq i\leq m$ in parallel through actions $\mathbf{new}(MapA_1)\parallel \cdots\parallel \mathbf{new}(MapA_m)$;
  \item When the local computations are finished, the Mas generates the map tasks $DI_{MapA_i}$ containing the addresses of the corresponding file segments for each $MapA_i$ with
  $1\leq i\leq m$, sends them to the MapAs' mail box denoted by the MapAs' name $MapA_i$ (the corresponding sending actions are denoted 
  $s_{MapA_1}(DI_{MapA_1})\parallel\cdots\parallel s_{MapA_m}(DI_{MapA_m})$);
  \item The Mas receives the responses $DO_{MapA_i}$ (containing the addresses of the intermediate files) from $MapA_i$ for $1\leq i\leq m$ through its mail box denoted by its name 
  $Mas$ (the corresponding reading actions are denoted $r_{Mas}(DO_{MapA_1})\parallel\cdots\parallel r_{Mas}(DO_{MapA_m})$);
  \item Then it does some local computations mixed some atomic division actions by computation logics, including $\cdot$, $+$, $\between$
  and guards, the whole local computations are denoted and included into $I_{Mas}$, which is the set of all local atomic actions;
  \item The Mas creates $n$ Reduce actors $RA_j$ for $1\leq j\leq n$ in parallel through actions $\mathbf{new}(RA_1)\parallel \cdots\parallel \mathbf{new}(RA_n)$;
  \item When the local computations are finished, the Mas generates the reduce tasks $DI_{RA_j}$ containing the addresses of the corresponding intermediate files for each $RA_j$ with
  $1\leq j\leq n$, sends them to the RAs' mail box denoted by the RAs' name $RA_j$ (the corresponding sending actions are denoted
  $s_{RA_1}(DI_{RA_1})\parallel\cdots\parallel s_{RA_n}(DI_{RA_n})$);
  \item The Mas receives the responses $DO_{RA_j}$ (containing the addresses of the output files) from $RA_j$ for $1\leq j\leq n$ through its mail box denoted by its name
  $Mas$ (the corresponding reading actions are denoted $r_{Mas}(DO_{RA_1})\parallel\cdots\parallel r_{Mas}(DO_{RA_n})$);
  \item Then it does some local computations mixed some atomic actions by computation logics, including $\cdot$, $+$, $\between$
  and guards, the whole local computations are denoted and included into $I_{Mas}$, which is the set of all local atomic actions;
  \item When the local computations are finished, the Mas generates the output responses $DO_{Mas}$ containing the addresses of the output files, sends them to users (the corresponding 
  sending action is denoted $s_{O}(DO_{Mas})$), and then processes the next message from the user recursively.
\end{enumerate}

The above process is described as the following state transitions by APTC.

$Mas=r_{Mas}(DI_{Mas})\cdot Mas_1$

$Mas_1=I_{Mas}\cdot Mas_2$

$Mas_2=\mathbf{new}(MapA_1)\parallel \cdots\parallel \mathbf{new}(MapA_m)\cdot Mas_3$

$Mas_3=s_{MapA_1}(DI_{MapA_1})\parallel\cdots\parallel s_{MapA_m}(DI_{MapA_m})\cdot Mas_4$

$Mas_4=r_{Mas}(DO_{MapA_1})\parallel\cdots\parallel r_{Mas}(DO_{MapA_m})\cdot Mas_5$

$Mas_5=I_{Mas}\cdot Mas_6$

$Mas_6=\mathbf{new}(RA_1)\parallel \cdots\parallel \mathbf{new}(RA_n)\cdot Mas_7$

$Mas_7=s_{RA_1}(DI_{RA_1})\parallel\cdots\parallel s_{RA_n}(DI_{RA_n})\cdot Mas_8$

$Mas_8=r_{Mas}(DO_{RA_1})\parallel\cdots\parallel r_{Mas}(DO_{RA_n})\cdot Mas_9$

$Mas_9=I_{Mas}\cdot Mas_{10}$

$Mas_{10}=s_{O}(DO_{Mas})\cdot Mas$

By use of the algebraic laws of APTC, the Mas may be proven exhibiting desired external behaviors. If it exhibits desired external behaviors, the Mas should have the following form:

$\tau_{I_{Mas}}(\partial_{\emptyset}(Mas))=r_{Mas}(DI_{Mas})\cdot (s_{MapA_1}(DI_{MapA_1})\parallel\cdots\parallel s_{MapA_m}(DI_{MapA_m}))\cdot\\
(r_{Mas}(DO_{MapA_1})\parallel\cdots\parallel r_{Mas}(DO_{MapA_m}))\cdot (s_{RA_1}(DI_{RA_1})\parallel\cdots\parallel s_{RA_n}(DI_{RA_n}))\cdot \\
(r_{Mas}(DO_{RA_1})\parallel\cdots\parallel r_{Mas}(DO_{RA_n})) \cdot s_{Mas}(DO_{Mas})\cdot \tau_{I_{Mas}}(\partial_{\emptyset}(Mas))$

\subsubsection{Putting All Together into A Whole}

We put all actors together into a whole, including all MapAs, RAs, and Mas, according to the architecture as illustrated in Figure \ref{ArMR}. The whole actor system 
$Mas=Mas\quad MapA_1\quad\cdots\quad MapA_m\quad \\RA_1\quad\cdots\quad RA_n$
can be represented by the following process term of APTC.

$$\tau_I(\partial_H(Mas))=\tau_I(\partial_H(Mas\between MapA_1\between\cdots\between MapA_m\between RA_1\between\cdots\between RA_n))$$

Among all the actors, there are synchronous communications. The actor's reading and to the same actor's sending actions with the same type messages may cause communications. If to the actor's
sending action occurs before the the same actions reading action, an asynchronous communication will occur; otherwise, a deadlock $\delta$ will be caused.

There are two kinds of asynchronous communications as follows.

(1) The communications between a MapA and Mas with the following constraints.

$s_{MapA}(DI_{MapA})\leq r_{MapA}(DI_{MapA})$

$s_{Mas}(DO_{MapA})\leq r_{Mas}(DO_{MapA})$

(2) The communications between a RA and Mas with the following constraints.

$s_{RA}(DI_{RA})\leq r_{RA}(DI_{RA})$

$s_{Mas}(DO_{RA})\leq r_{Mas}(DO_{RA})$

So, the set $H$ and $I$ can be defined as follows.

$H=\{s_{MapA_1}(DI_{MapA_1}), r_{MapA_1}(DI_{MapA_1}),\cdots,s_{MapA_m}(DI_{MapA_m}), r_{MapA_m}(DI_{MapA_m}),\\
s_{Mas}(DO_{MapA_1}), r_{Mas}(DO_{MapA_1}),\cdots,s_{Mas}(DO_{MapA_m}), r_{Mas}(DO_{MapA_m}),\\
s_{RA_1}(DI_{RA_1}), r_{RA_1}(DI_{RA_1}),\cdots,s_{RA_n}(DI_{RA_n}), r_{RA_n}(DI_{RA_n}),\\
s_{Mas}(DO_{RA_1}), r_{Mas}(DO_{RA_1}),\cdot,s_{Mas}(DO_{RA_n}), r_{Mas}(DO_{RA_n})\\
|s_{MapA_1}(DI_{MapA_1})\nleq r_{MapA_1}(DI_{MapA_1}),\cdots,s_{MapA_m}(DI_{MapA_m})\nleq r_{MapA_m}(DI_{MapA_m}),\\
s_{Mas}(DO_{MapA_1})\nleq r_{Mas}(DO_{MapA_1}),\cdots,s_{Mas}(DO_{MapA_m})\nleq r_{Mas}(DO_{MapA_m}),\\
s_{RA_1}(DI_{RA_1})\nleq r_{RA_1}(DI_{RA_1}),\cdots,s_{RA_n}(DI_{RA_n})\nleq r_{RA_n}(DI_{RA_n}),\\
s_{Mas}(DO_{RA_1})\nleq r_{Mas}(DO_{RA_1}),\cdot,s_{Mas}(DO_{RA_n})\nleq r_{Mas}(DO_{RA_n})\}$

$I=\{s_{MapA_1}(DI_{MapA_1}), r_{MapA_1}(DI_{MapA_1}),\cdots,s_{MapA_m}(DI_{MapA_m}), r_{MapA_m}(DI_{MapA_m}),\\
s_{Mas}(DO_{MapA_1}), r_{Mas}(DO_{MapA_1}),\cdots,s_{Mas}(DO_{MapA_m}), r_{Mas}(DO_{MapA_m}),\\
s_{RA_1}(DI_{RA_1}), r_{RA_1}(DI_{RA_1}),\cdots,s_{RA_n}(DI_{RA_n}), r_{RA_n}(DI_{RA_n}),\\
s_{Mas}(DO_{RA_1}), r_{Mas}(DO_{RA_1}),\cdot,s_{Mas}(DO_{RA_n}), r_{Mas}(DO_{RA_n})\\
|s_{MapA_1}(DI_{MapA_1})\leq r_{MapA_1}(DI_{MapA_1}),\cdots,s_{MapA_m}(DI_{MapA_m})\leq r_{MapA_m}(DI_{MapA_m}),\\
s_{Mas}(DO_{MapA_1})\leq r_{Mas}(DO_{MapA_1}),\cdots,s_{Mas}(DO_{MapA_m})\leq r_{Mas}(DO_{MapA_m}),\\
s_{RA_1}(DI_{RA_1})\leq r_{RA_1}(DI_{RA_1}),\cdots,s_{RA_n}(DI_{RA_n})\leq r_{RA_n}(DI_{RA_n}),\\
s_{Mas}(DO_{RA_1})\leq r_{Mas}(DO_{RA_1}),\cdot,s_{Mas}(DO_{RA_n})\leq r_{Mas}(DO_{RA_n})\}\\
\cup I_{MapA_1}\cup\cdots\cup I_{MapA_m}\cup I_{RA_1}\cup\cdots\cup I_{RA_n}\cup I_{Mas}$

Then, we can get the following conclusion.

\begin{theorem}
The whole actor system of Map-Reduce illustrated in Figure \ref{ArMR} exhibits desired external behaviors.
\end{theorem}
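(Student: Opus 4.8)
The plan is to mirror, step for step, the derivation already carried out for the ABP protocol: expand the whole parallel composition into a guarded linear recursive specification using the $APTC$ axioms, apply $\partial_H$ to force the asynchronous send/receive pairs to communicate while annihilating every mismatched pair as $\delta$, and finally apply $\tau_I$ together with $CFAR$ to rename all internal activity to $\tau$ and collapse the resulting finite $\tau$-clusters.

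First I would recall the external behaviours of the components already established above, namely $\tau_{I_{MapA}}(\partial_{\emptyset}(MapA_i))=r_{MapA_i}(DI_{MapA_i})\cdot s_{Mas}(DO_{MapA_i})\cdot\tau_{I_{MapA}}(\partial_{\emptyset}(MapA_i))$, the analogous equation for each $RA_j$, and the long equation for $Mas$. Starting from $Mas\between MapA_1\between\cdots\between MapA_m\between RA_1\between\cdots\between RA_n$ I would repeatedly apply $P1$ to split each $\between$ into its parallel part $\parallel$ and communication part $\mid$, and then drive the expansion with $P2$--$P10$, $C11$--$C18$ and the conflict-elimination axioms. The key observation is that the only communications surviving are the matched send/receive pairs permitted by the causality constraints collected in $H$ (for instance $s_{MapA_i}(DI_{MapA_i})\leq r_{MapA_i}(DI_{MapA_i})$); every other product of a send with a receive lies outside the domain of $\gamma$ and, after $\partial_H$, is absorbed by $A6$ and $A7$ because one of its actions belongs to $H$. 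This yields a system of guarded linear recursive equations $E$ with $\partial_H(Mas\between\cdots\between RA_n)=\langle X_1|E\rangle$, in exact analogy with the specification displayed for the ABP.

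Next I would apply $\tau_I$ to $\langle X_1|E\rangle$. Since $I$ contains all the internal communication actions $c_\ast$ generated in the previous step together with the local-computation sets $I_{MapA_i}$, $I_{RA_j}$ and $I_{Mas}$, every transition of $E$ other than the user-facing $r_{Mas}(DI_{Mas})$ and $s_O(DO_{Mas})$ is renamed to $\tau$. Using $B1$--$B3$ of $APTC_\tau$ together with $CFAR$ to discard the finite $\tau$-clusters created by the master--worker handshakes, I expect the specification to collapse --- just as the ABP specification collapsed to a one-place buffer --- to $$\tau_I(\partial_H(Mas))=r_{Mas}(DI_{Mas})\cdot s_O(DO_{Mas})\cdot\tau_I(\partial_H(Mas)),$$ which is precisely the desired request--response external behaviour. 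Soundness of $CFAR$ (Theorem \ref{SCFAR}) and completeness of $APTC_\tau$ with guarded linear recursion and $CFAR$ (Theorem \ref{CCFAR}) then justify these manipulations.

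The hard part will be the sheer bookkeeping of the expansion rather than any conceptual novelty: unlike the ABP, where a single sender talks to a single receiver, here one $Mas$ coordinates $m$ map actors and $n$ reduce actors, so the intermediate terms carry many parallel summands and the specification $E$ is correspondingly large. The delicate point is to verify that the ordering constraints packaged into $H$ and $I$ genuinely enforce the intended handshake sequence --- each $s_{MapA_i}(DI_{MapA_i})$ before its matching receive, each response before the next phase, and likewise for the reduce channels --- so that no spurious $\delta$ survives encapsulation and no genuinely concurrent step is wrongly hidden. Once $E$ is organised cluster by cluster, one cluster per handshake phase, the $CFAR$-based collapse proceeds exactly as for the ABP and the theorem follows.
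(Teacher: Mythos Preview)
Your proposal follows essentially the same route as the paper's proof, which simply states the target equation and defers all details to the ABP derivation in section~\ref{app}. One correction is worth making, however: you describe the encapsulation step in terms of the \emph{synchronous} communication mechanism (send/receive pairs merged by $\gamma$ into communication actions $c_\ast$, which are then hidden by $\tau_I$), but the Map-Reduce system is set up using the \emph{asynchronous} communication discipline of section~\ref{ac}. There is no $\gamma$-merge here; instead the sets $H$ and $I$ are defined conditionally on whether the causality constraint $s\leq r$ fails or holds, so $\partial_H$ kills the mis-ordered pairs directly as $\delta$ while $\tau_I$ hides the correctly-ordered $s$ and $r$ actions themselves rather than any intermediate $c_\ast$ actions. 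This changes the bookkeeping slightly---in particular there are no ``internal communication actions $c_\ast$'' in $I$, only the raw sends and receives together with the local computation sets---but not the overall shape of the argument, and your final equation and the appeal to $CFAR$ are correct.
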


\begin{proof}
By use of the algebraic laws of APTC, we can prove the following equation:

$\tau_I(\partial_H(Mas))=\tau_I(\partial_H(Mas\between MapA_1\between\cdots\between MapA_m\between RA_1\between\cdots\between RA_n))\\
=r_{Mas}(DI_{Mas})\cdot s_{O}(DO_{Mas})\cdot \tau_I(\partial_H(Mas\between MapA_1\between\cdots\between MapA_m\between RA_1\between\cdots\between RA_n))\\
=r_{Mas}(DI_{Mas})\cdot s_{O}(DO_{Mas})\cdot \tau_I(\partial_H(Mas))$

For the details of the proof, we omit them, please refer to section \ref{app}.
\end{proof}

\newpage\section{Process Algebra Based Actor Model of Google File System}\label{amgfs}

In this chapter, we will use the process algebra based actor model to model and verify Google File System. In section \ref{rgfs}, we introduce the requirements of Google File System;
we model the Google File System by use of the new actor model in section \ref{mgfs}.

\subsection{Requirements of Google File System}\label{rgfs}

Google File System (GFS) is a distributed file system used to deal large scale data-density applications. GFS has some design goals same as the other traditional distributed file systems,
such as performance, scalability, reliability and usability. But, GFS has some other advantages, such as fault-tolerance, the huge size of files, appended writing of files, and also
the flexibility caused by the cooperative design of the APIs of GFS and the applications.

A GFS cluster includes a Master and some chunk server, and can be accessed by multiple clients, as Figure \ref{ArGFS} illustrates. A file is divided into the fix size of chunks with a
global unique identity allocated by the Master; each chunk is saved on the disk of a chunk server as a Linux file, and can be accessed by the identity and the byte boundary through the
chunk server. To improve the reliability, each chunk has three copies located on different chunk servers.

The Master manages the meta data of all file system, including name space, accessing information, mapping information from a file to chunks, and the locations of chunks.

A client implementing APIs of GFS, can interact with the Master to exchange the meta information of files and interact with the chunk servers to exchange the actual chunks.

\begin{figure}
  \centering
  \includegraphics{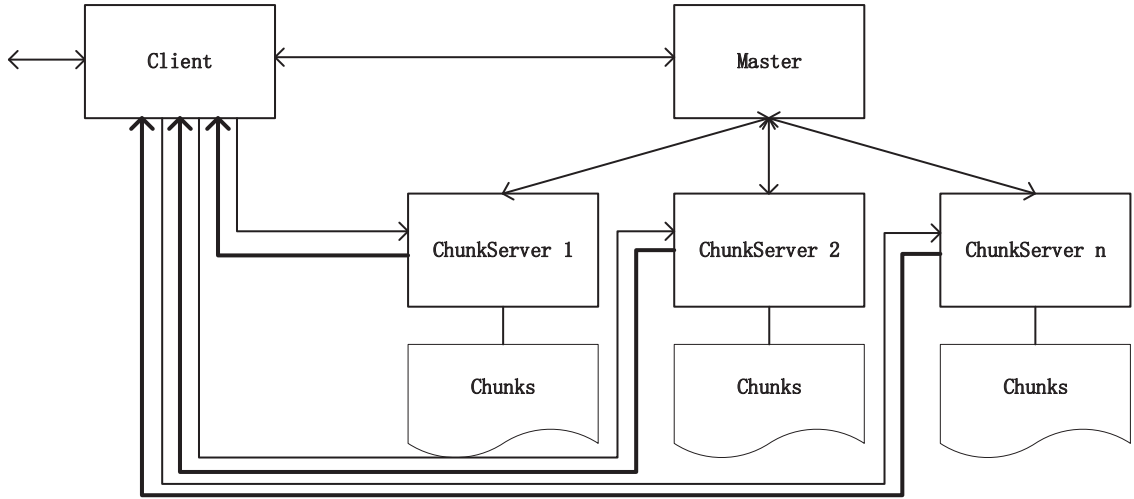}
  \caption{An architecture of Google File System}
  \label{ArGFS}
\end{figure} 
As shown in Figure \ref{ArGFS}, the execution process is as follows.

\begin{enumerate}
  \item The client receives the file accessing requests from the outside, including the meta information of the files. The client processes the requests, and generates the file information,
  and sends to the Master;
  \item The Master receives the file information requests, creates some chunk servers according to the meta information of the files and the locations of the chunks, generates the file 
  requests (including the address of the client) for each chunk server, and sends the requests to each chunk server respectively;
  \item The chunk server receives the requests, gets the related chunks, and sends them to the client.
\end{enumerate}

\subsection{The New Actor Model of Google File System}\label{mgfs}

According to the architecture of GFS, the whole actors system implemented by actors can be divided into three kinds of actors: the client actor (CA), the chunk server actors (CSAs),
and the Master actor (Mas).

\subsubsection{Client Actor, CA}

We use an actor called Client actor (CA) to model the client.

After the CA is created, the typical process is as follows.

\begin{enumerate}
  \item The CA receives the requests $DI_{CA}$ (including the meta information of the request files) from the outside through its mail box denoted by its name $CA$ (the corresponding
  reading action is denoted $r_{CA}(DI_{CA})$);
  \item Then it does some local computations mixed some atomic actions by computation logics, including $\cdot$, $+$, $\between$ and guards, the whole local
  computations are denoted and included into $I_{CA}$, which is the set of all local atomic actions;
  \item When the local computations are finished, the CA generates the output message $DI_{Mas}$ (containing
  the meta information of the request files and the address of the client), and sends to the Master's mail box denoted by the Master's name $Mas$ (the corresponding sending
  action is denoted $s_{Mas}(DI_{Mas})$);
  \item The CA receives the chunks from the $n$ chunk servers $CSA_i$ with $1\leq i\leq n$ through its mail box denoted by its name $CA$ (the corresponding reading actions are denoted
  $r_{CA}(DO_{CSA_1})\parallel\cdots\parallel r_{CA}(DO_{CSA_n})$);
  \item Then it does some local computations mixed some atomic combination actions to combine the chunks by computation logics, including $\cdot$, $+$, $\between$ and guards, the whole local
  computations are denoted and included into $I_{CA}$, which is the set of all local atomic actions;
  \item When the local computations are finished, the CA generates the output message $DO_{CA}$ (containing the files), and sends to the outside (the corresponding sending
  action is denoted $s_{O}(DO_{CA})$), and then processes the next message from the outside recursively.
\end{enumerate}

The above process is described as the following state transitions by APTC.

$CA=r_{CA}(DI_{CA})\cdot CA_1$

$CA_1=I_{CA}\cdot CA_2$

$CA_2=s_{Mas}(DI_{Mas})\cdot CA_3$

$CA_3=r_{CA}(DO_{CSA_1})\parallel\cdots\parallel r_{CA}(DO_{CSA_n})\cdot CA_4$

$CA_4=I_{CA}\cdot CA_5$

$CA_5=s_{O}(DO_{CA})\cdot CA$

By use of the algebraic laws of APTC, the CA may be proven exhibiting desired external behaviors. If it can exhibits desired external behaviors, the CA should have the following form:

$\tau_{I_{CA}}(\partial_{\emptyset}(CA))=r_{CA}(DI_{CA})\cdot s_{Mas}(DI_{Mas})\cdot (r_{CA}(DO_{CSA_1})\parallel\cdots\parallel r_{CA}(DO_{CSA_n}))\\
\cdot s_{O}(DO_{CA})\cdot \tau_{I_{CA}}(\partial_{\emptyset}(CA))$

\subsubsection{Chunk Server Actor, CSA}

A chunk server is an atomic function unit to access the chunks and managed by the Master. We use an actor called chunk server actor (CSA) to model a chunk server.

A CSA has a unique name, local information and variables to contain its states, and local computation procedures to manipulate the information and variables. A CSA is always managed by
the Master and it receives messages from the Master, sends messages to the Master and the client, and is created by the Master. Note that a CSA can not create new CSAs, it can only be created
by the Master. That is, a CSA is an actor with a constraint that is without \textbf{create} action.

After a CSA is created, the typical process is as follows.

\begin{enumerate}
  \item The CSA receives the chunks requests $DI_{CSA}$ (including the information of the chunks and the address of the client) from the Master through its mail box denoted by its name $CSA$ (the corresponding
  reading action is denoted $r_{CSA}(DI_{CSA})$);
  \item Then it does some local computations mixed some atomic actions by computation logics, including $\cdot$, $+$, $\between$ and guards, the whole local
  computations are denoted $I_{CSA}$, which is the set of all local atomic actions;
  \item When the local computations are finished, generates the output message $DO_{CSA}$ (containing
  the chunks and their meta information), and sends to the client's mail box denoted by the client's name $CA$ (the corresponding sending
  action is denoted $s_{CA}(DO_{CSA})$), and then processes the next message from the Master recursively.
\end{enumerate}

The above process is described as the following state transitions by APTC.

$CSA=r_{CSA}(DI_{CSA})\cdot CSA_1$

$CSA_1=I_{CSA}\cdot CSA_2$

$CSA_2=s_{CA}(DO_{CSA})\cdot CSA$

By use of the algebraic laws of APTC, the CSA may be proven exhibiting desired external behaviors. If it can exhibits desired external behaviors, the CSA should have the following form:

$$\tau_{I_{CSA}}(\partial_{\emptyset}(CSA))=r_{CSA}(DI_{CSA})\cdot s_{CA}(DO_{CSA})\cdot \tau_{I_{CSA}}(\partial_{\emptyset}(CSA))$$

\subsubsection{Master Actor, Mas}

The Master receives the requests from the client, and manages the chunk server actors. We use an actor called Master actor (Mas) to model
the Master.

After the Master actor is created, the typical process is as follows.

\begin{enumerate}
  \item The Mas receives the requests $DI_{Mas}$ from the client through its mail box denoted by its name $Mas$ (the corresponding
  reading action is denoted $r_{Mas}(DI_{Mas})$);
  \item Then it does some local computations mixed some atomic actions by computation logics, including $\cdot$, $+$, $\between$
  and guards, the whole local computations are denoted and included into $I_{Mas}$, which is the set of all local atomic actions;
  \item The Mas creates $n$ chunk server actors $CSA_i$ for $1\leq i\leq n$ in parallel through actions $\mathbf{new}(CSA_1)\parallel \cdots\parallel \mathbf{new}(CSA_n)$;
  \item When the local computations are finished, the Mas generates the request $DI_{CSA_i}$ containing the meta information of chunks and the address of the client for each $CSA_i$ with
  $1\leq i\leq n$, sends them to the CSAs' mail box denoted by the CSAs' name $CSA_i$ (the corresponding sending actions are denoted
  $s_{CSA_1}(DI_{CSA_1})\parallel\cdots\parallel s_{CSA_n}(DI_{CSA_n})$), and then processes the next message from the client recursively.
\end{enumerate}

The above process is described as the following state transitions by APTC.

$Mas=r_{Mas}(DI_{Mas})\cdot Mas_1$

$Mas_1=I_{Mas}\cdot Mas_2$

$Mas_2=\mathbf{new}(CSA_1)\parallel \cdots\parallel \mathbf{new}(CSA_n)\cdot Mas_3$

$Mas_3=s_{CSA_1}(DI_{CSA_1})\parallel\cdots\parallel s_{CSA_n}(DI_{CSA_n})\cdot Mas$

By use of the algebraic laws of APTC, the Mas may be proven exhibiting desired external behaviors. If it can exhibits desired external behaviors, the Mas should have the following form:

$\tau_{I_{Mas}}(\partial_{\emptyset}(Mas))=r_{Mas}(DI_{Mas})\cdot (s_{CSA_1}(DI_{CSA_1})\parallel\cdots\parallel s_{CSA_n}(DI_{CSA_n}))\cdot \tau_{I_{Mas}}(\partial_{\emptyset}(Mas))$

\subsubsection{Putting All Together into A Whole}

We put all actors together into a whole, including all CA, CSAs, and Mas, according to the architecture as illustrated in Figure \ref{ArGFS}. The whole actor system
$CA\quad Mas=CA\quad Mas\quad CSA_1\quad\cdots\quad CSA_n$
can be represented by the following process term of APTC.

$$\tau_I(\partial_H(CA\between Mas))=\tau_I(\partial_H(CA\between Mas\between CSA_1\between\cdots\between CSA_n))$$

Among all the actors, there are synchronous communications. The actor's reading and to the same actor's sending actions with the same type messages may cause communications. If to the actor's
sending action occurs before the the same actions reading action, an asynchronous communication will occur; otherwise, a deadlock $\delta$ will be caused.

There are three kinds of asynchronous communications as follows.

(1) The communications between a CSA and Mas with the following constraints.

$s_{CSA}(DI_{CSA})\leq r_{CSA}(DI_{CSA})$

(2) The communications between a CSA and CA with the following constraints.

$s_{CA}(DO_{CSA})\leq r_{CA}(DO_{CSA})$

(3) The communications between CA and Mas with the following constraints.

$s_{Mas}(DI_{Mas})\leq r_{Mas}(DI_{Mas})$

So, the set $H$ and $I$ can be defined as follows.

$H=\{s_{CSA_1}(DI_{CSA_1}), r_{CSA_1}(DI_{CSA_1}),\cdots,s_{CSA_n}(DI_{CSA_n}), r_{CSA_n}(DI_{CSA_n}),\\
s_{CA}(DO_{CSA_1}), r_{CA}(DO_{CSA_1}),\cdots,s_{CA}(DO_{CSA_n}), r_{CA}(DO_{CSA_n}),\\
s_{Mas}(DI_{Mas}), r_{Mas}(DI_{Mas})\\
|s_{CSA_1}(DI_{CSA_1})\nleq r_{CSA_1}(DI_{CSA_1}),\cdots,s_{CSA_n}(DI_{CSA_n})\nleq r_{CSA_n}(DI_{CSA_n}),\\
s_{CA}(DO_{CSA_1})\nleq r_{CA}(DO_{CSA_1}),\cdots,s_{CA}(DO_{CSA_n})\nleq r_{CA}(DO_{CSA_n}),\\
s_{Mas}(DI_{Mas})\nleq r_{Mas}(DI_{Mas})\}$

$I=\{s_{CSA_1}(DI_{CSA_1}), r_{CSA_1}(DI_{CSA_1}),\cdots,s_{CSA_n}(DI_{CSA_n}), r_{CSA_n}(DI_{CSA_n}),\\
s_{CA}(DO_{CSA_1}), r_{CA}(DO_{CSA_1}),\cdots,s_{CA}(DO_{CSA_n}), r_{CA}(DO_{CSA_n}),\\
s_{Mas}(DI_{Mas}), r_{Mas}(DI_{Mas})\\
|s_{CSA_1}(DI_{CSA_1})\leq r_{CSA_1}(DI_{CSA_1}),\cdots,s_{CSA_n}(DI_{CSA_n})\leq r_{CSA_n}(DI_{CSA_n}),\\
s_{CA}(DO_{CSA_1})\leq r_{CA}(DO_{CSA_1}),\cdots,s_{CA}(DO_{CSA_n})\leq r_{CA}(DO_{CSA_n}),\\
s_{Mas}(DI_{Mas})\leq r_{Mas}(DI_{Mas})\}\cup I_{CA}\cup I_{CSA_1}\cup\cdots\cup I_{CSA_n}\cup I_{Mas}$

Then, we can get the following conclusion.

\begin{theorem}
The whole actor system of GFS illustrated in Figure \ref{ArGFS} can exhibits desired external behaviors.
\end{theorem}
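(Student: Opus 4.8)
The plan is to mirror the algebraic derivation used for Map-Reduce in Section~\ref{ammr}, relying on the soundness of $APTC_\tau$ with guarded linear recursion together with $CFAR$ to discharge the abstraction. The target is the closed form
\[
\tau_I(\partial_H(CA\between Mas))=r_{CA}(DI_{CA})\cdot s_{O}(DO_{CA})\cdot \tau_I(\partial_H(CA\between Mas)),
\]
which exposes precisely the client's external interface---receiving an outside request and emitting the assembled file---while every piece of internal coordination with the Master and the chunk servers is hidden.

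First I would expand the full merge $CA\between Mas\between CSA_1\between\cdots\between CSA_n$ by repeated application of the expansion axiom $P1$ together with $RDP$, unfolding each actor's recursive definition one communication step at a time. At each unfolding the whole-merge $\between$ decomposes into left-parallel summands $\parallel$ and communication summands $\mid$; using $P6$ and $C14$ I would pair up the matching send/receive actions dictated by the channel communication functions (which give $c_{Mas}(DI_{Mas})$ for the $CA$--$Mas$ link, $c_{CSA_i}(DI_{CSA_i})$ for each $Mas$--$CSA_i$ link, and $c_{CA}(DO_{CSA_i})$ for each $CSA_i$--$CA$ link), while every non-communicating pairing yields $\gamma=\delta$ and is absorbed via $A7$ and $A6$.

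Next I would apply the encapsulation operator $\partial_H$ via $D1$--$D6$. Because $H$ contains every bare send and receive whose causality constraint $s\leq r$ is left unsatisfied, the surviving transitions are exactly the successful synchronizations together with the two boundary actions $r_{CA}(DI_{CA})$ and $s_{O}(DO_{CA})$. I would read the resulting control flow off as a guarded linear recursive specification $E$---one recursion variable per reachable global state of the composed system---and set $\partial_H(CA\between Mas)=\langle X_1|E\rangle$. Applying $\tau_I$ (laws $TI1$--$TI6$) then renames every internal communication and every local computation in $I_{CA}$, $I_{CSA_i}$, $I_{Mas}$ to $\tau$; repeated use of $B1$ to absorb silent steps, together with $RSP$ and $CFAR$ to collapse the now $\tau$-guarded recursion, yields the claimed form, whence the desired external behavior follows by the soundness of $APTC_\tau$ with guarded linear recursion.

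The main obstacle will be the bookkeeping in the first two steps: managing the combinatorial growth of the $n$ parallel chunk-server branches and confirming that the causality constraints built into $H$ and $I$ force a single consistent ordering of the synchronizations, so that the intermediate specification $E$ is genuinely a guarded linear recursive specification. Since the excerpt explicitly defers such routine manipulations to Section~\ref{app}, I would present the expansion schematically---treating the $n$ chunk servers by a uniform summand indexed over $i$---and lean on the already-established metatheory of $APTC_\tau$ rather than writing out every chunk-server case.
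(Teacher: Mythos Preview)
Your overall strategy---expand the merge via $P1$ and $RDP$, push $\partial_H$ through, name the result as a guarded linear recursive specification, then apply $\tau_I$ and collapse with $B1$, $RSP$, and $CFAR$---is exactly the route the paper takes; its proof is nothing more than the target equation
\[
\tau_I(\partial_H(CA\between Mas))=r_{CA}(DI_{CA})\cdot s_{O}(DO_{CA})\cdot \tau_I(\partial_H(CA\between Mas))
\]
plus a pointer to Section~\ref{app} for the mechanics.

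There is one technical slip in your description of the communication step. You invoke $C14$ and speak of channel communication functions yielding merged actions such as $c_{Mas}(DI_{Mas})$ and $c_{CSA_i}(DI_{CSA_i})$. But the GFS model is set up in the \emph{asynchronous} style of Section~\ref{ac}: no $\gamma$ is defined on these send/receive pairs, the communication merge $\mid$ plays no role, and the sets $H$ and $I$ are populated not with $c$-actions but with the raw $s_{\ldots}$ and $r_{\ldots}$ actions, conditioned on whether the causality constraint $s\leq r$ fails or holds. So the ``pairing'' you describe does not happen via $C14$ producing a single $c$-action; instead, the send and receive survive separately in the expansion, $\partial_H$ kills the orderings that violate $s\leq r$, and $\tau_I$ later hides both the $s$ and the $r$ individually. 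The rest of your plan is unaffected, but you should rewrite that paragraph to reflect the asynchronous mechanism rather than the synchronous $\gamma$-merging used in the ABP example.
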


\begin{proof}
By use of the algebraic laws of APTC, we can prove the following equation:

$\tau_I(\partial_H(CA\between Mas))=\tau_I(\partial_H(CA\between Mas\between CSA_1\between\cdots\between CSA_n))\\
=r_{CA}(DI_{CA})\cdot s_{O}(DO_{CA})\cdot \tau_I(\partial_H(CA\between Mas\between CSA_1\between\cdots\between CSA_n))\\
=r_{CA}(DI_{CA})\cdot s_{O}(DO_{CA})\cdot \tau_I(\partial_H(CA\between Mas))$

For the details of the proof, we omit them, please refer to section \ref{app}.
\end{proof}

\newpage\section{Process Algebra Based Actor Model of Cloud Resource Management}\label{amcrm}

In this chapter, we will use the process algebra based actor model to model and verify cloud resource management. In section \ref{rcrm}, we introduce the requirements of cloud resource management;
we model the cloud resource management by use of the new actor model in section \ref{mcrm}.

\subsection{Requirements of Cloud Resource Management}\label{rcrm}

There are various kinds of resources in cloud computing, such as computational ability, storage ability, operation system platform, middle-ware platform, development platform, and various
common and specific softwares. Such various kinds of resources should be managed uniformly, in the forms of uniform lifetime management, uniform execution and monitoring, and also
uniform utilization and accessing.

The way of uniform management of various resources is the adoption of virtualization. Each resource is encapsulated as a virtual resource, which provides accessing of the actual resource
downward, and uniform management and accessing interface upward. So, the core architecture of cloud resource management is illustrated in Figure \ref{ArCRM}. In this architecture, there 
are four main kinds of components:

\begin{enumerate}
  \item The Client: it receives the resource accessing requests, sends to the Resource manager, and gets the running states and execution results from the Resource Manager, and sends them
  out;
  \item The Resource Manager: it receives the requests from the Client, creates, accesses and manages the virtual resources;
  \item The State Collector: it collects the states of the involved running virtual resources;
  \item The Virtual Resources: they encapsulate various kinds of resources as uniform management interface.
\end{enumerate}

\begin{figure}
  \centering
  \includegraphics{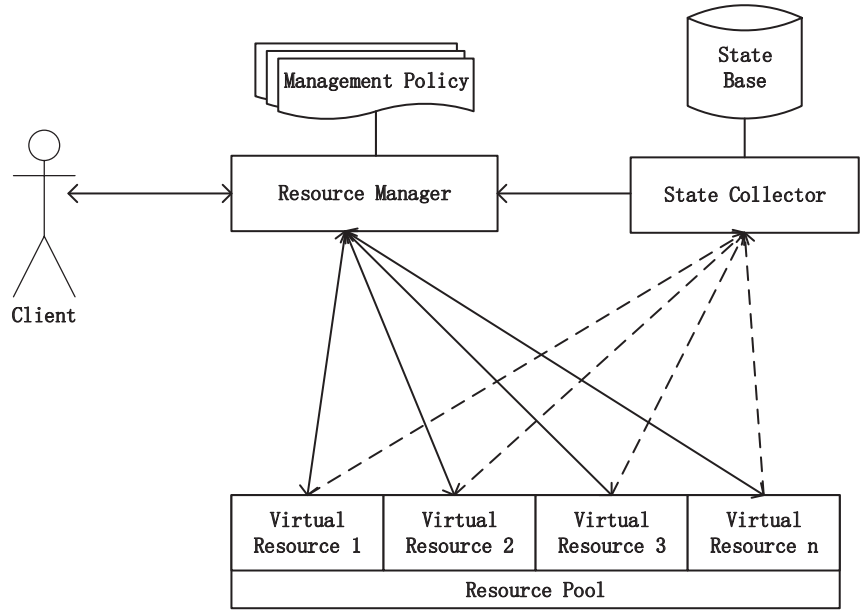}
  \caption{An architecture of cloud resource management}
  \label{ArCRM}
\end{figure}

As shown in Figure \ref{ArCRM}, the typical execution process of cloud resource management is as follows.

\begin{enumerate}
  \item The Client receives the resource accessing requests, and sends them to the Resource manager;
  \item The Resource Manager receives the requests from the Client, divides the computational tasks, creates the related virtual resources, and sends the divided tasks to the involved
  virtual resources;
  \item The created virtual resources receives their tasks from the Resource Manager, accesses the actual resources to run the computational tasks, during the running, they report their
  running states to State Collector; 
  \item The State Collector receives the running states from the virtual resources, store the states into the State Base, and sends the running states of the involved virtual resources
  to the Resource Manager;
  \item The Resource Manager receives the running states, after an inner processing, sends the states to the Client;
  \item The Client receives the states and sends them to the outside;
  \item When the running of virtual resources are finished, they sends the results to the Resource Manager;
  \item The Resource Manager receives the computational results, after an inner combination, sends the combined results to the Client;
  \item The Client receives the results and sends them to the outside.
\end{enumerate}

\subsection{The New Actor Model of Cloud Resource Management}\label{mcrm}

According to the architecture of cloud resource management, the whole actors system implemented by actors can be divided into four kinds of actors: the client actor (CA),
the Virtual Resource actors (VAs), the Resource Manager actor (RA) and the State Collector actor (SA).

\subsubsection{Client Actor, CA}

We use an actor called Client actor (CA) to model the Client.

After the CA is created, the typical process is as follows.

\begin{enumerate}
  \item The CA receives the requests $DI_{CA}$ from the outside through its mail box denoted by its name $CA$ (the corresponding
  reading action is denoted $r_{CA}(DI_{CA})$);
  \item Then it does some local computations mixed some atomic actions by computation logics, including $\cdot$, $+$, $\between$ and guards, the whole local
  computations are denoted and included into $I_{CA}$, which is the set of all local atomic actions;
  \item When the local computations are finished, the CA generates the output requests $DI_{RA}$, and sends to the RA's mail box denoted by the RA's name $RA$ (the corresponding sending
  action is denoted $s_{RA}(DI_{RA})$);
  \item The CA receives the running states (we assume just one time) from RA through its mail box denoted by its name $CA$ (the corresponding reading actions are denoted
  $r_{CA}(RS_{RA})$);
  \item Then it does some local computations mixed some atomic actions by computation logics, including $\cdot$, $+$, $\between$ and guards, the whole local
  computations are denoted and included into $I_{CA}$, which is the set of all local atomic actions;
  \item When the local computations are finished, the CA generates the output states $RS_{CA}$ (containing the files), and sends to the outside (the corresponding sending
  action is denoted $s_{O}(RS_{CA})$);
  \item The CA receives the computational results from RA through its mail box denoted by its name $CA$ (the corresponding reading actions are denoted
  $r_{CA}(CR_{RA})$);
  \item Then it does some local computations mixed some atomic actions by computation logics, including $\cdot$, $+$, $\between$ and guards, the whole local
  computations are denoted and included into $I_{CA}$, which is the set of all local atomic actions;
  \item When the local computations are finished, the CA generates the output message $DO_{CA}$, and sends to the outside (the corresponding sending
  action is denoted $s_{O}(DO_{CA})$), and then processes the next message from the outside recursively.
\end{enumerate}

The above process is described as the following state transitions by APTC.

$CA=r_{CA}(DI_{CA})\cdot CA_1$

$CA_1=I_{CA}\cdot CA_2$

$CA_2=s_{RA}(DI_{RA})\cdot CA_3$

$CA_3=r_{CA}(RS_{RA})\cdot CA_4$

$CA_4=I_{CA}\cdot CA_5$

$CA_5=s_{O}(RS_{CA})\cdot CA_6$

$CA_6=r_{CA}(CR_{RA})\cdot CA_7$

$CA_7=I_{CA}\cdot CA_8$

$CA_8=s_{O}(DO_{CA})\cdot CA$

By use of the algebraic laws of APTC, the CA may be proven exhibiting desired external behaviors. If it can exhibits desired external behaviors, the CA should have the following form:

$\tau_{I_{CA}}(\partial_{\emptyset}(CA))=r_{CA}(DI_{CA})\cdot s_{RA}(DI_{RA})\cdot r_{CA}(RS_{RA})\cdot s_{O}(RS_{CA})\cdot r_{CA}(CR_{RA})\\
\cdot s_{O}(DO_{CA})\cdot \tau_{I_{CA}}(\partial_{\emptyset}(CA))$

\subsubsection{Virtual Resource Actor, VA}

A Virtual Resource is an atomic function unit to access actual resource and managed by the RA. We use an actor called Virtual Resource actor (VA) to model a Virtual Resource.

A VA has a unique name, local information and variables to contain its states, and local computation procedures to manipulate the information and variables. A VA is always managed by
the Master and it receives messages from the Master, sends messages to the Master and the client, and is created by the Master. Note that a VA can not create new VAs, it can only be created
by the Master. That is, a VA is an actor with a constraint that is without \textbf{create} action.

After a VA is created, the typical process is as follows.

\begin{enumerate}
  \item The VA receives the computational tasks $DI_{VA}$ from RA through its mail box denoted by its name $VA$ (the corresponding
  reading action is denoted $r_{VA}(DI_{VA})$);
  \item Then it does some local computations mixed some atomic actions by computation logics, including $\cdot$, $+$, $\between$ and guards, the whole local
  computations are denoted and included into $I_{VA}$, which is the set of all local atomic actions;
  \item During the local computations, generates the running states $RS_{VA}$, and sends them (we assume just one time) to the SA's mail box denoted by the SA's name $SA$ (the corresponding sending
  action is denoted $s_{SA}(RS_{VA})$);
  \item Then it does some local computations mixed some atomic actions by computation logics, including $\cdot$, $+$, $\between$ and guards, the whole local
  computations are denoted and included into $I_{VA}$, which is the set of all local atomic actions;
  \item If the local computations are finished, VA generates the computational results $CR_{VA}$, and sends them to the RA's mail box denoted by the RA's name $RA$ (the corresponding
  sending action is denoted $s_{RA}(CR_{VA})$), and then processes the next task from RA recursively.
\end{enumerate}

The above process is described as the following state transitions by APTC.

$VA=r_{VA}(DI_{VA})\cdot VA_1$

$VA_1=I_{VA}\cdot VA_2$

$VA_2=s_{SA}(RS_{VA})\cdot VA_3$

$VA_3=I_{VA}\cdot VA_4$

$VA_4=s_{RA}(CR_{VA})\cdot VA$

By use of the algebraic laws of APTC, the VA may be proven exhibiting desired external behaviors. If it can exhibits desired external behaviors, the VA should have the following form:

$$\tau_{I_{VA}}(\partial_{\emptyset}(VA))=r_{VA}(DI_{VA})\cdot s_{SA}(RS_{VA})\cdot s_{RA}(CR_{VA})\cdot \tau_{I_{VA}}(\partial_{\emptyset}(VA))$$

\subsubsection{Resource Manager Actor, RA}

RA receives the requests from the client, and manages the VAs. We use an actor called Resource Manager actor (RA) to model
the Resource Manager.

After RA is created, the typical process is as follows.

\begin{enumerate}
  \item The RA receives the requests $DI_{RA}$ from the Client through its mail box denoted by its name $RA$ (the corresponding
  reading action is denoted $r_{RA}(DI_{RA})$);
  \item Then it does some local computations mixed some atomic actions by computation logics, including $\cdot$, $+$, $\between$
  and guards, the whole local computations are denoted and included into $I_{RA}$, which is the set of all local atomic actions;
  \item The RA creates $n$ VAs $VA_i$ for $1\leq i\leq n$ in parallel through actions $\mathbf{new}(VA_1)\parallel \cdots\parallel \mathbf{new}(VA_n)$;
  \item When the local computations are finished, the RA generates the computational tasks $DI_{VA_i}$ for each $VA_i$ with
  $1\leq i\leq n$, sends them to the VAs' mail box denoted by the VAs' name $VA_i$ (the corresponding sending actions are denoted
  $s_{VA_1}(DI_{VA_1})\parallel\cdots\parallel s_{VA_n}(DI_{VA_n})$);
  \item The RA receives the running states $RS_{SA}$ (we assume just one time) from the SA through its mail box denoted by its name $RA$ (the corresponding
  reading action is denoted $r_{RA}(RS_{SA})$);
  \item Then it does some local computations mixed some atomic actions by computation logics, including $\cdot$, $+$, $\between$
  and guards, the whole local computations are denoted and included into $I_{RA}$, which is the set of all local atomic actions;
  \item When the local computations are finished, the RA generates running states $RS_{RA}$, sends them to the CAs' mail box denoted by the CAs' name $CA$ (the corresponding sending
  actions are denoted $s_{CA}(RS_{RA})$);
  \item The RA receives the computational results $CR_{VA_i}$ from the $VA_i$ for $1\leq i\leq n$ through its mail box denoted by its name $RA$ (the corresponding
  reading action is denoted $r_{RA}(CR_{VA_1})\parallel\cdots\parallel r_{RA}(CR_{VA_n})$);
  \item Then it does some local computations mixed some atomic actions by computation logics, including $\cdot$, $+$, $\between$
  and guards, the whole local computations are denoted and included into $I_{RA}$, which is the set of all local atomic actions;
  \item When the local computations are finished, the RA generates results $CR_{RA}$, sends them to the CAs' mail box denoted by the CAs' name $CA$ (the corresponding sending
  actions are denoted $s_{CA}(CR_{RA})$), and then processes the next message from the client recursively.
\end{enumerate}

The above process is described as the following state transitions by APTC.

$RA=r_{RA}(DI_{RA})\cdot RA_1$

$RA_1=I_{RA}\cdot RA_2$

$RA_2=\mathbf{new}(VA_1)\parallel \cdots\parallel \mathbf{new}(VA_n)\cdot RA_3$

$RA_3=s_{VA_1}(DI_{VA_1})\parallel\cdots\parallel s_{VA_n}(DI_{VA_n})\cdot RA_4$

$RA_4=r_{RA}(RS_{SA})\cdot RA_5$

$RA_5=I_{RA}\cdot RA_6$

$RA_6=s_{CA}(RS_{RA})\cdot RA_7$

$RA_7=r_{RA}(CR_{VA_1})\parallel\cdots\parallel r_{RA}(CR_{VA_n})\cdot RA_8$

$RA_8=I_{RA}\cdot RA_9$

$RA_9=s_{CA}(CR_{RA})\cdot RA$

By use of the algebraic laws of APTC, the RA may be proven exhibiting desired external behaviors. If it can exhibits desired external behaviors, the RA should have the following form:

$\tau_{I_{RA}}(\partial_{\emptyset}(RA))=r_{RA}(DI_{RA})\cdot (s_{VA_1}(DI_{VA_1})\parallel\cdots\parallel s_{VA_n}(DI_{VA_n}))\cdot\\
r_{RA}(RS_{SA})\cdot s_{CA}(RS_{RA})\cdot (r_{RA}(CR_{VA_1})\parallel\cdots\parallel r_{RA}(CR_{VA_n}))\cdot s_{CA}(CR_{RA}) \cdot \tau_{I_{RA}}(\partial_{\emptyset}(RA))$

\subsubsection{State Collector Actor, SA}

We use an actor called State Collector actor (SA) to model the State Collector.

After the SA is created, the typical process is as follows.

\begin{enumerate}
  \item The SA receives the running states $RS_{VA_i}$ from $VA_i$ (we assume just one time) for $1\leq i\leq n$ through its mail box denoted by its name $SA$ (the corresponding
  reading action is denoted $r_{SA}(RS_{VA_1})\parallel\cdots\parallel r_{SA}(RS_{VA_n})$);
  \item Then it does some local computations mixed some atomic actions by computation logics, including $\cdot$, $+$, $\between$ and guards, the whole local
  computations are denoted and included into $I_{SA}$, which is the set of all local atomic actions;
  \item When the local computations are finished, SA generates the running states $RS_{SA}$, and sends them to the RA's mail box denoted by the RA's name $RA$ (the corresponding
  sending action is denoted $s_{RA}(RS_{SA})$), and then processes the next task from RA recursively.
\end{enumerate}

The above process is described as the following state transitions by APTC.

$SA=r_{SA}(RS_{VA_1})\parallel\cdots\parallel r_{SA}(RS_{VA_n})\cdot SA_1$

$SA_1=I_{SA}\cdot SA_2$

$SA_2=s_{RA}(RS_{SA})\cdot SA$

By use of the algebraic laws of APTC, the SA may be proven exhibiting desired external behaviors. If it can exhibits desired external behaviors, the SA should have the following form:

$\tau_{I_{SA}}(\partial_{\emptyset}(SA))=(r_{SA}(RS_{VA_1})\parallel\cdots\parallel r_{SA}(RS_{VA_n}))\cdot s_{RA}(RS_{SA}))\cdot \tau_{I_{SA}}(\partial_{\emptyset}(SA))$

\subsubsection{Putting All Together into A Whole}

We put all actors together into a whole, including all CA, VAs, RA and SA, according to the architecture as illustrated in Figure \ref{ArCRM}. The whole actor system
$CA\quad RA\quad SA=CA\quad RA\quad SA\quad \\VA_1\quad\cdots\quad VA_n$
can be represented by the following process term of APTC.

$$\tau_I(\partial_H(CA\between RA\between SA))=\tau_I(\partial_H(CA\between RA\between SA\between VA_1\between\cdots\between VA_n))$$

Among all the actors, there are synchronous communications. The actor's reading and to the same actor's sending actions with the same type messages may cause communications. If to the actor's
sending action occurs before the the same actions reading action, an asynchronous communication will occur; otherwise, a deadlock $\delta$ will be caused.

There are four kinds of asynchronous communications as follows.

(1) The communications between a VA and RA with the following constraints.

$s_{VA}(DI_{VA})\leq r_{VA}(DI_{VA})$

$s_{RA}(CR_{VA})\leq r_{RA}(CR_{VA})$

(2) The communications between a VA and SA with the following constraints.

$s_{SA}(RS_{VA})\leq r_{SA}(RS_{VA})$

(3) The communications between CA and RA with the following constraints.

$s_{RA}(DI_{RA})\leq r_{RA}(DI_{RA})$

$s_{CA}(RS_{RA})\leq r_{CA}(RS_{RA})$

$s_{CA}(CR_{RA})\leq r_{CA}(CR_{RA})$

(4) The communications between RA and SA with the following constraints.

$s_{RA}(RS_{SA})\leq r_{RA}(RS_{SA})$

So, the set $H$ and $I$ can be defined as follows.

$H=\{s_{VA_1}(DI_{VA_1}), r_{VA_1}(DI_{VA_1}),\cdots,s_{VA_n}(DI_{VA_n}), r_{VA_n}(DI_{VA_n}),\\
s_{RA}(CR_{VA_1}), r_{RA}(CR_{VA_1}),\cdots,s_{RA}(CR_{VA_n}), r_{RA}(CR_{VA_n}),\\
s_{SA}(RS_{VA_1}), r_{SA}(RS_{VA_1}),\cdots,s_{SA}(RS_{VA_n}), r_{SA}(RS_{VA_n}),\\
s_{RA}(DI_{RA}), r_{RA}(DI_{RA}),s_{CA}(RS_{RA}), r_{CA}(RS_{RA}),\\
s_{CA}(CR_{RA}), r_{CA}(CR_{RA}),s_{RA}(RS_{SA}), r_{RA}(RS_{SA})\\
|s_{VA_1}(DI_{VA_1})\nleq r_{VA_1}(DI_{VA_1}),\cdots,s_{VA_n}(DI_{VA_n})\nleq r_{VA_n}(DI_{VA_n}),\\
s_{RA}(CR_{VA_1})\nleq r_{RA}(CR_{VA_1}),\cdots,s_{RA}(CR_{VA_n})\nleq r_{RA}(CR_{VA_n}),\\
s_{SA}(RS_{VA_1})\nleq r_{SA}(RS_{VA_1}),\cdots,s_{SA}(RS_{VA_n})\nleq r_{SA}(RS_{VA_n}),\\
s_{RA}(DI_{RA})\nleq r_{RA}(DI_{RA}),s_{CA}(RS_{RA})\nleq r_{CA}(RS_{RA}),\\
s_{CA}(CR_{RA})\nleq r_{CA}(CR_{RA}),s_{RA}(RS_{SA})\nleq r_{RA}(RS_{SA})\}$

$I=\{s_{VA_1}(DI_{VA_1}), r_{VA_1}(DI_{VA_1}),\cdots,s_{VA_n}(DI_{VA_n}), r_{VA_n}(DI_{VA_n}),\\
s_{RA}(CR_{VA_1}), r_{RA}(CR_{VA_1}),\cdots,s_{RA}(CR_{VA_n}), r_{RA}(CR_{VA_n}),\\
s_{SA}(RS_{VA_1}), r_{SA}(RS_{VA_1}),\cdots,s_{SA}(RS_{VA_n}), r_{SA}(RS_{VA_n}),\\
s_{RA}(DI_{RA}), r_{RA}(DI_{RA}),s_{CA}(RS_{RA}), r_{CA}(RS_{RA}),\\
s_{CA}(CR_{RA}), r_{CA}(CR_{RA}),s_{RA}(RS_{SA}), r_{RA}(RS_{SA})\\
|s_{VA_1}(DI_{VA_1})\leq r_{VA_1}(DI_{VA_1}),\cdots,s_{VA_n}(DI_{VA_n})\leq r_{VA_n}(DI_{VA_n}),\\
s_{RA}(CR_{VA_1})\leq r_{RA}(CR_{VA_1}),\cdots,s_{RA}(CR_{VA_n})\leq r_{RA}(CR_{VA_n}),\\
s_{SA}(RS_{VA_1})\leq r_{SA}(RS_{VA_1}),\cdots,s_{SA}(RS_{VA_n})\leq r_{SA}(RS_{VA_n}),\\
s_{RA}(DI_{RA})\leq r_{RA}(DI_{RA}),s_{CA}(RS_{RA})\leq r_{CA}(RS_{RA}),\\
s_{CA}(CR_{RA})\leq r_{CA}(CR_{RA}),s_{RA}(RS_{SA})\leq r_{RA}(RS_{SA})\}\\
\cup I_{CA}\cup I_{VA_1}\cup\cdots\cup I_{VA_n}\cup I_{RA}\cup I_{SA}$

Then, we can get the following conclusion.

\begin{theorem}
The whole actor system of cloud resource management illustrated in Figure \ref{ArCRM} can exhibits desired external behaviors.
\end{theorem}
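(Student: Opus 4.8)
The plan is to follow the same strategy already used for the Map-Reduce and Google File System examples in the preceding chapters: expand the parallel composition of all actors by the algebraic laws of $APTC$, capture the resulting behaviour by a guarded linear recursive specification, and then abstract away the internal interactions with $\tau_I$.

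First I would apply $P1$ together with the expansion laws for $\parallel$ and $\mid$ (and $C14$, $A6$, $A7$) to rewrite $CA\between RA\between SA\between VA_1\between\cdots\between VA_n$ as a sum over its possible first steps. Placing this under the encapsulation operator $\partial_H$, every send action listed in $H$ is forced to synchronize with its matching receive: by the causality constraints $s\leq r$ imposed in the definition of $H$ and $I$, the mismatched interleavings are blocked to $\delta$ (via $D1$--$D6$ and $A7$), while each intended send/receive pair merges into a communication action such as $c_{VA_i}(DI_{VA_i})$, $c_{RA}(CR_{VA_i})$, $c_{SA}(RS_{VA_i})$, $c_{RA}(DI_{RA})$, $c_{CA}(RS_{RA})$, $c_{CA}(CR_{RA})$ and $c_{RA}(RS_{SA})$. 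The causal ordering—$CA$ feeds $RA$, $RA$ spawns and feeds the $VA_i$, the $VA_i$ report to $SA$, $SA$ reports back to $RA$, and $RA$ returns states and then results to $CA$—then pins down a single linear chain of global states.

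Next I would introduce recursion variables $X_1,X_2,\dots$ for each reachable global state and, using the state transitions of $CA$, $VA$, $RA$ and $SA$ together with $RDP$, derive a guarded linear recursive specification $E$ with $\partial_H(CA\between RA\between SA)=\langle X_1|E\rangle$. This specification is guarded because every summand is prefixed by a concrete communication action or by an atomic local action. Applying $\tau_I$ (with $TI1$--$TI6$, $G28$, $G29$) renames all the communication actions $c_{\bullet}$ and all the local computations in $I_{CA},I_{VA_i},I_{RA},I_{SA}$ into $\tau$; by $B1$--$B3$, the soundness of $APTC_{\tau}$ with guarded linear recursion, and $CFAR$ to collapse any $\tau$-cluster so produced, the internal steps are absorbed and we are left with
$$\tau_I(\partial_H(CA\between RA\between SA))=r_{CA}(DI_{CA})\cdot s_{O}(RS_{CA})\cdot s_{O}(DO_{CA})\cdot\tau_I(\partial_H(CA\between RA\between SA)),$$
which is exactly the desired external behaviour (input request, then output states, then output result, recursively).

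The main obstacle I anticipate is not conceptual but combinatorial: verifying that under $\partial_H$ precisely the intended communications survive while every other interleaving collapses to $\delta$, and confirming that the resulting specification $E$ is genuinely guarded and linear so that $RSP$ and $CFAR$ legitimately apply. As in the earlier theorems of this book, I would suppress this routine bookkeeping and refer the reader to the ABP derivation in section \ref{app}, whose expansion serves as the template for the present case.
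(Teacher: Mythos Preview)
Your approach is essentially the paper's own: expand the composition with the $APTC$ laws, encapsulate with $\partial_H$, abstract with $\tau_I$, and defer the routine bookkeeping to the ABP derivation in section~\ref{app}. The paper's proof is no more detailed than your final paragraph---it simply states the target equation and invokes the algebraic laws.

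There is, however, one conceptual slip in your exposition. You describe each matched send/receive pair as ``merging into a communication action such as $c_{VA_i}(DI_{VA_i})$'', but this chapter works in the \emph{asynchronous} communication setting of section~\ref{ac}, where there is no communication merge $\gamma$ and hence no $c_\bullet$ actions are ever produced. Instead, the send and receive actions remain separate: the causality constraint $s\leq r$ places the correctly ordered pair into $I$ (so both $s$ and $r$ are individually renamed to $\tau$ by $\tau_I$), while any violation $s\nleq r$ places them into $H$ (so $\partial_H$ turns them into $\delta$). The net effect on the external behaviour is the same as you describe, but the mechanism is different from the synchronous ABP template, and axioms $C11$--$C14$ for the communication merge play no role here. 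Your references to $G28$, $G29$ are likewise superfluous, since no guards appear in this example.

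Finally, your displayed target equation has $s_O(DO_{CA})$; the paper's proof writes $s_O(CR_{CA})$, but the $CA$ state transitions earlier in the chapter agree with your version, so this is a typo in the paper rather than an error on your part.
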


\begin{proof}
By use of the algebraic laws of APTC, we can prove the following equation:

$\tau_I(\partial_H(CA\between RA\between SA))=\tau_I(\partial_H(CA\between RA\between SA\between VA_1\between\cdots\between VA_n))\\
=r_{CA}(DI_{CA})\cdot s_{O}(RS_{CA})\cdot s_{O}(CR_{CA})\cdot \tau_I(\partial_H(CA\between RA\between SA\between VA_1\between\cdots\between VA_n))\\
=r_{CA}(DI_{CA})\cdot s_{O}(RS_{CA})\cdot s_{O}(CR_{CA})\cdot\tau_I(\partial_H(CA\between RA\between SA))$

For the details of the proof, we omit them, please refer to section \ref{app}.
\end{proof}

\newpage\section{Process Algebra Based Actor Model of Web Service Composition}\label{amwsc}

In this chapter, we will use the process algebra based actor model to model and verify Web Service composition based on the previous work \cite{WSC}. In section \ref{rwsc}, we introduce the requirements of Web Service
composition runtime system; we model the Web Service composition runtime by use of the new actor model in section \ref{mwsc}; finally, we take an example to show the usage of the
model in section \ref{ewsc}.

\subsection{Requirements of Web Service Composition}\label{rwsc}

Web Service (WS) is a distributed software component which emerged about ten years ago to utilize the most widely-used Internet application protocol--HTTP as its base transport protocol.
As a component, a WS has the similar ingredients as other ones, such as DCOM, EJB, CORBA, and so on. That is, a WS uses HTTP-based SOAP as its transport protocol, WSDL as its interface
description language and UDDI as its name and directory service.

WS Composition creates new composite WSs using different composition patterns from the collection of existing WSs. Because of advantages of WS to solve cross-organizational application
integrations, two composition patterns are dominant. One is called Web Service Orchestration (WSO) \cite{WS-BPEL}, which uses a workflow-like composition pattern to orchestrate business activities
(implemented as WS Operations) and models a cross-organizational business processes or other kind of processes. The other is called Web Service Choreography (WSC) \cite{WS-CDL} which has an
aggregate composition pattern to capture the external interaction behaviors of WSs and acts as a contract or a protocol among WSs.

We now take a simple example of buying books from a book store to illustrate some concepts of WS composition. Though this example is quite simple and only includes the sequence control
flow (that is, each business activity in a business process is executed in sequence), it is enough to explain the concepts and ideas of this paper and avoids unnecessary complexity
without loss of generality. We use this example throughout this paper. The requirements of this example are as Figure \ref{REWSC} shows.

\begin{figure}
  \centering
  \includegraphics{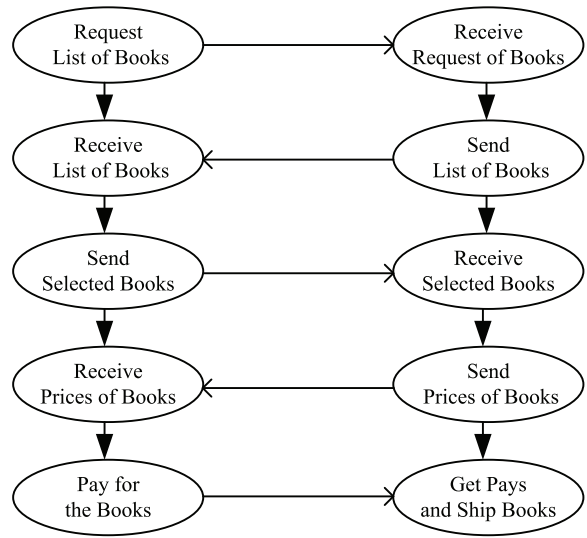}
  \caption{Requirements of an example}
  \label{REWSC}
\end{figure}

A customer buys books from a book store through a user agent. In this example, we ignore interactions between the customer and the user agent, and focus on those between the user agent
and the book store. Either user agent or book store has business processes to interact with each other.

We give the process of user agent as follows. The process of book store can be gotten from that of user agent as contrasts.

\begin{enumerate}
  \item The user agent requests a list of all books to the book store;
  \item It gets the book list from the book store;
  \item It selects the books by the customer and sends the list of selected books to the book store;
  \item It receives the prices of selected books from the book store;
  \item It accepts the prices and pays for the selected book to the book store. Then the process terminates.
\end{enumerate}

Since the business activities, such as the book store accepting request for a list of books from the user agent, are implemented as WSs (exactly WS operations), such buyer agent and
book store business processes are called WSOs. These WSOs are published as WSs called their interface WSs for interacting among each other. The interaction behaviors among WSs described
by some contracts or protocols are called WSCs.

There are many efforts for WS Composition, including its specifications, design methods and verifications, simulations, and runtime supports. Different methods and tools are used in WS
Composition research, such as XML-based WSO description specifications and WSC description specifications, formal verification techniques based on Process Algebra and Petri-Net, and
runtime implementations using programming languages. Some of these works mainly focus on WSO, others mainly on WSC, and also a few works attempt to establish a relationship between
WSO and WSC.

Can a WS interact with another one? And also, can a WSO interact with another one via their interfaces? Is the definition of a WSC compatible with its partner WSs or partner WSOs?
To solve these problems, a correct relationship between WSO and WSC must be established. A WS Composition system combining WSO and WSC, with a natural relationship between the two
ones, is an attractive direction. In a systematic viewpoint, WS, WSO and WSC are organized with a natural relationship under the whole environment of cross-organizational business
integration. More importantly, such a system should have firmly theoretic foundation.

In this chapter, we try to make such a system to base on the new actor model.

\subsubsection{WSO and WSC}\label{wsoc}

A WS is a distributed software component with transport protocol--SOAP, interface description by WSDL, and can be registered into UDDI to be searched and discovered by its customers.

A WSO orchestrates WSs existing on the Web into a process through the so-called control flow constructs. That is, within a WSO, there are a collection of atomic function units called
activities with control flows to manipulate them. So, the main ingredients of a WSO are following.

\begin{itemize}
  \item Inputs and Outputs: At the start time of a WSO, it accepts some inputs. And it sends out outcomes at the end of its execution;
  \item Information and Variable Definitions: A WSO has local states which maybe transfer among activities. Finally, the local states are sent to WSs outside by activities in the form
  of messages. In turn, activities receiving message outside can alter the local states;
  \item Activity Definitions: An activity is an atomic unit with several pre-defined function kinds, such as invoking a WS outside, invoking an application inside, receiving a request
  from a customer inside/outside, local variable assignments, etc;
  \item Control Flow Definitions: Control flow definitions give activities an execution order. In terms of structural model based control flow definitions, control flows are the
  so-called structural activities which can be sequence activity, choice activity, loop activity, parallel activity and their variants;
  \item Binding WS Information: Added values of WS Composition are the so called recursive composition, that is, a WSO orchestrating existing WSs is published as a new WS itself too.
  A WSO interacts with other WSs outside through this new WS.
\end{itemize}

In Figure \ref{REWSC}, the user agent business process is modeled as UserAgent WSO described by WS-BPEL, which is described in Appendix \ref{xml}.

The interface WS for UserAgent WSO is called UserAgent WS described by WSDL, which also can be found in Appendix \ref{xml}.

A WSC defines the external interaction behaviors and serves as a contract or a protocol among WSs. The main ingredients of a WSC are as following.

\begin{itemize}
  \item Parter Definitions: They defines the partners within a WSC including the role of a partner acting as and relationships among partners;
  \item Information and Variable Definitions: A WSC may also have local states exchanged among the interacting WSs;
  \item Interactions among Partners: Interaction points and interaction behaviors are defined as the core contents in a WSC.
\end{itemize}

In the buying books example, the WSC between user agent and bookstore (exactly UserAgentWS and BookStoreWS) called BuyingBookWSC is described by WS-CDL, which can be found in
Appendix \ref{xml}.

The WSO and the WSC define two different aspects of WS Composition. Their relationships as Figure \ref{ReWSOC} illustrates. Note that a WSO may require at least a WSC, but a WSC
does not need to depend on a WSO.

\begin{figure}
  \centering
  \includegraphics{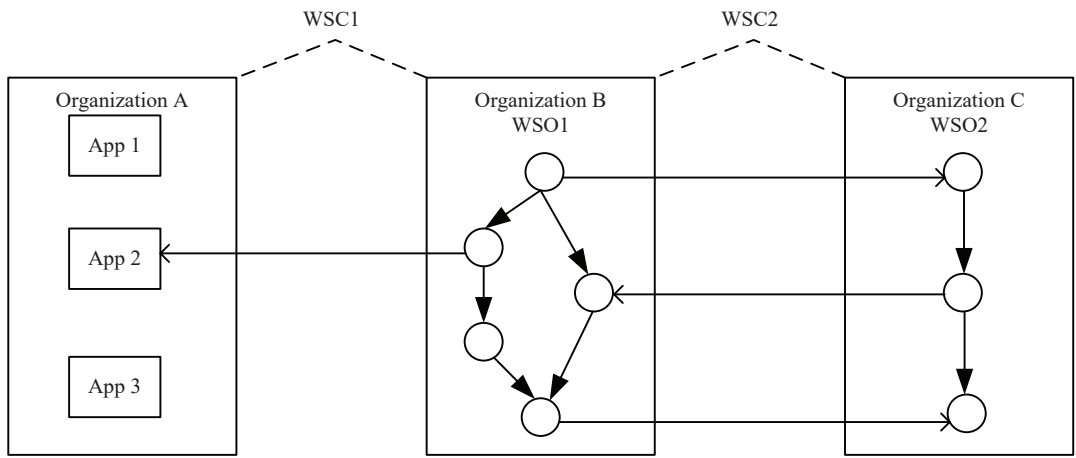}
  \caption{Relationship between WSO and WSC}
  \label{ReWSOC}
\end{figure}

\subsubsection{Design Decisions on Web Service Composition Runtime}

(1) Stateless WS or Stateful WS

In the viewpoint of W3C, a WS itself is an interface or a wrapper of an application inside the boundary of an organization that has a willing to interact with applications outside.
That is, a W3C WS has no an independent programming model like other component models and has no needs of containing local states for local computations. Indeed, there are different
sounds of developing WS to be a full sense component, Such as OGSI. Incompatibility between W3C WS and OGSI-like WS leads to WSRF as a compromised solution which
reserves the W3C WS and develops a notion of WS Resource to model states.

We adopt the ideas of WSRF. That is, let WS be an interface or a wrapper of WSO and let WSO be a special kind WS Resource which has local states and local computations. The interface
WS of a WSO reserves ID of the WSO to deliver an incoming message to the WSO and send an outgoing message with the ID attached in order for delivering a call-back message. Further more,
a WSO and its WS are one-one binding. When a new incoming message arrives without a WSO ID attached, the WS creates a new WSO and attaches its ID as a parameter.

(2) Incoming Messages and Outgoing Messages

Just as the name implies, a WS serves as a server to process an incoming message within a C/S framework. But an interaction between a component WS or a WSO requires incoming message
and outgoing message pairs. When an interaction occurred, one serves as a client and the other serves as a server. But in the next interaction, the one served as client before may
serve as a server and the server becomes a client.

The problem is that, when a WSO (or other kind WS Resource) inside interacts with WSs outside, who is willing to act as the bridge between the WSO inside and WSs outside? When an
incoming message arrives, it is easily to be understood that the incoming message is delivered to the WSO by the interface WS. However, how is an outgoing message from a WSO inside
to a component WS outside delivered?

In fact, there are two ways to solve the outgoing message. One is the way of WS-BPEL \cite{WS-BPEL}, and the other is that of an early version of WSDL. The former uses a so-called
\emph{invoke} atomic activity defined in a WSO to send an outgoing message directly without the assistant of its interface WS. In contrast, the latter specifies that every thing
 exchanged between resources inside and functions outside must go via the interface WS of the resource inside. Furthermore, in an early edition of WSDL, there are four kind of WS
 operations are defined, including an \textbf{In} operation, an \textbf{In-Out} operation, an \textbf{Out} operation and an \textbf{Out-In} operation. \textbf{In} operation and
 \textbf{In-Out} operation receive the incoming messages, while \textbf{Out} operation and \textbf{Out-In} operation deliver the outgoing messages. \textbf{Out} operation and
 \textbf{Out-In} operation are somewhat strange because a WS is a kind of server in nature. So, in the later versions of WSDL, \textbf{Out} operation and \textbf{Out-In} operation
 are removed. But the problem of how to process the outgoing message is remaining.

The way of WS-BPEL will cause some confusions in the WS Composition runtime architecture design. And the way of the early edition of WSDL looks somewhat strange. So, our way of
processing outgoing message is a compromise of the above two ones. That is, the outgoing messages from an internal WSO to an external resource, must go via the WS of the internal
WSO. But the WS does not need to declare operations for processing the outgoing messages in the WSDL definitions.

(3) Functions and Enablements of WSC

A WSC acts as a contract or a protocol between interacting WSs. In a viewpoint of business integration requirements, a WSC serves as a business contract to constrain the rights and
obligations of business partners. And from a view of utilized technologies, a WSC can be deemed as a communication protocol which coordinates the interaction behaviors of involved WSs.

About the enablements of a WSC, there are also two differently enable patterns. One is a concentrated architecture and the the other is a distributed one.

The concentrated way considers that the enablements of a WSC must be under supervision of a thirdly authorized party or all involved partners. An absolutely concentrated way maybe
require that any operation about interacting WSs must be done by the way of a supervisor. This way maybe cause the supervisor becoming a performance bottleneck when bulk of
interactions occur, but it can bring trustworthiness of interaction results if the supervisor is trustworthy itself.

The distributed way argues that each WS interacts among others with constraints of a WSC and there is no need of a supervisor. It is regarded that WSs just behave \emph{correctly}
to obey to a WSC and maybe take an example of enablements of open Internet protocols. But there are cheating business behaviors of an intendedly \emph{incorrect} WS, that are unlike
almost purely technical motivations of open Internet protocols.

We use a hybrid enablements of WSC. That is, when a WSC is contracted (either contracted dynamically at runtime or contracted with human interventions at design time) among WSs and
enabled, the WSC creates the partner WSs at the beginning of enablements. And then the WSs interact with each other.

\subsubsection{A WS Composition Runtime Architecture}

Based on the above introductions and discussions, we design an architecture of WS Composition runtime as Figure \ref{RunWSC} shows. Figure \ref{RunWSC} illustrates the typical architecture of a WS Composition runtime. We explain the compositions and their relationships in the following. There are four components: WSO, WS, WSC and applications inside.

\begin{figure}
  \centering
  \includegraphics{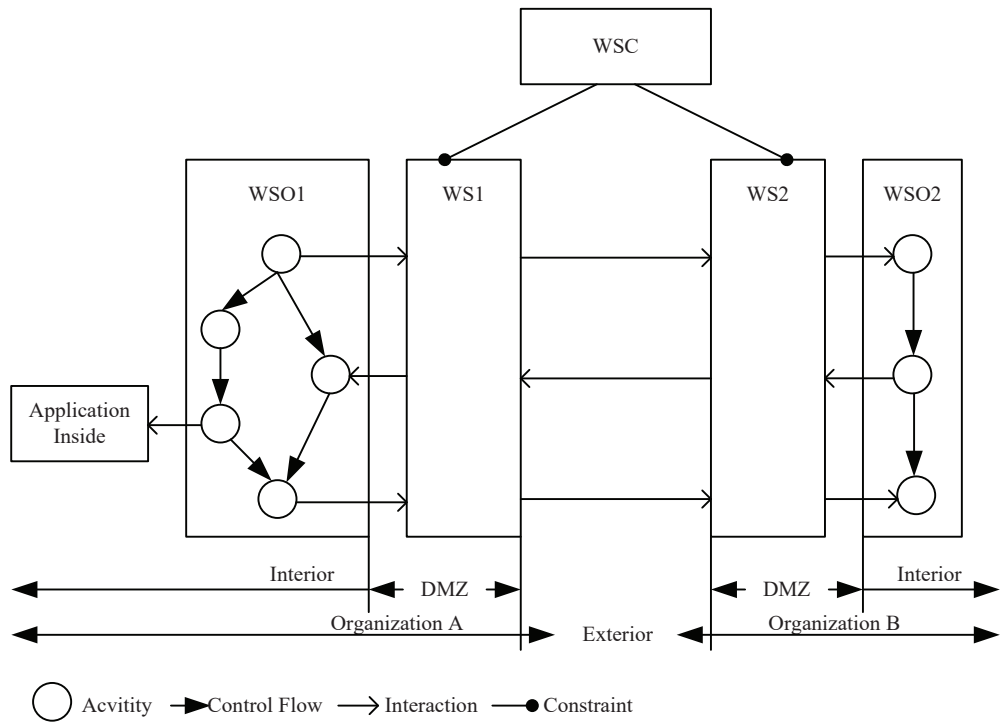}
  \caption{An architecture of WS composition runtime}
  \label{RunWSC}
\end{figure}

The functions and ingredients of a WSO usually it have a collection of activities that may interact with partner WSs outside or applications inside. Enablements of a WSO require a
runtime environment which is not illustrated in Figure \ref{RunWSC}. For examples, execution of a WSO described by WS-BPEL needs a WS-BPEL interpreter (also called WSO engine). A
WSO locates in the interior of an organization. It interacts with applications inside with private exchanging mechanisms and with other partner WSOs outside via its interface WS.

Applications inside may be any legacy application or any newly developed application within the interior of a organization. These applications can be implemented in any technical
framework and provide interfaces to interact with other applications inside, including a WSO. Interactions between a WSO and a application inside may base on any private communication
mechanism, such as local object method call, RPC, RMI, etc, which depends on technical framework adopted by the application.

An interface WS acts as an interface of a WSO to interact with partner WSs outside. A WSO is with an one-to-one binding to its interface WS and is created by its interface WS at the
time of first interaction with exterior. Enablements of a WS also require a runtime support usually called SOAP engine which implies a HTTP server installed to couple with HTTP
requests. A WS and its runtime support locate at demilitarized zone (DMZ) of an organization which has different management policies and different security policies to the interior
of an organization.

A WSC acts as a contract or a protocol of partner WSs. When a WSC is enabled, it creates all partner WSs at their accurate positions. Enablements of a WSC also
require a runtime support to interpret the WSC description language like WS-CDL. A WSC and its support environment can be located at a thirdly authorized party or other places
negotiated by the partners.

\subsection{The New Actor Model of Web Service Composition}\label{mwsc}

According to the architecture of WS composition runtime, the whole actors system implemented by actors can be divided into four kinds of actors: the activity actors, the WS actors,
the WSO actors and the WSC actor.

\subsubsection{Activity Actor, AA}

An activity is an atomic function unit of a WSO and is managed by the WSO. We use an actor called activity actor (AA) to model an activity.

An AA has a unique name, local information and variables to contain its states, and local computation procedures to manipulate the information and variables. An AA is always managed by
a WSO and it receives messages from its WSO, sends messages to other AAs or WSs via its WSO, and is created by its WSO. Note that an AA can not create new AAs, it can only be created
by a WSO. That is, an AA is an actor with a constraint that is without \textbf{create} action.

After an AA is created, the typical process is as follows.

\begin{enumerate}
  \item The AA receives some messages $DI_{AA}$ from its WSO through its mail box denoted by its name $AA$ (the corresponding reading action is denoted $r_{AA}(DI_{AA})$);
  \item Then it does some local computations mixed some atomic actions by computation logics, including $\cdot$, $+$, $\between$ and guards, the whole local computations are denoted
  $I_{AA}$, which is the set of all local atomic actions;
  \item When the local computations are finished, the AA generates the output message $DO_{AA}$ and sends to its WSO's mail box denoted by the WSO's name $WSO$ (the corresponding sending
  action is denoted $s_{WSO}(DO_{AA})$), and then processes the next message from its WSO recursively.
\end{enumerate}

The above process is described as the following state transition skeletons by APTC.

$AA=r_{AA}(DI_{AA})\cdot AA_1$

$AA_1=I_{AA}\cdot AA_2$

$AA_2=s_{WSO}(DO_{AA})\cdot AA$

By use of the algebraic laws of APTC, the AA may be proven exhibiting desired external behaviors. If it can exhibits desired external behaviors, the AA should have the following form:

$$\tau_{I_{AA}}(\partial_{\emptyset}(AA))=r_{AA}(DI_{AA})\cdot s_{WSO}(DO_{AA})\cdot \tau_{I_{AA}}(\partial_{\emptyset}(AA))$$

\subsubsection{Web Service Orchestration, WSO}

A WSO includes a set of AAs and acts as the manager of the AAs. The management operations may be creating a member AA, acting as a bridge between AAs
and WSs outside.

After a WSO is created, the typical process is as follows.

\begin{enumerate}
  \item The WSO receives the initialization message $DI_{WSO}$ from its interface WS through its mail box by its name $WSO$ (the corresponding reading action is denoted
  $r_{WSO}(DI_{WSO})$);
  \item The WSO may create its AAs in parallel through actions $\mathbf{new}(AA_1)\parallel \cdots\parallel \mathbf{new}(AA_n)$ if it is not initialized;
  \item The WSO may receive messages from its interface WS or its AAs through its mail box by its name $WSO$ (the corresponding reading actions are distinct by the message names);
  \item The WSO does some local computations mixed some atomic actions by computation logics, including $\cdot$, $+$, $\between$ and guards, the local computations are included into
  $I_{WSO}$, which is the set of all local atomic actions;
  \item When the local computations are finished, the WSO generates the output messages and may send to its AAs or its interface WS (the corresponding sending
  actions are distinct by the names of AAs and WS, and also the names of messages), and then processes the next message from its AAs or the interface WS.
\end{enumerate}

The above process is described as the following state transition skeletons by APTC.

$WSO=r_{WSO}(DI_{WSO})\cdot WSO_1$

$WSO_1=(\{isInitialed(WSO)=FLALSE\}\cdot(\mathbf{new}(AA_1)\parallel \cdots\parallel \mathbf{new}(AA_n))+\{isInitialed(WSO)=TRUE\})\cdot WSO_2$

$WSO_2=r_{WSO}(DI_{AAs},DI_{WS})\cdot WSO_3$

$WSO_3=I_{WSO}\cdot WSO_4$

$WSO_4=s_{AAs,WS}(DO_{WSO})\cdot WSO$

By use of the algebraic laws of APTC, the WSO may be proven exhibiting desired external behaviors. If it can exhibits desired external behaviors, the WSO should have the following form:

$$\tau_{I_{WSO}}(\partial_{\emptyset}(WSO))=r_{WSO}(DI_{WSO})\cdot\cdots\cdot s_{WS}(DO_{WSO})\cdot \tau_{I_{WSO}}(\partial_{\emptyset}(WSO))$$

With $I_{WSO}$ extended to $I_{WSO}\cup\{\{isInitialed(WSO)=FLALSE\},\{isInitialed(WSO)=TRUE\}\}$.

\subsubsection{Web Service, WS}

A WS is an actor that has the characteristics of an ordinary actor. It acts as a communication bridge between the inner WSO and the external partner WS and creates a new WSO when it
receives a new incoming message.

After A WS is created, the typical process is as follows.

\begin{enumerate}
  \item The WS receives the initialization message $DI_{WS}$ from its WSC actor through its mail box by its name $WS$ (the corresponding reading action is denoted
  $r_{WS}(DI_{WS})$);
  \item The WS may create its WSO through actions $\mathbf{new}(WSO)$ if it is not initialized;
  \item The WS may receive messages from its partner WS or its WSO through its mail box by its name $WSO$ (the corresponding reading actions are distinct by the message names);
  \item The WS does some local computations mixed some atomic actions by computation logics, including $\cdot$, $+$, $\between$ and guards, the local computations are included into
  $I_{WS}$, which is the set of all local atomic actions;
  \item When the local computations are finished, the WS generates the output messages and may send to its WSO or its partner WS (the corresponding sending
  actions are distinct by the names of WSO and the partner WS, and also the names of messages), and then processes the next message from its WSO or the partner WS.
\end{enumerate}

The above process is described as the following state transition skeletons by APTC.

$WS=r_{WS}(DI_{WS})\cdot WS_1$

$WS_1=(\{isInitialed(WS)=FLALSE\}\cdot\mathbf{new}(WSO)+\{isInitialed(WS)=TRUE\})\cdot WS_2$

$WS_2=r_{WS}(DI_{WSO},DI_{WS'})\cdot WS_3$

$WS_3=I_{WS}\cdot WS_4$

$WS_4=s_{WSO,WS'}(DO_{WS})\cdot WS$

By use of the algebraic laws of APTC, the WS may be proven exhibiting desired external behaviors. If it can exhibits desired external behaviors, the WS should have the following form:

$$\tau_{I_{WS}}(\partial_{\emptyset}(WS))=r_{WS}(DI_{WS})\cdot\cdots\cdot s_{WS'}(DO_{WS})\cdot \tau_{I_{WS}}(\partial_{\emptyset}(WS))$$

With $I_{WS}$ extended to $I_{WS}\cup\{\{isInitialed(WS)=FLALSE\},\{isInitialed(WS)=TRUE\}\}$.

\subsubsection{Web Service Choreography, WSC}

A WSC actor creates partner WSs as some kinds roles and set each WS to the other one as their partner WSs.

After A WSC is created, the typical process is as follows.

\begin{enumerate}
  \item The WSC receives the initialization message $DI_{WSC}$ from the outside through its mail box by its name $WSC$ (the corresponding reading action is denoted
  $r_{WSC}(DI_{WSC})$);
  \item The WSC may create its WSs through actions $\mathbf{new}(WS_1)\parallel \mathbf{new}(WS_2)$ if it is not initialized;
  \item The WSC does some local computations mixed some atomic actions by computation logics, including $\cdot$, $+$, $\between$ and guards, the local computations are included into
  $I_{WSC}$, which is the set of all local atomic actions;
  \item When the local computations are finished, the WSC generates the output messages and sends to its WSs, or the outside (the corresponding sending
  actions are distinct by the names of WSs, and also the names of messages), and then processes the next message from the outside.
\end{enumerate}

The above process is described as the following state transition skeletons by APTC.

$WSC=r_{WSC}(DI_{WSC})\cdot WSC_1$

$WSC_1=(\{isInitialed(WSC)=FLALSE\}\cdot(\mathbf{new}(WS_1)\parallel \mathbf{new}(WS_2))+\{isInitialed(WSC)=TRUE\})\cdot WSC_2$

$WSC_2=I_{WSC}\cdot WSC_3$

$WSC_3=s_{WS_1,WS_2,O}(DO_{WSC})\cdot WSC$

By use of the algebraic laws of APTC, the WSC may be proven exhibiting desired external behaviors. If it can exhibits desired external behaviors, the WSC should have the following form:

$$\tau_{I_{WSC}}(\partial_{\emptyset}(WSC))=r_{WSC}(DI_{WSC})\cdot s_{WS_1,WS_2,O}(DO_{WSC})\cdot \tau_{I_{WSC}}(\partial_{\emptyset}(WSC))$$

With $I_{WSC}$ extended to $I_{WSC}\cup\{\{isInitialed(WSC)=FLALSE\},\{isInitialed(WSC)=TRUE\}\}$.

\subsubsection{Putting All Together into A Whole}

We put all actors together into a whole, including all AAs, WSOs, WSs, and WSC, according to the architecture as illustrated in Figure \ref{RunWSC}. The whole actor system $WSC=WSC\quad WSs\quad WSOs\quad AAs$
can be represented by the following process term of APTC.

$$\tau_I(\partial_H(WSC))=\tau_I(\partial_H(WSC\between WSs\between WSOs\between AAs))$$

Among all the actors, there are synchronous communications. The actor's reading and to the same actor's sending actions with the same type messages may cause communications. If to the actor's
sending action occurs before the the same actions reading action, an asynchronous communication will occur; otherwise, a deadlock $\delta$ will be caused.

There are four pairs kinds of asynchronous communications as follows.

(1) The communications between an AA and its WSO with the following constraints.

$s_{AA}(DI_{AA-WSO})\leq r_{AA}(DI_{AA-WSO})$

$s_{WSO}(DI_{WSO-AA})\leq r_{WSO}(DI_{WSO-AA})$

Note that, the message $DI_{AA-WSO}$ and $DO_{WSO-AA}$, $DI_{WSO-AA}$ and $DO_{AA-WSO}$ are the same messages.

(2) The communications between a WSO and its interface WS with the following constraints.

$s_{WSO}(DI_{WSO-WS})\leq r_{WSO}(DI_{WSO-WS})$

$s_{WS}(DI_{WS-WSO})\leq r_{WS}(DI_{WS-WSO})$

Note that, the message $DI_{WSO-WS}$ and $DO_{WS-WSO}$, $DI_{WS-WSO}$ and $DO_{WSO-WS}$ are the same messages.

(3) The communications between a WS and its partner WS with the following constraints.

$s_{WS_1}(DI_{WS_1-WS_2})\leq r_{WS_1}(DI_{WS_1-WS_2})$

$s_{WS_2}(DI_{WS_2-WS_1})\leq r_{WS_2}(DI_{WS_2-WS_1})$

Note that, the message $DI_{WS_1-WS_2}$ and $DO_{WS_2-WS_1}$, $DI_{WS_2-WS_1}$ and $DO_{WS_1-WS_2}$ are the same messages.

(4) The communications between a WS and its WSC with the following constraints.

$s_{WSC}(DI_{WSC-WS})\leq r_{WSC}(DI_{WSC-WS})$

$s_{WS}(DI_{WS-WSC})\leq r_{WS}(DI_{WS-WSC})$

Note that, the message $DI_{WSC-WS}$ and $DO_{WS-WSC}$, $DI_{WS-WSC}$ and $DO_{WSC-WS}$ are the same messages.

So, the set $H$ and $I$ can be defined as follows.

$H=\{s_{AA}(DI_{AA-WSO}), r_{AA}(DI_{AA-WSO}),s_{WSO}(DI_{WSO-AA}), r_{WSO}(DI_{WSO-AA}),\\
s_{WSO}(DI_{WSO-WS}), r_{WSO}(DI_{WSO-WS}),s_{WS}(DI_{WS-WSO}), r_{WS}(DI_{WS-WSO}),\\
s_{WS_1}(DI_{WS_1-WS_2}), r_{WS_1}(DI_{WS_1-WS_2}),s_{WS_2}(DI_{WS_2-WS_1}), r_{WS_2}(DI_{WS_2-WS_1}),\\
s_{WSC}(DI_{WSC-WS}), r_{WSC}(DI_{WSC-WS}),s_{WS}(DI_{WS-WSC}), r_{WS}(DI_{WS-WSC})\\
|s_{AA}(DI_{AA-WSO})\nleq r_{AA}(DI_{AA-WSO}),s_{WSO}(DI_{WSO-AA})\nleq r_{WSO}(DI_{WSO-AA}),\\
s_{WSO}(DI_{WSO-WS})\nleq r_{WSO}(DI_{WSO-WS}),s_{WS}(DI_{WS-WSO})\nleq r_{WS}(DI_{WS-WSO}),\\
s_{WS_1}(DI_{WS_1-WS_2})\nleq r_{WS_1}(DI_{WS_1-WS_2}),s_{WS_2}(DI_{WS_2-WS_1})\nleq r_{WS_2}(DI_{WS_2-WS_1}),\\
s_{WSC}(DI_{WSC-WS})\nleq r_{WSC}(DI_{WSC-WS}),s_{WS}(DI_{WS-WSC})\nleq r_{WS}(DI_{WS-WSC})\}$

$I=\{s_{AA}(DI_{AA-WSO}), r_{AA}(DI_{AA-WSO}),s_{WSO}(DI_{WSO-AA}), r_{WSO}(DI_{WSO-AA}),\\
s_{WSO}(DI_{WSO-WS}), r_{WSO}(DI_{WSO-WS}),s_{WS}(DI_{WS-WSO}), r_{WS}(DI_{WS-WSO}),\\
s_{WS_1}(DI_{WS_1-WS_2}), r_{WS_1}(DI_{WS_1-WS_2}),s_{WS_2}(DI_{WS_2-WS_1}), r_{WS_2}(DI_{WS_2-WS_1}),\\
s_{WSC}(DI_{WSC-WS}), r_{WSC}(DI_{WSC-WS}),s_{WS}(DI_{WS-WSC}), r_{WS}(DI_{WS-WSC})\\
|s_{AA}(DI_{AA-WSO})\leq r_{AA}(DI_{AA-WSO}),s_{WSO}(DI_{WSO-AA})\leq r_{WSO}(DI_{WSO-AA}),\\
s_{WSO}(DI_{WSO-WS})\leq r_{WSO}(DI_{WSO-WS}),s_{WS}(DI_{WS-WSO})\leq r_{WS}(DI_{WS-WSO}),\\
s_{WS_1}(DI_{WS_1-WS_2})\leq r_{WS_1}(DI_{WS_1-WS_2}),s_{WS_2}(DI_{WS_2-WS_1})\leq r_{WS_2}(DI_{WS_2-WS_1}),\\
s_{WSC}(DI_{WSC-WS})\leq r_{WSC}(DI_{WSC-WS}),s_{WS}(DI_{WS-WSC})\leq r_{WS}(DI_{WS-WSC})\}\\
\cup I_{AAs}\cup I_{WSOs}\cup I_{WSs}\cup I_{WSC}$

If the whole actor system of WS composition runtime can exhibits desired external behaviors, the system should have the following form:

$\tau_I(\partial_H(WSC))=\tau_I(\partial_H(WSC\between WSs\between WSOs\between AAs))\\
=r_{WSC}(DI_{WSC})\cdot s_{O}(DO_{WSC})\cdot \tau_I(\partial_H(WSC\between WSs\between WSOs\between AAs))\\
=r_{WSC}(DI_{WSC})\cdot s_{O}(DO_{WSC})\cdot \tau_I(\partial_H(WSC))$

\subsection{An Example}\label{ewsc}

Using the architecture in Figure \ref{RunWSC}, we get an implementation of the buying books example as shown in Figure \ref{ExaWSC}. In this implementation, there are one WSC
(named BuyingBookWSC, denoted $WSC$), two WSs (one is named UserAgentWS and denoted $WS_1$, the other is named BookStoreWS and denoted $WS_2$), two WSOs (one is named UserAgentWSO
and denoted $WSO_1$, the other is named BookStoreWSO and denoted $WSO_2$), and two set of AAs denoted $AA_{1i}$ and $AA_{2j}$. The set of AAs belong to UserAgentWSO including
RequstLBAA denoted $AA_{11}$, ReceiveLBAA denoted $AA_{12}$, SendSBAA denoted $AA_{13}$, ReceivePBAA denoted $AA_{14}$ and PayBAA denoted $AA_{15}$, and the other set of AAs belong to
BookStoreWSO including ReceiveRBAA denoted $AA_{21}$, SendLBAA denoted $AA_{22}$, ReceiveSBAA denoted $AA_{23}$, SendPBAA denoted $AA_{24}$, and GetP\&ShipBAA denoted $AA_{25}$.

\begin{figure}
  \centering
  \includegraphics{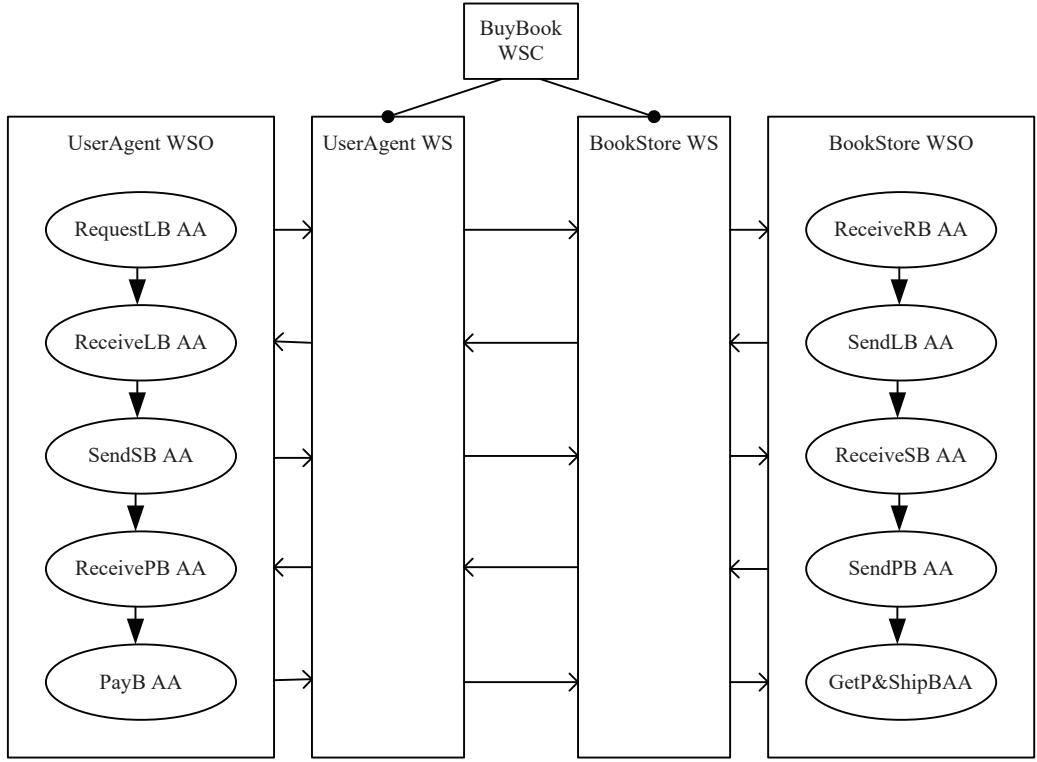}
  \caption{Implementation of the buying books example}
  \label{ExaWSC}
\end{figure}

The detailed implementations of actors in Figure \ref{ExaWSC} is following.

\subsubsection{UserAgent AAs}

(1) RequstLBAA ($AA_{11}$)

After $AA_{11}$ is created, the typical process is as follows.

\begin{enumerate}
  \item The $AA_{11}$ receives some messages $RequestLB_{WA_1}$ from $WSO_1$ through its mail box denoted by its name $AA_{11}$ (the corresponding reading action is denoted
  $r_{AA_{11}}(RequestLB_{WA_1})$);
  \item Then it does some local computations mixed some atomic actions by computation logics, including $\cdot$, $+$, $\between$ and guards, the whole local computations are denoted
  $I_{AA_{11}}$, which is the set of all local atomic actions;
  \item When the local computations are finished, the $AA_{11}$ generates the output message $RequestLB_{AW_1}$ and sends to $WSO_1$'s mail box denoted by $WSO_1$'s name $WSO_1$
  (the corresponding sending action is denoted $s_{WSO_1}(RequestLB_{AW_1})$), and then processes the next message from $WSO_1$ recursively.
\end{enumerate}

The above process is described as the following state transitions by APTC.

$AA_{11}=r_{AA_{11}}(RequestLB_{WA_1})\cdot AA_{11_1}$

$AA_{11_1}=I_{AA_{11}}\cdot AA_{11_2}$

$AA_{11_2}=s_{WSO_1}(RequestLB_{AW_1})\cdot AA_{11}$

By use of the algebraic laws of APTC, $AA_{11}$ can be proven exhibiting desired external behaviors.

$$\tau_{I_{AA_{11}}}(\partial_{\emptyset}(AA_{11}))=r_{AA_{11}}(RequestLB_{WA_1})\cdot s_{WSO_1}(RequestLB_{AW_1})\cdot \tau_{I_{AA_{11}}}(\partial_{\emptyset}(AA_{11}))$$

(2) ReceiveLBAA ($AA_{12}$)

After $AA_{12}$ is created, the typical process is as follows.

\begin{enumerate}
  \item The $AA_{12}$ receives some messages $ReceiveLB_{WA_1}$ from $WSO_1$ through its mail box denoted by its name $AA_{12}$ (the corresponding reading action is denoted
  $r_{AA_{12}}(ReceiveLB_{WA_1})$);
  \item Then it does some local computations mixed some atomic actions by computation logics, including $\cdot$, $+$, $\between$ and guards, the whole local computations are denoted
  $I_{AA_{12}}$, which is the set of all local atomic actions;
  \item When the local computations are finished, the $AA_{12}$ generates the output message $ReceiveLB_{AW_1}$ and sends to $WSO_1$'s mail box denoted by $WSO_1$'s name $WSO_1$
  (the corresponding sending action is denoted $s_{WSO_1}(ReceiveLB_{AW_1})$), and then processes the next message from $WSO_1$ recursively.
\end{enumerate}

The above process is described as the following state transitions by APTC.

$AA_{12}=r_{AA_{12}}(ReceiveLB_{WA_1})\cdot AA_{12_1}$

$AA_{12_1}=I_{AA_{12}}\cdot AA_{12_2}$

$AA_{12_2}=s_{WSO_1}(ReceiveLB_{AW_1})\cdot AA_{12}$

By use of the algebraic laws of APTC, $AA_{12}$ can be proven exhibiting desired external behaviors.

$$\tau_{I_{AA_{12}}}(\partial_{\emptyset}(AA_{12}))=r_{AA_{12}}(ReceiveLB_{WA_1})\cdot s_{WSO_1}(ReceiveLB_{AW_1})\cdot \tau_{I_{AA_{12}}}(\partial_{\emptyset}(AA_{12}))$$

(3) SendSBAA ($AA_{13}$)

After $AA_{13}$ is created, the typical process is as follows.

\begin{enumerate}
  \item The $AA_{13}$ receives some messages $SendSB_{WA_1}$ from $WSO_1$ through its mail box denoted by its name $AA_{13}$ (the corresponding reading action is denoted
  $r_{AA_{13}}(SendSB_{WA_1})$);
  \item Then it does some local computations mixed some atomic actions by computation logics, including $\cdot$, $+$, $\between$ and guards, the whole local computations are denoted
  $I_{AA_{13}}$, which is the set of all local atomic actions;
  \item When the local computations are finished, the $AA_{13}$ generates the output message $SendSB_{AW_1}$ and sends to $WSO_1$'s mail box denoted by $WSO_1$'s name $WSO_1$
  (the corresponding sending action is denoted $s_{WSO_1}(SendSB_{AW_1})$), and then processes the next message from $WSO_1$ recursively.
\end{enumerate}

The above process is described as the following state transitions by APTC.

$AA_{13}=r_{AA_{13}}(SendSB_{WA_1})\cdot AA_{13_1}$

$AA_{13_1}=I_{AA_{13}}\cdot AA_{13_2}$

$AA_{13_2}=s_{WSO_1}(SendSB_{AW_1})\cdot AA_{13}$

By use of the algebraic laws of APTC, $AA_{13}$ can be proven exhibiting desired external behaviors.

$$\tau_{I_{AA_{13}}}(\partial_{\emptyset}(AA_{13}))=r_{AA_{13}}(SendSB_{WA_1})\cdot s_{WSO_1}(SendSB_{AW_1})\cdot \tau_{I_{AA_{13}}}(\partial_{\emptyset}(AA_{13}))$$

(4) ReceivePBAA ($AA_{14}$)

After $AA_{14}$ is created, the typical process is as follows.

\begin{enumerate}
  \item The $AA_{14}$ receives some messages $ReceivePB_{WA_1}$ from $WSO_1$ through its mail box denoted by its name $AA_{14}$ (the corresponding reading action is denoted
  $r_{AA_{14}}(ReceivePB_{WA_1})$);
  \item Then it does some local computations mixed some atomic actions by computation logics, including $\cdot$, $+$, $\between$ and guards, the whole local computations are denoted
  $I_{AA_{14}}$, which is the set of all local atomic actions;
  \item When the local computations are finished, the $AA_{14}$ generates the output message $ReceivePB_{AW_1}$ and sends to $WSO_1$'s mail box denoted by $WSO_1$'s name $WSO_1$
  (the corresponding sending action is denoted $s_{WSO_1}(ReceivePB_{AW_1})$), and then processes the next message from $WSO_1$ recursively.
\end{enumerate}

The above process is described as the following state transitions by APTC.

$AA_{14}=r_{AA_{14}}(ReceivePB_{WA_1})\cdot AA_{14_1}$

$AA_{14_1}=I_{AA_{14}}\cdot AA_{14_2}$

$AA_{14_2}=s_{WSO_1}(ReceivePB_{AW_1})\cdot AA_{14}$

By use of the algebraic laws of APTC, $AA_{14}$ can be proven exhibiting desired external behaviors.

$$\tau_{I_{AA_{14}}}(\partial_{\emptyset}(AA_{14}))=r_{AA_{14}}(ReceivePB_{WA_1})\cdot s_{WSO_1}(ReceivePB_{AW_1})\cdot \tau_{I_{AA_{14}}}(\partial_{\emptyset}(AA_{14}))$$

(5) PayBAA ($AA_{15}$)

After $AA_{15}$ is created, the typical process is as follows.

\begin{enumerate}
  \item The $AA_{15}$ receives some messages $PayB_{WA_1}$ from $WSO_1$ through its mail box denoted by its name $AA_{15}$ (the corresponding reading action is denoted
  $r_{AA_{15}}(PayB_{WA_1})$);
  \item Then it does some local computations mixed some atomic actions by computation logics, including $\cdot$, $+$, $\between$ and guards, the whole local computations are denoted
  $I_{AA_{15}}$, which is the set of all local atomic actions;
  \item When the local computations are finished, the $AA_{15}$ generates the output message $PayB_{AW_1}$ and sends to $WSO_1$'s mail box denoted by $WSO_1$'s name $WSO_1$
  (the corresponding sending action is denoted $s_{WSO_1}(PayB_{AW_1})$), and then processes the next message from $WSO_1$ recursively.
\end{enumerate}

The above process is described as the following state transitions by APTC.

$AA_{15}=r_{AA_{15}}(PayB_{WA_1})\cdot AA_{15_1}$

$AA_{15_1}=I_{AA_{15}}\cdot AA_{15_2}$

$AA_{15_2}=s_{WSO_1}(PayB_{AW_1})\cdot AA_{15}$

By use of the algebraic laws of APTC, $AA_{15}$ can be proven exhibiting desired external behaviors.

$$\tau_{I_{AA_{15}}}(\partial_{\emptyset}(AA_{15}))=r_{AA_{15}}(PayB_{WA_1})\cdot s_{WSO_1}(PayB_{AW_1})\cdot \tau_{I_{AA_{15}}}(\partial_{\emptyset}(AA_{15}))$$

\subsubsection{UserAgent WSO}

After UserAgent WSO ($WSO_1$) is created, the typical process is as follows.

\begin{enumerate}
  \item The $WSO_1$ receives the initialization message $ReBuyingBooks_{WW_1}$ from its interface WS through its mail box by its name $WSO_1$ (the corresponding reading action is denoted
  $r_{WSO_1}(ReBuyingBooks_{WW_1})$);
  \item The $WSO_1$ may create its AAs in parallel through actions $\mathbf{new}(AA_{11})\parallel \cdots\parallel \mathbf{new}(AA_{15})$ if it is not initialized;
  \item The $WSO_1$ does some local computations mixed some atomic actions by computation logics, including $\cdot$, $+$, $\between$ and guards, the local computations are included into
  $I_{WSO_1}$, which is the set of all local atomic actions;
  \item When the local computations are finished, the $WSO_1$ generates the output messages $RequestLB_{WA_1}$ and sends to $AA_{11}$ (the corresponding sending
  action is denoted \\
  $s_{AA_{11}}(RequestLB_{WA_1})$);
  \item The $WSO_1$ receives the response message $RequestLB_{AW_1}$ from $AA_{11}$ through its mail box by its name $WSO_1$ (the corresponding reading action is denoted
  $r_{WSO_1}(RequestLB_{AW_1})$);
  \item The $WSO_1$ does some local computations mixed some atomic actions by computation logics, including $\cdot$, $+$, $\between$ and guards, the local computations are included into
  $I_{WSO_1}$, which is the set of all local atomic actions;
  \item When the local computations are finished, the $WSO_1$ generates the output messages $RequestLB_{WW_1}$ and sends to $WS_1$ (the corresponding sending
  action is denoted \\
  $s_{WS_1}(RequestLB_{WW_1})$);
  \item The $WSO_1$ receives the response message $ReceiveLB_{WW_1}$ from $WS_1$ through its mail box by its name $WSO_1$ (the corresponding reading action is denoted
  $r_{WSO_1}(ReceiveLB_{WW_1})$);
  \item The $WSO_1$ does some local computations mixed some atomic actions by computation logics, including $\cdot$, $+$, $\between$ and guards, the local computations are included into
  $I_{WSO_1}$, which is the set of all local atomic actions;
  \item When the local computations are finished, the $WSO_1$ generates the output messages $ReceiveLB_{WA_1}$ and sends to $AA_{12}$ (the corresponding sending
  action is denoted \\
  $s_{AA_{12}}(ReceiveLB_{WA_1})$);
  \item The $WSO_1$ receives the response message $ReceiveLB_{AW_1}$ from $AA_{12}$ through its mail box by its name $WSO_1$ (the corresponding reading action is denoted
  $r_{WSO_1}(ReceiveLB_{AW_1})$);
  \item The $WSO_1$ does some local computations mixed some atomic actions by computation logics, including $\cdot$, $+$, $\between$ and guards, the local computations are included into
  $I_{WSO_1}$, which is the set of all local atomic actions;
  \item When the local computations are finished, the $WSO_1$ generates the output messages $SendSB_{WA_1}$ and sends to $AA_{13}$ (the corresponding sending
  action is denoted \\
  $s_{AA_13}(SendSB_{WA_1})$);
  \item The $WSO_1$ receives the response message $SendSB_{AW_1}$ from $AA_{13}$ through its mail box by its name $WSO_1$ (the corresponding reading action is denoted
  $r_{WSO_1}(SendSB_{AW_1})$);
  \item The $WSO_1$ does some local computations mixed some atomic actions by computation logics, including $\cdot$, $+$, $\between$ and guards, the local computations are included into
  $I_{WSO_1}$, which is the set of all local atomic actions;
  \item When the local computations are finished, the $WSO_1$ generates the output messages $SendSB_{WW_1}$ and sends to $WS_1$ (the corresponding sending
  action is denoted \\
  $s_{WS_1}(SendSB_{WW_1})$);
  \item The $WSO_1$ receives the response message $ReceivePB_{WW_1}$ from $WS_1$ through its mail box by its name $WSO_1$ (the corresponding reading action is denoted
  $r_{WSO_1}(ReceivePB_{WW_1})$);
  \item The $WSO_1$ does some local computations mixed some atomic actions by computation logics, including $\cdot$, $+$, $\between$ and guards, the local computations are included into
  $I_{WSO_1}$, which is the set of all local atomic actions;
  \item When the local computations are finished, the $WSO_1$ generates the output messages $ReceivePB_{WA_1}$ and sends to $AA_{14}$ (the corresponding sending
  action is denoted \\
  $s_{AA_{14}}(ReceivePB_{WA_1})$);
  \item The $WSO_1$ receives the response message $ReceivePB_{AW_1}$ from $AA_{14}$ through its mail box by its name $WSO_1$ (the corresponding reading action is denoted
  $r_{WSO_1}(ReceivePB_{AW_1})$);
  \item The $WSO_1$ does some local computations mixed some atomic actions by computation logics, including $\cdot$, $+$, $\between$ and guards, the local computations are included into
  $I_{WSO_1}$, which is the set of all local atomic actions;
  \item When the local computations are finished, the $WSO_1$ generates the output messages $PayB_{WA_1}$ and sends to $AA_{15}$ (the corresponding sending
  action is denoted $s_{AA_{15}}(PayB_{WA_1})$);
  \item The $WSO_1$ receives the response message $PayB_{AW_1}$ from $AA_{15}$ through its mail box by its name $WSO_1$ (the corresponding reading action is denoted
  $r_{WSO_1}(PayB_{AW_1})$);
  \item The $WSO_1$ does some local computations mixed some atomic actions by computation logics, including $\cdot$, $+$, $\between$ and guards, the local computations are included into
  $I_{WSO_1}$, which is the set of all local atomic actions;
  \item When the local computations are finished, the $WSO_1$ generates the output messages $PayB_{WW_1}$ and sends to $WS_1$ (the corresponding sending
  action is denoted $s_{WS_1}(PayB_{WW_1})$), and then processing the messages from $WS_1$ recursively.
\end{enumerate}

The above process is described as the following state transitions by APTC.

$WSO_1=r_{WSO_1}(ReBuyingBooks_{WW_1})\cdot WSO_{1_1}$

$WSO_{1_1}=(\{isInitialed(WSO_1)=FLALSE\}\cdot(\mathbf{new}(AA_{11})\parallel \cdots\parallel \mathbf{new}(AA_{15}))+\{isInitialed(WSO_1)=TRUE\})\cdot WSO_{1_2}$

$WSO_{1_2}=I_{WSO_1}\cdot WSO_{1_3}$

$WSO_{1_3}=s_{AA_{11}}(RequestLB_{WA_1})\cdot WSO_{1_4}$

$WSO_{1_4}=r_{WSO_1}(RequestLB_{AW_1})\cdot WSO_{1_5}$

$WSO_{1_5}=I_{WSO_1}\cdot WSO_{1_6}$

$WSO_{1_6}=s_{WS_1}(RequestLB_{WW_1})\cdot WSO_{1_7}$

$WSO_{1_7}=r_{WSO_1}(ReceiveLB_{WW_1})\cdot WSO_{1_8}$

$WSO_{1_8}=I_{WSO_1}\cdot WSO_{1_9}$

$WSO_{1_{9}}=s_{AA_{12}}(ReceiveLB_{WA_1})\cdot WSO_{1_{10}}$

$WSO_{1_{10}}=r_{WSO_1}(ReceiveLB_{AW_1})\cdot WSO_{1_{11}}$

$WSO_{1_{11}}=I_{WSO_1}\cdot WSO_{1_{12}}$

$WSO_{1_{12}}=s_{AA_13}(SendSB_{WA_1})\cdot WSO_{1_{13}}$

$WSO_{1_{13}}=r_{WSO_1}(SendSB_{AW_1})\cdot WSO_{1_{14}}$

$WSO_{1_{14}}=I_{WSO_1}\cdot WSO_{1_{15}}$

$WSO_{1_{15}}=s_{WS_1}(SendSB_{WW_1})\cdot WSO_{1_{16}}$

$WSO_{1_{16}}=r_{WSO_1}(ReceivePB_{WW_1})\cdot WSO_{1_{17}}$

$WSO_{1_{17}}=I_{WSO_1}\cdot WSO_{1_{18}}$

$WSO_{1_{18}}=s_{AA_{14}}(ReceivePB_{WA_1})\cdot WSO_{1_{19}}$

$WSO_{1_{19}}=r_{WSO_1}(ReceivePB_{AW_1})\cdot WSO_{1_{20}}$

$WSO_{1_{20}}=I_{WSO_1}\cdot WSO_{1_{21}}$

$WSO_{1_{21}}=s_{AA_{15}}(PayB_{WA_1})\cdot WSO_{1_{22}}$

$WSO_{1_{22}}=r_{WSO_1}(PayB_{AW_1})\cdot WSO_{1_{23}}$

$WSO_{1_{23}}=I_{WSO_1}\cdot WSO_{1_{24}}$

$WSO_{1_{24}}=s_{WS_1}(PayB_{WW_1})\cdot WSO_{1}$

By use of the algebraic laws of APTC, the $WSO_1$ can be proven exhibiting desired external behaviors.

$\tau_{I_{WSO_1}}(\partial_{\emptyset}(WSO_1))=r_{WSO_1}(ReBuyingBooks_{WW_1})\cdot s_{AA_{11}}(RequestLB_{WA_1})\cdot r_{WSO_1}(RequestLB_{AW_1})\\
\cdot s_{WS_1}(RequestLB_{WW_1})\cdot r_{WSO_1}(ReceiveLB_{WW_1})\cdot s_{AA_{12}}(ReceiveLB_{WA_1})\cdot r_{WSO_1}(ReceiveLB_{AW_1})\\
\cdot s_{AA_13}(SendSB_{WA_1})\cdot r_{WSO_1}(SendSB_{AW_1})\cdot s_{WS_1}(SendSB_{WW_1}) \cdot r_{WSO_1}(ReceivePB_{WW_1})\\
\cdot s_{AA_{14}}(ReceivePB_{WA_1})\cdot r_{WSO_1}(ReceivePB_{AW_1})\cdot s_{AA_{15}}(PayB_{WA_1})\cdot r_{WSO_1}(PayB_{AW_1})\cdot s_{WS_1}(PayB_{WW_1})\\
\cdot\tau_{I_{WSO_1}}(\partial_{\emptyset}(WSO_1))$

With $I_{WSO_1}$ extended to $I_{WSO_1}\cup\{\{isInitialed(WSO_1)=FLALSE\},\{isInitialed(WSO_1)=TRUE\}\}$.

\subsubsection{UserAgent WS}

After UserAgent WS ($WS_1$) is created, the typical process is as follows.

\begin{enumerate}
  \item The $WS_1$ receives the initialization message $ReBuyingBooks_{WC_1}$ from the buying books WSC $WSC$ through its mail box by its name $WS_1$ (the corresponding reading action is denoted
  $r_{WS_1}(ReBuyingBooks_{WC_1})$);
  \item The $WS_1$ may create its $WSO_1$ through an action $\mathbf{new}(WSO_1)$ if it is not initialized;
  \item The $WS_1$ does some local computations mixed some atomic actions by computation logics, including $\cdot$, $+$, $\between$ and guards, the local computations are included into
  $I_{WS_1}$, which is the set of all local atomic actions;
  \item When the local computations are finished, the $WS_1$ generates the output messages \\$ReBuyingBooks_{WW_1}$ and sends to $WSO_1$ (the corresponding sending
  action is denoted \\
  $s_{WSO_1}(ReBuyingBooks_{WW_1})$);
  \item The $WS_1$ receives the response message $RequestLB_{WW_1}$ from $WSO_1$ through its mail box by its name $WS_1$ (the corresponding reading action is denoted
  $r_{WS_1}(RequestLB_{WW_1})$);
  \item The $WS_1$ does some local computations mixed some atomic actions by computation logics, including $\cdot$, $+$, $\between$ and guards, the local computations are included into
  $I_{WS_1}$, which is the set of all local atomic actions;
  \item When the local computations are finished, the $WS_1$ generates the output messages \\$RequestLB_{WW_{12}}$ and sends to $WS_2$ (the corresponding sending
  action is denoted \\
  $s_{WS_2}(RequestLB_{WW_{12}})$);
  \item The $WS_1$ receives the response message $SendLB_{WW_{21}}$ from $WS_2$ through its mail box by its name $WS_1$ (the corresponding reading action is denoted
  $r_{WS_1}(SendLB_{WW_{21}})$);
  \item The $WS_1$ does some local computations mixed some atomic actions by computation logics, including $\cdot$, $+$, $\between$ and guards, the local computations are included into
  $I_{WS_1}$, which is the set of all local atomic actions;
  \item When the local computations are finished, the $WS_1$ generates the output messages \\$ReceiveLB_{WW_1}$ and sends to $WSO_1$ (the corresponding sending
  action is denoted \\
  $s_{WSO_1}(ReceiveLB_{WW_1})$);
  \item The $WS_1$ receives the response message $SendSB_{WW_1}$ from $WSO_1$ through its mail box by its name $WS_1$ (the corresponding reading action is denoted
  $r_{WS_1}(SendSB_{WW_1})$);
  \item The $WS_1$ does some local computations mixed some atomic actions by computation logics, including $\cdot$, $+$, $\between$ and guards, the local computations are included into
  $I_{WS_1}$, which is the set of all local atomic actions;
  \item When the local computations are finished, the $WS_1$ generates the output messages \\$SendSB_{WW_{12}}$ and sends to $WS_2$ (the corresponding sending
  action is denoted \\
  $s_{WS_2}(SendSB_{WW_{12}})$);
  \item The $WS_1$ receives the response message $SendPB_{WW_{21}}$ from $WS_2$ through its mail box by its name $WS_1$ (the corresponding reading action is denoted
  $r_{WS_1}(SendPB_{WW_{21}})$);
  \item The $WS_1$ does some local computations mixed some atomic actions by computation logics, including $\cdot$, $+$, $\between$ and guards, the local computations are included into
  $I_{WS_1}$, which is the set of all local atomic actions;
  \item When the local computations are finished, the $WS_1$ generates the output messages \\$ReceivePB_{WW_1}$ and sends to $WSO_1$ (the corresponding sending
  action is denoted \\
  $s_{WSO_1}(ReceivePB_{WW_1})$);
  \item The $WS_1$ receives the response message $PayB_{WW_1}$ from $WSO_1$ through its mail box by its name $WS_1$ (the corresponding reading action is denoted
  $r_{WS_1}(PayB_{WW_1})$);
  \item The $WS_1$ does some local computations mixed some atomic actions by computation logics, including $\cdot$, $+$, $\between$ and guards, the local computations are included into
  $I_{WS_1}$, which is the set of all local atomic actions;
  \item When the local computations are finished, the $WS_1$ generates the output messages \\$PayB_{WW_{12}}$ and sends to $WS_2$ (the corresponding sending
  action is denoted \\
  $s_{WS_2}(PayB_{WW_{12}})$), and then processing the messages from $WSC$ recursively.
\end{enumerate}

The above process is described as the following state transitions by APTC.

$WS_1=r_{WS_1}(ReBuyingBooks_{WC_1})\cdot WS_{1_1}$

$WS_{1_1}=(\{isInitialed(WS_1)=FLALSE\}\cdot\mathbf{new}(WSO_1)+\{isInitialed(WS_1)=TRUE\})\cdot WS_{1_2}$

$WS_{1_2}=I_{WS_1}\cdot WS_{1_3}$

$WS_{1_3}=s_{WSO_1}(ReBuyingBooks_{WW_1})\cdot WS_{1_4}$

$WS_{1_4}=r_{WS_1}(RequestLB_{WW_1})\cdot WS_{1_5}$

$WS_{1_5}=I_{WS_1}\cdot WS_{1_6}$

$WS_{1_6}=s_{WS_2}(RequestLB_{WW_{12}})\cdot WS_{1_7}$

$WS_{1_7}=r_{WS_1}(SendLB_{WW_{21}})\cdot WS_{1_8}$

$WS_{1_8}=I_{WS_1}\cdot WS_{1_9}$

$WS_{1_{9}}=s_{WSO_1}(ReceiveLB_{WW_1})\cdot WS_{1_{10}}$

$WS_{1_{10}}=r_{WS_1}(SendSB_{WW_1})\cdot WS_{1_{11}}$

$WS_{1_{11}}=I_{WS_1}\cdot WS_{1_{12}}$

$WS_{1_{12}}=s_{WS_2}(SendSB_{WW_{12}})\cdot WS_{1_{13}}$

$WS_{1_{13}}=r_{WS_1}(SendPB_{WW_{21}})\cdot WS_{1_{14}}$

$WS_{1_{14}}=I_{WS_1}\cdot WS_{1_{15}}$

$WS_{1_{15}}=s_{WSO_1}(ReceivePB_{WW_1})\cdot WS_{1_{16}}$

$WS_{1_{16}}=r_{WS_1}(PayB_{WW_1})\cdot WS_{1_{17}}$

$WS_{1_{17}}=I_{WS_1}\cdot WS_{1_{18}}$

$WS_{1_{18}}=s_{WS_2}(PayB_{WW_{12}})\cdot WS_{1}$

By use of the algebraic laws of APTC, the $WS_1$ can be proven exhibiting desired external behaviors.

$\tau_{I_{WS_1}}(\partial_{\emptyset}(WS_1))=r_{WSO_1}(r_{WS_1}(ReBuyingBooks_{WC_1})\cdot s_{WSO_1}(ReBuyingBooks_{WW_1})\\
\cdot r_{WS_1}(RequestLB_{WW_1})\cdot s_{WS_2}(RequestLB_{WW_{12}})\cdot r_{WS_1}(SendLB_{WW_{21}})\\
\cdot s_{WSO_1}(ReceiveLB_{WW_1})\cdot r_{WS_1}(SendSB_{WW_1})\cdot s_{WS_2}(SendSB_{WW_{12}})\\
\cdot r_{WS_1}(SendPB_{WW_{21}})\cdot s_{WSO_1}(ReceivePB_{WW_1})\cdot r_{WS_1}(PayB_{WW_1})\\
\cdot s_{WS_2}(PayB_{WW_{12}})\cdot\tau_{I_{WS_1}}(\partial_{\emptyset}(WS_1))$

With $I_{WS_1}$ extended to $I_{WS_1}\cup\{\{isInitialed(WS_1)=FLALSE\},\{isInitialed(WS_1)=TRUE\}\}$.

\subsubsection{BookStore AAs}

(1) ReceiveRBAA ($AA_{21}$)

After $AA_{21}$ is created, the typical process is as follows.

\begin{enumerate}
  \item The $AA_{21}$ receives some messages $ReceiveRB_{WA_2}$ from $WSO_2$ through its mail box denoted by its name $AA_{21}$ (the corresponding reading action is denoted
  $r_{AA_{21}}(ReceiveRB_{WA_2})$);
  \item Then it does some local computations mixed some atomic actions by computation logics, including $\cdot$, $+$, $\between$ and guards, the whole local computations are denoted
  $I_{AA_{21}}$, which is the set of all local atomic actions;
  \item When the local computations are finished, the $AA_{21}$ generates the output message $ReceiveRB_{AW_2}$ and sends to $WSO_2$'s mail box denoted by $WSO_2$'s name $WSO_2$
  (the corresponding sending action is denoted $s_{WSO_2}(ReceiveRB_{AW_2})$), and then processes the next message from $WSO_2$ recursively.
\end{enumerate}

The above process is described as the following state transitions by APTC.

$AA_{21}=r_{AA_{21}}(ReceiveRB_{WA_2})\cdot AA_{21_1}$

$AA_{21_1}=I_{AA_{21}}\cdot AA_{21_2}$

$AA_{21_2}=s_{WSO_2}(ReceiveRB_{AW_2})\cdot AA_{21}$

By use of the algebraic laws of APTC, $AA_{21}$ can be proven exhibiting desired external behaviors.

$$\tau_{I_{AA_{21}}}(\partial_{\emptyset}(AA_{21}))=r_{AA_{21}}(RequestLB_{WA_2})\cdot s_{WSO_2}(RequestLB_{AW_2})\cdot \tau_{I_{AA_{21}}}(\partial_{\emptyset}(AA_{21}))$$

(2) SendLBAA ($AA_{22}$)

After $AA_{22}$ is created, the typical process is as follows.

\begin{enumerate}
  \item The $AA_{22}$ receives some messages $SendLB_{WA_2}$ from $WSO_2$ through its mail box denoted by its name $AA_{22}$ (the corresponding reading action is denoted
  $r_{AA_{22}}(ReceiveLB_{WA_2})$);
  \item Then it does some local computations mixed some atomic actions by computation logics, including $\cdot$, $+$, $\between$ and guards, the whole local computations are denoted
  $I_{AA_{22}}$, which is the set of all local atomic actions;
  \item When the local computations are finished, the $AA_{22}$ generates the output message $SendLB_{AW_2}$ and sends to $WSO_2$'s mail box denoted by $WSO_2$'s name $WSO_2$
  (the corresponding sending action is denoted $s_{WSO_2}(SendLB_{AW_2})$), and then processes the next message from $WSO_2$ recursively.
\end{enumerate}

The above process is described as the following state transitions by APTC.

$AA_{22}=r_{AA_{22}}(SendLB_{WA_2})\cdot AA_{22_1}$

$AA_{22_1}=I_{AA_{22}}\cdot AA_{22_2}$

$AA_{22_2}=s_{WSO_2}(SendLB_{AW_2})\cdot AA_{22}$

By use of the algebraic laws of APTC, $AA_{22}$ can be proven exhibiting desired external behaviors.

$$\tau_{I_{AA_{22}}}(\partial_{\emptyset}(AA_{22}))=r_{AA_{22}}(SendLB_{WA_2})\cdot s_{WSO_2}(SendLB_{AW_2})\cdot \tau_{I_{AA_{22}}}(\partial_{\emptyset}(AA_{22}))$$

(3) ReceiveSBAA ($AA_{23}$)

After $AA_{23}$ is created, the typical process is as follows.

\begin{enumerate}
  \item The $AA_{23}$ receives some messages $ReceiveSB_{WA_2}$ from $WSO_2$ through its mail box denoted by its name $AA_{23}$ (the corresponding reading action is denoted
  $r_{AA_{23}}(ReceiveSB_{WA_2})$);
  \item Then it does some local computations mixed some atomic actions by computation logics, including $\cdot$, $+$, $\between$ and guards, the whole local computations are denoted
  $I_{AA_{23}}$, which is the set of all local atomic actions;
  \item When the local computations are finished, the $AA_{23}$ generates the output message $ReceiveSB_{AW_2}$ and sends to $WSO_2$'s mail box denoted by $WSO_2$'s name $WSO_2$
  (the corresponding sending action is denoted $s_{WSO_2}(ReceiveSB_{AW_2})$), and then processes the next message from $WSO_2$ recursively.
\end{enumerate}

The above process is described as the following state transitions by APTC.

$AA_{23}=r_{AA_{23}}(ReceiveSB_{WA_2})\cdot AA_{23_1}$

$AA_{23_1}=I_{AA_{23}}\cdot AA_{23_2}$

$AA_{23_2}=s_{WSO_2}(ReceiveSB_{AW_2})\cdot AA_{23}$

By use of the algebraic laws of APTC, $AA_{23}$ can be proven exhibiting desired external behaviors.

$$\tau_{I_{AA_{23}}}(\partial_{\emptyset}(AA_{23}))=r_{AA_{23}}(ReceiveSB_{WA_1})\cdot s_{WSO_2}(ReceiveSB_{AW_2})\cdot \tau_{I_{AA_{23}}}(\partial_{\emptyset}(AA_{23}))$$

(4) SendPBAA ($AA_{24}$)

After $AA_{24}$ is created, the typical process is as follows.

\begin{enumerate}
  \item The $AA_{24}$ receives some messages $SendPB_{WA_2}$ from $WSO_2$ through its mail box denoted by its name $AA_{24}$ (the corresponding reading action is denoted
  $r_{AA_{24}}(SendPB_{WA_2})$);
  \item Then it does some local computations mixed some atomic actions by computation logics, including $\cdot$, $+$, $\between$ and guards, the whole local computations are denoted
  $I_{AA_{24}}$, which is the set of all local atomic actions;
  \item When the local computations are finished, the $AA_{24}$ generates the output message $SendPB_{AW_2}$ and sends to $WSO_2$'s mail box denoted by $WSO_2$'s name $WSO_2$
  (the corresponding sending action is denoted $s_{WSO_2}(SendPB_{AW_2})$), and then processes the next message from $WSO_2$ recursively.
\end{enumerate}

The above process is described as the following state transitions by APTC.

$AA_{24}=r_{AA_{24}}(SendPB_{WA_2})\cdot AA_{24_1}$

$AA_{24_1}=I_{AA_{24}}\cdot AA_{24_2}$

$AA_{24_2}=s_{WSO_2}(SendPB_{AW_2})\cdot AA_{24}$

By use of the algebraic laws of APTC, $AA_{24}$ can be proven exhibiting desired external behaviors.

$$\tau_{I_{AA_{24}}}(\partial_{\emptyset}(AA_{24}))=r_{AA_{24}}(SendPB_{WA_2})\cdot s_{WSO_2}(SendPB_{AW_2})\cdot \tau_{I_{AA_{24}}}(\partial_{\emptyset}(AA_{24}))$$

(5) GetP\&ShipBAA ($AA_{25}$)

After $AA_{25}$ is created, the typical process is as follows.

\begin{enumerate}
  \item The $AA_{25}$ receives some messages $GetP\&ShipB_{WA_2}$ from $WSO_2$ through its mail box denoted by its name $AA_{25}$ (the corresponding reading action is denoted
  $r_{AA_{25}}(GetP\&ShipB_{WA_2})$);
  \item Then it does some local computations mixed some atomic actions by computation logics, including $\cdot$, $+$, $\between$ and guards, the whole local computations are denoted
  $I_{AA_{25}}$, which is the set of all local atomic actions;
  \item When the local computations are finished, the $AA_{25}$ generates the output message $GetP\&ShipB_{AW_2}$ and sends to $WSO_2$'s mail box denoted by $WSO_2$'s name $WSO_2$
  (the corresponding sending action is denoted $s_{WSO_2}(GetP\&ShipB_{AW_2})$), and then processes the next message from $WSO_2$ recursively.
\end{enumerate}

The above process is described as the following state transitions by APTC.

$AA_{25}=r_{AA_{25}}(GetP\&ShipB_{WA_2})\cdot AA_{25_1}$

$AA_{25_1}=I_{AA_{25}}\cdot AA_{25_2}$

$AA_{25_2}=s_{WSO_2}(GetP\&ShipB_{AW_2})\cdot AA_{25}$

By use of the algebraic laws of APTC, $AA_{25}$ can be proven exhibiting desired external behaviors.

$$\tau_{I_{AA_{25}}}(\partial_{\emptyset}(AA_{25}))=r_{AA_{25}}(GetP\&ShipB_{WA_1})\cdot s_{WSO_2}(GetP\&ShipB_{AW_2})\cdot \tau_{I_{AA_{25}}}(\partial_{\emptyset}(AA_{25}))$$

\subsubsection{BookStore WSO}

After BookStore WSO ($WSO_2$) is created, the typical process is as follows.

\begin{enumerate}
  \item The $WSO_2$ receives the initialization message $ReceiveRB_{WW_2}$ from its interface WS $WS_2$ through its mail box by its name $WSO_2$ (the corresponding reading action is denoted
  $r_{WSO_2}(ReceiveRB_{WW_2})$);
  \item The $WSO_2$ may create its AAs in parallel through actions $\mathbf{new}(AA_{21})\parallel \cdots\parallel \mathbf{new}(AA_{25})$ if it is not initialized;
  \item The $WSO_2$ does some local computations mixed some atomic actions by computation logics, including $\cdot$, $+$, $\between$ and guards, the local computations are included into
  $I_{WSO_2}$, which is the set of all local atomic actions;
  \item When the local computations are finished, the $WSO_2$ generates the output messages $ReceiveRB_{WA_2}$ and sends to $AA_{21}$ (the corresponding sending
  action is denoted \\
  $s_{AA_{21}}(ReceiveRB_{WA_2})$);
  \item The $WSO_2$ receives the response message $ReceiveRB_{AW_2}$ from $AA_{21}$ through its mail box by its name $WSO_2$ (the corresponding reading action is denoted
  $r_{WSO_2}(ReceiveRB_{AW_2})$);
  \item The $WSO_2$ does some local computations mixed some atomic actions by computation logics, including $\cdot$, $+$, $\between$ and guards, the local computations are included into
  $I_{WSO_2}$, which is the set of all local atomic actions;
  \item When the local computations are finished, the $WSO_2$ generates the output messages $SendLB_{WA_2}$ and sends to $AA_{22}$ (the corresponding sending
  action is denoted \\
  $s_{AA_{22}}(SendLB_{WA_2})$);
  \item The $WSO_2$ receives the response message $SendLB_{AW_2}$ from $AA_{22}$ through its mail box by its name $WSO_2$ (the corresponding reading action is denoted
  $r_{WSO_2}(SendLB_{AW_2})$);
  \item The $WSO_2$ does some local computations mixed some atomic actions by computation logics, including $\cdot$, $+$, $\between$ and guards, the local computations are included into
  $I_{WSO_2}$, which is the set of all local atomic actions;
  \item When the local computations are finished, the $WSO_2$ generates the output messages $SendLB_{WW_2}$ and sends to $WS_2$ (the corresponding sending
  action is denoted \\
  $s_{WS_2}(SendLB_{WW_2})$);
  \item The $WSO_2$ receives the response message $ReceiveSB_{WW_2}$ from $WS_2$ through its mail box by its name $WSO_2$ (the corresponding reading action is denoted
  $r_{WSO_2}(ReceiveSB_{WW_2})$);
  \item The $WSO_2$ does some local computations mixed some atomic actions by computation logics, including $\cdot$, $+$, $\between$ and guards, the local computations are included into
  $I_{WSO_2}$, which is the set of all local atomic actions;
  \item When the local computations are finished, the $WSO_2$ generates the output messages $ReceiveSB_{WA_2}$ and sends to $AA_{23}$ (the corresponding sending
  action is denoted \\
  $s_{AA_{23}}(ReceiveSB_{WA_2})$);
  \item The $WSO_2$ receives the response message $ReceiveSB_{AW_2}$ from $AA_{23}$ through its mail box by its name $WSO_2$ (the corresponding reading action is denoted
  $r_{WSO_2}(ReceiveSB_{AW_2})$);
  \item The $WSO_2$ does some local computations mixed some atomic actions by computation logics, including $\cdot$, $+$, $\between$ and guards, the local computations are included into
  $I_{WSO_2}$, which is the set of all local atomic actions;
  \item When the local computations are finished, the $WSO_2$ generates the output messages $SendPB_{WA_2}$ and sends to $AA_{24}$ (the corresponding sending
  action is denoted \\
  $s_{AA_{24}}(SendPB_{WA_2})$);
  \item The $WSO_2$ receives the response message $SendPB_{AW_2}$ from $AA_{24}$ through its mail box by its name $WSO_2$ (the corresponding reading action is denoted
  $r_{WSO_2}(SendPB_{AW_2})$);
  \item The $WSO_2$ does some local computations mixed some atomic actions by computation logics, including $\cdot$, $+$, $\between$ and guards, the local computations are included into
  $I_{WSO_2}$, which is the set of all local atomic actions;
  \item When the local computations are finished, the $WSO_2$ generates the output messages $SendPB_{WW_2}$ and sends to $WS_2$ (the corresponding sending
  action is denoted \\
  $s_{WS_2}(SendPB_{WW_2})$);
  \item The $WSO_2$ receives the response message $SendPB_{WW_2}$ from $WS_2$ through its mail box by its name $WSO_2$ (the corresponding reading action is denoted
  $r_{WSO_2}(SendPB_{WW_2})$);
  \item The $WSO_2$ does some local computations mixed some atomic actions by computation logics, including $\cdot$, $+$, $\between$ and guards, the local computations are included into
  $I_{WSO_2}$, which is the set of all local atomic actions;
  \item When the local computations are finished, the $WSO_2$ generates the output messages $GetP\&ShipB_{WA_2}$ and sends to $AA_{25}$ (the corresponding sending
  action is denoted \\
  $s_{AA_{25}}(GetP\&ShipB_{WA_2})$);
  \item The $WSO_2$ receives the response message $GetP\&ShipB_{AW_2}$ from $AA_{25}$ through its mail box by its name $WSO_2$ (the corresponding reading action is denoted
  $r_{WSO_2}(GetP\&ShipB_{AW_2})$);
  \item The $WSO_2$ does some local computations mixed some atomic actions by computation logics, including $\cdot$, $+$, $\between$ and guards, the local computations are included into
  $I_{WSO_2}$, which is the set of all local atomic actions;
  \item When the local computations are finished, the $WSO_2$ generates the output messages $GetP\&ShipB_{WW_2}$ and sends to $WS_2$ (the corresponding sending
  action is denoted \\
  $s_{WS_2}(GetP\&ShipB_{WW_2})$), and then processing the messages from $WS_2$ recursively.
\end{enumerate}

The above process is described as the following state transitions by APTC.

$WSO_2=r_{WSO_2}(ReceiveRB_{WW_2})\cdot WSO_{2_1}$

$WSO_{2_1}=(\{isInitialed(WSO_2)=FLALSE\}\cdot(\mathbf{new}(AA_{21})\parallel \cdots\parallel \mathbf{new}(AA_{25}))+\{isInitialed(WSO_2)=TRUE\})\cdot WSO_{2_2}$

$WSO_{2_2}=I_{WSO_2}\cdot WSO_{2_3}$

$WSO_{2_3}=s_{AA_{21}}(ReceiveRB_{WA_2})\cdot WSO_{2_4}$

$WSO_{2_4}=r_{WSO_2}(ReceiveRB_{AW_2})\cdot WSO_{2_5}$

$WSO_{2_5}=I_{WSO_2}\cdot WSO_{2_6}$

$WSO_{2_6}=s_{AA_{22}}(SendLB_{WA_2})\cdot WSO_{2_7}$

$WSO_{2_7}=r_{WSO_2}(SendLB_{AW_2})\cdot WSO_{2_8}$

$WSO_{2_8}=I_{WSO_2}\cdot WSO_{2_9}$

$WSO_{2_{9}}=s_{WS_2}(SendLB_{WW_2})\cdot WSO_{2_{10}}$

$WSO_{2_{10}}=r_{WSO_2}(ReceiveSB_{WW_2})\cdot WSO_{2_{11}}$

$WSO_{2_{11}}=I_{WSO_2}\cdot WSO_{2_{12}}$

$WSO_{2_{12}}=s_{AA_{23}}(ReceiveSB_{WA_2})\cdot WSO_{2_{13}}$

$WSO_{2_{13}}=r_{WSO_2}(ReceiveSB_{AW_2})\cdot WSO_{2_{14}}$

$WSO_{2_{14}}=I_{WSO_2}\cdot WSO_{2_{15}}$

$WSO_{2_{15}}=s_{AA_{24}}(SendPB_{WA_2})\cdot WSO_{2_{16}}$

$WSO_{2_{16}}=r_{WSO_2}(SendPB_{AW_2})\cdot WSO_{2_{17}}$

$WSO_{2_{17}}=I_{WSO_2}\cdot WSO_{2_{18}}$

$WSO_{2_{18}}=s_{WS_2}(SendPB_{WW_2})\cdot WSO_{2_{19}}$

$WSO_{2_{19}}=r_{WSO_2}(SendPB_{WW_2})\cdot WSO_{2_{20}}$

$WSO_{2_{20}}=I_{WSO_2}\cdot WSO_{2_{21}}$

$WSO_{2_{21}}=s_{AA_{25}}(GetP\&ShipB_{WA_2})\cdot WSO_{2_{22}}$

$WSO_{2_{22}}=r_{WSO_2}(GetP\&ShipB_{AW_2})\cdot WSO_{2_{23}}$

$WSO_{2_{23}}=I_{WSO_2}\cdot WSO_{2_{24}}$

$WSO_{2_{24}}=s_{WS_2}(GetP\&ShipB_{WW_2})\cdot WSO_{2}$

By use of the algebraic laws of APTC, the $WSO_2$ can be proven exhibiting desired external behaviors.

$\tau_{I_{WSO_2}}(\partial_{\emptyset}(WSO_2))=r_{WSO_2}(ReceiveRB_{WW_2})\cdot s_{AA_{21}}(ReceiveRB_{WA_2})\cdot r_{WSO_2}(ReceiveRB_{AW_2})\\
\cdot s_{AA_{22}}(SendLB_{WA_2})\cdot r_{WSO_2}(SendLB_{AW_2})\cdot s_{WS_2}(SendLB_{WW_2})\cdot r_{WSO_2}(ReceiveSB_{WW_2})\\
\cdot s_{AA_{23}}(ReceiveSB_{WA_2})\cdot r_{WSO_2}(ReceiveSB_{AW_2})\cdot s_{AA_{24}}(SendPB_{WA_2})\cdot r_{WSO_2}(SendPB_{AW_2})\\
\cdot s_{WS_2}(SendPB_{WW_2})\cdot r_{WSO_2}(SendPB_{WW_2})\cdot s_{AA_{25}}(GetP\&ShipB_{WA_2})\cdot r_{WSO_2}(GetP\&ShipB_{AW_2})\\
\cdot s_{WS_2}(GetP\&ShipB_{WW_2})\cdot\tau_{I_{WSO_2}}(\partial_{\emptyset}(WSO_2))$

With $I_{WSO_2}$ extended to $I_{WSO_2}\cup\{\{isInitialed(WSO_2)=FLALSE\},\{isInitialed(WSO_2)=TRUE\}\}$.

\subsubsection{BookStore WS}

After BookStore WS ($WS_2$) is created, the typical process is as follows.

\begin{enumerate}
  \item The $WS_2$ receives the initialization message $RequestLB_{WW_{12}}$ from its interface WS $WS_1$ through its mail box by its name $WS_2$ (the corresponding reading action is denoted
  $r_{WS_2}(RequestLB_{WW_{12}})$);
  \item The $WS_2$ may create its $WSO_2$ through actions $\mathbf{new}(WSO_2)$ if it is not initialized;
  \item The $WS_2$ does some local computations mixed some atomic actions by computation logics, including $\cdot$, $+$, $\between$ and guards, the local computations are included into
  $I_{WS_2}$, which is the set of all local atomic actions;
  \item When the local computations are finished, the $WS_2$ generates the output messages \\$ReceiveRB_{WW_2}$ and sends to $WSO_2$ (the corresponding sending
  action is denoted \\
  $s_{WSO_2}(ReceiveRB_{WW_2})$);
  \item The $WS_2$ receives the response message $SendLB_{WW_{2}}$ from $WSO_2$ through its mail box by its name $WS_2$ (the corresponding reading action is denoted
  $r_{WS_2}(SendLB_{WW_2})$);
  \item The $WS_2$ does some local computations mixed some atomic actions by computation logics, including $\cdot$, $+$, $\between$ and guards, the local computations are included into
  $I_{WS_2}$, which is the set of all local atomic actions;
  \item When the local computations are finished, the $WS_2$ generates the output messages \\$SendLB_{WW_{21}}$ and sends to $WS_1$ (the corresponding sending
  action is denoted \\
  $s_{WS_1}(SendLB_{WW_{21}})$);
  \item The $WS_2$ receives the response message $SendSB_{WW_{12}}$ from $WS_1$ through its mail box by its name $WS_2$ (the corresponding reading action is denoted
  $r_{WS_2}(SendSB_{WW_{12}})$);
  \item The $WS_2$ does some local computations mixed some atomic actions by computation logics, including $\cdot$, $+$, $\between$ and guards, the local computations are included into
  $I_{WS_2}$, which is the set of all local atomic actions;
  \item When the local computations are finished, the $WS_2$ generates the output messages \\$ReceiveSB_{WW_2}$ and sends to $WSO_2$ (the corresponding sending
  action is denoted \\
  $s_{WSO_2}(ReceiveSB_{WW_2})$);
  \item The $WS_2$ receives the response message $SendPB_{WW_2}$ from $WSO_2$ through its mail box by its name $WS_2$ (the corresponding reading action is denoted
  $r_{WS_2}(SendPB_{WW_2})$);
  \item The $WS_2$ does some local computations mixed some atomic actions by computation logics, including $\cdot$, $+$, $\between$ and guards, the local computations are included into
  $I_{WS_2}$, which is the set of all local atomic actions;
  \item When the local computations are finished, the $WS_2$ generates the output messages \\$SendPB_{WW_{21}}$ and sends to $WS_1$ (the corresponding sending
  action is denoted \\
  $s_{WS_1}(SendPB_{WW_{21}})$);
  \item The $WS_2$ receives the response message $PayB_{WW_{21}}$ from $WS_1$ through its mail box by its name $WS_2$ (the corresponding reading action is denoted
  $r_{WS_2}(PayB_{WW_{21}})$);
  \item The $WS_2$ does some local computations mixed some atomic actions by computation logics, including $\cdot$, $+$, $\between$ and guards, the local computations are included into
  $I_{WS_2}$, which is the set of all local atomic actions;
  \item When the local computations are finished, the $WS_2$ generates the output messages \\$GetP\&ShipB_{WA_2}$ and sends to $WSO_2$ (the corresponding sending
  action is denoted \\$s_{WSO_2}(GetP\&ShipB_{WW_2})$);
  \item The $WS_2$ receives the response message $GetP\&ShipB_{WW_2}$ from $WSO_2$ through its mail box by its name $WS_2$ (the corresponding reading action is denoted
  $r_{WS_2}(GetP\&ShipB_{WW_2})$);
  \item The $WS_2$ does some local computations mixed some atomic actions by computation logics, including $\cdot$, $+$, $\between$ and guards, the local computations are included into
  $I_{WS_2}$, which is the set of all local atomic actions;
  \item When the local computations are finished, the $WS_2$ generates the output messages \\$GetP\&ShipB_{WC_2}$ and sends to $WSC$ (the corresponding sending
  action is denoted \\$s_{WSC}(GetP\&ShipB_{WC_2})$), and then processing the messages from $WS_1$ recursively.
\end{enumerate}

The above process is described as the following state transitions by APTC.

$WS_2=r_{WS_2}(RequestLB_{WW_{12}})\cdot WS_{2_1}$

$WS_{2_1}=(\{isInitialed(WS_2)=FLALSE\}\cdot\mathbf{new}(WSO_2)+\{isInitialed(WS_2)=TRUE\})\cdot WS_{2_2}$

$WS_{2_2}=I_{WS_2}\cdot WS_{2_3}$

$WS_{2_3}=s_{WSO_2}(ReceiveRB_{WW_2})\cdot WS_{2_4}$

$WS_{2_4}=r_{WS_2}(SendLB_{WW_2})\cdot WS_{2_5}$

$WS_{2_5}=I_{WS_2}\cdot WS_{2_6}$

$WS_{2_6}=s_{WS_1}(SendLB_{WW_{21}})\cdot WS_{2_7}$

$WS_{2_7}=r_{WS_2}(SendSB_{WW_{12}})\cdot WS_{2_8}$

$WS_{2_8}=I_{WS_2}\cdot WS_{2_9}$

$WS_{2_{9}}=s_{WSO_2}(ReceiveSB_{WW_2})\cdot WS_{2_{10}}$

$WS_{2_{10}}=r_{WS_2}(SendPB_{WW_2})\cdot WS_{2_{11}}$

$WS_{2_{11}}=I_{WS_2}\cdot WS_{2_{12}}$

$WS_{2_{12}}=s_{WS_1}(SendPB_{WW_{21}})\cdot WS_{2_{13}}$

$WS_{2_{13}}=r_{WS_2}(PayB_{WW_{21}})\cdot WS_{2_{14}}$

$WS_{2_{14}}=I_{WS_2}\cdot WS_{2_{15}}$

$WS_{2_{15}}=s_{WSO_2}(GetP\&ShipB_{WW_2})\cdot WS_{2_{16}}$

$WS_{2_{16}}=r_{WS_2}(GetP\&ShipB_{WW_2})\cdot WS_{2_{17}}$

$WS_{2_{17}}=I_{WS_2}\cdot WS_{2_{18}}$

$WS_{2_{18}}=s_{WSC}(GetP\&ShipB_{WC_2})\cdot WS_{2}$

By use of the algebraic laws of APTC, the $WS_2$ can be proven exhibiting desired external behaviors.

$\tau_{I_{WS_2}}(\partial_{\emptyset}(WS_2))=r_{WS_2}(RequestLB_{WW_{12}})\cdot s_{WSO_2}(ReceiveRB_{WW_2})\\
\cdot r_{WS_2}(SendLB_{WW_2})\cdot s_{WS_1}(SendLB_{WW_{21}})\cdot r_{WS_2}(SendSB_{WW_{12}})\\
\cdot s_{WSO_2}(ReceiveSB_{WW_2})\cdot r_{WS_2}(SendPB_{WW_2})\cdot s_{WS_1}(SendPB_{WW_{21}})\\
\cdot r_{WS_2}(PayB_{WW_{21}})\cdot s_{WSO_2}(GetP\&ShipB_{WW_2})\cdot r_{WS_2}(GetP\&ShipB_{WW_2})\\
\cdot s_{WSC}(GetP\&ShipB_{WC})\cdot\tau_{I_{WS_2}}(\partial_{\emptyset}(WS_2))$

With $I_{WS_2}$ extended to $I_{WS_2}\cup\{\{isInitialed(WS_2)=FLALSE\},\{isInitialed(WS_2)=TRUE\}\}$.

\subsubsection{BuyingBooks WSC}

After $WSC$ is created, the typical process is as follows.

\begin{enumerate}
  \item The WSC receives the initialization message $DI_{WSC}$ from the outside through its mail box by its name $WSC$ (the corresponding reading action is denoted
  $r_{WSC}(DI_{WSC})$);
  \item The WSC may create its WSs through actions $\mathbf{new}(WS_1)\parallel \mathbf{new}(WS_2)$ if it is not initialized;
  \item The WSC does some local computations mixed some atomic actions by computation logics, including $\cdot$, $+$, $\between$ and guards, the local computations are included into
  $I_{WSC}$, which is the set of all local atomic actions;
  \item When the local computations are finished, the WSC generates the output messages \\$ReBuyingBooks_{WC_1}$ and sends to $WS_1$ (the corresponding sending
  action is denoted $s_{WS_1}(ReBuyingBooks_{WC_1})$);
  \item The WSC receives the result message $GetP\&ShipB_{WC_2}$ from $WS_2$ through its mail box by its name $WSC$ (the corresponding reading action is denoted
  $r_{WSC}(GetP\&ShipB_{WC_2})$);
  \item The WSC does some local computations mixed some atomic actions by computation logics, including $\cdot$, $+$, $\between$ and guards, the local computations are included into
  $I_{WSC}$, which is the set of all local atomic actions;
  \item When the local computations are finished, the WSC generates the output messages $DO_{WSC}$ and sends to the outside (the corresponding sending
  action is denoted $s_{O}(DO_{WSC})$), and then processes the next message from the outside.
\end{enumerate}

The above process is described as the following state transitions by APTC.

$WSC=r_{WSC}(DI_{WSC})\cdot WSC_1$

$WSC_1=(\{isInitialed(WSC)=FLALSE\}\cdot(\mathbf{new}(WS_1)\parallel \mathbf{new}(WS_2))+\{isInitialed(WSC)=TRUE\})\cdot WSC_2$

$WSC_2=I_{WSC}\cdot WSC_3$

$WSC_3=s_{WS_1}(ReBuyingBooks_{WC_1})\cdot WSC_4$

$WSC_4=r_{WSC}(GetP\&ShipB_{WC_2})\cdot WSC_5$

$WSC_5=I_{WSC}\cdot WSC_6$

$WSC_6=s_{O}(DO_{WSC})\cdot WSC$

By use of the algebraic laws of APTC, the WSC can be proven exhibiting desired external behaviors.

$\tau_{I_{WSC}}(\partial_{\emptyset}(WSC))=r_{WSC}(DI_{WSC})\cdot s_{WS_1}(ReBuyingBooks_{WC_1})\cdot r_{WSC}(GetP\&ShipB_{WC_2})\\
\cdot s_{O}(DO_{WSC})\cdot \tau_{I_{WSC}}(\partial_{\emptyset}(WSC))$

With $I_{WSC}$ extended to $I_{WSC}\cup\{\{isInitialed(WSC)=FLALSE\},\{isInitialed(WSC)=TRUE\}\}$.

\subsubsection{Putting All Together into A Whole}

Now, we can put all actors together into a whole, including all AAs, WSOs, WSs, and WSC, according to the buying books exmple as illustrated in Figure \ref{ExaWSC}.
The whole actor system \\
$WSC=WSC\quad WS_1\quad WS_2\quad WSO_1\quad WSO_2\quad AA_{11}\quad AA_{12}\quad AA_{13}\quad AA_{14}\quad AA_{15}\quad \\
AA_{21}\quad AA_{22}\quad AA_{23}\quad AA_{24}\quad AA_{25}$ can be represented by the following process term of APTC.

$\tau_I(\partial_H(WSC))=\tau_I(\partial_H(WSC\between WS_1\between WS_2\between WSO_1\between WSO_2\between AA_{11}\between AA_{12}\between AA_{13}\between AA_{14}\between AA_{15}\between
AA_{21}\between AA_{22}\between AA_{23}\between AA_{24}\between AA_{25}))$

Among all the actors, there are synchronous communications. The actor's reading and to the same actor's sending actions with the same type messages may cause communications. If to the actor's
sending action occurs before the the same actions reading action, an asynchronous communication will occur; otherwise, a deadlock $\delta$ will be caused.

There are seven kinds of asynchronous communications as follows.

(1) The communications between $WSO_1$ and its AAs with the following constraints.

$s_{AA_{11}}(RequestLB_{WA_1})\leq r_{AA_{11}}(RequestLB_{WA_1})$

$s_{WSO_1}(RequestLB_{AW_1})\leq r_{WSO_1}(RequestLB_{AW_1})$

$s_{AA_{12}}(ReceiveLB_{WA_1})\leq r_{AA_{12}}(ReceiveLB_{WA_1})$

$s_{WSO_1}(ReceiveLB_{AW_1})\leq r_{WSO_1}(ReceiveLB_{AW_1})$

$s_{AA_13}(SendSB_{WA_1})\leq r_{AA_13}(SendSB_{WA_1})$

$s_{WSO_1}(SendSB_{AW_1})\leq r_{WSO_1}(SendSB_{AW_1})$

$s_{AA_{14}}(ReceivePB_{WA_1})\leq r_{AA_{14}}(ReceivePB_{WA_1})$

$s_{WSO_1}(ReceivePB_{AW_1})\leq r_{WSO_1}(ReceivePB_{AW_1})$

$s_{AA_{15}}(PayB_{WA_1})\leq r_{AA_{15}}(PayB_{WA_1})$

$s_{WSO_1}(PayB_{AW_1})\leq r_{WSO_1}(PayB_{AW_1})$

(2) The communications between $WSO_1$ and its interface WS $WS_1$ with the following constraints.

$s_{WSO_1}(ReBuyingBooks_{WW_1})\leq r_{WSO_1}(ReBuyingBooks_{WW_1})$

$s_{WS_1}(RequestLB_{WW_1})\leq r_{WS_1}(RequestLB_{WW_1})$

$s_{WSO_1}(ReceiveLB_{WW_1})\leq r_{WSO_1}(ReceiveLB_{WW_1})$

$s_{WS_1}(SendSB_{WW_1})\leq r_{WS_1}(SendSB_{WW_1})$

$s_{WSO_1}(ReceivePB_{WW_1})\leq r_{WSO_1}(ReceivePB_{WW_1})$

$s_{WS_1}(PayB_{WW_1})\leq r_{WS_1}(PayB_{WW_1})$

(3) The communications between $WSO_2$ and its AAs with the following constraints.

$s_{AA_{21}}(ReceiveRB_{WA_2})\leq r_{AA_{21}}(ReceiveRB_{WA_2})$

$s_{WSO_2}(ReceiveRB_{AW_2})\leq r_{WSO_2}(ReceiveRB_{AW_2})$

$s_{AA_{22}}(SendLB_{WA_2})\leq r_{AA_{22}}(SendLB_{WA_2})$

$s_{WSO_2}(SendLB_{AW_2})\leq r_{WSO_2}(SendLB_{AW_2})$

$s_{AA_{23}}(ReceiveSB_{WA_2})\leq r_{AA_{23}}(ReceiveSB_{WA_2})$

$s_{WSO_2}(ReceiveSB_{AW_2})\leq r_{WSO_2}(ReceiveSB_{AW_2})$

$s_{AA_{24}}(SendPB_{WA_2})\leq r_{AA_{24}}(SendPB_{WA_2})$

$s_{WSO_2}(SendPB_{AW_2})\leq r_{WSO_2}(SendPB_{AW_2})$

$s_{AA_{25}}(GetP\&ShipB_{WA_2})\leq r_{AA_{25}}(GetP\&ShipB_{WA_2})$

$s_{WSO_2}(GetP\&ShipB_{AW_2})\leq r_{WSO_2}(GetP\&ShipB_{AW_2})$

(4) The communications between $WSO_2$ and its interface WS $WS_2$ with the following constraints.

$s_{WSO_2}(ReceiveRB_{WW_2})\leq r_{WSO_2}(ReceiveRB_{WW_2})$

$s_{WS_2}(SendLB_{WW_2})\leq r_{WS_2}(SendLB_{WW_2})$

$s_{WSO_2}(ReceiveSB_{WW_2})\leq r_{WSO_2}(ReceiveSB_{WW_2})$

$s_{WS_2}(SendPB_{WW_2})\leq r_{WS_2}(SendPB_{WW_2})$

$s_{WSO_2}(SendPB_{WW_2})\leq r_{WSO_2}(SendPB_{WW_2})$

$s_{WS_2}(GetP\&ShipB_{WW_2})\leq r_{WS_2}(GetP\&ShipB_{WW_2})$

(5) The communications between $WS_1$ and $WS_2$ with the following constraints.

$s_{WS_2}(RequestLB_{WW_{12}})\leq r_{WS_2}(RequestLB_{WW_{12}})$

$s_{WS_1}(SendLB_{WW_{21}})\leq r_{WS_1}(SendLB_{WW_{21}})$

$s_{WS_2}(SendSB_{WW_{12}})\leq r_{WS_2}(SendSB_{WW_{12}})$

$s_{WS_1}(SendPB_{WW_{21}})\leq r_{WS_1}(SendPB_{WW_{21}})$

$s_{WS_2}(PayB_{WW_{12}})\leq r_{WS_2}(PayB_{WW_{12}})$

(6) The communications between $WS_1$ and its WSC $WSC$ with the following constraints.

$s_{WS_1}(ReBuyingBooks_{WC_1})\leq r_{WS_1}(ReBuyingBooks_{WC_1})$

(7) The communications between $WS_2$ and its WSC $WSC$ with the following constraints.

$s_{WSC}(GetP\&ShipB_{WC})\leq r_{WSC}(GetP\&ShipB_{WC})$

So, the set $H$ and $I$ can be defined as follows.

$H=\{s_{AA_{11}}(RequestLB_{WA_1}), r_{AA_{11}}(RequestLB_{WA_1}),\\
s_{WSO_1}(RequestLB_{AW_1}), r_{WSO_1}(RequestLB_{AW_1}),\\
s_{AA_{12}}(ReceiveLB_{WA_1}), r_{AA_{12}}(ReceiveLB_{WA_1}),\\
s_{WSO_1}(ReceiveLB_{AW_1}), r_{WSO_1}(ReceiveLB_{AW_1}),\\
s_{AA_13}(SendSB_{WA_1}), r_{AA_13}(SendSB_{WA_1}),\\
s_{WSO_1}(SendSB_{AW_1}), r_{WSO_1}(SendSB_{AW_1}),\\
s_{AA_{14}}(ReceivePB_{WA_1}), r_{AA_{14}}(ReceivePB_{WA_1}),\\
s_{WSO_1}(ReceivePB_{AW_1}), r_{WSO_1}(ReceivePB_{AW_1}),\\
s_{AA_{15}}(PayB_{WA_1}), r_{AA_{15}}(PayB_{WA_1}),\\
s_{WSO_1}(PayB_{AW_1}), r_{WSO_1}(PayB_{AW_1}),\\
s_{WSO_1}(ReBuyingBooks_{WW_1}), r_{WSO_1}(ReBuyingBooks_{WW_1}),\\
s_{WS_1}(RequestLB_{WW_1}), r_{WS_1}(RequestLB_{WW_1}),\\
s_{WSO_1}(ReceiveLB_{WW_1}), r_{WSO_1}(ReceiveLB_{WW_1}),\\
s_{WS_1}(SendSB_{WW_1}), r_{WS_1}(SendSB_{WW_1}),\\
s_{WSO_1}(ReceivePB_{WW_1}), r_{WSO_1}(ReceivePB_{WW_1}),\\
s_{WS_1}(PayB_{WW_1}), r_{WS_1}(PayB_{WW_1}),\\
s_{AA_{21}}(ReceiveRB_{WA_2}), r_{AA_{21}}(ReceiveRB_{WA_2}),\\
s_{WSO_2}(ReceiveRB_{AW_2}), r_{WSO_2}(ReceiveRB_{AW_2}),\\
s_{AA_{22}}(SendLB_{WA_2}), r_{AA_{22}}(SendLB_{WA_2}),\\
s_{WSO_2}(SendLB_{AW_2}), r_{WSO_2}(SendLB_{AW_2}),\\
s_{AA_{23}}(ReceiveSB_{WA_2}), r_{AA_{23}}(ReceiveSB_{WA_2}),\\
s_{WSO_2}(ReceiveSB_{AW_2}), r_{WSO_2}(ReceiveSB_{AW_2}),\\
s_{AA_{24}}(SendPB_{WA_2}), r_{AA_{24}}(SendPB_{WA_2}),\\
s_{WSO_2}(SendPB_{AW_2}), r_{WSO_2}(SendPB_{AW_2}),\\
s_{AA_{25}}(GetP\&ShipB_{WA_2}), r_{AA_{25}}(GetP\&ShipB_{WA_2}),\\
s_{WSO_2}(GetP\&ShipB_{AW_2}), r_{WSO_2}(GetP\&ShipB_{AW_2}),\\
s_{WSO_2}(ReceiveRB_{WW_2}), r_{WSO_2}(ReceiveRB_{WW_2}),\\
s_{WS_2}(SendLB_{WW_2}), r_{WS_2}(SendLB_{WW_2}),\\
s_{WSO_2}(ReceiveSB_{WW_2}), r_{WSO_2}(ReceiveSB_{WW_2}),\\
s_{WS_2}(SendPB_{WW_2}), r_{WS_2}(SendPB_{WW_2}),\\
s_{WSO_2}(SendPB_{WW_2}), r_{WSO_2}(SendPB_{WW_2}),\\
s_{WS_2}(GetP\&ShipB_{WW_2}), r_{WS_2}(GetP\&ShipB_{WW_2}),\\
s_{WS_2}(RequestLB_{WW_{12}}), r_{WS_2}(RequestLB_{WW_{12}}),\\
s_{WS_1}(SendLB_{WW_{21}}), r_{WS_1}(SendLB_{WW_{21}}),\\
s_{WS_2}(SendSB_{WW_{12}}), r_{WS_2}(SendSB_{WW_{12}}),\\
s_{WS_1}(SendPB_{WW_{21}}), r_{WS_1}(SendPB_{WW_{21}}),\\
s_{WS_2}(PayB_{WW_{12}}), r_{WS_2}(PayB_{WW_{12}}),\\
s_{WS_1}(ReBuyingBooks_{WC_1}), r_{WS_1}(ReBuyingBooks_{WC_1}),\\
s_{WSC}(GetP\&ShipB_{WC}), r_{WSC}(GetP\&ShipB_{WC})\\
|s_{AA_{11}}(RequestLB_{WA_1})\nleq r_{AA_{11}}(RequestLB_{WA_1}),\\
s_{WSO_1}(RequestLB_{AW_1})\nleq r_{WSO_1}(RequestLB_{AW_1}),\\
s_{AA_{12}}(ReceiveLB_{WA_1})\nleq r_{AA_{12}}(ReceiveLB_{WA_1}),\\
s_{WSO_1}(ReceiveLB_{AW_1})\nleq r_{WSO_1}(ReceiveLB_{AW_1}),\\
s_{AA_13}(SendSB_{WA_1})\nleq r_{AA_13}(SendSB_{WA_1}),\\
s_{WSO_1}(SendSB_{AW_1})\nleq r_{WSO_1}(SendSB_{AW_1}),\\
s_{AA_{14}}(ReceivePB_{WA_1})\nleq r_{AA_{14}}(ReceivePB_{WA_1}),\\
s_{WSO_1}(ReceivePB_{AW_1})\nleq r_{WSO_1}(ReceivePB_{AW_1}),\\
s_{AA_{15}}(PayB_{WA_1})\nleq r_{AA_{15}}(PayB_{WA_1}),\\
s_{WSO_1}(PayB_{AW_1})\nleq r_{WSO_1}(PayB_{AW_1}),\\
s_{WSO_1}(ReBuyingBooks_{WW_1})\nleq r_{WSO_1}(ReBuyingBooks_{WW_1}),\\
s_{WS_1}(RequestLB_{WW_1})\nleq r_{WS_1}(RequestLB_{WW_1}),\\
s_{WSO_1}(ReceiveLB_{WW_1})\nleq r_{WSO_1}(ReceiveLB_{WW_1}),\\
s_{WS_1}(SendSB_{WW_1})\nleq r_{WS_1}(SendSB_{WW_1}),\\
s_{WSO_1}(ReceivePB_{WW_1})\nleq r_{WSO_1}(ReceivePB_{WW_1}),\\
s_{WS_1}(PayB_{WW_1})\nleq r_{WS_1}(PayB_{WW_1}),\\
s_{AA_{21}}(ReceiveRB_{WA_2})\nleq r_{AA_{21}}(ReceiveRB_{WA_2}),\\
s_{WSO_2}(ReceiveRB_{AW_2})\nleq r_{WSO_2}(ReceiveRB_{AW_2}),\\
s_{AA_{22}}(SendLB_{WA_2})\nleq r_{AA_{22}}(SendLB_{WA_2}),\\
s_{WSO_2}(SendLB_{AW_2})\nleq r_{WSO_2}(SendLB_{AW_2}),\\
s_{AA_{23}}(ReceiveSB_{WA_2})\nleq r_{AA_{23}}(ReceiveSB_{WA_2}),\\
s_{WSO_2}(ReceiveSB_{AW_2})\nleq r_{WSO_2}(ReceiveSB_{AW_2}),\\
s_{AA_{24}}(SendPB_{WA_2})\nleq r_{AA_{24}}(SendPB_{WA_2}),\\
s_{WSO_2}(SendPB_{AW_2})\nleq r_{WSO_2}(SendPB_{AW_2}),\\
s_{AA_{25}}(GetP\&ShipB_{WA_2})\nleq r_{AA_{25}}(GetP\&ShipB_{WA_2}),\\
s_{WSO_2}(GetP\&ShipB_{AW_2})\nleq r_{WSO_2}(GetP\&ShipB_{AW_2}),\\
s_{WSO_2}(ReceiveRB_{WW_2})\nleq r_{WSO_2}(ReceiveRB_{WW_2}),\\
s_{WS_2}(SendLB_{WW_2})\nleq r_{WS_2}(SendLB_{WW_2}),\\
s_{WSO_2}(ReceiveSB_{WW_2})\nleq r_{WSO_2}(ReceiveSB_{WW_2}),\\
s_{WS_2}(SendPB_{WW_2})\nleq r_{WS_2}(SendPB_{WW_2}),\\
s_{WSO_2}(SendPB_{WW_2})\nleq r_{WSO_2}(SendPB_{WW_2}),\\
s_{WS_2}(GetP\&ShipB_{WW_2})\nleq r_{WS_2}(GetP\&ShipB_{WW_2}),\\
s_{WS_2}(RequestLB_{WW_{12}})\nleq r_{WS_2}(RequestLB_{WW_{12}}),\\
s_{WS_1}(SendLB_{WW_{21}})\nleq r_{WS_1}(SendLB_{WW_{21}}),\\
s_{WS_2}(SendSB_{WW_{12}})\nleq r_{WS_2}(SendSB_{WW_{12}}),\\
s_{WS_1}(SendPB_{WW_{21}})\nleq r_{WS_1}(SendPB_{WW_{21}}),\\
s_{WS_2}(PayB_{WW_{12}})\nleq r_{WS_2}(PayB_{WW_{12}}),\\
s_{WS_1}(ReBuyingBooks_{WC_1})\nleq r_{WS_1}(ReBuyingBooks_{WC_1}),\\
s_{WSC}(GetP\&ShipB_{WC})\nleq r_{WSC}(GetP\&ShipB_{WC})\}$

$I=\{s_{AA_{11}}(RequestLB_{WA_1}), r_{AA_{11}}(RequestLB_{WA_1}),\\
s_{WSO_1}(RequestLB_{AW_1}), r_{WSO_1}(RequestLB_{AW_1}),\\
s_{AA_{12}}(ReceiveLB_{WA_1}), r_{AA_{12}}(ReceiveLB_{WA_1}),\\
s_{WSO_1}(ReceiveLB_{AW_1}), r_{WSO_1}(ReceiveLB_{AW_1}),\\
s_{AA_13}(SendSB_{WA_1}), r_{AA_13}(SendSB_{WA_1}),\\
s_{WSO_1}(SendSB_{AW_1}), r_{WSO_1}(SendSB_{AW_1}),\\
s_{AA_{14}}(ReceivePB_{WA_1}), r_{AA_{14}}(ReceivePB_{WA_1}),\\
s_{WSO_1}(ReceivePB_{AW_1}), r_{WSO_1}(ReceivePB_{AW_1}),\\
s_{AA_{15}}(PayB_{WA_1}), r_{AA_{15}}(PayB_{WA_1}),\\
s_{WSO_1}(PayB_{AW_1}), r_{WSO_1}(PayB_{AW_1}),\\
s_{WSO_1}(ReBuyingBooks_{WW_1}), r_{WSO_1}(ReBuyingBooks_{WW_1}),\\
s_{WS_1}(RequestLB_{WW_1}), r_{WS_1}(RequestLB_{WW_1}),\\
s_{WSO_1}(ReceiveLB_{WW_1}), r_{WSO_1}(ReceiveLB_{WW_1}),\\
s_{WS_1}(SendSB_{WW_1}), r_{WS_1}(SendSB_{WW_1}),\\
s_{WSO_1}(ReceivePB_{WW_1}), r_{WSO_1}(ReceivePB_{WW_1}),\\
s_{WS_1}(PayB_{WW_1}), r_{WS_1}(PayB_{WW_1}),\\
s_{AA_{21}}(ReceiveRB_{WA_2}), r_{AA_{21}}(ReceiveRB_{WA_2}),\\
s_{WSO_2}(ReceiveRB_{AW_2}), r_{WSO_2}(ReceiveRB_{AW_2}),\\
s_{AA_{22}}(SendLB_{WA_2}), r_{AA_{22}}(SendLB_{WA_2}),\\
s_{WSO_2}(SendLB_{AW_2}), r_{WSO_2}(SendLB_{AW_2}),\\
s_{AA_{23}}(ReceiveSB_{WA_2}), r_{AA_{23}}(ReceiveSB_{WA_2}),\\
s_{WSO_2}(ReceiveSB_{AW_2}), r_{WSO_2}(ReceiveSB_{AW_2}),\\
s_{AA_{24}}(SendPB_{WA_2}), r_{AA_{24}}(SendPB_{WA_2}),\\
s_{WSO_2}(SendPB_{AW_2}), r_{WSO_2}(SendPB_{AW_2}),\\
s_{AA_{25}}(GetP\&ShipB_{WA_2}), r_{AA_{25}}(GetP\&ShipB_{WA_2}),\\
s_{WSO_2}(GetP\&ShipB_{AW_2}), r_{WSO_2}(GetP\&ShipB_{AW_2}),\\
s_{WSO_2}(ReceiveRB_{WW_2}), r_{WSO_2}(ReceiveRB_{WW_2}),\\
s_{WS_2}(SendLB_{WW_2}), r_{WS_2}(SendLB_{WW_2}),\\
s_{WSO_2}(ReceiveSB_{WW_2}), r_{WSO_2}(ReceiveSB_{WW_2}),\\
s_{WS_2}(SendPB_{WW_2}), r_{WS_2}(SendPB_{WW_2}),\\
s_{WSO_2}(SendPB_{WW_2}), r_{WSO_2}(SendPB_{WW_2}),\\
s_{WS_2}(GetP\&ShipB_{WW_2}), r_{WS_2}(GetP\&ShipB_{WW_2}),\\
s_{WS_2}(RequestLB_{WW_{12}}), r_{WS_2}(RequestLB_{WW_{12}}),\\
s_{WS_1}(SendLB_{WW_{21}}), r_{WS_1}(SendLB_{WW_{21}}),\\
s_{WS_2}(SendSB_{WW_{12}}), r_{WS_2}(SendSB_{WW_{12}}),\\
s_{WS_1}(SendPB_{WW_{21}}), r_{WS_1}(SendPB_{WW_{21}}),\\
s_{WS_2}(PayB_{WW_{12}}), r_{WS_2}(PayB_{WW_{12}}),\\
s_{WS_1}(ReBuyingBooks_{WC_1}), r_{WS_1}(ReBuyingBooks_{WC_1}),\\
s_{WSC}(GetP\&ShipB_{WC}), r_{WSC}(GetP\&ShipB_{WC})\\
|s_{AA_{11}}(RequestLB_{WA_1})\leq r_{AA_{11}}(RequestLB_{WA_1}),\\
s_{WSO_1}(RequestLB_{AW_1})\leq r_{WSO_1}(RequestLB_{AW_1}),\\
s_{AA_{12}}(ReceiveLB_{WA_1})\leq r_{AA_{12}}(ReceiveLB_{WA_1}),\\
s_{WSO_1}(ReceiveLB_{AW_1})\leq r_{WSO_1}(ReceiveLB_{AW_1}),\\
s_{AA_13}(SendSB_{WA_1})\leq r_{AA_13}(SendSB_{WA_1}),\\
s_{WSO_1}(SendSB_{AW_1})\leq r_{WSO_1}(SendSB_{AW_1}),\\
s_{AA_{14}}(ReceivePB_{WA_1})\leq r_{AA_{14}}(ReceivePB_{WA_1}),\\
s_{WSO_1}(ReceivePB_{AW_1})\leq r_{WSO_1}(ReceivePB_{AW_1}),\\
s_{AA_{15}}(PayB_{WA_1})\leq r_{AA_{15}}(PayB_{WA_1}),\\
s_{WSO_1}(PayB_{AW_1})\leq r_{WSO_1}(PayB_{AW_1}),\\
s_{WSO_1}(ReBuyingBooks_{WW_1})\leq r_{WSO_1}(ReBuyingBooks_{WW_1}),\\
s_{WS_1}(RequestLB_{WW_1})\leq r_{WS_1}(RequestLB_{WW_1}),\\
s_{WSO_1}(ReceiveLB_{WW_1})\leq r_{WSO_1}(ReceiveLB_{WW_1}),\\
s_{WS_1}(SendSB_{WW_1})\leq r_{WS_1}(SendSB_{WW_1}),\\
s_{WSO_1}(ReceivePB_{WW_1})\leq r_{WSO_1}(ReceivePB_{WW_1}),\\
s_{WS_1}(PayB_{WW_1})\leq r_{WS_1}(PayB_{WW_1}),\\
s_{AA_{21}}(ReceiveRB_{WA_2})\leq r_{AA_{21}}(ReceiveRB_{WA_2}),\\
s_{WSO_2}(ReceiveRB_{AW_2})\leq r_{WSO_2}(ReceiveRB_{AW_2}),\\
s_{AA_{22}}(SendLB_{WA_2})\leq r_{AA_{22}}(SendLB_{WA_2}),\\
s_{WSO_2}(SendLB_{AW_2})\leq r_{WSO_2}(SendLB_{AW_2}),\\
s_{AA_{23}}(ReceiveSB_{WA_2})\leq r_{AA_{23}}(ReceiveSB_{WA_2}),\\
s_{WSO_2}(ReceiveSB_{AW_2})\leq r_{WSO_2}(ReceiveSB_{AW_2}),\\
s_{AA_{24}}(SendPB_{WA_2})\leq r_{AA_{24}}(SendPB_{WA_2}),\\
s_{WSO_2}(SendPB_{AW_2})\leq r_{WSO_2}(SendPB_{AW_2}),\\
s_{AA_{25}}(GetP\&ShipB_{WA_2})\leq r_{AA_{25}}(GetP\&ShipB_{WA_2}),\\
s_{WSO_2}(GetP\&ShipB_{AW_2})\leq r_{WSO_2}(GetP\&ShipB_{AW_2}),\\
s_{WSO_2}(ReceiveRB_{WW_2})\leq r_{WSO_2}(ReceiveRB_{WW_2}),\\
s_{WS_2}(SendLB_{WW_2})\leq r_{WS_2}(SendLB_{WW_2}),\\
s_{WSO_2}(ReceiveSB_{WW_2})\leq r_{WSO_2}(ReceiveSB_{WW_2}),\\
s_{WS_2}(SendPB_{WW_2})\leq r_{WS_2}(SendPB_{WW_2}),\\
s_{WSO_2}(SendPB_{WW_2})\leq r_{WSO_2}(SendPB_{WW_2}),\\
s_{WS_2}(GetP\&ShipB_{WW_2})\leq r_{WS_2}(GetP\&ShipB_{WW_2}),\\
s_{WS_2}(RequestLB_{WW_{12}})\leq r_{WS_2}(RequestLB_{WW_{12}}),\\
s_{WS_1}(SendLB_{WW_{21}})\leq r_{WS_1}(SendLB_{WW_{21}}),\\
s_{WS_2}(SendSB_{WW_{12}})\leq r_{WS_2}(SendSB_{WW_{12}}),\\
s_{WS_1}(SendPB_{WW_{21}})\leq r_{WS_1}(SendPB_{WW_{21}}),\\
s_{WS_2}(PayB_{WW_{12}})\leq r_{WS_2}(PayB_{WW_{12}}),\\
s_{WS_1}(ReBuyingBooks_{WC_1})\leq r_{WS_1}(ReBuyingBooks_{WC_1}),\\
s_{WSC}(GetP\&ShipB_{WC})\leq r_{WSC}(GetP\&ShipB_{WC})\}\\
\cup I_{AA_{11}}\cup I_{AA_{12}}\cup I_{AA_{13}}\cup I_{AA_{14}}\cup I_{AA_{15}}\cup I_{AA_{21}}\cup I_{AA_{22}}\cup I_{AA_{23}}\cup I_{AA_{24}}\cup I_{AA_{25}}
\cup I_{WSO_1}\cup I_{WSO_2}\cup I_{WS_1}\cup I_{WS_2}\cup I_{WSC}$

Then, we can get the following conclusion.

\begin{theorem}
The whole actor system of buying books example illustrated in Figure \ref{ExaWSC} can exhibits desired external behaviors.
\end{theorem}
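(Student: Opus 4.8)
The plan is to follow exactly the template established in the ABP correctness proof of Section~\ref{app} and reused for the Map-Reduce, GFS and cloud-resource examples: expand the large merge with the algebraic laws of APTC, encapsulate with $\partial_H$, capture the result as a guarded linear recursive specification, and finally abstract with $\tau_I$ to collapse the internal protocol down to its observable skeleton. Concretely, I would show
$$\tau_I(\partial_H(WSC)) = r_{WSC}(DI_{WSC})\cdot s_{O}(DO_{WSC})\cdot \tau_I(\partial_H(WSC)).$$

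First I would expand the parallel composition $WSC\between WS_1\between WS_2\between WSO_1\between WSO_2\between AA_{11}\between\cdots\between AA_{25}$. Applying $P1$ splits each merge into a left-parallel summand and a communication-merge summand; unfolding the recursive actor definitions with $RDP$ and repeatedly using $P6$ and $C14$ rewrites every matched sending/receiving pair into a single communication action $c_\bullet$, while unmatched interleavings survive as candidate summands. Because the state-transition specifications of the fifteen actors are strictly sequential (each actor blocks on a read before it may write), the only summands that can actually fire are those respecting the causal constraints collected in $H$ and $I$, namely those in which every $s_\bullet$ precedes its matching $r_\bullet$.

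Second I would apply $\partial_H$. Since $H$ gathers precisely the send/receive literals that violate the ordering $s_\bullet\leq r_\bullet$, the axioms $D1$--$D6$ together with $A6,A7$ turn each order-violating interleaving into $\delta$ and absorb it, leaving a single deterministic chain of communications $c_\bullet$ interspersed with the local-computation blocks $I_{AA_{ij}}$, $I_{WSO_k}$, $I_{WS_k}$, $I_{WSC}$. I would record this chain as a guarded linear recursive specification $E$ with head variable $X_1=\partial_H(WSC\between\cdots\between AA_{25})$, introducing one recursion variable per reachable global state, exactly as $\langle X_1|E\rangle$ was built for the ABP protocol. Third I would apply $\tau_I$: all the $c_\bullet$ and all local atomic actions lie in $I$, so $TI1$--$TI6$ rename them to $\tau$, then $B1,B2$ collapse the long silent chain and $RSP$ (via Theorem~\ref{CCFARG}, with $CFAR$ available to dispatch any residual $\tau$-cycle) identifies the solution with the one-step loop displayed above.

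The hard part will be the bookkeeping rather than any conceptual novelty: the handshake is a linear sequence of roughly thirty communications across fifteen actors, so writing out $E$ in full and verifying each expansion step is laborious, though entirely routine given the ABP template. A secondary subtlety is the initialization guards $\{isInitialed(\cdot)=\mathit{TRUE}\}$ and $\{isInitialed(\cdot)=\mathit{FALSE}\}$, which must be shown to evaluate deterministically so that each creation summand $\mathbf{new}(\cdot)$ fires exactly once; since these are guards swept into $I$ and renamed to $\tau$ by $\tau_I$, they cannot affect the abstracted external behavior, but I would confirm this with the guard axioms (in particular $G8$) before invoking $RSP$.
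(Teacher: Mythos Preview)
Your proposal is essentially the same approach as the paper's, which simply states the target equation
$\tau_I(\partial_H(WSC))=r_{WSC}(DI_{WSC})\cdot s_{O}(DO_{WSC})\cdot \tau_I(\partial_H(WSC))$
and refers back to the ABP template in Section~\ref{app} for the omitted details.

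One small technical correction: this example uses the \emph{asynchronous} communication discipline of Section~\ref{ac}, not the synchronous one of the ABP proof. There is no $\gamma(s_\bullet,r_\bullet)$ and hence no merged communication action $c_\bullet$; the send and receive actions remain separate atomic events related only by the causal constraint $s_\bullet\leq r_\bullet$. Accordingly $C14$ plays no role in the expansion. The sets $H$ and $I$ here contain the raw $s_\bullet,r_\bullet$ literals themselves (conditioned on $\nleq$ versus $\leq$), so after $\partial_H$ kills the order-violating interleavings, both $s_\bullet$ and $r_\bullet$ survive in the residual chain and are each renamed to $\tau$ by $\tau_I$. Apart from this adjustment, your plan matches the paper's intended argument.
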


\begin{proof}
By use of the algebraic laws of APTC, we can prove the following equation:

$\tau_I(\partial_H(WSC))\\
=\tau_I(\partial_H(WSC\between WS_1\between WS_2\between WSO_1\between WSO_2\between AA_{11}\between AA_{12}\between AA_{13}\between AA_{14}\between AA_{15}\between
AA_{21}\between AA_{22}\between AA_{23}\between AA_{24}\between AA_{25}))\\
=r_{WSC}(DI_{WSC})\cdot s_{O}(DO_{WSC})\cdot \tau_I(\partial_H(WSC\between WS_1\between WS_2\between WSO_1\between WSO_2\between AA_{11}\between AA_{12}\between AA_{13}\between AA_{14}\between AA_{15}\between
AA_{21}\between AA_{22}\between AA_{23}\between AA_{24}\between AA_{25}))\\
=r_{WSC}(DI_{WSC})\cdot s_{O}(DO_{WSC})\cdot \tau_I(\partial_H(WSC))$

For the details of the proof, we omit them, please refer to section \ref{app}.
\end{proof}

\newpage\section{Process Algebra Based Actor Model of QoS-aware Web Service Orchestration Engine}\label{amqwsoe}

In this chapter, we will use the process algebra based actor model to model and verify QoS-aware Web Service orchestration engine based on the previous work \cite{QWSOE}. In section \ref{rqoe}, we introduce the requirements of
QoS-aware Web Service orchestration engine; we model the QoS-aware Web Service orchestration engine by use of the new actor model in section \ref{mqoe}; finally, we take an example to show the usage of the
model in section \ref{eqoe}.

\subsection{Requirements of QoS-aware Web Service Orchestration Engine}\label{rqoe}

Web Service (WS) is a distributed component which emerged about ten years ago, which uses WSDL as its interface description language, SOAP as its communication
protocol and UDDI as its directory service. Because WS uses the Web as its provision platform, it is suitable to be used to develop cross-organizational business
integrations.

Cross-organizational business processes are usual forms in e-commerce that orchestrate some business activities into a workflow. WS Orchestration (WSO) provides a solution for such
business process based on WS technologies, hereby representing a business process where business activities are modeled as component WSs (a component WS is corresponding to a business
activity, it may be an atomic WS or another composite WS).

From a WS viewpoint, WSO provides a workflow-like pattern to orchestrate existing WSs to create a new composite WS, and embodies the added values of WS. In particular, we use the term
WSO, rather than another term -- WS Composition, because there are also other WS composition patterns, such as WS Choreography (WSC) \cite{WS-CDL}. However, about WSC and the
relationship of WSO and WSC, we do not explain more, because it is not the focus of this chapter, please see chapter \ref{amwsc} for details.

In this chapter, we focus on WSO, exactly, the QoS-aware WSO engine (runtime of WSO) and its formal model. A QoS-aware WSO enables the customers to be satisfied with not only their
functional requirements, but also their QoS requirements, such as performance requirements, reliability requirements, security requirements, etc. A single execution of a WSO is called
a WSO instance (WSOI). A QoS-aware WSO engine provides runtime supports for WSOs with assurance of QoS implementations. These runtime supports include lifetime operations on a WSO
instance, queue processing for requests from the customers and incoming messages delivery to a WSO instance.

WS and WSO are with a continuously changing and evolving environment. The customers, the requirements of the customers, and the component WSs are all changing dynamically. To assure
safe adaptation to dynamically changing and evolving requirements, it is important to have a rigorous semantic model of the system: the component WSs, the WSO engine that provides WSO
instance management and invocation of the component WSs, the customer accesses, and the interactions among these elements. Using such a model, designs can be analyzed to clarify
assumptions that must be met for correct operation.

We give a so-called BuyingBooks example for the scenario of cross-organizational business process integration and use a so-called BookStore WSO to illustrate some
related concepts, such as WSO, activity, etc. And we use the BookStore WSO to explain the formal model we established in the following.

An example is BuyingBooks as Figure \ref{ExaBB} shows. We use this BuyingBooks example throughout this paper to illustrate concepts and mechanisms in WS Composition.

\begin{figure}
  \centering
  \includegraphics{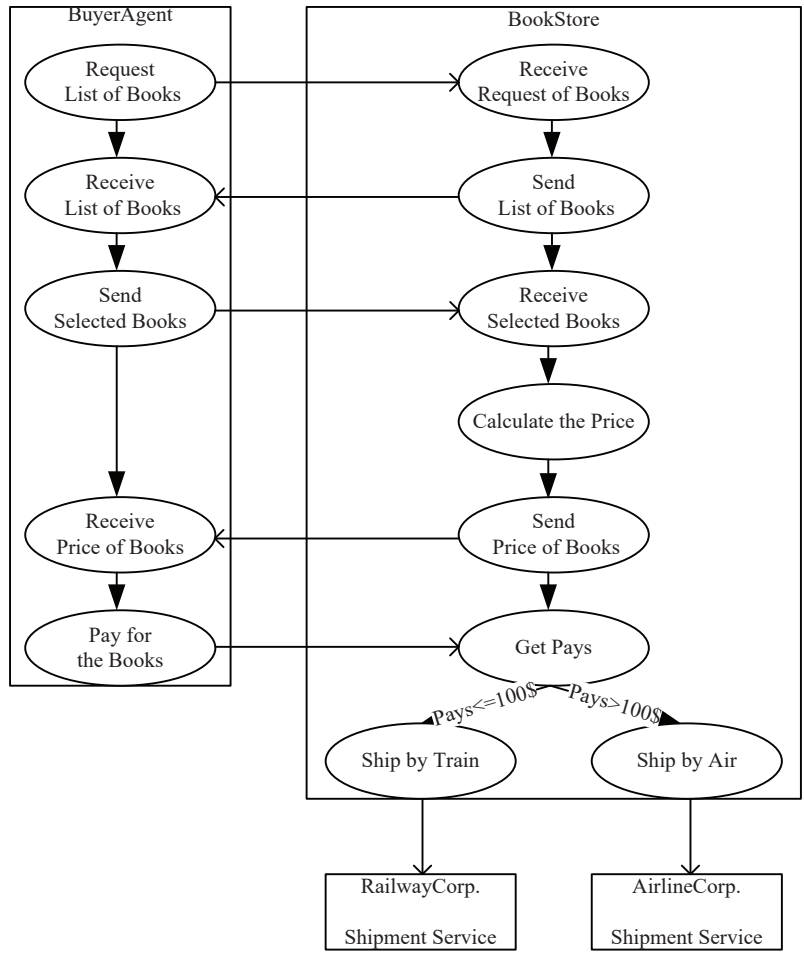}
  \caption{The BuyingBooks example}
  \label{ExaBB}
\end{figure}

In Figure \ref{ExaBB}, there are four organizations: BuyerAgent, BookStore, RailwayCorp, and AirlineCorp. And each organization has one business process. Exactly, there
are two business processes, the business processes in RailwayCorp and AirlineCorp are simplified as just WSs for simpleness without loss of generality. We introduce the business
process of BookStore as follows, and the process of BuyerAgent can be understood as contrasts.

\begin{enumerate}
  \item The BookStore receives request of list of books from the buyer through BuyerAgent;
  \item It sends the list of books to the buyer via BuyerAgent;
  \item It receives the selected book list by the buyer via BuyerAgent;
  \item It calculates the price of the selected books;
  \item It sends the price of the selected books to the buyer via BuyerAgent;
  \item It gets payments for the books from the buyer via BuyerAgent;
  \item If the payments are greater than 100\$, then the BookStore calls the shipment service of AirlineCorp for the shipment of books;
  \item Otherwise, the BookStore calls the shipment service of RailwayCorp for the shipment of book. Then the process is ended.
\end{enumerate}

Each business process is implemented by a WSO, for example, the BookStore WSO and BuyerAgent WSO implement BookStore process and BuyerAgent process respectively. Each WSO invokes
external WSs through its activities directly. And each WSO is published as a WS to receive the incoming messages.

\subsubsection{The Bookstore WSO}

The BookStore WSO described by WS-BPEL is given in the Appendix \ref{xml2}.

The flow of BookStore WSO is as Figure \ref{ExaBB} shows. There are several receive-reply activity pairs and several invoke activities in the BookStore WSO. The QoS
requirements are not included in the WS-BPEL description, because these need an extension of WS-BPEL and are out of the scope. In the request message from the BuyerAgent
WSO, the QoS requirements, such as the whole execution time threshold and the additional charges, can also be attached, not only the functional parameters.

Another related specification is the WSDL description of the interface WS for BuyingBooks WSO. Because we focus on WS composition, this WSDL specification is omitted.

\subsubsection{Architecture of A Typical QoS-aware WSO Engine, QoS-WSOE}

In this section, we firstly analyze the requirements of a WSO Engine. And then we discuss problems about QoS management of WS and define the QoS aspects used in this chapter. Finally,
we give the architecture of QoS-WSOE and discuss the state transition of a WSO instance.

As the introduction above says, a WSO description language, such as WS-BPEL, has:

 \begin{itemize}
   \item basic constructs called atomic activities to model invocation to an external WS, receiving invocation from an external WS and reply to that WS, and other inner basic functions;
   \item information and variables exchanged between WSs;
   \item control flows called structural activities to orchestrate activities;
   \item other inner transaction processing mechanisms, such as exception definitions and throwing mechanisms, event definitions and response mechanisms.
 \end{itemize}

Therefore, a WSO described by WS-BPEL is a program with WSs as its basic function units and must be enabled by a WSO engine. An execution of a WSO is called an instance of that WSO.
The WSO engine can create a new WSO instance according to information included in a request of a customer via the interface WS (Note that a WSO is encapsulated as a WS also.) of the
WSO. Once a WSO instance is created, it has a thread of control to execute independently according to its definition described by a kind of description language, such as WS-BPEL.
During its execution, it may create activities to interact with WSs outside and also may do inner processings, such as local variable assignments. When it ends execution, it replies
to the customer with its execution outcomes.

In order to provide the adaptability of a WSO, the bindings between its activities and WSs outside are not direct and static. That is, WSs are classified according to ontologies of
specific domains and the WSs belonging to the same ontology have same functions and interfaces, and different access points and different QoS. To make this possible, from a system
viewpoint, a name and directory service -- UDDI is necessary. All WSs with access information and QoS information are registered into a UDDI which classifies WSs by their
ontologies to be discovered and invoked in future. UDDI should provide multi interfaces to search WSs registered in for its users, for example, a user can get information of specific
set of WSs by providing a service ontology and specific QoS requirements via an interface of the UDDI.

The above mechanisms make QoS-aware service selection possible. In a QoS-aware WSO engine, after a new WSO instance is created, the new WSO instance firstly selects its component WSs
according to the QoS requirements provided by the customer and ontologies of component WSs defined in the description file of the WSO by WS-BPEL (the description of QoS and ontologies
of component WSs by WS-BPEL, needs an extension of WS-BPEL, but this is out of the scope).

About QoS of a WS, there are various QoS aspects, such as performance QoS, security QoS, reliability QoS, availability QoS, and so on.
In this chapter, we use a cost-effective QoS approach. That is, cost QoS is used to measure the costs of one invocation of a WS while response time QoS is used to capture effectiveness
of one invocation of a WS. In the following, we assume all WSs are aware of cost-effective QoS.

According to the requirements of a WSO engine discussed above, the architecture of QoS-WSOE is given as Figure \ref{AQWSOE} shows.

\begin{figure}
  \centering
  \includegraphics[width=450pt, height=230pt]{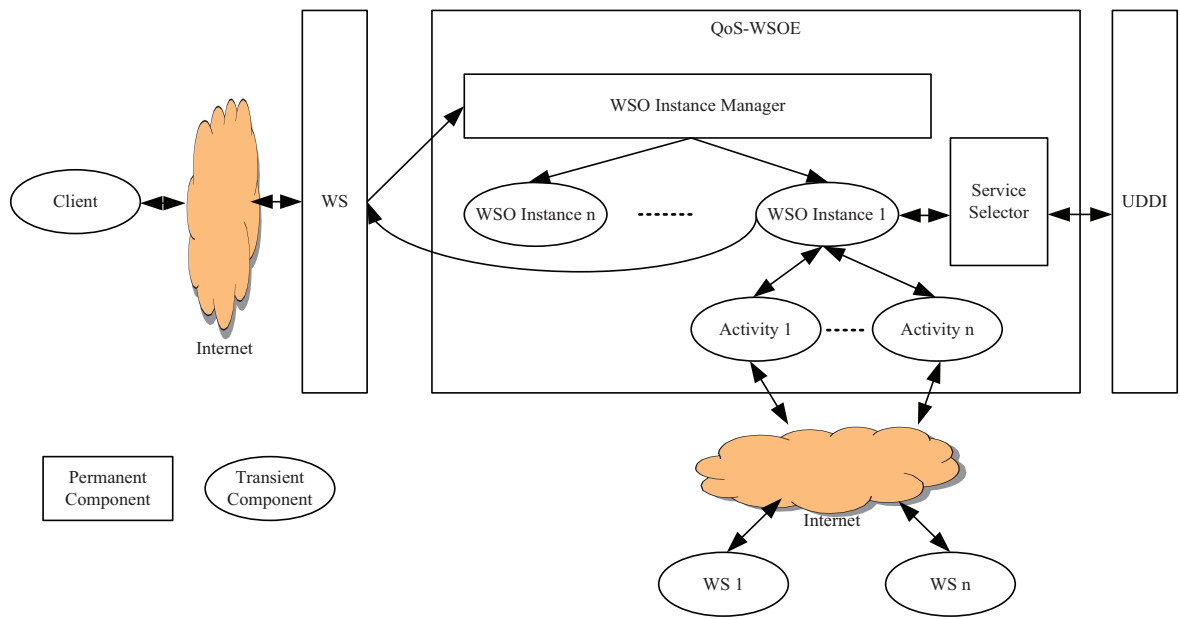}
  \caption{Architecture of QoS-WSOE.}
  \label{AQWSOE}
\end{figure}

In the architecture of QoS-WSOE, there are external components, such as Client, WS of a WSO, UDDI and component WSs, and inner components, including WSO Instance Manager, WSO
Instances, Activities, and Service Selector. Among them, WS of a WSO, UDDI, WSO Instance Manager and Service Selector are permanent components and Client, component WSs, WSO
Instances, Activities are transient components. Component WSs are transient components since they are determined after a service selection process is executed by Service Selector.

Through a typical requirement process, we illustrate the functions and relationships of these components.

\begin{enumerate}
  \item A Client submits its requests including the WSO ontology, input parameters and QoS requirements to the WS of a WSO through SOAP protocol;
  \item The WS transmits the requirements from a SOAP message sent by the Client to the WSO Instance Manager using private communication mechanisms;
  \item The WSO Instance Manager creates a new WSO Instance including its Activities and transmits the input parameters and the QoS requirements to the new instance;
  \item The instance transmits ontologies of its component WSs and the QoS requirements to the Service Selector to perform a service selection process via interactions with a UDDI.
  If the QoS requirements can not be satisfied, the instance replies to the Client to deny this time service;
  \item If the QoS requirements can be satisfied, each activity in the WSO Instance is bound to an external WS;
  \item The WSO Instance transmits input parameters to each activity for an invocation to its binding WS;
  \item After the WSO Instance ends its execution, that is, every invocation to its component WSs by activities in the WSO Instance is returned, the WSO Instance returns the execution
  outcomes to the Client.
\end{enumerate}

An execution of a WSO is called a WSO instance (WSOI). A WSOI is created when the WSO Instance Manager receive a new request (including the functional parameters and the QoS
requirements).

\subsection{The New Actor Model of QoS-aware Web Service Orchestration Engine}\label{mqoe}

According to the architecture of QoS-aware Web Service Orchestration Engine, the whole actors system implemented by actors can be divided into five kinds of actors: the WS actors,
the Web Service Orchestration Instance Manager actor, the WSO actors, the activity actors, and the service selector actor.

\subsubsection{Web Service, WS}

A WS is an actor that has the characteristics of an ordinary actor. It acts as a communication bridge between the inner WSO and the outside, and the outside and the inner implementations.

After A WS is created, the typical process is as follows.

\begin{enumerate}
  \item The WS receives the incoming message $DI_{WS}$ from the outside through its mail box by its name $WS$ (the corresponding reading action is denoted
  $r_{WS}(DI_{WS})$);
  \item The WS may invokes the inner implementations, and does some local computations mixed some atomic actions by computation logics, including $\cdot$, $+$, $\between$ and guards, the local computations are included into
  $I_{WS}$, which is the set of all local atomic actions;
  \item When the local computations are finished, the WS generates the output messages and may send to the outside (the corresponding sending
  actions are distinct by the names of the outside actors, and also the names of messages), and then processes the next message from the outside.
\end{enumerate}

The above process is described as the following state transition skeletons by APTC.

$WS=r_{WS}(DI_{WS})\cdot WS_1$

$WS_1=I_{WS}\cdot WS_2$

$WS_2=s_{O}(DO_{WS})\cdot WS$

By use of the algebraic laws of APTC, the WS may be proven exhibiting desired external behaviors. If it can exhibits desired external behaviors, the WS should have the following form:

$$\tau_{I_{WS}}(\partial_{\emptyset}(WS))=r_{WS}(DI_{WS})\cdot s_{O}(DO_{WS})\cdot \tau_{I_{WS}}(\partial_{\emptyset}(WS))$$

\subsubsection{Web Service Orchestration Instance Manager, WSOIM}

The WSOIM manages a set of WSO actors. The management operations may be creating a WSO actor.

After the WSOIM is created, the typical process is as follows.

\begin{enumerate}
  \item The WSOIM receives the incoming message $DI_{WSOIM}$ from the interface WS through its mail box by its name $WSOIM$ (the corresponding reading action is denoted
  $r_{WSOIM}(DI_{WSO})$);
  \item The WSOIM may create a WSO actor through an actions $\mathbf{new}(WSO)$ if it is not initialized;
  \item The WSOIM does some local computations mixed some atomic actions by computation logics, including $\cdot$, $+$, $\between$ and guards, the local computations are included into
  $I_{WSOIM}$, which is the set of all local atomic actions;
  \item When the local computations are finished, the WSOIM generates the output messages $DO_{WSOIM}$ and send to the WSO (the corresponding sending
  action is denoted $s_{WSO}(DO_{WSOIM})$), and then processes the next message from the interface WS.
\end{enumerate}

The above process is described as the following state transition skeletons by APTC.

$WSOIM=r_{WSOIM}(DI_{WSOIM})\cdot WSOIM_1$

$WSOIM_1=(\{isInitialed(WSO)=FLALSE\}\cdot\mathbf{new}(WSO)+\{isInitialed(WSO)=TRUE\})\cdot WSOIM_2$

$WSOIM_2=I_{WSOIM}\cdot WSOIM_3$

$WSOIM_3=s_{WSO}(DO_{WSOIM})\cdot WSOIM$

By use of the algebraic laws of APTC, the WSOIM may be proven exhibiting desired external behaviors. If it can exhibits desired external behaviors, the WSOIM should have the following form:

$$\tau_{I_{WSOIM}}(\partial_{\emptyset}(WSOIM))=r_{WSOIM}(DI_{WSOIM})\cdot s_{WSO}(DO_{WSOIM})\cdot \tau_{I_{WSOIM}}(\partial_{\emptyset}(WSOIM))$$

With $I_{WSOIM}$ extended to $I_{WSOIM}\cup\{\{isInitialed(WSO)=FLALSE\},\{isInitialed(WSO)=TRUE\}\}$

\subsubsection{Web Service Orchestration (Instance), WSO}

A WSO includes a set of AAs and acts as the manager of the AAs. The management operations may be creating a member AA.

After a WSO is created, the typical process is as follows.

\begin{enumerate}
  \item The WSO receives the incoming message $DI_{WSO}$ from the WSOIM through its mail box by its name $WSO$ (the corresponding reading action is denoted
  $r_{WSO}(DI_{WSO})$);
  \item The WSO may create its AAs in parallel through actions $\mathbf{new}(AA_1)\parallel \cdots\parallel \mathbf{new}(AA_n)$ if it is not initialized;
  \item The WSO may receive messages from its AAs through its mail box by its name $WSO$ (the corresponding reading actions are distinct by the message names);
  \item The WSO does some local computations mixed some atomic actions by computation logics, including $\cdot$, $+$, $\between$ and guards, the local computations are included into
  $I_{WSO}$, which is the set of all local atomic actions;
  \item When the local computations are finished, the WSO generates the output messages and may send to its AAs or the interface WS (the corresponding sending
  actions are distinct by the names of AAs and WS, and also the names of messages), and then processes the next message from its AAs or the interface WS.
\end{enumerate}

The above process is described as the following state transition skeletons by APTC.

$WSO=r_{WSO}(DI_{WSO})\cdot WSO_1$

$WSO_1=(\{isInitialed(WSO)=FLALSE\}\cdot(\mathbf{new}(AA_1)\parallel \cdots\parallel \mathbf{new}(AA_n))+\{isInitialed(WSO)=TRUE\})\cdot WSO_2$

$WSO_2=r_{WSO}(DI_{AAs})\cdot WSO_3$

$WSO_3=I_{WSO}\cdot WSO_4$

$WSO_4=s_{AAs,WS}(DO_{WSO})\cdot WSO$

By use of the algebraic laws of APTC, the WSO may be proven exhibiting desired external behaviors. If it can exhibits desired external behaviors, the WSO should have the following form:

$$\tau_{I_{WSO}}(\partial_{\emptyset}(WSO))=r_{WSO}(DI_{WSO})\cdot\cdots\cdot s_{WS}(DO_{WSO})\cdot \tau_{I_{WSO}}(\partial_{\emptyset}(WSO))$$

With $I_{WSO}$ extended to $I_{WSO}\cup\{\{isInitialed(WSO)=FLALSE\},\{isInitialed(WSO)=TRUE\}\}$.

\subsubsection{Activity Actor, AA}

An activity is an atomic function unit of a WSO and is managed by the WSO. We use an actor called activity actor (AA) to model an activity.

An AA has a unique name, local information and variables to contain its states, and local computation procedures to manipulate the information and variables. An AA is always managed by
a WSO and it receives messages from its WSO, sends messages to other AAs or WSs via its WSO, and is created by its WSO. Note that an AA can not create new AAs, it can only be created
by a WSO. That is, an AA is an actor with a constraint that is without \textbf{create} action.

After an AA is created, the typical process is as follows.

\begin{enumerate}
  \item The AA receives some messages $DI_{AA}$ from its WSO through its mail box denoted by its name $AA$ (the corresponding reading action is denoted $r_{AA}(DI_{AA})$);
  \item Then it does some local computations mixed some atomic actions by computation logics, including $\cdot$, $+$, $\between$ and guards, the whole local computations are denoted
  $I_{AA}$, which is the set of all local atomic actions;
  \item When the local computations are finished, the AA generates the output message $DO_{AA}$ and sends to its WSO's mail box denoted by the WSO's name $WSO$ (the corresponding sending
  action is denoted $s_{WSO}(DO_{AA})$), and then processes the next message from its WSO recursively.
\end{enumerate}

The above process is described as the following state transition skeletons by APTC.

$AA=r_{AA}(DI_{AA})\cdot AA_1$

$AA_1=I_{AA}\cdot AA_2$

$AA_2=s_{WSO}(DO_{AA})\cdot AA$

By use of the algebraic laws of APTC, the AA may be proven exhibiting desired external behaviors. If it can exhibits desired external behaviors, the AA should have the following form:

$$\tau_{I_{AA}}(\partial_{\emptyset}(AA))=r_{AA}(DI_{AA})\cdot s_{WSO}(DO_{AA})\cdot \tau_{I_{AA}}(\partial_{\emptyset}(AA))$$

\subsubsection{Service Selector, SS}

The service selector (SS) is an actor accepting the request (including the WSO definitions and the QoS requirements) from the WSO, and returning the WS selection response.

\begin{enumerate}
  \item The SS receives the request $DI_{SS}$ from the WSO through its mail box denoted by its name $SS$ (the corresponding reading action is denoted $r_{SS}(DI_{SS})$);
  \item Then it does some local computations mixed some atomic actions by computation logics, including $\cdot$, $+$, $\between$ and guards, the whole local computations are denoted
  $I_{SS}$, which is the set of all local atomic actions. For the simplicity, we assume that the interaction with the UDDI is also an internal action and included into $I_{SS}$;
  \item When the local computations are finished, the SS generates the WS selection results $DO_{SS}$ and sends to the WSO's mail box denoted by the WSO's name $WSO$ (the corresponding sending
  action is denoted $s_{WSO}(DO_{SS})$), and then processes the next message from the WSO recursively.
\end{enumerate}

The above process is described as the following state transition skeletons by APTC.

$SS=r_{SS}(DI_{SS})\cdot SS_1$

$SS_1=I_{SS}\cdot SS_2$

$SS_2=s_{WSO}(DO_{SS})\cdot SS$

By use of the algebraic laws of APTC, the AA may be proven exhibiting desired external behaviors. If it can exhibits desired external behaviors, the AA should have the following form:

$$\tau_{I_{SS}}(\partial_{\emptyset}(SS))=r_{SS}(DI_{SS})\cdot s_{WSO}(DO_{SS})\cdot \tau_{I_{SS}}(\partial_{\emptyset}(SS))$$

\subsubsection{Putting All Together into A Whole}

We put all actors together into a whole, including all WSOIM, SS, AAs, WSOs and WSs, according to the architecture as illustrated in Figure \ref{AQWSOE}. The whole actor system
$WSs\quad WSOIM\quad SS=WSs\quad WSOIM\quad SS\quad WSOs\quad AAs$
can be represented by the following process term of APTC.

$$\tau_I(\partial_H(WSs\between WSOIM\between SS))=\tau_I(\partial_H(WSs\between WSOIM\between SS\between WSOs\between AAs))$$

Among all the actors, there are synchronous communications. The actor's reading and to the same actor's sending actions with the same type messages may cause communications. If to the actor's
sending action occurs before the the same actions reading action, an asynchronous communication will occur; otherwise, a deadlock $\delta$ will be caused.

There are four pairs kinds of asynchronous communications as follows.

(1) The communications between an AA and its WSO with the following constraints.

$s_{AA}(DI_{AA-WSO})\leq r_{AA}(DI_{AA-WSO})$

$s_{WSO}(DI_{WSO-AA})\leq r_{WSO}(DI_{WSO-AA})$

Note that, the message $DI_{AA-WSO}$ and $DO_{WSO-AA}$, $DI_{WSO-AA}$ and $DO_{AA-WSO}$ are the same messages.

(2) The communications between a WSO and its interface WS with the following constraints.

$s_{WS}(DI_{WS-WSO})\leq r_{WS}(DI_{WS-WSO})$

Note that, $DI_{WS-WSO}$ and $DO_{WSO-WS}$ are the same messages.

(3) The communications between the interface WS and the WSOIM with the following constraints.

$s_{WSOIM}(DI_{WSOIM-WS})\leq r_{WSOIM}(DI_{WSOIM-WS})$

Note that, the message $DI_{WSOIM-WS}$ and $DO_{WS-WSOIM}$ are the same messages.

(4) The communications between the WSO and the WSOIM with the following constraints.

$s_{WSO}(DI_{WSO-WSOIM})\leq r_{WSO}(DI_{WSO-WSOIM})$

Note that, the message $DI_{WSO-WSOIM}$ and $DO_{WSOIM-WSO}$ are the same messages.

(5) The communications between a WS and a WSO with the following constraints.

$s_{WS}(DI_{WS-WSO})\leq r_{WS}(DI_{WS-WSO})$

$s_{WSO}(DI_{WSO-WS})\leq r_{WSO}(DI_{WSO-WS})$

Note that, the message $DI_{WS-WSO}$ and $DO_{WSO-WS}$, $DI_{WSO-WS}$ and $DO_{WS-WSO}$ are the same messages.

(6) The communications between a WSO and SS with the following constraints.

$s_{SS}(DI_{SS-WSO})\leq r_{SS}(DI_{SS-WSO})$

$s_{WSO}(DI_{WSO-SS})\leq r_{WSO}(DI_{WSO-SS})$

Note that, the message $DI_{SS-WSO}$ and $DO_{WSO-SS}$, $DI_{WSO-SS}$ and $DO_{SS-WSO}$ are the same messages.

(7) The communications between a WS and its partner WS with the following constraints.

$s_{WS_1}(DI_{WS_1-WS_2})\leq r_{WS_1}(DI_{WS_1-WS_2})$

$s_{WS_2}(DI_{WS_2-WS_1})\leq r_{WS_2}(DI_{WS_2-WS_1})$

Note that, the message $DI_{WS_1-WS_2}$ and $DO_{WS_2-WS_1}$, $DI_{WS_2-WS_1}$ and $DO_{WS_1-WS_2}$ are the same messages.

So, the set $H$ and $I$ can be defined as follows.

$H=\{s_{AA}(DI_{AA-WSO}), r_{AA}(DI_{AA-WSO}),s_{WSO}(DI_{WSO-AA}), r_{WSO}(DI_{WSO-AA}),\\
s_{WS}(DI_{WS-WSO}), r_{WS}(DI_{WS-WSO}),s_{WSOIM}(DI_{WSOIM-WS}), r_{WSOIM}(DI_{WSOIM-WS}),\\
s_{WSO}(DI_{WSO-WSOIM}), r_{WSO}(DI_{WSO-WSOIM}),s_{WS}(DI_{WS-WSO}), r_{WS}(DI_{WS-WSO}),\\
s_{WSO}(DI_{WSO-WS}), r_{WSO}(DI_{WSO-WS}),s_{SS}(DI_{SS-WSO}), r_{SS}(DI_{SS-WSO}),\\
s_{WSO}(DI_{WSO-SS}), r_{WSO}(DI_{WSO-SS}),s_{WS_1}(DI_{WS_1-WS_2}), r_{WS_1}(DI_{WS_1-WS_2}),\\
s_{WS_2}(DI_{WS_2-WS_1}), r_{WS_2}(DI_{WS_2-WS_1})\\
|s_{AA}(DI_{AA-WSO})\nleq r_{AA}(DI_{AA-WSO}),s_{WSO}(DI_{WSO-AA})\nleq r_{WSO}(DI_{WSO-AA}),\\
s_{WS}(DI_{WS-WSO})\nleq r_{WS}(DI_{WS-WSO}),s_{WSOIM}(DI_{WSOIM-WS})\nleq r_{WSOIM}(DI_{WSOIM-WS}),\\
s_{WSO}(DI_{WSO-WSOIM})\nleq r_{WSO}(DI_{WSO-WSOIM}),s_{WS}(DI_{WS-WSO})\nleq r_{WS}(DI_{WS-WSO}),\\
s_{WSO}(DI_{WSO-WS})\nleq r_{WSO}(DI_{WSO-WS}),s_{SS}(DI_{SS-WSO})\nleq r_{SS}(DI_{SS-WSO}),\\
s_{WSO}(DI_{WSO-SS})\nleq r_{WSO}(DI_{WSO-SS}),s_{WS_1}(DI_{WS_1-WS_2})\nleq r_{WS_1}(DI_{WS_1-WS_2}),\\
s_{WS_2}(DI_{WS_2-WS_1})\nleq r_{WS_2}(DI_{WS_2-WS_1})\}$

$I=\{s_{AA}(DI_{AA-WSO}), r_{AA}(DI_{AA-WSO}),s_{WSO}(DI_{WSO-AA}), r_{WSO}(DI_{WSO-AA}),\\
s_{WS}(DI_{WS-WSO}), r_{WS}(DI_{WS-WSO}),s_{WSOIM}(DI_{WSOIM-WS}), r_{WSOIM}(DI_{WSOIM-WS}),\\
s_{WSO}(DI_{WSO-WSOIM}), r_{WSO}(DI_{WSO-WSOIM}),s_{WS}(DI_{WS-WSO}), r_{WS}(DI_{WS-WSO}),\\
s_{WSO}(DI_{WSO-WS}), r_{WSO}(DI_{WSO-WS}),s_{SS}(DI_{SS-WSO}), r_{SS}(DI_{SS-WSO}),\\
s_{WSO}(DI_{WSO-SS}), r_{WSO}(DI_{WSO-SS}),s_{WS_1}(DI_{WS_1-WS_2}), r_{WS_1}(DI_{WS_1-WS_2}),\\
s_{WS_2}(DI_{WS_2-WS_1}), r_{WS_2}(DI_{WS_2-WS_1})\\
|s_{AA}(DI_{AA-WSO})\leq r_{AA}(DI_{AA-WSO}),s_{WSO}(DI_{WSO-AA})\leq r_{WSO}(DI_{WSO-AA}),\\
s_{WS}(DI_{WS-WSO})\leq r_{WS}(DI_{WS-WSO}),s_{WSOIM}(DI_{WSOIM-WS})\leq r_{WSOIM}(DI_{WSOIM-WS}),\\
s_{WSO}(DI_{WSO-WSOIM})\leq r_{WSO}(DI_{WSO-WSOIM}),s_{WS}(DI_{WS-WSO})\leq r_{WS}(DI_{WS-WSO}),\\
s_{WSO}(DI_{WSO-WS})\leq r_{WSO}(DI_{WSO-WS}),s_{SS}(DI_{SS-WSO})\leq r_{SS}(DI_{SS-WSO}),\\
s_{WSO}(DI_{WSO-SS})\leq r_{WSO}(DI_{WSO-SS}),s_{WS_1}(DI_{WS_1-WS_2})\leq r_{WS_1}(DI_{WS_1-WS_2}),\\
s_{WS_2}(DI_{WS_2-WS_1})\leq r_{WS_2}(DI_{WS_2-WS_1})\}\cup I_{AAs}\cup I_{WSO}\cup I_{WSs}\cup I_{SS}\cup I_{WSOIM}$

If the whole actor system of QoS-aware WS orchestration engine can exhibits desired external behaviors, the system should have the following form:

$\tau_I(\partial_H(WSs\between WSOIM\between SS))=\tau_I(\partial_H(WSs\between WSOIM\between SS\between WSOs\between AAs))\\
=r_{WS}(DI_{WS})\cdot s_{O}(DO_{WS})\cdot \tau_I(\partial_H(WSs\between WSOIM\between SS\between WSOs\between AAs))\\
=r_{WS}(DI_{WS})\cdot s_{O}(DO_{WS})\cdot \tau_I(\partial_H(WSs\between WSOIM\between SS))$

\subsection{An Example}\label{eqoe}

Using the architecture in Figure \ref{AQWSOE}, we get an implementation of the buying books example as shown in Figure \ref{ExaBB}. In this implementation, there are four WSs
(BuyerAgentWS denoted $WS_1$, BookStoreWS denoted $WS_2$, RailwayWS denoted $WS_3$ and AirlineWS denoted $WS_4$), the focused Bookstore WSO denoted $WSO$, and the focused set of AAs
(ReceiveRBAA denoted $AA_{1}$, SendLBAA denoted $AA_{2}$, ReceiveSBAA denoted $AA_{3}$, CalculatePAA denoted $AA_{4}$, SendPAA denoted $AA_{5}$, GetPaysAA denoted $AA_6$, ShipByTAA
denoted $AA_7$, and ShipByAAA denoted $AA_8$), one WSOIM denoted $WSOIM$, one service selector denoted $SS$.

The detailed implementations of actors in Figure \ref{ExaBB} is following.

\subsubsection{BookStore AAs}

(1) ReceiveRBAA ($AA_{1}$)

After $AA_{1}$ is created, the typical process is as follows.

\begin{enumerate}
  \item The $AA_{1}$ receives some messages $ReceiveRB_{WA}$ from $WSO$ through its mail box denoted by its name $AA_{1}$ (the corresponding reading action is denoted
  $r_{AA_{1}}(ReceiveRB_{WA})$);
  \item Then it does some local computations mixed some atomic actions by computation logics, including $\cdot$, $+$, $\between$ and guards, the whole local computations are denoted
  $I_{AA_{1}}$, which is the set of all local atomic actions;
  \item When the local computations are finished, the $AA_{1}$ generates the output message $ReceiveRB_{AW}$ and sends to $WSO$'s mail box denoted by $WSO$'s name $WSO$
  (the corresponding sending action is denoted $s_{WSO}(ReceiveRB_{AW})$), and then processes the next message from $WSO$ recursively.
\end{enumerate}

The above process is described as the following state transitions by APTC.

$AA_{1}=r_{AA_{1}}(ReceiveRB_{WA})\cdot AA_{1_1}$

$AA_{1_1}=I_{AA_{1}}\cdot AA_{1_2}$

$AA_{1_2}=s_{WSO}(ReceiveRB_{AW})\cdot AA_{1}$

By use of the algebraic laws of APTC, $AA_{1}$ can be proven exhibiting desired external behaviors.

$$\tau_{I_{AA_{1}}}(\partial_{\emptyset}(AA_{1}))=r_{AA_{1}}(RequestLB_{WA})\cdot s_{WSO}(RequestLB_{AW})\cdot \tau_{I_{AA_{1}}}(\partial_{\emptyset}(AA_{1}))$$

(2) SendLBAA ($AA_{2}$)

After $AA_{2}$ is created, the typical process is as follows.

\begin{enumerate}
  \item The $AA_{2}$ receives some messages $SendLB_{WA}$ from $WSO$ through its mail box denoted by its name $AA_{2}$ (the corresponding reading action is denoted
  $r_{AA_{2}}(ReceiveLB_{WA})$);
  \item Then it does some local computations mixed some atomic actions by computation logics, including $\cdot$, $+$, $\between$ and guards, the whole local computations are denoted
  $I_{AA_{2}}$, which is the set of all local atomic actions;
  \item When the local computations are finished, the $AA_{2}$ generates the output message $SendLB_{AW}$ and sends to $WSO$'s mail box denoted by $WSO$'s name $WSO$
  (the corresponding sending action is denoted $s_{WSO}(SendLB_{AW})$), and then processes the next message from $WSO$ recursively.
\end{enumerate}

The above process is described as the following state transitions by APTC.

$AA_{2}=r_{AA_{2}}(SendLB_{WA})\cdot AA_{2_1}$

$AA_{2_1}=I_{AA_{2}}\cdot AA_{2_2}$

$AA_{2_2}=s_{WSO}(SendLB_{AW})\cdot AA_{2}$

By use of the algebraic laws of APTC, $AA_{2}$ can be proven exhibiting desired external behaviors.

$$\tau_{I_{AA_{2}}}(\partial_{\emptyset}(AA_{2}))=r_{AA_{2}}(SendLB_{WA})\cdot s_{WSO}(SendLB_{AW})\cdot \tau_{I_{AA_{2}}}(\partial_{\emptyset}(AA_{2}))$$

(3) ReceiveSBAA ($AA_{3}$)

After $AA_{3}$ is created, the typical process is as follows.

\begin{enumerate}
  \item The $AA_{3}$ receives some messages $ReceiveSB_{WA_2}$ from $WSO$ through its mail box denoted by its name $AA_{3}$ (the corresponding reading action is denoted
  $r_{AA_{3}}(ReceiveSB_{WA})$);
  \item Then it does some local computations mixed some atomic actions by computation logics, including $\cdot$, $+$, $\between$ and guards, the whole local computations are denoted
  $I_{AA_{3}}$, which is the set of all local atomic actions;
  \item When the local computations are finished, the $AA_{3}$ generates the output message $ReceiveSB_{AW}$ and sends to $WSO$'s mail box denoted by $WSO$'s name $WSO$
  (the corresponding sending action is denoted $s_{WSO}(ReceiveSB_{AW})$), and then processes the next message from $WSO$ recursively.
\end{enumerate}

The above process is described as the following state transitions by APTC.

$AA_{3}=r_{AA_{3}}(ReceiveSB_{WA})\cdot AA_{3_1}$

$AA_{3_1}=I_{AA_{3}}\cdot AA_{3_2}$

$AA_{3_2}=s_{WSO}(ReceiveSB_{AW})\cdot AA_{3}$

By use of the algebraic laws of APTC, $AA_{3}$ can be proven exhibiting desired external behaviors.

$$\tau_{I_{AA_{3}}}(\partial_{\emptyset}(AA_{3}))=r_{AA_{3}}(ReceiveSB_{WA})\cdot s_{WSO}(ReceiveSB_{AW})\cdot \tau_{I_{AA_{3}}}(\partial_{\emptyset}(AA_{3}))$$

(4) CalculatePAA ($AA_{4}$)

After $AA_{4}$ is created, the typical process is as follows.

\begin{enumerate}
  \item The $AA_{4}$ receives some messages $CalculateP_{WA}$ from $WSO$ through its mail box denoted by its name $AA_{4}$ (the corresponding reading action is denoted
  $r_{AA_{4}}(CalculateP_{WA})$);
  \item Then it does some local computations mixed some atomic actions by computation logics, including $\cdot$, $+$, $\between$ and guards, the whole local computations are denoted
  $I_{AA_{4}}$, which is the set of all local atomic actions;
  \item When the local computations are finished, the $AA_{4}$ generates the output message $CalculateP_{AW}$ and sends to $WSO$'s mail box denoted by $WSO$'s name $WSO$
  (the corresponding sending action is denoted $s_{WSO}(CalculateP_{AW})$), and then processes the next message from $WSO$ recursively.
\end{enumerate}

The above process is described as the following state transitions by APTC.

$AA_{4}=r_{AA_{4}}(CalculateP_{WA})\cdot AA_{4_1}$

$AA_{4_1}=I_{AA_{4}}\cdot AA_{4_2}$

$AA_{4_2}=s_{WSO}(CalculateP_{AW})\cdot AA_{4}$

By use of the algebraic laws of APTC, $AA_{4}$ can be proven exhibiting desired external behaviors.

$$\tau_{I_{AA_{4}}}(\partial_{\emptyset}(AA_{4}))=r_{AA_{4}}(CalculateP_{WA})\cdot s_{WSO}(CalculateP_{AW})\cdot \tau_{I_{AA_{4}}}(\partial_{\emptyset}(AA_{4}))$$

(5) SendPAA ($AA_{5}$)

After $AA_{5}$ is created, the typical process is as follows.

\begin{enumerate}
  \item The $AA_{5}$ receives some messages $SendP_{WA}$ from $WSO$ through its mail box denoted by its name $AA_{5}$ (the corresponding reading action is denoted
  $r_{AA_{5}}(SendP_{WA})$);
  \item Then it does some local computations mixed some atomic actions by computation logics, including $\cdot$, $+$, $\between$ and guards, the whole local computations are denoted
  $I_{AA_{5}}$, which is the set of all local atomic actions;
  \item When the local computations are finished, the $AA_{5}$ generates the output message $SendP_{AW}$ and sends to $WSO$'s mail box denoted by $WSO$'s name $WSO$
  (the corresponding sending action is denoted $s_{WSO}(SendP_{AW})$), and then processes the next message from $WSO$ recursively.
\end{enumerate}

The above process is described as the following state transitions by APTC.

$AA_{5}=r_{AA_{5}}(SendP_{WA})\cdot AA_{5_1}$

$AA_{5_1}=I_{AA_{5}}\cdot AA_{5_2}$

$AA_{5_2}=s_{WSO}(SendP_{AW})\cdot AA_{5}$

By use of the algebraic laws of APTC, $AA_{5}$ can be proven exhibiting desired external behaviors.

$$\tau_{I_{AA_{5}}}(\partial_{\emptyset}(AA_{5}))=r_{AA_{5}}(SendP_{WA})\cdot s_{WSO}(SendP_{AW})\cdot \tau_{I_{AA_{5}}}(\partial_{\emptyset}(AA_{5}))$$

(6) ShipByTAA ($AA_6$)

After $AA_{6}$ is created, the typical process is as follows.

\begin{enumerate}
  \item The $AA_{6}$ receives some messages $ShipByT_{WA}$ from $WSO$ through its mail box denoted by its name $AA_{6}$ (the corresponding reading action is denoted
  $r_{AA_{6}}(ShipByT_{WA})$);
  \item Then it does some local computations mixed some atomic actions by computation logics, including $\cdot$, $+$, $\between$ and guards, the whole local computations are denoted
  $I_{AA_{6}}$, which is the set of all local atomic actions;
  \item When the local computations are finished, the $AA_{6}$ generates the output message $ShipByT_{AW}$ and sends to $WSO$'s mail box denoted by $WSO$'s name $WSO$
  (the corresponding sending action is denoted $s_{WSO}(ShipByT_{AW})$), and then processes the next message from $WSO$ recursively.
\end{enumerate}

The above process is described as the following state transitions by APTC.

$AA_{6}=r_{AA_{6}}(ShipByT_{WA})\cdot AA_{6_1}$

$AA_{6_1}=I_{AA_{6}}\cdot AA_{6_2}$

$AA_{6_2}=s_{WSO}(ShipByT_{AW})\cdot AA_{6}$

By use of the algebraic laws of APTC, $AA_{6}$ can be proven exhibiting desired external behaviors.

$$\tau_{I_{AA_{6}}}(\partial_{\emptyset}(AA_{6}))=r_{AA_{6}}(ShipByT_{WA})\cdot s_{WSO}(ShipByT_{AW})\cdot \tau_{I_{AA_{6}}}(\partial_{\emptyset}(AA_{6}))$$

(7) ShipByAAA ($AA_7$)

After $AA_{7}$ is created, the typical process is as follows.

\begin{enumerate}
  \item The $AA_{7}$ receives some messages $ShipByA_{WA}$ from $WSO$ through its mail box denoted by its name $AA_{7}$ (the corresponding reading action is denoted
  $r_{AA_{7}}(ShipByA_{WA})$);
  \item Then it does some local computations mixed some atomic actions by computation logics, including $\cdot$, $+$, $\between$ and guards, the whole local computations are denoted
  $I_{AA_{7}}$, which is the set of all local atomic actions;
  \item When the local computations are finished, the $AA_{7}$ generates the output message $ShipByA_{AW}$ and sends to $WSO$'s mail box denoted by $WSO$'s name $WSO$
  (the corresponding sending action is denoted $s_{WSO}(ShipByA_{AW})$), and then processes the next message from $WSO$ recursively.
\end{enumerate}

The above process is described as the following state transitions by APTC.

$AA_{7}=r_{AA_{7}}(ShipByA_{WA})\cdot AA_{7_1}$

$AA_{7_1}=I_{AA_{7}}\cdot AA_{7_2}$

$AA_{7_2}=s_{WSO}(ShipByA_{AW})\cdot AA_{7}$

By use of the algebraic laws of APTC, $AA_{7}$ can be proven exhibiting desired external behaviors.

$$\tau_{I_{AA_{7}}}(\partial_{\emptyset}(AA_{7}))=r_{AA_{7}}(ShipByA_{WA})\cdot s_{WSO}(ShipByA_{AW})\cdot \tau_{I_{AA_{7}}}(\partial_{\emptyset}(AA_{7}))$$

\subsubsection{WSOIM}

After $WSOIM$ is created, the typical process is as follows.

\begin{enumerate}
  \item The $WSOIM$ receives some messages $DI_{WSOIM}$ from $WS_2$ through its mail box denoted by its name $WSOIM$ (the corresponding reading action is denoted
  $r_{WSOIM}(DI_{WSOIM})$);
  \item The $WSOIM$ may create a $WSO$ through actions $\mathbf{new}(WSO)$ if it is not initialized;
  \item Then it does some local computations mixed some atomic actions by computation logics, including $\cdot$, $+$, $\between$ and guards, the whole local computations are denoted
  $I_{WSOIM}$, which is the set of all local atomic actions;
  \item When the local computations are finished, the $WSOIM$ generates the output message $DO_{WSOIM}$ and sends to $WSO$'s mail box denoted by $WSO$'s name $WSO$
  (the corresponding sending action is denoted $s_{WSO}(DO_{WSOIM})$), and then processes the next message from $WS_2$ recursively.
\end{enumerate}

The above process is described as the following state transitions by APTC.

$WSOIM=r_{WSOIM}(DI_{WSOIM})\cdot WSOIM_{1}$

$WSOIM_{1}=(\{isInitialed(WSO)=FLALSE\}\cdot\mathbf{new}(WSO)+\{isInitialed(WSO)=TRUE\})\cdot WSOIM_{2}$

$WSOIM_{2}=I_{WSOIM}\cdot WSOIM_{3}$

$WSOIM_{3}=s_{WSO}(DO_{WSOIM})\cdot WSOIM$

By use of the algebraic laws of APTC, $WSOIM$ can be proven exhibiting desired external behaviors.

$$\tau_{I_{WSOIM}}(\partial_{\emptyset}(WSOIM))=r_{WSOIM}(DI_{WSOIM})\cdot s_{WSO}(DO_{WSOIM})\cdot \tau_{I_{WSOIM}}(\partial_{\emptyset}(WSOIM))$$

\subsubsection{BookStore WSO}

After BookStore WSO ($WSO$) is created, the typical process is as follows.

\begin{enumerate}
  \item The $WSO$ receives the requests $ReceiveRB_{MW}$ from $WSOIM$ through its mail box by its name $WSO$ (the corresponding reading action is denoted
  $r_{WSO}(ReceiveRB_{MW})$);
  \item The $WSO$ may create its AAs in parallel through actions $\mathbf{new}(AA_{1})\parallel \cdots\parallel \mathbf{new}(AA_{7})$ if it is not initialized;
  \item The $WSO$ does some local computations mixed some atomic actions by computation logics, including $\cdot$, $+$, $\between$ and guards, the local computations are included into
  $I_{WSO}$, which is the set of all local atomic actions;
  \item When the local computations are finished, the $WSO$ generates the output messages $ReceiveRB_{WA}$ and sends to $AA_{1}$ (the corresponding sending
  action is denoted \\
  $s_{AA_{1}}(ReceiveRB_{WA})$);
  \item The $WSO$ receives the response message $ReceiveRB_{AW}$ from $AA_{1}$ through its mail box by its name $WSO$ (the corresponding reading action is denoted
  $r_{WSO}(ReceiveRB_{AW})$);
  \item The $WSO$ does some local computations mixed some atomic actions by computation logics, including $\cdot$, $+$, $\between$ and guards, the local computations are included into
  $I_{WSO}$, which is the set of all local atomic actions;
  \item When the local computations are finished, the $WSO$ generates the output messages $SendLB_{WA}$ and sends to $AA_{2}$ (the corresponding sending
  action is denoted \\
  $s_{AA_{2}}(SendLB_{WA})$);
  \item The $WSO$ receives the response message $SendLB_{AW}$ from $AA_{2}$ through its mail box by its name $WSO$ (the corresponding reading action is denoted
  $r_{WSO}(SendLB_{AW})$);
  \item The $WSO$ does some local computations mixed some atomic actions by computation logics, including $\cdot$, $+$, $\between$ and guards, the local computations are included into
  $I_{WSO}$, which is the set of all local atomic actions;
  \item When the local computations are finished, the $WSO$ generates the output messages $SendLB_{WW_1}$ and sends to $WS_1$ (the corresponding sending
  action is denoted \\
  $s_{WS_1}(SendLB_{WW_1})$);
  \item The $WSO$ receives the response message $ReceiveSB_{WW_1}$ from $WS_1$ through its mail box by its name $WSO$ (the corresponding reading action is denoted
  $r_{WSO}(ReceiveSB_{WW_1})$);
  \item The $WSO$ does some local computations mixed some atomic actions by computation logics, including $\cdot$, $+$, $\between$ and guards, the local computations are included into
  $I_{WSO}$, which is the set of all local atomic actions;
  \item When the local computations are finished, the $WSO$ generates the output messages $ReceiveSB_{WA}$ and sends to $AA_{3}$ (the corresponding sending
  action is denoted \\
  $s_{AA_{3}}(ReceiveSB_{WA})$);
  \item The $WSO$ receives the response message $ReceiveSB_{AW}$ from $AA_{3}$ through its mail box by its name $WSO$ (the corresponding reading action is denoted
  $r_{WSO}(ReceiveSB_{AW})$);
  \item The $WSO$ does some local computations mixed some atomic actions by computation logics, including $\cdot$, $+$, $\between$ and guards, the local computations are included into
  $I_{WSO}$, which is the set of all local atomic actions;
  \item When the local computations are finished, the $WSO$ generates the output messages $CalculateP_{WA}$ and sends to $AA_{4}$ (the corresponding sending
  action is denoted \\
  $s_{AA_{4}}(CalculateP_{WA})$);
  \item The $WSO$ receives the response message $CalculateP_{AW}$ from $AA_{4}$ through its mail box by its name $WSO$ (the corresponding reading action is denoted
  $r_{WSO}(CalculateP_{AW})$);
  \item The $WSO$ does some local computations mixed some atomic actions by computation logics, including $\cdot$, $+$, $\between$ and guards, the local computations are included into
  $I_{WSO}$, which is the set of all local atomic actions;
  \item When the local computations are finished, the $WSO$ generates the output messages $SendP_{WA}$ and sends to $AA_5$ (the corresponding sending
  action is denoted \\
  $s_{AA_5}(SendP_{WA})$);
  \item The $WSO$ receives the response message $sendP_{AW}$ from $AA_{5}$ through its mail box by its name $WSO$ (the corresponding reading action is denoted
  $r_{WSO}(sendP_{AW})$);
  \item The $WSO$ does some local computations mixed some atomic actions by computation logics, including $\cdot$, $+$, $\between$ and guards, the local computations are included into
  $I_{WSO}$, which is the set of all local atomic actions;
  \item When the local computations are finished, the $WSO$ generates the output messages $SendP_{WW_1}$ and sends to $WS_1$ (the corresponding sending
  action is denoted \\
  $s_{WS_1}(SendP_{WW_1})$);
  \item The $WSO$ receives the response message $GetPays_{WW_1}$ from $WS_1$ through its mail box by its name $WSO$ (the corresponding reading action is denoted
  $r_{WSO}(GetPays_{WW_1})$);
  \item The $WSO$ does some local computations mixed some atomic actions by computation logics, including $\cdot$, $+$, $\between$ and guards, the local computations are included into
  $I_{WSO}$, which is the set of all local atomic actions;
  \item When the local computations are finished, the $WSO$ generates the output messages $GetPays_{WA}$ and sends to $AA_{6}$ (the corresponding sending
  action is denoted \\
  $s_{AA_{6}}(GetPays_{WA})$);
  \item The $WSO$ receives the response message $GetPays_{AW}$ from $AA_{6}$ through its mail box by its name $WSO$ (the corresponding reading action is denoted
  $r_{WSO}(GetPays_{AW})$);
  \item The $WSO$ does some local computations mixed some atomic actions by computation logics, including $\cdot$, $+$, $\between$ and guards, the local computations are included into
  $I_{WSO}$, which is the set of all local atomic actions;
  \item When the local computations are finished, the $WSO$ generates the WS selection request messages $DI_{SS}$ and sends to $SS$ (the corresponding sending
  action is denoted $s_{SS}(DI_{SS})$);
  \item The $WSO$ receives the response message $DO_{SS}$ from $SS$ through its mail box by its name $WSO$ (the corresponding reading action is denoted
  $r_{WSO}(DO_{SS})$);
  \item The $WSO$ selects $WS_3$ and $WS_4$, does some local computations mixed some atomic actions by computation logics, including $\cdot$, $+$, $\between$ and guards, the local computations are included into
  $I_{WSO}$, which is the set of all local atomic actions;
  \item When the local computations are finished, if $Pays<=100\$$, the $WSO$ generates the output messages $ShipByT_{WW_3}$ and sends to $WS_3$ (the corresponding sending
  action is denoted \\
  $s_{WS_3}(ShipByT_{WW_3})$); if $Pays>100\$$, the $WSO$ generates the output message $ShipByA_{WW_4}$ and sends to $WS_4$ (the corresponding sending action is denoted\\
  $s_{WS_4}(ShipByA_{WW_4})$);
  \item The $WSO$ receives the response message $ShipFinish_{WW_3}$ from $WS_3$ through its mail box by its name $WSO$ (the corresponding reading action is denoted
  $r_{WSO}(ShipFinish_{WW_3})$), or the response message $ShipFinish_{WW_4}$ from $WS_4$ through its mail box by its name $WSO$ (the corresponding reading action is denoted
  $r_{WSO}(ShipFinish_{WW_4})$);
  \item The $WSO$ does some local computations mixed some atomic actions by computation logics, including $\cdot$, $+$, $\between$ and guards, the local computations are included into
  $I_{WSO}$, which is the set of all local atomic actions;
  \item When the local computations are finished, the $WSO$ generates the output messages $BBFinish_{WW_2}$ and sends to $WS_2$ (the corresponding sending
  action is denoted \\
  $s_{WS_2}(BBFinish_{WW_2})$), and then processing the messages from $WS_2$ recursively.
\end{enumerate}

The above process is described as the following state transitions by APTC.

$WSO=r_{WSO}(ReceiveRB_{MW})\cdot WSO_{1}$

$WSO_{1}=(\{isInitialed(WSO)=FLALSE\}\cdot(\mathbf{new}(AA_{1})\parallel \cdots\parallel \mathbf{new}(AA_{7}))+\{isInitialed(WSO)=TRUE\})\cdot WSO_{2}$

$WSO_{2}=I_{WSO}\cdot WSO_{3}$

$WSO_{3}=s_{AA_{1}}(ReceiveRB_{WA})\cdot WSO_{4}$

$WSO_{4}=r_{WSO}(ReceiveRB_{AW})\cdot WSO_{5}$

$WSO_{5}=I_{WSO}\cdot WSO_{6}$

$WSO_{6}=s_{AA_{2}}(SendLB_{WA})\cdot WSO_{7}$

$WSO_{7}=r_{WSO}(SendLB_{AW})\cdot WSO_{8}$

$WSO_{8}=I_{WSO}\cdot WSO_{9}$

$WSO_{9}=s_{WS_1}(SendLB_{WW_1})\cdot WSO_{10}$

$WSO_{10}=r_{WSO}(ReceiveSB_{WW_1})\cdot WSO_{11}$

$WSO_{11}=I_{WSO}\cdot WSO_{12}$

$WSO_{12}=s_{AA_{3}}(ReceiveSB_{WA})\cdot WSO_{13}$

$WSO_{13}=r_{WSO}(ReceiveSB_{AW})\cdot WSO_{14}$

$WSO_{14}=I_{WSO}\cdot WSO_{15}$

$WSO_{15}=s_{AA_{4}}(CalculteP_{WA})\cdot WSO_{16}$

$WSO_{16}=r_{WSO}(CalculateP_{AW})\cdot WSO_{17}$

$WSO_{17}=I_{WSO}\cdot WSO_{18}$

$WSO_{18}=s_{AA_5}(SendP_{WA})\cdot WSO_{19}$

$WSO_{19}=r_{WSO}(SendP_{AW})\cdot WSO_{20}$

$WSO_{20}=I_{WSO}\cdot WSO_{21}$

$WSO_{21}=s_{WS_1}(SendP_{WW_1})\cdot WSO_{22}$

$WSO_{22}=r_{WSO}(GetPays_{WW_1})\cdot WSO_{23}$

$WSO_{23}=I_{WSO}\cdot WSO_{24}$

$WSO_{24}=s_{AA_6}(GetPays_{WA})\cdot WSO_{25}$

$WSO_{25}=r_{WSO}(GetPays_{AW})\cdot WSO_{26}$

$WSO_{26}=I_{WSO}\cdot WSO_{27}$

$WSO_{27}=s_{SS}(DI_{SS})\cdot WSO_{28}$

$WSO_{28}=r_{WSO}(DO_{SS})\cdot WSO_{29}$

$WSO_{29}=I_{WSO}\cdot WSO_{30}$

$WSO_{30}=(\{Pays<=100\$\}\cdot s_{WS_3}(ShipByT_{WW_3})\cdot r_{WSO}(ShipFinish_{WW_3})+\{Pays>100\$\}\cdot s_{WS_4}(ShipByA_{WW_4})\cdot r_{WSO}(ShipFinish_{WW_4}))\cdot WSO_{31}$

$WSO_{31}=I_{WSO}\cdot WSO_{32}$

$WSO_{32}=s_{WS_2}(BBFinish_{WW_2})\cdot WSO$

By use of the algebraic laws of APTC, the $WSO_2$ can be proven exhibiting desired external behaviors.

$\tau_{I_{WSO}}(\partial_{\emptyset}(WSO))=r_{WSO}(ReceiveRB_{MW})\cdot s_{AA_{1}}(ReceiveRB_{WA})\cdot r_{WSO}(ReceiveRB_{AW})\\
s_{AA_{2}}(SendLB_{WA})\cdot r_{WSO}(SendLB_{AW})\cdot s_{WS_1}(SendLB_{WW_1})\cdot r_{WSO}(ReceiveSB_{WW_1})\cdot \\
s_{AA_{3}}(ReceiveSB_{WA})\cdot r_{WSO}(ReceiveSB_{AW})\cdot s_{AA_{4}}(CalculteP_{WA})\cdot r_{WSO}(CalculateP_{AW})\cdot \\
s_{AA_5}(SendP_{WA})\cdot r_{WSO}(SendP_{AW})\cdot s_{WS_1}(SendP_{WW_1})\cdot r_{WSO}(GetPays_{WW_1})\cdot \\
s_{AA_6}(GetPays_{WA})\cdot r_{WSO}(GetPays_{AW})\cdot s_{SS}(DI_{SS})\cdot r_{WSO}(DO_{SS})\cdot \\
(s_{WS_3}(ShipByT_{WW_3})\cdot r_{WSO}(ShipFinish_{WW_3})+s_{WS_4}(ShipByA_{WW_4})\cdot r_{WSO}(ShipFinish_{WW_4}))\cdot
s_{WS_2}(BBFinish_{WW_2})\cdot\tau_{I_{WSO}}(\partial_{\emptyset}(WSO))$

With $I_{WSO}$ extended to $I_{WSO}\cup\{\{isInitialed(WSO)=FLALSE\},\{isInitialed(WSO)=TRUE\},\{Pays<=100\$\},\{Pays>100\$\}\}$.

\subsubsection{BuyerAgent WS}

After BuyerAgent WS ($WS_1$) is created, the typical process is as follows.

\begin{enumerate}
  \item The $WS_1$ receives the message $SendLB_{WW_1}$ from the $WSO$ through its mail box by its name $WS_1$ (the corresponding reading action is denoted
  $r_{WS_1}(SendLB_{WW_1})$);
  \item The $WS_1$ does some local computations mixed some atomic actions by computation logics, including $\cdot$, $+$, $\between$ and guards, the local computations are included into
  $I_{WS_1}$, which is the set of all local atomic actions;
  \item When the local computations are finished, the $WS_1$ generates the output messages \\$ReceiveSB_{WW_1}$ and sends to the $WSO$ (the corresponding sending
  action is denoted \\
  $s_{WSO}(ReceiveSB_{WW_1})$);
  \item The $WS_1$ receives the response message $SendP_{WW_{1}}$ from $WSO$ through its mail box by its name $WS_1$ (the corresponding reading action is denoted
  $r_{WS_1}(SendP_{WW_1})$);
  \item The $WS_1$ does some local computations mixed some atomic actions by computation logics, including $\cdot$, $+$, $\between$ and guards, the local computations are included into
  $I_{WS_1}$, which is the set of all local atomic actions;
  \item When the local computations are finished, the $WS_1$ generates the output messages \\$GetPays_{WW_1}$ and sends to $WSO$ (the corresponding sending
  action is denoted \\
  $s_{WSO}(GetPays_{WW_1})$), and then processing the messages from $WSO$ recursively.
\end{enumerate}

The above process is described as the following state transitions by APTC.

$WS_1=r_{WS_1}(SendLB_{WW_1})\cdot WS_{1_1}$

$WS_{1_1}=I_{WS_1}\cdot WS_{1_2}$

$WS_{1_2}=s_{WSO}(ReceiveSB_{WW_1})\cdot WS_{1_3}$

$WS_{1_3}=r_{WS_1}(SendP_{WW_1})\cdot WS_{1_4}$

$WS_{1_4}=I_{WS_1}\cdot WS_{1_5}$

$WS_{1_5}=s_{WSO}(GetPays_{WW_1})\cdot WS_{1}$

By use of the algebraic laws of APTC, the $WS_1$ can be proven exhibiting desired external behaviors.

$\tau_{I_{WS_1}}(\partial_{\emptyset}(WS_1))=r_{WS_1}(SendLB_{WW_1})\cdot s_{WSO}(ReceiveSB_{WW_1})\cdot \\
r_{WS_1}(SendP_{WW_1})\cdot s_{WSO}(GetPays_{WW_1})\cdot\tau_{I_{WS_1}}(\partial_{\emptyset}(WS_1))$

\subsubsection{BookStore WS}

After BookStore WS ($WS_2$) is created, the typical process is as follows.

\begin{enumerate}
  \item The $WS_2$ receives the request message $RequestLB_{WS_2}$ from the outside through its mail box by its name $WS_2$ (the corresponding reading action is denoted
  $r_{WS_2}(RequestLB_{WS_2})$);
  \item The $WS_2$ does some local computations mixed some atomic actions by computation logics, including $\cdot$, $+$, $\between$ and guards, the local computations are included into
  $I_{WS_2}$, which is the set of all local atomic actions;
  \item When the local computations are finished, the $WS_2$ generates the output messages \\$ReceiveRB_{WM}$ and sends to $WSOIM$ (the corresponding sending
  action is denoted \\
  $s_{WSOIM}(ReceiveRB_{WM})$);
  \item The $WS_2$ receives the response message $BBFinish_{WW_{2}}$ from $WSO$ through its mail box by its name $WS_2$ (the corresponding reading action is denoted
  $r_{WS_2}(BBFinish_{WW_2})$);
  \item The $WS_2$ does some local computations mixed some atomic actions by computation logics, including $\cdot$, $+$, $\between$ and guards, the local computations are included into
  $I_{WS_2}$, which is the set of all local atomic actions;
  \item When the local computations are finished, the $WS_2$ generates the output messages \\$BBFinish_{O}$ and sends to the outside (the corresponding sending
  action is denoted \\
  $s_{O}(BBFinish_{O})$), and then processing the messages from the outside recursively.
\end{enumerate}

The above process is described as the following state transitions by APTC.

$WS_2=r_{WS_2}(RequestLB_{WS_2})\cdot WS_{2_1}$

$WS_{2_1}=I_{WS_2}\cdot WS_{2_2}$

$WS_{2_2}=s_{WSOIM}(ReceiveRB_{WM})\cdot WS_{2_3}$

$WS_{2_3}=r_{WS_2}(BBFinish_{WW_2})\cdot WS_{2_4}$

$WS_{2_4}=I_{WS_2}\cdot WS_{2_5}$

$WS_{2_5}=s_{O}(BBFinish_{O})\cdot WS_{2}$

By use of the algebraic laws of APTC, the $WS_2$ can be proven exhibiting desired external behaviors.

$\tau_{I_{WS_2}}(\partial_{\emptyset}(WS_2))=r_{WS_2}(RequestLB_{WS_2})\cdot s_{WSOIM}(ReceiveRB_{WM})\cdot \\
r_{WS_2}(BBFinish_{WW_2})\cdot s_{O}(BBFinish_{O})\cdot\tau_{I_{WS_2}}(\partial_{\emptyset}(WS_2))$

\subsubsection{Railway WS}

After Railway WS ($WS_3$) is created, the typical process is as follows.

\begin{enumerate}
  \item The $WS_3$ receives the message $ShipByT_{WW_3}$ from the $WSO$ through its mail box by its name $WS_3$ (the corresponding reading action is denoted
  $r_{WS_3}(ShipByT_{WW_3})$);
  \item The $WS_3$ does some local computations mixed some atomic actions by computation logics, including $\cdot$, $+$, $\between$ and guards, the local computations are included into
  $I_{WS_3}$, which is the set of all local atomic actions;
  \item When the local computations are finished, the $WS_3$ generates the output messages \\$ShipFinish_{WW_3}$ and sends to the $WSO$ (the corresponding sending
  action is denoted \\
  $s_{WSO}(ShipFinish_{WW_3})$), and then processing the messages from $WSO$ recursively.
\end{enumerate}

The above process is described as the following state transitions by APTC.

$WS_3=r_{WS_3}(ShipByT_{WW_3})\cdot WS_{3_1}$

$WS_{3_1}=I_{WS_3}\cdot WS_{3_2}$

$WS_{3_2}=s_{WSO}(ShipFinish_{WW_3})\cdot WS_{3}$

By use of the algebraic laws of APTC, the $WS_3$ can be proven exhibiting desired external behaviors.

$\tau_{I_{WS_3}}(\partial_{\emptyset}(WS_3))=r_{WS_3}(ShipByT_{WW_3})\cdot s_{WSO}(ShipFinish_{WW_3})\cdot\tau_{I_{WS_3}}(\partial_{\emptyset}(WS_3))$

\subsubsection{Airline WS}

After Airline WS ($WS_4$) is created, the typical process is as follows.

\begin{enumerate}
  \item The $WS_4$ receives the message $ShipByA_{WW_4}$ from the $WSO$ through its mail box by its name $WS_4$ (the corresponding reading action is denoted
  $r_{WS_4}(ShipByA_{WW_4})$);
  \item The $WS_4$ does some local computations mixed some atomic actions by computation logics, including $\cdot$, $+$, $\between$ and guards, the local computations are included into
  $I_{WS_4}$, which is the set of all local atomic actions;
  \item When the local computations are finished, the $WS_4$ generates the output messages \\$ShipFinish_{WW_4}$ and sends to the $WSO$ (the corresponding sending
  action is denoted \\
  $s_{WSO}(ShipFinish_{WW_4})$), and then processing the messages from $WSO$ recursively.
\end{enumerate}

The above process is described as the following state transitions by APTC.

$WS_4=r_{WS_4}(ShipByA_{WW_4})\cdot WS_{4_1}$

$WS_{4_1}=I_{WS_4}\cdot WS_{4_2}$

$WS_{4_2}=s_{WSO}(ShipFinish_{WW_4})\cdot WS_{4}$

By use of the algebraic laws of APTC, the $WS_4$ can be proven exhibiting desired external behaviors.

$\tau_{I_{WS_4}}(\partial_{\emptyset}(WS_4))=r_{WS_4}(ShipByA_{WW_4})\cdot s_{WSO}(ShipFinish_{WW_4})\cdot\tau_{I_{WS_4}}(\partial_{\emptyset}(WS_4))$

\subsubsection{Service Selector}

After $SS$ is created, the typical process is as follows.

\begin{enumerate}
  \item The $SS$ receives the QoS-based WS selection request message $DI_{SS}$ from $WSO$ through its mail box by its name $SS$ (the corresponding reading action is denoted
  $r_{SS}(DI_{SS})$);
  \item The $SS$ does some local computations mixed some atomic actions and interactions with UDDI by computation logics, including $\cdot$, $+$, $\between$ and guards, the local computations are included into
  $I_{SS}$, which is the set of all local atomic actions;
  \item When the local computations are finished, the $SS$ generates the output messages $DO_{SS}$ and sends to $WSO$ (the corresponding sending
  action is denoted $s_{WSO}(DO_{SS})$), and then processes the next message from the $WSO$s recursively.
\end{enumerate}

The above process is described as the following state transitions by APTC.

$SS=r_{SS}(DI_{SS})\cdot SS_1$

$SS_1=I_{SS}\cdot SS_2$

$SS_2=s_{WSO}(DO_{SS})\cdot SS$

By use of the algebraic laws of APTC, the $SS$ can be proven exhibiting desired external behaviors.

$\tau_{I_{SS}}(\partial_{\emptyset}(SS))=r_{SS}(DI_{SS})\cdot s_{WSO}(DO_{SS})\cdot \tau_{I_{SS}}(\partial_{\emptyset}(SS))$

\subsubsection{Putting All Together into A Whole}

Now, we can put all actors together into a whole, including all AAs, WSOIM, WSO, WSs, and SS, according to the buying books exmple as illustrated in Figure \ref{ExaBB}.
The whole actor system \\
$WS_1\quad WS_2\quad WS_3\quad WS_4\quad WSOIM\quad SS=WS_1\quad WS_2\quad WS_3\quad WS_4\quad WSOIM\quad SS\quad WSO\quad \\
AA_{1}\quad AA_{2}\quad AA_{3}\quad AA_{4}\quad AA_{5}\quad AA_{6}\quad AA_{7}$ can be represented by the following process term of APTC.

$\tau_I(\partial_H(WS_1\between WS_2\between WS_3\between WS_4\between WSOIM\between SS))=\tau_I(\partial_H(WS_1\between WS_2\between WS_3\between WS_4\between WSOIM\between SS\between WSO\between AA_{1}\between AA_{2}\between AA_{3}\between AA_{4}\between AA_{5}\between
AA_{6}\between AA_{7}))$

Among all the actors, there are synchronous communications. The actor's reading and to the same actor's sending actions with the same type messages may cause communications. If to the actor's
sending action occurs before the the same actions reading action, an asynchronous communication will occur; otherwise, a deadlock $\delta$ will be caused.

There are eight kinds of asynchronous communications as follows.

(1) The communications between $WSO$ and its AAs with the following constraints.

$s_{AA_{1}}(ReceiveRB_{WA})\leq r_{AA_{1}}(ReceiveRB_{WA})$

$s_{WSO}(ReceiveRB_{AW})\leq r_{WSO}(ReceiveRB_{AW})$

$s_{AA_{2}}(SendLB_{WA})\leq r_{AA_{2}}(SendLB_{WA})$

$s_{WSO}(SendLB_{AW})\leq r_{WSO}(SendLB_{AW})$

$s_{AA_{3}}(ReceiveSB_{WA})\leq r_{AA_{3}}(ReceiveSB_{WA})$

$s_{WSO}(ReceiveSB_{AW})\leq r_{WSO}(ReceiveSB_{AW})$

$s_{AA_{4}}(CalculteP_{WA})\leq r_{AA_{4}}(CalculteP_{WA})$

$s_{WSO}(CalculateP_{AW})\leq r_{WSO}(CalculateP_{AW})$

$s_{AA_5}(SendP_{WA})\leq r_{AA_5}(SendP_{WA})$

$s_{WSO}(SendP_{AW})\leq r_{WSO}(SendP_{AW})$

$s_{AA_6}(GetPays_{WA})\leq r_{AA_6}(GetPays_{WA})$

$s_{WSO}(GetPays_{AW})\leq r_{WSO}(GetPays_{AW})$

(2) The communications between $WSO$ and $WS_1$ with the following constraints.

$s_{WS_1}(SendLB_{WW_1})\leq r_{WS_1}(SendLB_{WW_1})$

$s_{WSO}(ReceiveSB_{WW_1})\leq r_{WSO}(ReceiveSB_{WW_1})$

$s_{WS_1}(SendP_{WW_1})\leq r_{WS_1}(SendP_{WW_1})$

$s_{WSO}(GetPays_{WW_1})\leq r_{WSO}(GetPays_{WW_1})$

(3) The communications between $WSO$ and $WS_2$ with the following constraints.

$s_{WS_2}(BBFinish_{WW_2})\leq r_{WS_2}(BBFinish_{WW_2})$

(4) The communications between $WSO$ and $WS_3$ with the following constraints.

$s_{WS_3}(ShipByT_{WW_3})\leq r_{WS_3}(ShipByT_{WW_3})$

$s_{WSO}(ShipFinish_{WW_3})\leq r_{WSO}(ShipFinish_{WW_3})$

(5) The communications between $WSO$ and $WS_4$ with the following constraints.

$s_{WS_4}(ShipByA_{WW_4})\leq r_{WS_4}(ShipByA_{WW_4})$

$s_{WSO}(ShipFinish_{WW_4})\leq r_{WSO}(ShipFinish_{WW_4})$

(6) The communications between $WSO$ and $WSOIM$ with the following constraints.

$s_{WSO}(ReceiveRB_{MW})\leq r_{WSO}(ReceiveRB_{MW})$

(7) The communications between $WSO$ and $SS$ with the following constraints.

$s_{SS}(DI_{SS})\leq r_{SS}(DI_{SS})$

$s_{WSO}(DO_{SS})\leq r_{WSO}(DO_{SS})$

(8) The communications between $WS_2$ and $WSOIM$ with the following constraints.

$s_{WSOIM}(ReceiveRB_{WM})\leq r_{WSOIM}(ReceiveRB_{WM})$

So, the set $H$ and $I$ can be defined as follows.

$H=\{s_{AA_{1}}(ReceiveRB_{WA}), r_{AA_{1}}(ReceiveRB_{WA}),\\
s_{WSO}(ReceiveRB_{AW}), r_{WSO}(ReceiveRB_{AW}),\\
s_{AA_{2}}(SendLB_{WA}), r_{AA_{2}}(SendLB_{WA}),\\
s_{WSO}(SendLB_{AW}), r_{WSO}(SendLB_{AW}),\\
s_{AA_{3}}(ReceiveSB_{WA}), r_{AA_{3}}(ReceiveSB_{WA}),\\
s_{WSO}(ReceiveSB_{AW}), r_{WSO}(ReceiveSB_{AW}),\\
s_{AA_{4}}(CalculteP_{WA}), r_{AA_{4}}(CalculteP_{WA}),\\
s_{WSO}(CalculateP_{AW}), r_{WSO}(CalculateP_{AW}),\\
s_{AA_5}(SendP_{WA}), r_{AA_5}(SendP_{WA}),\\
s_{WSO}(SendP_{AW}), r_{WSO}(SendP_{AW}),\\
s_{AA_6}(GetPays_{WA}), r_{AA_6}(GetPays_{WA}),\\
s_{WSO}(GetPays_{AW}), r_{WSO}(GetPays_{AW}),\\
s_{WS_1}(SendLB_{WW_1}), r_{WS_1}(SendLB_{WW_1}),\\
s_{WSO}(ReceiveSB_{WW_1}), r_{WSO}(ReceiveSB_{WW_1}),\\
s_{WS_1}(SendP_{WW_1}), r_{WS_1}(SendP_{WW_1}),\\
s_{WSO}(GetPays_{WW_1}), r_{WSO}(GetPays_{WW_1}),\\
s_{WS_2}(BBFinish_{WW_2}), r_{WS_2}(BBFinish_{WW_2}),\\
s_{WS_3}(ShipByT_{WW_3}), r_{WS_3}(ShipByT_{WW_3}),\\
s_{WSO}(ShipFinish_{WW_3}), r_{WSO}(ShipFinish_{WW_3}),\\
s_{WS_4}(ShipByA_{WW_4}), r_{WS_4}(ShipByA_{WW_4}),\\
s_{WSO}(ShipFinish_{WW_4}), r_{WSO}(ShipFinish_{WW_4}),\\
s_{WSO}(ReceiveRB_{MW}), r_{WSO}(ReceiveRB_{MW}),\\
s_{SS}(DI_{SS}), r_{SS}(DI_{SS}),\\
s_{WSO}(DO_{SS}), r_{WSO}(DO_{SS}),\\
s_{WSOIM}(ReceiveRB_{WM}), r_{WSOIM}(ReceiveRB_{WM})\\
|s_{AA_{1}}(ReceiveRB_{WA})\nleq r_{AA_{1}}(ReceiveRB_{WA}),\\
s_{WSO}(ReceiveRB_{AW})\nleq r_{WSO}(ReceiveRB_{AW}),\\
s_{AA_{2}}(SendLB_{WA})\nleq r_{AA_{2}}(SendLB_{WA}),\\
s_{WSO}(SendLB_{AW})\nleq r_{WSO}(SendLB_{AW}),\\
s_{AA_{3}}(ReceiveSB_{WA})\nleq r_{AA_{3}}(ReceiveSB_{WA}),\\
s_{WSO}(ReceiveSB_{AW})\nleq r_{WSO}(ReceiveSB_{AW}),\\
s_{AA_{4}}(CalculteP_{WA})\nleq r_{AA_{4}}(CalculteP_{WA}),\\
s_{WSO}(CalculateP_{AW})\nleq r_{WSO}(CalculateP_{AW}),\\
s_{AA_5}(SendP_{WA})\nleq r_{AA_5}(SendP_{WA}),\\
s_{WSO}(SendP_{AW})\nleq r_{WSO}(SendP_{AW}),\\
s_{AA_6}(GetPays_{WA})\nleq r_{AA_6}(GetPays_{WA}),\\
s_{WSO}(GetPays_{AW})\nleq r_{WSO}(GetPays_{AW}),\\
s_{WS_1}(SendLB_{WW_1})\nleq r_{WS_1}(SendLB_{WW_1}),\\
s_{WSO}(ReceiveSB_{WW_1})\nleq r_{WSO}(ReceiveSB_{WW_1}),\\
s_{WS_1}(SendP_{WW_1})\nleq r_{WS_1}(SendP_{WW_1}),\\
s_{WSO}(GetPays_{WW_1})\nleq r_{WSO}(GetPays_{WW_1}),\\
s_{WS_2}(BBFinish_{WW_2})\nleq r_{WS_2}(BBFinish_{WW_2}),\\
s_{WS_3}(ShipByT_{WW_3})\nleq r_{WS_3}(ShipByT_{WW_3}),\\
s_{WSO}(ShipFinish_{WW_3})\nleq r_{WSO}(ShipFinish_{WW_3}),\\
s_{WS_4}(ShipByA_{WW_4})\nleq r_{WS_4}(ShipByA_{WW_4}),\\
s_{WSO}(ShipFinish_{WW_4})\nleq r_{WSO}(ShipFinish_{WW_4}),\\
s_{WSO}(ReceiveRB_{MW})\nleq r_{WSO}(ReceiveRB_{MW}),\\
s_{SS}(DI_{SS})\nleq r_{SS}(DI_{SS}),\\
s_{WSO}(DO_{SS})\nleq r_{WSO}(DO_{SS}),\\
s_{WSOIM}(ReceiveRB_{WM})\nleq r_{WSOIM}(ReceiveRB_{WM})\}$

$I=\{s_{AA_{1}}(ReceiveRB_{WA}), r_{AA_{1}}(ReceiveRB_{WA}),\\
s_{WSO}(ReceiveRB_{AW}), r_{WSO}(ReceiveRB_{AW}),\\
s_{AA_{2}}(SendLB_{WA}), r_{AA_{2}}(SendLB_{WA}),\\
s_{WSO}(SendLB_{AW}), r_{WSO}(SendLB_{AW}),\\
s_{AA_{3}}(ReceiveSB_{WA}), r_{AA_{3}}(ReceiveSB_{WA}),\\
s_{WSO}(ReceiveSB_{AW}), r_{WSO}(ReceiveSB_{AW}),\\
s_{AA_{4}}(CalculteP_{WA}), r_{AA_{4}}(CalculteP_{WA}),\\
s_{WSO}(CalculateP_{AW}), r_{WSO}(CalculateP_{AW}),\\
s_{AA_5}(SendP_{WA}), r_{AA_5}(SendP_{WA}),\\
s_{WSO}(SendP_{AW}), r_{WSO}(SendP_{AW}),\\
s_{AA_6}(GetPays_{WA}), r_{AA_6}(GetPays_{WA}),\\
s_{WSO}(GetPays_{AW}), r_{WSO}(GetPays_{AW}),\\
s_{WS_1}(SendLB_{WW_1}), r_{WS_1}(SendLB_{WW_1}),\\
s_{WSO}(ReceiveSB_{WW_1}), r_{WSO}(ReceiveSB_{WW_1}),\\
s_{WS_1}(SendP_{WW_1}), r_{WS_1}(SendP_{WW_1}),\\
s_{WSO}(GetPays_{WW_1}), r_{WSO}(GetPays_{WW_1}),\\
s_{WS_2}(BBFinish_{WW_2}), r_{WS_2}(BBFinish_{WW_2}),\\
s_{WS_3}(ShipByT_{WW_3}), r_{WS_3}(ShipByT_{WW_3}),\\
s_{WSO}(ShipFinish_{WW_3}), r_{WSO}(ShipFinish_{WW_3}),\\
s_{WS_4}(ShipByA_{WW_4}), r_{WS_4}(ShipByA_{WW_4}),\\
s_{WSO}(ShipFinish_{WW_4}), r_{WSO}(ShipFinish_{WW_4}),\\
s_{WSO}(ReceiveRB_{MW}), r_{WSO}(ReceiveRB_{MW}),\\
s_{SS}(DI_{SS}), r_{SS}(DI_{SS}),\\
s_{WSO}(DO_{SS}), r_{WSO}(DO_{SS}),\\
s_{WSOIM}(ReceiveRB_{WM}), r_{WSOIM}(ReceiveRB_{WM})\\
|s_{AA_{1}}(ReceiveRB_{WA})\leq r_{AA_{1}}(ReceiveRB_{WA}),\\
s_{WSO}(ReceiveRB_{AW})\leq r_{WSO}(ReceiveRB_{AW}),\\
s_{AA_{2}}(SendLB_{WA})\leq r_{AA_{2}}(SendLB_{WA}),\\
s_{WSO}(SendLB_{AW})\leq r_{WSO}(SendLB_{AW}),\\
s_{AA_{3}}(ReceiveSB_{WA})\leq r_{AA_{3}}(ReceiveSB_{WA}),\\
s_{WSO}(ReceiveSB_{AW})\leq r_{WSO}(ReceiveSB_{AW}),\\
s_{AA_{4}}(CalculteP_{WA})\leq r_{AA_{4}}(CalculteP_{WA}),\\
s_{WSO}(CalculateP_{AW})\leq r_{WSO}(CalculateP_{AW}),\\
s_{AA_5}(SendP_{WA})\leq r_{AA_5}(SendP_{WA}),\\
s_{WSO}(SendP_{AW})\leq r_{WSO}(SendP_{AW}),\\
s_{AA_6}(GetPays_{WA})\leq r_{AA_6}(GetPays_{WA}),\\
s_{WSO}(GetPays_{AW})\leq r_{WSO}(GetPays_{AW}),\\
s_{WS_1}(SendLB_{WW_1})\leq r_{WS_1}(SendLB_{WW_1}),\\
s_{WSO}(ReceiveSB_{WW_1})\leq r_{WSO}(ReceiveSB_{WW_1}),\\
s_{WS_1}(SendP_{WW_1})\leq r_{WS_1}(SendP_{WW_1}),\\
s_{WSO}(GetPays_{WW_1})\leq r_{WSO}(GetPays_{WW_1}),\\
s_{WS_2}(BBFinish_{WW_2})\leq r_{WS_2}(BBFinish_{WW_2}),\\
s_{WS_3}(ShipByT_{WW_3})\leq r_{WS_3}(ShipByT_{WW_3}),\\
s_{WSO}(ShipFinish_{WW_3})\leq r_{WSO}(ShipFinish_{WW_3}),\\
s_{WS_4}(ShipByA_{WW_4})\leq r_{WS_4}(ShipByA_{WW_4}),\\
s_{WSO}(ShipFinish_{WW_4})\leq r_{WSO}(ShipFinish_{WW_4}),\\
s_{WSO}(ReceiveRB_{MW})\leq r_{WSO}(ReceiveRB_{MW}),\\
s_{SS}(DI_{SS})\leq r_{SS}(DI_{SS}),\\
s_{WSO}(DO_{SS})\leq r_{WSO}(DO_{SS}),\\
s_{WSOIM}(ReceiveRB_{WM})\leq r_{WSOIM}(ReceiveRB_{WM})\}\\
\cup I_{AA_{1}}\cup I_{AA_{2}}\cup I_{AA_{3}}\cup I_{AA_{4}}\cup I_{AA_{5}}\cup I_{AA_{6}}\cup I_{AA_{7}}
\cup I_{WSOIM}\cup I_{WSO}\cup I_{WS_1}\cup I_{WS_2}\cup I_{WS_3}\cup I_{WS_4}\cup I_{SS}$

Then, we can get the following conclusion.

\begin{theorem}
The whole actor system of buying books example illustrated in Figure \ref{ExaBB} can exhibits desired external behaviors.
\end{theorem}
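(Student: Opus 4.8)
The plan is to follow exactly the strategy used for the ABP protocol in Section~\ref{app} and for the four preceding actor systems, since the present theorem has the same shape: I want to show that
$$\tau_I(\partial_H(WS_1\between WS_2\between WS_3\between WS_4\between WSOIM\between SS))=r_{WS_2}(RequestLB_{WS_2})\cdot s_{O}(BBFinish_{O})\cdot \tau_I(\partial_H(WS_1\between WS_2\between WS_3\between WS_4\between WSOIM\between SS)).$$
First I would take as given the state-transition equations already written for each of the fifteen participating actors ($WS_1,\dots,WS_4$, $WSOIM$, $SS$, $WSO$ and $AA_1,\dots,AA_7$), together with the already-established single-actor normal forms such as $\tau_{I_{WSO}}(\partial_{\emptyset}(WSO))$ and $\tau_{I_{WS_2}}(\partial_{\emptyset}(WS_2))$. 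Using the parallelism and communication axioms of $APTC$ ($P1$--$P10$, $C11$--$C18$) I would expand the full merge $WS_1\between\cdots\between SS\between WSO\between AA_1\between\cdots\between AA_7$ into a single sum of sequential terms.

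Next I would apply the encapsulation operator $\partial_H$ (axioms $D1$--$D6$). The point of $H$ is that every send action and its matching receive action are blocked in isolation; only when the causality constraint $s_X(m)\leq r_X(m)$ recorded in $I$ is respected do they combine into the communication action $c_X(m)$, while any attempt to fire a bare send or receive is mapped to $\delta$ and then absorbed by $A6$ and $A7$. This step forces the intended asynchronous message order throughout the run (request to $WSOIM$, creation of $WSO$ and its $AA$s, the AA hand-shakes, the $WS_1$ interactions, the service-selection exchange with $SS$, the shipment call, and the final reply). The outcome I expect is a guarded linear recursive specification $E$ whose head equation is $X_1=r_{WS_2}(RequestLB_{WS_2})\cdot X_2$ and whose remaining equations thread the communications $c_{\bullet}$ and the local action sets $I_{\bullet}$ in sequence; I would then invoke the elimination theorem to write $\partial_H(WS_1\between\cdots\between SS)=\langle X_1|E\rangle$.

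Finally I would apply the abstraction operator $\tau_I$ (axioms $TI1$--$TI6$, $G28$--$G29$): since $I$ contains all the internal communications $c_{\bullet}$ together with every local computation set $I_{AA_i}$, $I_{WSO}$, $I_{WSOIM}$, $I_{WS_i}$, $I_{SS}$, every action strictly between the initial receive and the final send is renamed to $\tau$. Using $B1$--$B3$ to collapse the $\tau$-prefixes and $CFAR$ to remove the $\tau$-loops introduced by abstraction (by Theorems~\ref{SCFAR} and \ref{CCFARG}), the whole internal trace reduces to silent steps and the desired normal form remains. The main obstacle, and the only genuine departure from the earlier examples, is the conditional branch in $WSO_{30}$: the orchestration splits on the guards $\{Pays<=100\$\}$ and $\{Pays>100\$\}$ into a call to $WS_3$ (rail) or to $WS_4$ (air). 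I would treat this with the guard axioms $G1$, $G2$, $G6$ together with $B2$, showing that after $\tau_I$ both guarded summands produce the same observable continuation $s_{O}(BBFinish_{O})\cdot\tau_I(\partial_H(\cdots))$, so that the choice is internal and invisible externally; verifying that the two branch closures $\langle X|E\rangle$ re-enter the same head state (rather than diverging) is where the recursion bookkeeping must be done carefully, and it is the step I expect to require the most attention. Routine parts of the expansion I would omit, referring to Section~\ref{app} as the other actor-model theorems do.
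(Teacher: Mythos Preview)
Your overall strategy---expand the merge with $P1$--$P10$, encapsulate with $\partial_H$, obtain a guarded linear recursive specification, abstract with $\tau_I$, and collapse with $B1$--$B3$ and $CFAR$---is exactly what the paper does; its own proof simply states the target equation and refers back to Section~\ref{app} for the mechanics, so in that sense your plan is both correct and more explicit than the paper's.

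There is one technical misstatement worth correcting. You write that when the causality constraint $s_X(m)\leq r_X(m)$ holds the pair ``combine into the communication action $c_X(m)$''. That is the \emph{synchronous} picture from the ABP example, but the actor systems here use the asynchronous communication of Section~\ref{ac}: the communication merge $\mid$ and the function $\gamma$ are removed, so no $c_X(m)$ exists. Instead, both $s_X(m)$ and $r_X(m)$ survive as separate actions; the set $H$ blocks them only when the causality constraint is violated, and the set $I$ (look at its definition just above the theorem) contains the bare $s_X(m)$ and $r_X(m)$ themselves, not a merged $c_X(m)$. Hence $\tau_I$ hides the send and the receive individually. This does not change the shape of the argument---after abstraction you still get only $\tau$'s between $r_{WS_2}(RequestLB_{WS_2})$ and $s_O(BBFinish_O)$---but the intermediate recursive specification $E$ will carry sequences like $s_X(m)\cdot\ldots\cdot r_X(m)$ rather than single $c_X(m)$ steps, and $TI2$ is applied to each of them separately. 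Your treatment of the guarded branch at $WSO_{30}$ is fine and is indeed the only place requiring extra care beyond the earlier examples.
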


\begin{proof}
By use of the algebraic laws of APTC, we can prove the following equation:

$\tau_I(\partial_H(WS_1\between WS_2\between WS_3\between WS_4\between WSOIM\between SS))\\
=\tau_I(\partial_H(WS_1\between WS_2\between WS_3\between WS_4\between WSOIM\between SS\between WSO\between AA_{1}\between AA_{2}\between AA_{3}\between AA_{4}\between AA_{5}\between
AA_{6}\between AA_{7}))\\
=r_{WS_2}(RequestLB_{WS_2})\cdot s_{O}(BBFinish_{O})\cdot \tau_I(\partial_H(WS_1\between WS_2\between WS_3\between WS_4\between WSOIM\between SS\between WSO\between AA_{1}\between AA_{2}\between AA_{3}\between AA_{4}\between AA_{5}\between
AA_{6}\between AA_{7}))\\
=r_{WS_2}(RequestLB_{WS_2})\cdot s_{O}(BBFinish_{O})\cdot \tau_I(\partial_H(WS_1\between WS_2\between WS_3\between WS_4\between WSOIM\between SS))$

For the details of the proof, we omit them, please refer to section \ref{app}.
\end{proof}

\newpage

\newpage\appendix
\section{XML-Based Web Service Specifications for Buying Books Example}\label{xml}

In Figure \ref{REWSC}, the user agent business process being modeled as UserAgent WSO described by WS-BPEL is described in following.

-------------------------------------------------------------------------------

$\langle$process name="UserAgent"

\quad targetNamespace="http://example.wscs.com/2011/ws-bp/useragent"

\quad xmlns="http://docs.oasis-open.org/wsbpel/2.0/process/executable"

\quad xmlns:lns="http://example.wscs.com/2011/wsdl/UserAgent.wsdl"

\quad xmlns:bns="http://example.wscs.com/2011/wsdl/BookStore.wsdl"$\rangle$

\quad $\langle$documentation xml:lang="EN"$\rangle$

\quad\quad This document describes the UserAgent process.

\quad $\langle$/documentation$\rangle$

\quad $\langle$partnerLinks$\rangle$

\quad\quad $\langle$partnerLink name="UserAndUserAgent"

\quad\quad\quad partnerLinkType="lns:UserAnduserAgentLT" myRole="userAgent"/$\rangle$

\quad\quad $\langle$partnerLink name="UserAgentAndBookStore"

\quad\quad\quad partnerLinkType="lns:UserAgentAndBookStoreLT"

\quad\quad\quad myRole="user" partnerRole="seller"/$\rangle$

\quad $\langle$/partnerLinks$\rangle$

\quad $\langle$variables$\rangle$

\quad\quad $\langle$variable name="RequestListofBooks" messageType="lns:requestListofBooks"/$\rangle$

\quad\quad $\langle$variable name="RequestListofBooksResponse" messageType="lns:requestListofBooksResponse"/$\rangle$

\quad\quad $\langle$variable name="ReceiveListofBooks" messageType="lns:receiveListofBooks"/$\rangle$

\quad\quad $\langle$variable name="ReceiveListofBooksResponse" messageType="lns:receiveListofBooksResponse"/$\rangle$

\quad\quad $\langle$variable name="SelectListofBooks"  messageType="lns:selectListofBooks"/$\rangle$

\quad\quad $\langle$variable name="SelectListofBooksResponse"  messageType="lns:selectListofBooksResponse"/$\rangle$

\quad\quad $\langle$variable name="ReceivePrice" messageType="lns:receivePrice"/$\rangle$

\quad\quad $\langle$variable name="ReceivePriceResponse" messageType="lns:receivePriceResponse"/$\rangle$

\quad\quad $\langle$variable name="Pays" messageType="lns:pays"/$\rangle$

\quad\quad $\langle$variable name="PaysResponse" messageType="lns:paysResponse"/$\rangle$

\quad $\langle$/variables$\rangle$

\quad $\langle$sequence$\rangle$

\quad\quad $\langle$receive partnerLink="UserAndUserAgent"

\quad\quad\quad portType="lns:userAgent4userInterface"

\quad\quad\quad operation="opRequestListofBooks" variable="RequestListofBooks"

\quad\quad\quad createInstance="yes"$\rangle$

\quad\quad $\langle$/receive$\rangle$

\quad\quad $\langle$invoke partnerLink="UserAgentAndBookStore"

\quad\quad\quad portType="bns:bookStore4userAgentInterface"

\quad\quad\quad operation="opRequestListofBooks" inputVariable="RequestListofBooks"

\quad\quad\quad outputVariable="RequestListofBooksResponse"$\rangle$

\quad\quad $\langle$/invoke$\rangle$

\quad\quad $\langle$receive partnerLink="UserAgentAndBookStore"

\quad\quad\quad portType="lns:userAgent4BookStoreInterface"

\quad\quad\quad operation="opReceiveListofBooks" variable="ReceiveListofBooks"$\rangle$

\quad\quad $\langle$/receive$\rangle$

\quad\quad $\langle$reply partnerLink="UserAgentAndBookStore"

\quad\quad\quad portType="lns:userAgent4BookStoreInterface"

\quad\quad\quad operation="opReceiveListofBooks" variable="ReceiveListofBooksResponse"$\rangle$

\quad\quad $\langle$/reply$\rangle$

\quad\quad $\langle$!--send the received book list to the user--$\rangle$

\quad\quad $\langle$receive partnerLink="UserAndUserAgent"

\quad\quad\quad portType="lns:userAgent4userInterface"

\quad\quad\quad operation="opSelectListofBooks" variable="SelectListofBooks"$\rangle$

\quad\quad $\langle$/receive$\rangle$

\quad\quad $\langle$reply partnerLink="UserAndUserAgent"

\quad\quad\quad portType="lns:userAgent4userInterface"

\quad\quad\quad operation="opSelectListofBooks" variable="SelectListofBooksResponse"$\rangle$

\quad\quad $\langle$/reply$\rangle$

\quad\quad $\langle$invoke partnerLink="UserAgentAndBookStore"

\quad\quad\quad portType="bns:bookStore4userAgentInterface"

\quad\quad\quad operation="opSelectListofBooks" inputVariable="SelectListofBooks"

\quad\quad\quad outputVariable="SelectListofBooksResponse"$\rangle$

\quad\quad $\langle$/invoke$\rangle$

\quad\quad $\langle$receive partnerLink="UserAgentAndBookStore"

\quad\quad\quad portType="lns:userAgent4BookStoreInterface"

\quad\quad\quad operation="opReceivePrice" variable="ReceivePrice"$\rangle$

\quad\quad $\langle$/receive$\rangle$

\quad\quad $\langle$reply partnerLink="UserAgentAndBookStore"

\quad\quad\quad portType="lns:userAgent4BookStoreInterface"

\quad\quad\quad operation="opReceivePrice" variable="ReceivePriceResponse"$\rangle$

\quad\quad $\langle$/reply$\rangle$

\quad\quad $\langle$!--send the price to the user and get pays from the user--$\rangle$

\quad\quad $\langle$invoke partnerLink="UserAgentAndBookStore"

\quad\quad\quad portType="bns:bookStore4userAgentInterface"

\quad\quad\quad operation="opPays" inputVariable="Pays" outputVariable="PaysResponse"$\rangle$

\quad\quad $\langle$/invoke$\rangle$

\quad\quad $\langle$reply partnerLink="UserAndUserAgent"

\quad\quad\quad portType="lns:userAgent4userInterface"

\quad\quad\quad operation="opRequestListofBooks" variable="PaysResponse"$\rangle$

\quad\quad $\langle$/reply$\rangle$

\quad $\langle$/sequence$\rangle$

$\langle$/process$\rangle$

-------------------------------------------------------------------------------

The interface WS for UserAgent WSO being called UserAgent WS described by WSDL is as following.

-------------------------------------------------------------------------------

$\langle$?xml version="1.0" encoding="utf-8"?$\rangle$

$\langle$description

\quad xmlns="http://www.w3.org/2004/08/wsdl"

\quad targetNamespace= "http://example.wscs.com/2011/wsdl/UserAgent.wsdl"

\quad\quad\quad xmlns:plnk="http://docs.oasis-open.org/wsbpel/2.0/plnktype"

\quad xmlns:tns= "http://example.wscs.com/2011/wsdl/UserAgent.wsdl"

\quad xmlns:ghns = "http://example.wscs.com/2011/schemas/UserAgent.xsd"

\quad xmlns:bsns = "http://example.wscs.com/2011/wsdl/BookStore.wsdl"

\quad xmlns:wsoap= "http://www.w3.org/2004/08/wsdl/soap12"

\quad xmlns:soap="http://www.w3.org/2003/05/soap-envelope"$\rangle$

\quad $\langle$documentation$\rangle$

\quad \quad This document describes the userAgent Web service.

\quad $\langle$/documentation$\rangle$

\quad$\langle$types$\rangle$

\quad\quad$\langle$xs:schema

\quad\quad\quad xmlns:xs="http://www.w3.org/2001/XMLSchema"

\quad\quad\quad targetNamespace="http://example.wscs.com/2011/schemas/UserAgent.xsd"

\quad\quad\quad xmlns="http://example.wscs.com/2011/schemas/UserAgent.xsd"$\rangle$

\quad\quad\quad $\langle$xs:element name="requestListofBooks" type="tRequestListofBooks"/$\rangle$

\quad\quad\quad $\langle$xs:complexType name="tRequestListofBooks"/$\rangle$

\quad\quad\quad $\langle$xs:element name="requestListofBooksReponse"

\quad\quad\quad\quad type="tRequestListofBooksResponse"/$\rangle$

\quad\quad\quad $\langle$xs:complexType name="tRequestListofBooksResponse"/$\rangle$

\quad\quad\quad $\langle$xs:element name="receiveListofBooks" type="tReceiveListofBooks"/$\rangle$

\quad\quad\quad $\langle$xs:complexType name="tReceiveListofBooks"/$\rangle$

\quad\quad\quad $\langle$xs:element name="receiveListofBooksResponse"

\quad\quad\quad\quad type="tReceiveListofBooksResponse"/$\rangle$

\quad\quad\quad $\langle$xs:complexType name="tReceiveListofBooksResponse"/$\rangle$

\quad\quad\quad $\langle$xs:element name="selectListofBooks" type="tSelectListofBooks"/$\rangle$

\quad\quad\quad $\langle$xs:complexType name="tSelectListofBooks"/$\rangle$

\quad\quad\quad $\langle$xs:element name="selectListofBooksResponse"

\quad\quad\quad\quad type="tSelectListofBooksResponse"/$\rangle$

\quad\quad\quad $\langle$xs:complexType name="tSelectListofBooksResponse"/$\rangle$

\quad\quad\quad $\langle$xs:element name="receivePrice" type="xs:float"/$\rangle$

\quad\quad\quad $\langle$xs:element name="receivePriceResponse" type="tReceivePriceResponse"/$\rangle$

\quad\quad\quad $\langle$xs:complexType name="tReceivePriceResponse"/$\rangle$

\quad\quad\quad $\langle$xs:element name="pays" type="tPays"/$\rangle$

\quad\quad\quad $\langle$xs:complexType name="tPays"/$\rangle$

\quad\quad\quad $\langle$xs:element name="paysResponse" type="tPaysResponse"/$\rangle$

\quad\quad\quad $\langle$xs:complexType name="tPaysResponse"/$\rangle$

\quad\quad $\langle$/xs:schema$\rangle$

\quad $\langle$/types$\rangle$

\quad $\langle$interface name = "UserAgent4UserInterface"$\rangle$

\quad\quad$\langle$operation name="opRequestListofBooks"$\rangle$

\quad\quad\quad $\langle$input messageLabel="InOpRequestListofBooks"

\quad\quad\quad\quad element="ghns:requestListofBooks" /$\rangle$

\quad\quad\quad $\langle$output messageLabel="OutOpRequestListofBooks"

\quad\quad\quad\quad element="ghns:requestListofBooksReponse" /$\rangle$

\quad\quad $\langle$/operation$\rangle$

\quad\quad $\langle$operation name="opSelectListofBooks"$\rangle$

\quad\quad\quad $\langle$input messageLabel="InOpSelectListofBooks"

\quad\quad\quad\quad element="ghns:selectListofBooks" /$\rangle$

\quad\quad\quad $\langle$output messageLabel="OutOpSelectListofBooks"

\quad\quad\quad\quad element="ghns:selectListofBooksResponse" /$\rangle$

\quad\quad $\langle$/operation$\rangle$

\quad $\langle$/interface$\rangle$

\quad $\langle$interface name = "UserAgent4BookStoreInterface"$\rangle$

\quad\quad $\langle$operation name="opReceiveListofBooks"$\rangle$

\quad\quad\quad $\langle$input messageLabel="InOpReceiveListofBooks"

\quad\quad\quad\quad element="ghns:receiveListofBooks" /$\rangle$

\quad\quad\quad $\langle$output messageLabel="OutOpReceiveListofBooks"

\quad\quad\quad\quad element="ghns:receiveListofBooksResponse" /$\rangle$

\quad\quad $\langle$/operation$\rangle$

\quad\quad $\langle$operation name="opReceivePrice"$\rangle$

\quad\quad\quad $\langle$input messageLabel="InOpReceivePrice"

\quad\quad\quad\quad element="ghns:receivePrice" /$\rangle$

\quad\quad\quad $\langle$output messageLabel="OutOpReceivePrice"

\quad\quad\quad\quad element="ghns:receivePriceResponse" /$\rangle$

\quad\quad $\langle$/operation$\rangle$

\quad $\langle$/interface$\rangle$

\quad $\langle$plnk:partnerLinkType name="UserAndUserAgentLT"$\rangle$

\quad\quad $\langle$plnk:role name="UserAgent"

\quad\quad\quad portType="tns:UserAgent4UserInterface" /$\rangle$

\quad $\langle$/plnk:partnerLinkType$\rangle$

\quad $\langle$plnk:partnerLinkType name="UserAgentAndBookStoreLT"$\rangle$

\quad\quad $\langle$plnk:role name="user"

\quad\quad\quad portType="tns:UserAgent4BookStoreInterface" /$\rangle$

\quad\quad $\langle$plnk:role name="seller"

\quad\quad\quad portType="bsns:BookStore4UserAgentInterface" /$\rangle$

\quad $\langle$/plnk:partnerLinkType$\rangle$

$\langle$/description$\rangle$

-------------------------------------------------------------------------------

In the buying books example, the WSC between user agent and bookstore (exactly UserAgentWS and BookStoreWS) called BuyingBookWSC being described by WS-CDL is following.

-------------------------------------------------------------------------------

$\langle$?xml version="1.0" encoding="UTF-8"?$\rangle$

$\langle$package xmlns="http://www.w3.org/2005/10/cdl"

\quad xmlns:cdl="http://www.w3.org/2005/10/cdl"

\quad xmlns:xsi="http://www.w3.org/2001/XMLSchema-instance"

\quad xmlns:xsd="http://www.w3.org/2001/XMLSchema"

\quad xmlns:bans="http://example.wscs.com/2011/wsdl/UserAgent.wsdl"

\quad xmlns:bsns="http://example.wscs.com/2011/wsdl/BookStore.wsdl"

\quad xmlns:tns="http://example.wscs.com/2011/cdl/BuyingBookWSC"

\quad targetNamespace="http://example.wscs.com/2011/cdl/BuyingBookWSC"

\quad name="BuyingBookWSC"

\quad version="1.0"$\rangle$

\quad $\langle$informationType name="requestListofBooksType" type="bsns:tRequestListofBooks"/$\rangle$

\quad $\langle$informationType name="requestListofBooksResponseType"

\quad\quad type="bsns:tRequestListofBooksResponse"/$\rangle$

\quad $\langle$informationType name="listofBooksType" type="bsns:tListofBooks"/$\rangle$

\quad $\langle$informationType name="listofBooksResponseType"

\quad\quad type="bsns:tListofBooksResponse"/$\rangle$

\quad $\langle$informationType name="selectListofBooksType"

\quad\quad type="bsns:tSelectListofBooks"/$\rangle$

\quad $\langle$informationType name="selectListofBooksResponseType"

\quad\quad type="bsns:tSelectListofBooksResponse"/$\rangle$

\quad $\langle$informationType name="priceType" type="bsns:tPrice"/$\rangle$

\quad $\langle$informationType name="priceResponseType" type="bsns:tPriceResponse"/$\rangle$

\quad $\langle$informationType name="paysType" type="bsns:tPays"/$\rangle$

\quad $\langle$informationType name="paysResponseType" type="bsns:tPaysResponse"/$\rangle$

\quad $\langle$roleType name="UserAgent"$\rangle$

\quad\quad $\langle$behavior name="UserAgent4BookStore" interface="bans:BuyAgent4BookStoreInterface"/$\rangle$

\quad $\langle$/roleType$\rangle$

\quad $\langle$roleType name="BookStore"$\rangle$

\quad\quad $\langle$behavior name="BookStore4userAgent" interface="rns:BookStore4userAgentInterface"/$\rangle$

\quad $\langle$/roleType$\rangle$

\quad $\langle$relationshipType name="UserAgentAndBookStoreRelationship"$\rangle$

\quad\quad $\langle$roleType typeRef="tns:user" behavior="UserAgent4BookStore"/$\rangle$

\quad\quad $\langle$roleType typeRef="tns:seller" behavior="BookStore4userAgent"/$\rangle$

\quad $\langle$/relationshipType$\rangle$

\quad $\langle$choreography name="BuyingBookWSC"$\rangle$

\quad\quad $\langle$relationship type="tns:UserAgentAndBookStoreRelationship"/$\rangle$

\quad\quad $\langle$variableDefinitions$\rangle$

\quad\quad\quad $\langle$variable name="requestListofBooks" informationType="tns:requestListofBooksType"/$\rangle$

\quad\quad\quad $\langle$variable name="requestListofBooksResponse"

\quad\quad\quad\quad informationType="tns:requestListofBooksResponseType"/$\rangle$

\quad\quad\quad $\langle$variable name="listofBooks" informationType="tns:listofBooksType"/$\rangle$

\quad\quad\quad $\langle$variable name="listofBooksResponse" informationType="tns:listofBooksResponseType"/$\rangle$

\quad\quad\quad $\langle$variable name="selectListofBooks" informationType="tns:selectListofBooksType"/$\rangle$

\quad\quad\quad $\langle$variable name="selectListofBooksResponse"

\quad\quad\quad\quad informationType="tns:selectListofBooksResponseType"/$\rangle$

\quad\quad\quad $\langle$variable name="price" informationType="tns:priceType"/$\rangle$

\quad\quad\quad $\langle$variable name="priceResponse" informationType="tns:priceResponseType"/$\rangle$

\quad\quad\quad $\langle$variable name="pays" informationType="tns:paysType"/$\rangle$

\quad\quad\quad $\langle$variable name="paysResponse" informationType="tns:paysResponseType"/$\rangle$

\quad\quad $\langle$/variableDefinitions$\rangle$

\quad\quad $\langle$sequence$\rangle$

\quad\quad\quad $\langle$interaction name="InteractionBetweenBAandBS1"$\rangle$

\quad\quad\quad\quad $\langle$participate relationshipType="tns:UserAgentAndBookStoreRelationship"

\quad\quad\quad\quad\quad fromRoleTypeRef="tns:user" toRoleTypeRef="tns:seller"/$\rangle$

\quad\quad\quad\quad $\langle$exchange name="requestListofBooks"

\quad\quad\quad\quad\quad informationType="tns:requestListofBooksType" action="request"$\rangle$

\quad\quad\quad\quad\quad $\langle$send variable="cdl:getVariable('tns:requestListofBooks','','')"/$\rangle$

\quad\quad\quad\quad\quad $\langle$receive variable="cdl:getVariable('tns:requestListofBooks','','')"/$\rangle$

\quad\quad\quad\quad $\langle$/exchange$\rangle$

\quad\quad\quad\quad $\langle$exchange name="requestListofBooksResponse"

\quad\quad\quad\quad\quad informationType="requestListofBooksResponseType" action="respond"$\rangle$

\quad\quad\quad\quad\quad $\langle$send variable="cdl:getVariable('tns:requestListofBooksResponse','','')"/$\rangle$

\quad\quad\quad\quad\quad $\langle$receive variable="cdl:getVariable('tns:requestListofBooksResponse','','')"/$\rangle$

\quad\quad\quad\quad $\langle$/exchange$\rangle$

\quad\quad\quad $\langle$/interaction$\rangle$

\quad\quad\quad $\langle$interaction name="InteractionBetweenBAandBS2"$\rangle$

\quad\quad\quad\quad $\langle$participate relationshipType="tns:UserAgentAndBookStoreRelationship"

\quad\quad\quad\quad\quad fromRoleTypeRef="tns:seller" toRoleTypeRef="tns:user"/$\rangle$

\quad\quad\quad\quad $\langle$exchange name="sendListofBooks"

\quad\quad\quad\quad\quad informationType="tns:listofBooksType" action="request"$\rangle$

\quad\quad\quad\quad\quad $\langle$send variable="cdl:getVariable('tns:listofBooks','','')"/$\rangle$

\quad\quad\quad\quad\quad $\langle$receive variable="cdl:getVariable('tns:listofBooks','','')"/$\rangle$

\quad\quad\quad\quad $\langle$/exchange$\rangle$

\quad\quad\quad\quad $\langle$exchange name="sendListofBooksResponse"

\quad\quad\quad\quad\quad informationType="listofBooksResponseType" action="respond"$\rangle$

\quad\quad\quad\quad\quad $\langle$send variable="cdl:getVariable('tns:listofBooksResponse','','')"/$\rangle$

\quad\quad\quad\quad\quad $\langle$receive variable="cdl:getVariable('tns:listofBooksResponse','','')"/$\rangle$

\quad\quad\quad\quad $\langle$/exchange$\rangle$

\quad\quad\quad $\langle$/interaction$\rangle$

\quad\quad\quad $\langle$interaction name="InteractionBetweenBAandBS3"$\rangle$

\quad\quad\quad\quad $\langle$participate relationshipType="tns:UserAgentAndBookStoreRelationship"

\quad\quad\quad\quad\quad fromRoleTypeRef="tns:user" toRoleTypeRef="tns:seller"/$\rangle$

\quad\quad\quad\quad $\langle$exchange name="selectListofBooks"

\quad\quad\quad\quad\quad informationType="tns:selectListofBooksType" action="request"$\rangle$

\quad\quad\quad\quad\quad $\langle$send variable="cdl:getVariable('tns:selectListofBooks','','')"/$\rangle$

\quad\quad\quad\quad\quad $\langle$receive variable="cdl:getVariable('tns:selectListofBooks','','')"/$\rangle$

\quad\quad\quad\quad $\langle$/exchange$\rangle$

\quad\quad\quad\quad $\langle$exchange name="selectListofBooksResponse"

\quad\quad\quad\quad\quad informationType="selectListofBooksResponseType" action="respond"$\rangle$

\quad\quad\quad\quad\quad $\langle$send variable="cdl:getVariable('tns:selectListofBooksResponse','','')"/$\rangle$

\quad\quad\quad\quad\quad $\langle$receive variable="cdl:getVariable('tns:selectListofBooksResponse','','')"/$\rangle$

\quad\quad\quad\quad $\langle$/exchange$\rangle$

\quad\quad\quad $\langle$/interaction$\rangle$

\quad\quad\quad $\langle$interaction name="InteractionBetweenBAandBS4"$\rangle$

\quad\quad\quad\quad $\langle$participate relationshipType="tns:UserAgentAndBookStoreRelationship"

\quad\quad\quad\quad\quad fromRoleTypeRef="tns:seller" toRoleTypeRef="tns:user"/$\rangle$

\quad\quad\quad\quad $\langle$exchange name="sendPrice"

\quad\quad\quad\quad\quad informationType="tns:priceType" action="request"$\rangle$

\quad\quad\quad\quad\quad $\langle$send variable="cdl:getVariable('tns:price','','')"/$\rangle$

\quad\quad\quad\quad\quad $\langle$receive variable="cdl:getVariable('tns:price','','')"/$\rangle$

\quad\quad\quad\quad $\langle$/exchange$\rangle$

\quad\quad\quad\quad $\langle$exchange name="sendPriceResponse"

\quad\quad\quad\quad\quad informationType="priceResponseType" action="respond"$\rangle$

\quad\quad\quad\quad\quad $\langle$send variable="cdl:getVariable('tns:priceResponse','','')"/$\rangle$

\quad\quad\quad\quad\quad $\langle$receive variable="cdl:getVariable('tns:priceResponse','','')"/$\rangle$

\quad\quad\quad\quad $\langle$/exchange$\rangle$

\quad\quad\quad $\langle$/interaction$\rangle$

\quad\quad\quad $\langle$interaction name="InteractionBetweenBAandBS5"$\rangle$

\quad\quad\quad\quad $\langle$participate relationshipType="tns:UserAgentAndBookStoreRelationship"

\quad\quad\quad\quad\quad fromRoleTypeRef="tns:user" toRoleTypeRef="tns:seller"/$\rangle$

\quad\quad\quad\quad $\langle$exchange name="pays"

\quad\quad\quad\quad\quad informationType="tns:paysType" action="request"$\rangle$

\quad\quad\quad\quad\quad $\langle$send variable="cdl:getVariable('tns:pays','','')"/$\rangle$

\quad\quad\quad\quad\quad $\langle$receive variable="cdl:getVariable('tns:pays','','')"/$\rangle$

\quad\quad\quad\quad $\langle$/exchange$\rangle$

\quad\quad\quad\quad $\langle$exchange name="paysResponse"

\quad\quad\quad\quad\quad informationType="paysResponseType" action="respond"$\rangle$

\quad\quad\quad\quad\quad $\langle$send variable="cdl:getVariable('tns:paysResponse','','')"/$\rangle$

\quad\quad\quad\quad\quad $\langle$receive variable="cdl:getVariable('tns:paysResponse','','')"/$\rangle$

\quad\quad\quad\quad $\langle$/exchange$\rangle$

\quad\quad\quad $\langle$/interaction$\rangle$

\quad\quad $\langle$/sequence$\rangle$

\quad $\langle$/choreography$\rangle$

$\langle$/package$\rangle$

-------------------------------------------------------------------------------
\section{The BookStore WSO Described by WS-BPEL}\label{xml2}

--------------------------------------------------------------

$\langle$process name="BookStore"

\quad targetNamespace="http://example.wscs.com /2011/ws-bp/bookstore"...$\rangle$

\quad $\langle$partnerLinks$\rangle$

\quad\quad $\langle$partnerLink name="BSAndBA"... /$\rangle$

\quad\quad $\langle$partnerLink name="BSAndRC"... /$\rangle$

\quad\quad $\langle$partnerLink name="BSAndAC"... /$\rangle$

\quad $\langle$/partnerLinks$\rangle$

\quad $\langle$variables$\rangle$

\quad\quad $\langle$variable name="RequestListofBooks" messageType="lns:requestListofBooks"/$\rangle$

\quad\quad $\langle$variable name="RequestListofBooksResponse" messageType="lns:requestListofBooksResponse"/$\rangle$

\quad\quad $\langle$variable name="ListofBooks" messageType="lns:listofBooks"/$\rangle$

\quad\quad $\langle$variable name="ListofBooksResponse" messageType="lns:listofBooksResponse"/$\rangle$

\quad\quad $\langle$variable name="SelectListofBooks" messageType="lns:selectListofBooks"/$\rangle$

\quad\quad $\langle$variable name="SelectListofBooksResponse" messageType="lns:selectListofBooksResponse"/$\rangle$

\quad\quad $\langle$variable name="Price" messageType="lns:price"/$\rangle$

\quad\quad $\langle$variable name="PriceResponse" messageType="lns:priceResponse"/$\rangle$

\quad\quad $\langle$variable name="Pays" messageType="lns:pays"/$\rangle$

\quad\quad $\langle$variable name="PaysResponse" messageType="lns:paysResponse"/$\rangle$

\quad\quad $\langle$variable name="ShipmentByTrain" messageType="lns:shipmentByTrain"/$\rangle$

\quad\quad $\langle$variable name="ShipmentByTrainResponse" messageType="lns:shipmentByTrainResponse"/$\rangle$

\quad\quad $\langle$variable name="ShipmentByAir" messageType="lns:shipmentByAir"/$\rangle$

\quad\quad $\langle$variable name="ShipmentByAirResponse" messageType="lns:shipmentByAirResponse"/$\rangle$

\quad $\langle$/variables$\rangle$

\quad $\langle$sequence$\rangle$

\quad\quad $\langle$receive partnerLink="BSAndBA" portType="lns:bookStore4BuyerAgent-Interface" operation="opRequestListofBooks" variable="RequestListofBooks" createInstance="yes"$\rangle$

\quad\quad $\langle$/receive$\rangle$

\quad\quad $\langle$invoke partnerLink="BSAndBA" portType="bns:buyAgent4BookStore-Interface" operation="opReceiveListofBooks" inputVariable="ListofBooks" outputVariable="ListofBooksResponse"$\rangle$

\quad\quad $\langle$/invoke$\rangle$

\quad\quad $\langle$receive partnerLink="BSAndBA" portType="lns:bookStore4BuyerAgent-Interface" operation="opSelectListofBooks" variable="SelectListofBooks"$\rangle$

\quad\quad $\langle$/receive$\rangle$

\quad\quad $\langle$reply partnerLink="BSAndBA" portType="lns:bookStore4BuyerAgent-Interface" operation="opSelectListofBooks" variable="SelectListofBooksResponse"$\rangle$

\quad\quad $\langle$/reply$\rangle$

\quad\quad $\langle$!--inner activity: calculate the price of selected books--$\rangle$

\quad\quad $\langle$invoke partnerLink="BSAndBA" portType="bns:buyAgent4BookStore-Interface" operation="opReceivePrice" inputVariable="Price" outputVariable="PriceResponse"$\rangle$

\quad\quad $\langle$receive partnerLink="BSAndBA" portType="lns:bookStore4BuyerAgent-Interface" operation="opPays" variable="Pays"$\rangle$

\quad\quad $\langle$/receive$\rangle$

\quad\quad $\langle$reply partnerLink="BSAndBA" portType="lns:bookStore4BuyerAgent-Interface" operation="opPays" variable="PaysResponse"$\rangle$

\quad\quad $\langle$if$\rangle$$\langle$condition$\rangle$ getVariable('Price')>100 $\langle$/condition$\rangle$

\quad\quad\quad $\langle$invoke partnerLink="BSAndAC" portType="ans:airlineCorp4BookStore-Interface" operation="opShipmentByAir" inputVariable="ShipmentByAir" outputVariable="ShipmentByAirResponse"$\rangle$

\quad\quad\quad $\langle$else$\rangle$

\quad\quad\quad\quad $\langle$invoke partnerLink="BSAndRC" portType="rns:railwayCorp4BookStore-Interface" operation="opShipmentByTrain" inputVariable="ShipmentByTrain" outputVariable="ShipmentByTrain-Response"$\rangle$

\quad\quad\quad $\langle$/else$\rangle$$\langle$/if$\rangle$

\quad $\langle$/sequence$\rangle$

$\langle$/process$\rangle$

--------------------------------------------------------------

\end{document}